\numberwithin{equation}{section}
\newcommand{\ketbra}[1]{\left|#1\right\rangle\left\langle #1\right|}
\DeclareMathOperator{\EV}{\mathbb{E}}
\newcommand{\vect}[1]{\boldsymbol{#1}}
\newcounter{thmcount}
\numberwithin{thmcount}{section}
\newtheorem{lemma}[thmcount]{Lemma}
\newtheorem{theorem}{Theorem}
\newtheorem*{theorem*}{Theorem}
\newtheorem{assumption}{Assumption}
\newtheorem{proposition}[thmcount]{Proposition}
\newtheorem{defn}[thmcount]{Definition}
\newtheorem*{proposition1}{Proposition 4.1 (Formal)}
\newtheorem*{theoremCLSS}{Theorem CLSS21}
\theoremstyle{definition}
\newtheorem{remark}[thmcount]{Remark}
\newcommand{\av}{{\vect{a}}}
\newcommand{\bv}{{\vect{b}}}
\newcommand{\cv}{{\vect{c}}}
\newcommand{\dv}{{\vect{d}}}
\newcommand{\zv}{{\vect{z}}}
\newcommand{\Zv}{{\vect{Z}}}
\newcommand{\xv}{{\vect{x}}}
\newcommand{\paramv}{\gammav,\betav}
\newcommand{\gammav}{{\vect{\gamma}}}
\newcommand{\Gammav}{\boldsymbol\Gamma}
\newcommand{\betav}{{\vect{\beta}}}
\newcommand{\cN}{\mathcal{N}}
\newcommand{\G}{\mathds{G}}
\newcommand{\ER}{Erd{\H{o}}s-R\'{e}nyi }
\newcommand{\Id}{\mathds{1}}
\newcommand{\M}{\mathrm{m}}
\renewcommand{\P}[1]{{(#1)}}
\newcommand{\mixed}{{\cv,\text{mixed}}}
\newcommand{\ERsub}{\textnormal{ER}}
\newcommand{\Gmixed}{\G_\mixed}
\newcommand{\OGP}{\textnormal{OGP}}
\newcommand{\OPT}{\textnormal{OPT}}
\newcommand{\barD}{\overline{D}}
\def\TT{{\mathbb T}}
\def\PP{{\mathbb P}}
\def\SS{{\mathbb S}}
\def\UU{{\mathbb U}}
\def\Z{{\mathbb Z}}
\def\C{{\mathbb C}}
\def\R{{\mathbb R}}
\def\E{{\mathbb E}}
\def\eps{{\varepsilon}}
\def\cC{{\mathcal C}}
\def\cE{{\mathcal E}}
\def\cR{{\mathcal R}}
\def\Poisson{{\rm Poisson}}
\def\Disk{{\mathbb D}}
\def\lin{{\rm lin}}
\def\pois{{\xi}}
\def\Poisson{{\rm Poisson}}
\def\Unif{{\rm Unif}}
\newcommand{\Ub}{\eta}
\newcommand{\qmax}{{q_{\rm max}}}
\newcommand{\lo}[1]{\hat{#1}}
\newcommand{\bgbbraket}[1]{\braket{\paramv|#1|\paramv}}
\newcommand{\LHS}{\textnormal{LHS}}
\def\avu{{\underline \av}}
\def\bvu{{\underline \bv}}
\def\cvu{{\underline \cv}}
\begin{document}

\title{Performance and limitations of the QAOA at constant levels\\
on large sparse hypergraphs and spin glass models}

\author[1]{Joao Basso\thanks{joao.basso@berkeley.edu}}
\author[2]{David Gamarnik\thanks{gamarnik@mit.edu}}
\author[3]{Song Mei\thanks{songmei@berkeley.edu}}
\author[4]{Leo Zhou\thanks{leozhou92@gmail.com (Corresponding author)}}

\affil[1]{\footnotesize Google Quantum AI, Venice, CA 90291}
\affil[2]{\footnotesize Operations Research Center and Sloan School of Management, MIT, Cambridge, MA 02139}
\affil[3]{\footnotesize Department of Statistics, University of California, Berkeley, CA 94720}
\affil[4]{\footnotesize Walter Burke Institute for Theoretical Physics, Caltech, Pasadena, CA 91125}

\date{September 28, 2022}

\maketitle

\begin{abstract}
\normalsize
\onehalfspacing
The Quantum Approximate Optimization Algorithm (QAOA) is a general purpose quantum algorithm designed for combinatorial optimization.
We analyze its expected performance and prove concentration properties at any constant level (number of layers) on ensembles of random combinatorial optimization problems in the infinite size limit.
These ensembles include mixed spin models and Max-$q$-XORSAT on sparse random hypergraphs.
Our analysis can be understood via a saddle-point approximation of a sum-over-paths integral.
This is made rigorous by proving a generalization of the multinomial theorem, which is a technical result of independent interest.
We then show that the performance of the QAOA at constant levels for the pure $q$-spin model matches asymptotically the ones for Max-$q$-XORSAT on random sparse \ER hypergraphs and every large-girth regular hypergraph. 
Through this correspondence, we establish that the average-case value produced by the QAOA at constant levels is bounded away from optimality for pure $q$-spin models when $q\ge 4$ and is even. 
This limitation gives a hardness of approximation result for quantum algorithms in a new regime where the whole graph is seen.
\end{abstract}

\thispagestyle{empty}

 \clearpage

\section{Introduction}
\label{sec:intro}

Quantum computers are widely believed to be more powerful than classical computers, in part due to Shor's seminal quantum algorithm for solving the classically intractable problem of integer factorization~\cite{shor1994algorithms}.
As quantum computers begin to come online, an important open question is whether we can harness their power to achieve a computational advantage on optimization problems with widespread real-world applications.
The Quantum Approximate Optimization Algorithm (QAOA) is a leading quantum algorithm designed to find approximate solutions of combinatorial optimization problems (COPs)~\cite{farhi2014quantum}.
The QAOA is computationally universal~\cite{lloyd2018universal}, and its generalizations can capture other powerful algorithms such as the quantum singular value transformation~\cite{Lloyd2021HamQSVT}.
Although the QAOA can find the optimum when its level\footnote{In some literature, $p$ is also referred to as the QAOA depth. Here we call $p$ as the QAOA's level to avoid confusion with the quantum circuit-depth or runtime, which for the QAOA scales roughly as $p\times (\text{max graph degree})$.} (number of layers) $p$ goes to infinity~\cite{farhi2014quantum}, we have a limited knowledge of its behavior at finite $p$ due to the challenges in analyzing quantum many-body dynamics on classical computers.
Even at the lowest level $p$\,=\,1, the QAOA has output distributions that cannot be efficiently simulated on any classical device under reasonable complexity-theoretic assumptions \cite{farhi2016supreme}, similar to algorithms implemented in recent ``quantum supremacy'' experiments~\cite{Arute2019supremacy,zhong2020supremacy}.
Experimental tests of quantum optimization algorithms have largely been restricted to only modest-sized problems and short runtimes due to noisy quantum hardware limitations~\cite{harrigan2021quantum}, 
although hints of a polynomial speedup over simulated annealing are recently observed in some cases~\cite{Ebadi2022quantum}.
Nevertheless, it is difficult to conclude from experimental observations whether there is a definitive quantum advantage for approximate optimization without a convincing picture of the quantum algorithms' asymptotic behavior at large problem sizes and long runtimes.

To address this question, we are in need of rigorous theoretical studies of quantum optimization algorithms.
Recent work have taken steps in this direction by analyzing the QAOA and obtaining provable performance guarantees.
Early results look at MaxCut, where it was shown the QAOA at $p=1$ has a guaranteed worst-case approximation ratio that beats random guessing but not the best known guarantee achieved by the classical semi-definite programming (SDP) algorithm~\cite{farhi2014quantum}.
Since the known methods for proving the worst-case guarantees require a computation that scales doubly exponentially with $p$ and are thus limited to extremely small $p$~\cite{wurtz2020bounds}, later results turn to the more tractable analysis of average-case performance.
It was shown in \cite{farhi2019quantum} that the QAOA at $p=11$ outperforms the standard SDP on typical instances of the Sherrington-Kirkpatrick spin glass problem.
This result was extended to MaxCut on large random regular graphs in \cite{basso2021quantum} where the QAOA beats the best unconditionally proven performance of any known classical algorithm.
Nevertheless, pending a widely believed conjecture that these two problems exhibit no overlap gap property (OGP), an approximate message passing (AMP) algorithm~\cite{montanari2019optimization,AMS2021MaxCut} can get arbitrarily close to the optimum.

Moreover, recent theoretical results show that the QAOA's level $p$ needs to grow at least logarithmically with problem size $n$ for some COPs on graphs exhibiting locally tree-like structures~\cite{bravyi2020obstacles,farhi2020quantumRandom,farhi2020quantumWorstCase,chou2021limitations}.
The practical relevance of this limitation on the QAOA is yet to be understood, and furthermore these results do not apply to models of graphs exhibiting full connectivity.
Many classical algorithms including AMP are similarly limited, and provably fail to reach optimality if the problem exhibits OGP even when there is full connectivity.
The possibility for a quantum advantage, however, even in the constant-$p$ regime, was thus left open prior to this work.

In this paper, we analyze the power and limitation of the QAOA applied to general models of random COPs in the infinite size limit.
An example we consider is the $q$-spin model, which describes ensembles of COPs with random all-to-all $q$-body couplings.
This problem is provably difficult for many classical algorithms, including AMP~\cite{gamarnik2021overlapAukosh}, low-degree polynomials~\cite{gamarnik2020lowFOCS} and Boolean circuits~\cite{gamarnik2021circuit},  when $q\ge 4$ and is even, because it exhibits the OGP.
On the other hand, the power of quantum algorithms for this model is not known. Our first contribution is a formalism to calculate the average performance of the QAOA at any fixed level $p$ in the infinite size limit for various ensembles with i.i.d. random couplings (Theorem~\ref{thm:moments}), using a strong generalization of the multinomial theorem  (Proposition~\ref{thm:gen-multinomial}).
Although the proof of the latter result is mathematically involved, the result itself has a simple interpretation as a saddle-point approximation.
We also establish concentration properties for these problems, where we show that measurement outputs from the QAOA applied to a typical instance concentrate at the calculated average.
We note that these concentration results do not follow from general concentration bounds applicable to classical algorithms, and instead we establish the concentration property by showing vanishing variance.
Our result substantially generalizes previous analyses, which were limited to either two-body couplings ($q$\,=\,2)~\cite{farhi2019quantum}, or the lowest QAOA level $p=1$ for arbitrary $q$-body couplings~\cite{ClaesDam2021,boulebnane2021predicting}.

For our second main result, we show that the performance of the QAOA on $q$-spin models  matches asymptotically the one for Max-$q$-XORSAT on sparse random hypergraphs. This asymptotic equivalence was first shown via explicit formulas for $q=2$ in \cite{basso2021quantum,boulebnane2021predicting}, and we generalize it to arbitrary $q$ in the current paper (Theorem~\ref{thm:dense_sparse_agreement} and \ref{thm:agree-regular}). 
Our proof method for the asymptotic equivalence of QAOA performance on these models differs from the approach usually employed in the classical context.
Classical results on the equivalence of dense and sparse models, such as \cite{dembo2017extremal, sen2018optimization}, use Lindeberg's argument or its variants to establish universality properties of free energy of random Hamiltonians. 
Unfortunately, Lindeberg's argument appears powerless in our setting, and we use other methods to establish this correspondence. We leave it as an interesting challenge to develop general purpose methods establishing Lindeberg-type universality in the quantum setting.

Lastly, we show that the QAOA at any constant $p$ cannot approximate arbitrarily well the ground state values of $q$-spin models in the average-case when $q\ge 4$ and is even (Theorem~\ref{thm:obstruction}).
Previously, this limitation was only shown for some COPs on sparse hypergraphs, via arguments exploiting the OGP and the locality of the QAOA that prevents it from seeing the whole graph at sufficiently low depth~\cite{farhi2020quantumRandom,chou2021limitations}.
Importantly, for the fully connected $q$-spin models we consider, the locality-based arguments do not apply, and no limitation of this kind was known.
Instead, our result is obtained from a ``dense-from-sparse'' reduction where we use the previous obstruction-by-OGP result on sparse random hypergraphs to prove limitation of the QAOA on dense spin glass models.
Since we have shown the QAOA's performance on the two types of models match asymptotically, its ability to find near optimums for $q$-spin models would contradict its failure to reach near optimality on sparse ones.
This proof idea is unusual and novel: many results for sparse random hypergraphs are obtained by establishing them for complete graphs ($q$-spin models) and then employing the asymptotical equivalence of the two graph structures~\cite{dembo2017extremal,sen2018optimization,panchenko2018k,chen2019suboptimality}, including the OGP statement established in~\cite{chen2019suboptimality}.
However, the usage of results for sparse random graphs in order to obtain ramifications for complete graphs is new.

Our work clarifies paths forward in understanding quantum advantages in approximate optimization.
Although it was previously known that local quantum algorithms such as the QAOA are limited in the low circuit-depth regime where they do not see the whole graph, our result shows for the first time that significant barriers remain even when the whole graph is seen.
One natural path forward is to compare the energy achieved by the constant-$p$ QAOA to that by the AMP algorithm \cite{montanari2019optimization, alaoui2020optimization}, which holds the current record on the $q$-spin models among polynomial-time classical algorithms~\cite{huang2021lipschitz}.
It would be very interesting to see whether the QAOA can achieve a better energy than AMP.
Moreover, there is currently no good methods for analyzing the QAOA when the level $p$ grows faster than say, $2\log n$, even for sparse random hypergraphs.
In this regime, none of the currently known limitations apply, and it remains an open question how fast $p$ needs to grow to achieve arbitrarily good approximation.

\section{Background: the QAOA, spin glasses, and the overlap gap property}
\label{sec:review}

\paragraph{Review of the QAOA}---
The QAOA is a quantum algorithm introduced by \cite{farhi2014quantum} for finding approximate solutions to combinatorial optimization problems.
The goal is to maximize a cost function, which counts the number (or total weight) of clauses satisfied by an input bit string.
Given a cost function $C(\zv)$ on bit strings $\zv \in \{+1,-1\}^n$,  we can define a corresponding quantum operator $C$, diagonal in the computational basis, as $C\ket{\zv} = C(\zv) \ket{\zv}$.
Moreover, define the operator $B = \sum_{j=1}^n X_j$, where $X_j$ is the Pauli $X$ operator acting on qubit $j$.
Given a set of parameters $\vect\gamma = (\gamma_1,\gamma_2,\ldots,\gamma_p) \in \R^p$ and $\vect\beta = (\beta_1,\beta_2,\ldots,\beta_p) \in \R^p$, 
the QAOA initializes the system of qubits in the state $\ket{s} = 2^{-n/2}\sum_{\zv}\ket{\zv}$ and applies $p$ alternating layers of unitary operations $e^{-i\gamma_k C}$ and $e^{-i\beta_k B}$ to prepare the state
\begin{align} \label{eq:wavefunction}
\ket{\paramv} = e^{-i\beta_p B} e^{-i\gamma_p C} \cdots e^{-i\beta_1 B} e^{-i\gamma_1 C} \ket{s}.
\end{align}
For a given cost function $C$, measuring $\ket{\paramv}$ in the computational basis enough times will yield a bit string $\zv$ whose value $C(\zv)$ is near the quantum expectation $\braket{\paramv|C|\paramv}$ or better.
Heuristic strategies have been proposed to optimize $\braket{\paramv|C|\paramv}$ with respect to parameters $(\paramv)$ using a good initial guess \cite{ZhouQAOA}.

\paragraph{Defining ensembles of random COPs}---
We consider a general combinatorial optimization problem (COP) on $n$ bits where each clause involves at most $\qmax$ bits.
This problem can be understood as maximizing a cost function over $\zv \in \{\pm 1\}^n$ that takes the form
\begin{equation} \label{eq:CJ}
C_J(\zv) = \sum_{q=1}^{\qmax} c_q \sum_{i_1,\ldots,i_q=1}^n J_{i_1,i_2,\ldots,i_q} z_{i_1} z_{i_2}\cdots z_{i_q} 
= c_1\sum_{i=1}^n J_i z_i + c_2\sum_{i,j=1}^n J_{i,j} z_i z_j + \cdots
\end{equation}
where each problem instance is specified by a choice of tensors $J=\{ \{ J_{i_1, \ldots, i_q}\}_{i_1, \ldots, i_q \in [n]} \}_{q \in [q_{\max}]}$.
We study the application of the QAOA to an ensemble of random COPs that takes the above form.
For example, we consider
\begin{itemize}
\item $\Gmixed(n)$ --- mixed spin model. This ensemble is defined by any sequence of $c_q\in \mathbb{R}$ and randomly chosen $J_{i_1,\ldots,i_q} \sim_{iid} \cN(0,1/n^{q-1})$ as normal distribution, for each $q \in [q_{\max}]$ and $i_1, \ldots, i_q \in [n]$.
\item $\G_q(n)$ --- pure $q$-spin model. 
    This is a special case of the mixed spin model where $c_q = 1$ and $c_{q'} = 0$ for all $q' \neq q$. Note the Sherrington-Kirkpatrick (SK) model is the pure $2$-spin model.
\item $\G_{d,q}^\ERsub(n)$ --- Max-$q$-XORSAT on a random \ER directed multi-hypergraph  \cite{chou2021limitations}. 
    Here, a random directed multi-hypergraph on $n$ vertices is obtained by first choosing the number of edges $m \sim \Poisson(dn)$, and then choosing hyperedges $e^1, e^2, \ldots, e^m$ i.i.d. uniformly at random from the set $[n]^q$ of all vertex $q$-tuples (some hyperedges could potentially be identical). Each hyperedge associates a random weight $w(e^j) \sim_{iid} \Unif(\{ \pm 1/\sqrt{d}\})$. The cost function is defined as $C_{\ERsub, q}(\zv) = \sum_{j = 1}^m w(e^j) z_{e^j_1} \cdots z_{e^j_q}$. This model has an equivalent description in the form (\ref{eq:CJ}). More specifically, the cost function $C_{\ERsub, q}(\zv)$ (as a function) has the same distribution as $C_J(\zv) = \sum_{i_1, \ldots, i_q = 1}^n J_{i_1, \ldots, i_q} z_{i_1} \cdots z_{i_q}$, where $J_{i_1, \ldots, i_q} \sim_{iid} ( J^+_{i_1, \ldots, i_q} - J^-_{i_1, \ldots, i_q}) / \sqrt{d} $ with $J^+_{i_1, \ldots, i_q}, J^-_{i_1, \ldots, i_q} \sim_{iid} \Poisson(d/(2n^{q-1}))$.
    See Appendix~\ref{apx:ER} for an explanation.
\end{itemize}

\noindent
We denote generically by $\G(n)$ any of the distribution over $J$ for a fixed problem size $n$ specified by the ensemble description, as above.
Later we will drop the $n$ and denote $\G$ as the ensemble that describes the distribution of problems at all sizes.

These ensembles of COPs are studied as spin glasses in the statistical physics community.
We now review a few known facts about the typical-case behavior of these random COP ensembles that will set the stage for our results.

\paragraph{Ground energy density of random COPs}---
We begin with the fact that the optimization of mixed spin model admits a scaling limit of the following form: 
for any choice of coefficients $\cv=(c_1, c_2,\ldots, c_{\qmax})$, suppose that $J\sim\G_\mixed(n)$, then there exists a constant $\eta_{\rm OPT}(\Gmixed)$ which is the asymptotic optimum value of the associated optimization problem in the sense  
\begin{equation} \label{eq:eta-OPT}
     \lim_{n\to\infty} \frac{1}{n} \max_{\zv\in \{\pm 1\}^n}C_J(\zv)=\eta_{\rm OPT}(\Gmixed),~~~~ \text{in probability. }
\end{equation}
The existence of the limit \eqref{eq:eta-OPT} follows from a simple and clever subadditivity argument of Guerra and Toninelli~\cite{GuerraTon}, and the actual value of this limit was obtained as a result of a very impressive development starting from a non-rigorous physics-style argument by Parisi~\cite{parisi1980sequence}, and then rigorously verified by Talagrand~\cite{talagrand2006parisi}. See also Panchenko~\cite{panchenko2013sherrington} for a book reference for this and related results. 

For the special case of pure $q$-spin model $J \sim \G_q(n)$ (recall that this corresponds to $c_q=1$ and $c_{q'}=0$ for all $q'\ne q$ in $\G_\mixed$), we denote the right hand side of \eqref{eq:eta-OPT} as $\eta_\OPT(\G_q)$. It is known that $\eta_\OPT(\G_q)$ also describes the ground energy density of random sparse model $\G^{\ERsub}_{d,q}$ in the large degree limit: 
\begin{equation} \label{eq:GS-dense=sparse}
    \lim_{d\to\infty} \lim_{n\to\infty} \frac{1}{n}\EV_{J\sim \G^{\ERsub}_{d,q}(n)} \max_{\zv \in \{\pm1\}^n} C_J(\zv) 
    = \eta_\OPT(\G_q).
\end{equation}
This result was established first in~\cite{dembo2017extremal} for the case of graphs (that is $q=2$), and then extended to the case of \ER hypergraphs in~\cite{sen2018optimization, chen2019suboptimality}. 
While the results were restricted to certain types of distributions of $J$, the proof approach developed in~\cite{sen2018optimization}, which is based on the Lindeberg-type argument, reveals a universality property: the limit depends on the distribution only through the first and second moment, and furthermore applies to the setting of $J$ with non-zero mean upon centering.

\paragraph{The overlap gap property and algorithmic thresholds}---
While the above results give us a statistical prediction of the typical-case optimal energy density $\eta_\OPT$ for these random COPs, they are non-constructive and thus do not yield efficient algorithms to find near-optimal solutions $\zv$ such that $C_J(\zv)/n \approx \eta_\OPT$.
As it turns out, there is a provable obstacle preventing many algorithms to reach optimality in certain ensembles of problems, which is described as a property in the solution space geometry of the problem.
This is the overlap gap property (OGP), which roughly says that for certain choices of the disorder $J$, specifically in the case
of the pure $q$-spin model with $q\ge 4$ even, there is a gap in the set of possible pairwise overlaps of near-optimal solutions.

The use of OGP to show obstruction for quantum algorithms, specifically the QAOA, was initiated in \cite{farhi2020quantumRandom} and subsequently extended  in \cite{chou2021limitations}.
Both work prove limitation of local quantum algorithms when the COPs can be embedded on a sparse hypergraph.
We now describe the result in~\cite{chou2021limitations} formally. This result forms the basis of our negative result on the
limitation of the performance of the QAOA in the setting of the fully connected spin glass models.

\begin{theoremCLSS}[Modified version of \cite{chou2021limitations} Corollary 4.4]
\label{thm:CLSS}
Let $J \sim \G^\ERsub_{d,q}(n)$. For every  even $q\ge 4$, there exists $\eta_{\rm OGP}(\G_q)<\eta_{\rm OPT}(\G_q)$ and a sequence $\{\delta(d)\}_{d\ge 1}$
with the following property.
For every $\epsilon>0$ there exist sufficiently large $d_0$ such that 
for every $d\ge d_0$, every 
 $p\le \delta(d) \log n$ and an arbitrary choice of parameters $\paramv$, with probability converging to $1$ as $n \to \infty$, the performance of the QAOA  at level $p$ 
satisfies $\braket{\paramv|C_J/n|\paramv} \le \eta_{\rm OGP}(\G_q)+\epsilon$. 
\end{theoremCLSS}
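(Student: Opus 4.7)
The plan is to combine two ingredients: a branching overlap gap property for the sparse Max-$q$-XORSAT Hamiltonian with $q\ge 4$ even and $d$ sufficiently large, and a quantitative locality statement for the QAOA at level $p\le\delta(d)\log n$ that prevents it from behaving as a global optimizer. The OGP supplies the obstruction for the solution-space geometry, while locality turns QAOA into a ``stable'' algorithm to which such obstructions can be applied.

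The first step establishes a branching OGP in the sparse model $\G^\ERsub_{d,q}(n)$. Starting from the branching OGP known for the pure $q$-spin model $\G_q$ (as in \cite{gamarnik2021overlapAukosh,chen2019suboptimality}) and the dense-to-sparse ground-state correspondence \eqref{eq:GS-dense=sparse}, one transfers the property to $\G^\ERsub_{d,q}$ for all $d\ge d_0(\eps)$. Concretely, the goal is to produce constants $\eta_\OGP(\G_q)<\eta_\OPT(\G_q)$, overlap thresholds $0<\alpha_1<\alpha_2<1$, and an integer $T$ such that, with probability $1-o_n(1)$, no $T$-tuple $(\zv^{(1)},\dots,\zv^{(T)})$ of bit strings can simultaneously satisfy $C_{J^{(t)}}(\zv^{(t)})/n\ge \eta_\OGP(\G_q)+\eps/2$ on a suitably coupled family of interpolated instances $\{J^{(t)}\}_{t=1}^T$ while having all pairwise overlaps land inside the forbidden interval $(\alpha_1,\alpha_2)$.

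The second step derives a locality estimate for the QAOA. Since $C_J$ is a sum of $q$-local terms indexed by hyperedges of a random hypergraph with edge density $d$, the Heisenberg-evolved operator $U^\dagger z_i U$ is supported on the radius-$p$ neighborhood $B_p(i)$ of vertex $i$. A branching-process bound yields $|B_p(i)|\le(Cqd)^p$ uniformly in $i$ with high probability; choosing $\delta(d)=1/(3\log(Cqd))$ makes this of size $n^{o(1)}$. Consequently, each single-bit sampling distribution obtained by measuring $\ket{\paramv}$ in the computational basis is determined by only a subpolynomial portion of $J$, and the outcome at any fixed bit depends only on the hyperedges in its local ball.

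The third step runs an interpolation and derives a contradiction. Let $J^{(t)}$, $t=0,\dots,T$, be a sequence in which a $1/T$ fraction of hyperedges is independently resampled between consecutive steps, and let $\zv^{(t)}$ be a sample from $\ket{\paramv}$ built on $J^{(t)}$ using shared quantum randomness. Locality implies that only $O(n(Cqd)^p/T)$ coordinates of $\zv^{(t)}$ change sampling distribution between steps, so consecutive overlaps concentrate near $1$ while the extreme overlap $\langle\zv^{(0)},\zv^{(T)}\rangle/n$ concentrates near $0$; for $T$ polynomially large and $p\le\delta(d)\log n$, some pair must fall inside $(\alpha_1,\alpha_2)$. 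Combined with an energy concentration bound\,---\,again from locality\,---\,showing that $C_{J^{(t)}}(\zv^{(t)})/n$ is close to $\bgbbraket{C_{J^{(t)}}/n}$ with high probability, this would force a violation of the branching OGP from the first step whenever $\bgbbraket{C_{J^{(t)}}/n}>\eta_\OGP(\G_q)+\eps$ holds for all $t$; taking $t=0$ yields the claim for $J\sim\G^\ERsub_{d,q}(n)$. The main obstacle is producing a sparse branching OGP that is robust under this precise interpolation, uniformly over all $T$ coupled instances rather than a single i.i.d. realization: this requires either transporting a suitably strengthened dense branching OGP through the large-$d$ limit while preserving the interpolation's correlation structure, or directly computing the $T$-overlap partition function moments on $\G^\ERsub_{d,q}$. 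A secondary difficulty is the energy concentration for the quantum samples $\zv^{(t)}$, since bits within a common radius-$p$ ball are correlated and standard independence-based concentration must be replaced by a variance bound obtained from decay of two-point quantum correlations $\bgbbraket{z_iz_j}-\bgbbraket{z_i}\bgbbraket{z_j}$ outside the combined ball of $i$ and $j$.
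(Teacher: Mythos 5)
This theorem is not proved in the paper: it is explicitly presented as a ``Modified version of [chou2021limitations] Corollary~4.4,'' i.e.\ an imported result from Chou, Love, Sandhu, and Shi, and it is used as a black-box input to the proof of Theorem~\ref{thm:obstruction}. There is therefore no ``paper proof'' to compare your sketch against; you have effectively tried to reconstruct the argument of the external reference.

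As a reconstruction, your outline captures the right architecture---an ensemble overlap gap property for sparse random hypergraph XORSAT, a light-cone/locality bound showing the level-$p$ QAOA on a degree-$d$ random hypergraph only sees balls of radius $p$, and an interpolation (coupling) argument that forces a forbidden overlap when $p\le\delta(d)\log n$. However, a few points diverge from what the cited proof actually does, and one step as written would fail. First, you reach for the \emph{branching} OGP, which is the Huang--Sellke tool designed to rule out \emph{all} overlap-concentrated algorithms including adaptive ones; the QAOA at fixed parameters is a non-adaptive local algorithm and the cited proof uses the simpler coupled/ensemble OGP established for sparse hypergraphs in \cite{chen2019suboptimality}. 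Second, and more substantively, the claim that the sparse OGP can be ``transferred'' from the dense model via the ground-state correspondence \eqref{eq:GS-dense=sparse} does not go through: \eqref{eq:GS-dense=sparse} only equates the optimum value, whereas OGP is a statement about the geometry of the full set of near-optimal configurations across a coupled family of instances. Transferring OGP requires free-energy universality and an interpolation at positive temperature (this is precisely what \cite{chen2019suboptimality} does, and why the present paper remarks that Lindeberg-type arguments are used on the classical side), or a direct moment computation on the sparse model. You flag this as ``the main obstacle,'' and you are right that it is, but as stated your step~1 is not a proof. Third, the energy concentration for QAOA samples at $p$ growing like $\log n$ cannot come from the present paper's fixed-$p$ variance computation (Theorem~\ref{thm:moments}); it must come from the light-cone structure, as you note in your last sentence. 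In short: the sketch is structurally sound and identifies the genuine technical obstacles honestly, but steps~1 and the concentration step are gestures, not arguments, and the OGP variant invoked is heavier than what the original reference uses.
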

The proof idea first introduced in~\cite{farhi2020quantumRandom} and then extended to the hypergraph setting in~\cite{chou2021limitations}
uses the effective locality of the QAOA at logarithmic depth (or level) which prevents it from overcoming the OGP barrier. 
The implementation of this idea in the context of classical algorithm was introduced in~\cite{gamarnik2014limits} and then 
extended to a broad class of other classical algorithms; see~\cite{gamarnik2021overlap} for a survey. 

When considering COPs embedded on dense hypergraphs, however, such as the case of the model $\Gmixed$ or $\G_q$,  
the techniques in~\cite{farhi2020quantumRandom} and~\cite{chou2021limitations} do not apply
since they rely crucially on the locality of the algorithm and the sparsity of the hypergraph, so that the whole graph is not seen by 
the QAOA at sufficiently low level.
In contrast, for these dense models the QAOA sees the whole graph at any level.
Thus, a new idea is needed to prove the obstruction in this non-local setting.
This is one of the main goals of  this paper, and it is 
achieved by a ``dense-from-sparse'' argument which uses an asymptotic equivalence of the algorithm's performance on the dense $\G_q$ and the sparse $\G^\ERsub_{d,q}$ models.
It is noteworthy that while there is a large literature on using the ``sparse-from-dense'' reduction
for the purposes of establishing the results on sparse graphs from known results on dense graphs, 
such as~\cite{dembo2017extremal,sen2018optimization, panchenko2018k,chen2019suboptimality}, the converse direction we undertake in this paper is novel.

\section{Main results}
\label{sec:results}

\subsection{Performance of the QAOA on random COPs}

Our first main result is a constructive method to evaluate the QAOA's performance applied to a generic ensemble of random COPs in the form of Eq.~\eqref{eq:CJ} that satisfies the following assumption:
\begin{assumption}
\label{assum:iidJ}
For every $q \in [\qmax]$, $J_{i_1,\ldots,i_q}$ are i.i.d. following some mean zero symmetric distribution with finite second moment. Assume that $\EV[e^{i \lambda J_{1,\ldots,q}}]$ is real positive for large $n$ and denote $g_{q,n}(\lambda) = n^{q-1} \log \EV[e^{i \lambda J_{1,\ldots,q}}]$. For any fixed $\lambda$, we assume that (1) $\lim_{n \to \infty} g_{q, n}''(\lambda) / n = 0$; (2) $\lim_{n\to \infty} g_{q,n}(\lambda) \equiv g_q(\lambda)$ exists and $g_q$ is differentiable; (3) $\lim_{n \to \infty} g_{q,n}'(\lambda) = g_q'(\lambda)$.
\end{assumption}
Note the ensembles $\G_\mixed$, $\G_q$, $\G_{d,q}^\ERsub$ defined earlier satisfy the assumption above.
In the theorem below, we establish the limiting performance of QAOA of at any constant level $p$ for any ensemble satisfying Assumption \ref{assum:iidJ}.
The full proof of this theorem is given in Appendix \ref{sec:QAOA-derivation}.
\begin{theorem} \label{thm:moments}
Suppose $C_J$ is a random COP of form \eqref{eq:CJ} drawn from an ensemble $\G$ that satisfies Assumption~\ref{assum:iidJ}.
Then for any $p$ and any parameters $(\paramv) \in \R^{2p}$, we have
\begin{equation}\label{eqn:first-moment-limit-thm}
\lim_{n\to\infty}\EV_{J\sim \G(n)}\Big[\braket{\paramv|C_J/n|\paramv}\Big] = V_p(\G,\paramv)
\end{equation}
and
\begin{equation}\label{eqn:second-moment-limit-thm}
\lim_{n\to\infty}\EV_{J\sim \G(n)}\Big[\braket{\paramv|(C_J/n)^2|\paramv}\Big] = \big[V_p(\G,\paramv)\big]^2,
\end{equation}
where the limit $V_p(\G,\paramv)$ has a formula that we define explicitly in Eq.~\eqref{eq:Vp-formula}.
\end{theorem}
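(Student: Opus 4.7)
}
The plan is to turn the quantum expectation into a combinatorial sum over bit-string paths, perform the disorder average explicitly, and extract the $n\to\infty$ asymptotics using the generalized multinomial theorem (Proposition~\ref{thm:gen-multinomial}).

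Expanding $\braket{\paramv|C_J|\paramv}$ by inserting complete sets of computational basis states between every operator, the diagonal $e^{\pm i\gamma_k C_J}$ reduce to scalar phases $e^{\pm i\gamma_k C_J(\zv^{(k)})}$ and the transverse mixer $e^{-i\beta_k B}$ factorizes across bits into single-qubit matrix elements $\bra{z'}e^{-i\beta_k X}\ket{z}$. The result is a sum over $(2p+1)$-tuples of bit strings $(\zv^{(1)},\ldots,\zv^{(2p+1)})$ with amplitude $\prod_i A(\bv_i)$, where $\bv_i=(z^{(1)}_i,\ldots,z^{(2p+1)}_i)\in\{\pm1\}^{2p+1}$ is the time-slice pattern at site $i$, multiplied by a random disorder phase $\prod_{q,(i_1,\ldots,i_q)}\exp\bigl(i J_{i_1,\ldots,i_q}\,c_q\,R_q(\bv_{i_1},\ldots,\bv_{i_q})\bigr)$, in which $R_q$ aggregates the signed $\gamma_k$-contributions of the forward and backward branches.

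The disorder average uses Assumption~\ref{assum:iidJ}: the characteristic function satisfies $\EV[e^{i\lambda J_{i_1,\ldots,i_q}}]=\exp(g_{q,n}(\lambda)/n^{q-1})$, so averaging all phases produces $\exp\bigl(\sum_q n^{-(q-1)}\sum_{(i_1,\ldots,i_q)}g_{q,n}(c_q R_q(\bv_{i_1},\ldots,\bv_{i_q}))\bigr)$, and the central $C_J/n$ insertion brings down a factor $g'_{q,n}/n^{q-1}$ by differentiating. Because $R_q$ only sees the patterns, both contributions depend only on the histogram $n_\bv=\#\{i:\bv_i=\bv\}$, or equivalently on the empirical measure $\mu_\bv=n_\bv/n$. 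Grouping bit configurations by histogram and multiplying by the multinomial count $\binom{n}{(n_\bv)}$, the first moment becomes a sum over $\mu$ of an integrand of the form $\exp\bigl(n\,\Phi(\mu;\paramv,\G)\bigr)\times\text{(cost insertion)}$, with subleading $n$-corrections controlled by parts~(1)--(3) of Assumption~\ref{assum:iidJ}.

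At this point Proposition~\ref{thm:gen-multinomial} supplies the saddle-point evaluation: the sum is dominated in the $n\to\infty$ limit by the (complex) critical point $\mu^*$ of $\Phi$, and the saddle value of the cost insertion is exactly the formula $V_p(\G,\paramv)$ given in Eq.~\eqref{eq:Vp-formula}, proving \eqref{eqn:first-moment-limit-thm}. For the second moment, the derivation repeats with two $C_J/n$ insertions at the central time slice; when the two disorder tuples are distinct the two extra $g'_{q,n}$-factors are independent and factorize at the saddle, while the coincident-tuple correction involves $g''_{q,n}/n$ and is subleading by Assumption~\ref{assum:iidJ}(1). The saddle evaluation thus yields $V_p(\G,\paramv)^2$, giving \eqref{eqn:second-moment-limit-thm} and hence vanishing variance and the concentration property. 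The principal obstacle is the saddle-point step itself: because the single-bit amplitudes $A(\bv)$ are complex-valued, $\Phi$ is not real and the classical Laplace method does not apply; one instead needs the delicate complex saddle-point argument packaged in Proposition~\ref{thm:gen-multinomial}, which must provide uniform control of the oscillatory tails away from $\mu^*$.
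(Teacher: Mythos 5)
Your overall plan---expand the quantum expectation into a sum over computational-basis paths, average out the disorder using the characteristic function, pass to the empirical histogram of per-site time-slice patterns, and invoke Proposition~\ref{thm:gen-multinomial} for the $n\to\infty$ limit---is indeed the route the paper takes. But there are two concrete gaps that prevent Proposition~\ref{thm:gen-multinomial} from being applied as you describe.

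First, you keep the per-site configurations in $\{\pm 1\}^{2p+1}$ (including the central time-slice bit). Proposition~\ref{thm:gen-multinomial} applies only when the index set $A$ decomposes as a \emph{proper set} $A_0 \sqcup D \sqcup \barD$ with $\{Q_\av\}_{\av\in A_0}$ real, nonnegative, summing to $1$, and the remaining elements pairing up into cancelling pairs $Q_{\bar\av}=-Q_\av$. Your raw $(2p+1)$-bit weights do not have this form: they carry the $2^{-n}$ initial-state normalization, and the middle bit acts as a redundant gauge. The paper eliminates it by a change of variables $\zv^{(r)}\to\zv^{(r)}\zv^{(r+1)}\cdots\zv^{(p)}\zv^\M$ together with the observation that, since $J$ is symmetric about $0$, $\EV_J[\rho(J\,z_{i_1}^\M\cdots z_{i_q}^\M)]=\EV_J[\rho(J)]$; only then does the $\zv^\M$ sum cancel the $1/2^n$ and leave a sum over $2p$-bit configurations whose weights $Q_\av$ (Eq.~\eqref{eq:Qdef}) are shown in Lemma~\ref{lem:verify-proper-A-Q} to form a proper family. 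Skipping this reduction means the hypotheses of Proposition~\ref{thm:gen-multinomial} are not met by the object you hand it.

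Second, and more importantly, Proposition~\ref{thm:gen-multinomial} requires the polynomial in the exponent to be \emph{well-played} (Definition~\ref{def:well-played-polynomial-redefine}): in the canonical $(\tau,\eta,\nu)$ variables its nonzero coefficients must have $|\avu|\ge 1$ and $\max(\avu)\succ\max(\bvu)$. This is a genuine structural property, proved in Lemma~\ref{lem:P-is-well-played} using the evenness of $g_{q,n}$ and a hierarchy of cancellations (Lemmas~\ref{lem:partition_of_product}--\ref{lem:chi-vanish}). Without this verification, Proposition~\ref{thm:gen-multinomial} simply does not apply; the conclusion is not automatic, and the theorem would not even make sense (the unique solution $\{W_\av\}$ in Eq.~\eqref{eq:Ws_self_consistent} exists only because the polynomial is well-played, via Lemma~\ref{lem:unique_solution_SCE}). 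As a minor side remark, you describe Proposition~\ref{thm:gen-multinomial} as a ``complex saddle-point argument'' providing ``uniform control of the oscillatory tails''; in fact the paper's proof is combinatorial, exploiting exact cancellations induced by the proper/well-played structure, and explicitly notes that a direct rigorous saddle-point argument was not found. Using the proposition as a black box is fine, but you must check its hypotheses.
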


A corollary of the above theorem is that the value produced by the QAOA satisfy concentration properties.
Specifically,  it means that with probability tending to 1 as $n\to\infty$, measurements of the QAOA applied to a typical instance of the random COP ensemble will yield a string $\zv$ whose value $C_J(\zv)/n$ concentrate at the quantum expectation $\braket{\paramv|C_J/n|\paramv}$, which itself is close to the ensemble-averaged value $V_p(\paramv)$.
This follows from the fact that the second moment is equal to the first moment squared in the $n\to\infty$ limit. 
To see this, note that
\begin{equation}
\EV_J[\braket{\paramv|(C_J/n)^2|\paramv}] - \EV_J^2[\braket{\paramv|C_J/n|\paramv}] = {\rm Var}({\rm instance}) + {\rm Var}({\rm measurement})
\end{equation}
is the combined variance over instances and measurements, where
\begin{align}
{\rm Var}({\rm instance}) &= \EV_J[\braket{\paramv|C_J/n|\paramv}^2] - \EV_J^2[\braket{\paramv|C_J/n|\paramv}], \\
{\rm Var}({\rm measurement}) &=  \EV_J\hspace{-2pt}\Big[\braket{\paramv|(C_J/n)^2|\paramv} - \braket{\paramv|C_J/n|\paramv}^2\Big].
\end{align}
Since both are non-negative, Theorem~\ref{thm:moments} implies both ${\rm Var}({\rm instance})$ and  ${\rm Var}({\rm measurement})\to0$ as $n\to\infty$.
By Chebyshev's inequality, this means the QAOA is concentrated over both instances and measurements.

\paragraph{Formula for $V_p(\G, \paramv)$}---
We now describe our formula of $V_p$. First, we denote 
\begin{equation}\label{eq:set_A_def}
A := \big\{(a_1, a_2,\ldots, a_p, a_{-p}, \ldots, a_{-2}, a_{-1}) : a_{\pm j} \in \{ \pm 1 \}, \forall 1\le j \le p\big\}
\end{equation}
as the set of $2p$-bit strings.
Given QAOA parameters $(\paramv) \in \R^{2p}$, we define for any $\av\in A$
\begin{eqnarray}
Q_{{\av}} &:=& 
{\textstyle
    \prod_{r=1}^p
	(\cos\beta_r)^{1 + ({a}_r + {a}_{-r})/2}
	(\sin\beta_r)^{1 -({a}_r + {a}_{-r})/2}
	(i)^{({a}_{-r} - {a}_r) /2} },
	\label{eq:Qdef} \\
\Phi_\av &:=& 
{\textstyle\sum_{r=1}^p \gamma_r \big(a_r a_{r+1} \cdots a_p ~-~ a_{-p} \cdots a_{-r-1} a_{-r} \big)}. \label{eq:Phi_def} 
\end{eqnarray}
We also denote $\av\bv \in A$ as the bit-wise product of $\av, \bv \in A$.
Then we define $\{ W_\av\}_{\av \in A}$ to be the unique solution to the following self-consistent equation:
\begin{equation}\label{eq:Ws_self_consistent}
W_\av = Q_\av \exp\Big[\sum_{q=1}^\qmax q \sum_{\bv_1,\ldots,\bv_{q-1} \in A} g_{q}\big(c_q \Phi_{\av \bv_1\cdots\bv_{q-1}}\big) W_{\bv_1} \cdots W_{\bv_{q-1}} \Big], \qquad \forall \av \in A, 
\end{equation}
where $g_q(\lambda) = \lim_{n \to \infty} n^{q-1}\log \EV_{J\sim\G(n)}[e^{i J_{1,2,\ldots,q} \lambda}]$.
We will establish the existence and uniqueness of solution to the above equation in Lemma~\ref{lem:unique_solution_SCE}.
In general, the solution $\{W_\av\}$ can be obtained sequentially in some order of the $4^p$ elements of $A$, using an $O(4^{p\qmax})$-time iterative procedure (see Lemma~\ref{lem:efficiency-4pqmax}); it can also be as efficient as $O(p^2 4^p)$ (see Theorem~\ref{thm:agree-regular}).
Finally, $V_p$ is defined by
\begin{equation} \label{eq:Vp-formula}
V_p(\G, \paramv) = {-}\sum_{q=1}^\qmax ic_q \sum_{\av_1,\ldots,\av_q \in A} g_{q}'(c_q\Phi_{\av_1\cdots \av_q}) W_{\av_1} \cdots W_{\av_q}.
\end{equation}

\begin{remark}
The performance of QAOA for the SK model at any constant level $p$ has been previously derived in \cite{farhi2019quantum}. Subsequently, the QAOA's performance at level $p=1$  has been derived for the mixed spin model in \cite{ClaesDam2021} and sparse random hypergraph model in \cite{boulebnane2021predicting} respectively.
Our Theorem~\ref{thm:moments} encompasses these results by deriving the performance for any such ensemble of COPs with i.i.d. couplings at any level $p$.
In particular it can be verified that, when $c_2 = 1/\sqrt{2}$, $c_q = 0$ for all $q \neq 2$, and $g_{2} = - x^2 /2$, our formula (\ref{eq:Vp-formula}) coincides with the formula of the QAOA's performance for the SK model in \cite{farhi2019quantum}.
Our result also coincides with the ones in \cite{ClaesDam2021,boulebnane2021predicting} for various models at $p=1$.
\end{remark}

\subsection{Equivalence of the performance of QAOA on dense and sparse graphs}

As Eq.~\eqref{eq:GS-dense=sparse} states, 
the global optimum of the dense model $\G_q$ is asymptotically identical to the global optimum of the sparse model $\G_{d,q}^\ERsub$.
This was established using Lindeberg's universality type arguments in \cite{dembo2017extremal, sen2018optimization}.
It is conceivable that the performance of the QAOA will also be similar on these two models of disorder.
We establish that a more general version of this universality is true at constant levels in the following theorem.%

\begin{theorem}[Universality]\label{thm:dense_sparse_agreement}
Let $\G_{d, q}(n)$ be a generic ensemble of COPs with only $q$-body couplings satisfying Assumption~\ref{assum:iidJ}, with characteristic function $g_{q,n}^{(d)}(\lambda) = n^{q-1} \log \EV[e^{i \lambda J^{(d)}_{1,2,\ldots, q}}]$ for $J^{(d)} \sim \G_{d,q}(n)$. Moreover, suppose that
\begin{align}\label{eq:universality_assumption}
    \lim_{d \to \infty} \lim_{n \to \infty} \Big( g_{q,n}^{(d)}(\lambda), g_{q,n}^{(d)\prime}(\lambda)\Big) = \Big({-}\frac{\lambda^2}{2}, - \lambda\Big), \qquad \forall \lambda \in \R. 
\end{align}
Then, the asymptotic performance of the level-$p$ QAOA on $\G_{d, q}$ is the same as on $\G_q$ (the pure $q$-spin model):
\begin{align}
    V_p(\G_q, \paramv) = \lim_{d\to \infty}V_p(\G_{d, q},\paramv),
\end{align}
at any parameters $(\paramv)$. In particular, this applies to the sparse model $\G_{d,q}^\ERsub$, i.e.,
\begin{equation}\label{eqn:Vp_dense_sparse}
    V_p(\G_q, \paramv) = \lim_{d\to\infty}V_p(\G_{d,q}^\ERsub,\paramv).
\end{equation}
\end{theorem}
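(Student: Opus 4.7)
The key observation is that, for an ensemble with only $q$-body couplings, both the self-consistent equation (\ref{eq:Ws_self_consistent}) and the final formula (\ref{eq:Vp-formula}) depend on $\G$ only through the values of $g_q$ and $g_q'$ at the finite set of arguments $\{c_q\Phi_{\av_1\cdots\av_q} : \av_1,\ldots,\av_q \in A\}$. Since $|A|=4^p$ is finite for $p$ fixed, the problem reduces to continuous dependence of a finite-dimensional fixed-point system on finitely many real parameters. Under this reduction, hypothesis (\ref{eq:universality_assumption}) says precisely that these parameters converge, as $d\to\infty$, to the values they take for the pure $q$-spin model $\G_q$ (where $J_{i_1,\ldots,i_q}\sim\cN(0,1/n^{q-1})$ gives $g_q(\lambda)=-\lambda^2/2$ and $g_q'(\lambda)=-\lambda$).

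Set $g_q^{(d)}(\lambda) := \lim_{n\to\infty} g_{q,n}^{(d)}(\lambda)$, whose existence is granted by Assumption~\ref{assum:iidJ}, and let $\{W_\av^{(d)}\}$, $\{W_\av^{(\infty)}\}$ denote the unique solutions of (\ref{eq:Ws_self_consistent}) for $\G_{d,q}$ and $\G_q$ (guaranteed by Lemma~\ref{lem:unique_solution_SCE}). The crux is to show $W_\av^{(d)}\to W_\av^{(\infty)}$ as $d\to\infty$. I would argue by compactness plus uniqueness. First I would establish a uniform-in-$d$ a priori estimate $|W_\av^{(d)}|\le M$, using that $|Q_\av|\le 1$ and that the assumption ``$\EV[e^{i\lambda J}]$ real positive'' forces $g_q^{(d)}$ to be real and non-positive on $\R$; one then bootstraps (\ref{eq:Ws_self_consistent}) to close an inequality of the form $|W_\av^{(d)}|\le \exp(C M^{q-1})$. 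Given such a bound, Bolzano--Weierstrass in the finite-dimensional space $\C^A$ extracts a convergent subsequence $W^{(d_k)}\to\tilde W$. Pointwise convergence of $g_q^{(d_k)}$ at the finitely many arguments $\Phi_\cdot$ lets one pass to the limit inside (\ref{eq:Ws_self_consistent}), so $\tilde W$ solves the self-consistent equation for $\G_q$; by uniqueness $\tilde W=W^{(\infty)}$, and since every subsequence has a further subsequence with the same limit, the full sequence converges. Substituting into (\ref{eq:Vp-formula}) and using $g_q^{(d)\prime}(\lambda)\to -\lambda$ at the relevant arguments then yields $V_p(\G_{d,q},\paramv)\to V_p(\G_q,\paramv)$.

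For the specialization to $\G_{d,q}^\ERsub$, it suffices to verify (\ref{eq:universality_assumption}). From the Poisson representation described in the paper, a direct computation gives $g_{q,n}^{(d)}(\lambda) = d\bigl(\cos(\lambda/\sqrt d)-1\bigr)$, independent of $n$, whose $d\to\infty$ limit is $-\lambda^2/2$ with derivative $-\lambda$, as required. The main technical obstacle I anticipate is the uniform a priori bound on $\{W_\av^{(d)}\}$: without it the compactness step collapses, and passing $d\to\infty$ inside a nonlinear fixed-point equation becomes delicate. Making the bound work cleanly would require careful use of the real-valuedness and non-positivity of $g_q^{(d)}$ together with the explicit multinomial structure of the exponent in (\ref{eq:Ws_self_consistent}); an alternative that bypasses this step is an implicit-function-theorem argument exploiting analyticity of the solution in the finitely many parameters $g_q(\Phi_\cdot)$. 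All remaining steps are routine continuity and substitution.
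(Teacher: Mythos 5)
Your overall strategy---reduce to the values of $g_q,g_q'$ at the finitely many arguments $\{c_q\Phi_{\av_1\cdots\av_q}\}$, show the fixed-point solutions converge, then substitute into \eqref{eq:Vp-formula}---is the right framing, and your verification of \eqref{eq:universality_assumption} for $\G_{d,q}^\ERsub$ via $g_{q,n}^{(d)}(\lambda)=d(\cos(\lambda/\sqrt d)-1)$ matches the paper. However, the proposed compactness argument has a concrete gap at the a~priori bound. The bootstrap you write down, $|W_\av^{(d)}|\le \exp(C M^{q-1})$ under the hypothesis $\max_\bv|W_\bv^{(d)}|\le M$, cannot close: for $q\ge 2$ there is no $M$ with $M\ge \exp(CM^{q-1})$, so the self-consistent inequality gives no finite uniform bound. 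The non-positivity of $g_q^{(d)}$ on $\R$ does not rescue this either, because the exponent in \eqref{eq:Ws_self_consistent} is a complex combination $\sum g_q^{(d)}(\cdot)W_{\bv_1}\cdots W_{\bv_{q-1}}$ whose real part need not be controlled by the real non-positive coefficients when the $W$'s have large modulus.

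What you are missing---and what makes the whole difficulty evaporate---is that \eqref{eq:Ws_self_consistent} is not a genuinely coupled fixed-point system. Because the polynomial in the exponent is well-played (Lemma~\ref{lem:P-is-well-played}), the derivative $\partial_{\omega_\av}P(\{W_\bv\})$ evaluated on the anti-symmetric solution depends only on $\{W_\bv : \bv \prec \av\}$ (this is exactly Eq.~\eqref{eq:dPdWx} in the proof of Lemma~\ref{lem:unique_solution_SCE}(b)). Moreover $W_\av = Q_\av$ for $\av\in A_0$, so the system is strictly triangular in the ordering on $D$. This is how the paper proceeds: $W_{\av,d}\to\tilde W_\av$ is proved by induction in the ordering, with no compactness, no subsequence extraction, and no a~priori bound at all---the base case is $W_{1,d}=Q_1$, and the inductive step is immediate from pointwise convergence of $g_q^{(d)}$ and continuity of $\exp$. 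If you insist on the Bolzano--Weierstrass route, you must first exhibit this triangular structure to obtain boundedness (by induction in the ordering, each $|W_\av^{(d)}|$ is bounded uniformly in $d$), but at that point the direct inductive limit is both shorter and more transparent. Your implicit-function-theorem alternative suffers from the same issue: without noticing the triangularity, you would need to check invertibility of the Jacobian of the coupled system, whereas with it the system is already in reduced form.
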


We remark that this identity is by explicit computation via the formula \eqref{eq:Vp-formula} above (see proof in Appendix~\ref{apx:ER}). Unfortunately, the Lindeberg type argument appears powerless in this setting. 

Moreover, the performance of the QAOA for Max-$q$-XORSAT on large-girth $d$-regular hypergraphs was derived in \cite{basso2021quantum} using an approach more direct than our Theorem~\ref{thm:moments}. There, the authors computed the expected performance of QAOA which was shown to be identical among all $d$-regular $q$-uniform hypergraphs with girth $>2p+1$.
They gave an explicit formula for the performance and denoted it as $\nu_p^{[q]}(d,\paramv)$.
It was also shown in \cite{basso2021quantum} that the formula of $\lim_{d\to\infty}\nu_p^{[2]}(d,\paramv)$ for large-girth regular graphs matches $V_p(\G_2,\paramv)$, the analogous formula for the SK model.
We generalize this correspondence to arbitrary $q$ in the following theorem, which shows that the QAOA's performance for the $q$-spin model $\G_q$ is also equivalent to that for Max-$q$-XORSAT on any large girth $d$-regular hypergraphs in the $d\to\infty$ limit. 

\begin{theorem}
\label{thm:agree-regular}
Let $\nu_p^{[q]}(d, \paramv)$ be the performance of QAOA on any instance of Max-$q$-XORSAT on any $d$-regular $q$-uniform hypergraph with girth $>2p+1$ given in \cite{basso2021quantum}.
Then for any $p$ and any parameters $(\paramv)$, we have
\begin{equation} \label{eq:reg-eq}
    V_p(\G_q, \paramv)  = \sqrt{2} \lim_{d\to\infty} \nu_p^{[q]}(d,\sqrt{q}\paramv).
\end{equation}
\end{theorem}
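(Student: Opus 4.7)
The plan is to prove \eqref{eq:reg-eq} by directly matching the two closed-form expressions on its sides.

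First, I specialize our general formula \eqref{eq:Vp-formula} to the pure $q$-spin model. For $\G_q$ we have $c_q=1$, $c_{q'}=0$ for $q'\neq q$, and the Gaussian couplings yield $g_q(\lambda)=-\lambda^2/2$ and $g_q'(\lambda)=-\lambda$, so the self-consistent equation \eqref{eq:Ws_self_consistent} reduces to
\begin{equation*}
W_\av = Q_\av \exp\biggl[-\frac{q}{2}\sum_{\bv_1,\ldots,\bv_{q-1}\in A}\Phi_{\av\bv_1\cdots\bv_{q-1}}^2\, W_{\bv_1}\cdots W_{\bv_{q-1}}\biggr],
\end{equation*}
and \eqref{eq:Vp-formula} reduces to $V_p(\G_q,\paramv) = i\sum_{\av_1,\ldots,\av_q\in A}\Phi_{\av_1\cdots\av_q}\, W_{\av_1}\cdots W_{\av_q}$.

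Next, I substitute $\paramv\to\sqrt{q}\paramv$ in the formula for $\nu_p^{[q]}(d,\paramv)$ from \cite{basso2021quantum}, which is computed as an iterated tree expectation over the local $(d,q)$-regular hypergraph tree rooted at an edge (an exact local description because the girth exceeds $2p+1$). The nested exponentials at each level of the tree carry per-edge coupling of order $1/\sqrt{d}$, so expanding to leading order in $1/d$ and sending $d\to\infty$ collapses the multi-level iteration into a single self-consistent fixed-point equation of exactly the same form as the one displayed above for $W_\av$. The $\sqrt{q}$ rescaling of $\paramv$ absorbs the multiplicity $q$ that accompanies each vertex appearing in a hyperedge, while the overall $\sqrt{2}$ prefactor reconciles the normalization convention between the $\G_q$ cost (a sum over all $n^q$ ordered index tuples with i.i.d.\ Gaussian weights of variance $1/n^{q-1}$) and the XORSAT cost (a sum over hyperedges with weights $\pm 1/\sqrt{d}$).

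The main difficulty lies in carefully extracting the leading-order contribution of the $d$-regular tree iteration as $d\to\infty$ and matching it term by term with the $W_\av$ equation, including all combinatorial prefactors. A cleaner alternative is to invoke Theorem~\ref{thm:dense_sparse_agreement}, which already gives $V_p(\G_q,\paramv) = \lim_{d\to\infty} V_p(\G^\ERsub_{d,q},\paramv)$; it would then suffice to show $\lim_{d\to\infty} V_p(\G^\ERsub_{d,q},\paramv) = \sqrt{2}\lim_{d\to\infty}\nu_p^{[q]}(d,\sqrt{q}\paramv)$. In this alternative, both the ER and the regular hypergraphs locally look like the same $(d,q)$-regular tree in the $d\to\infty$ limit (the Poisson degrees of $\G^\ERsub_{d,q}$ concentrate at $d$), which reduces the comparison to a normalization conversion between the two formulas and bypasses the intricate tree-iteration bookkeeping.
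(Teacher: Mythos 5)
Your reduction of the general formula to the pure $q$-spin model is correct and matches the paper's starting point exactly: $g_q(\lambda)=-\lambda^2/2$, $g_q'(\lambda)=-\lambda$ give $V_p(\G_q,\paramv)=i\sum_{\av_1,\ldots,\av_q\in A}\Phi_{\av_1\cdots\av_q}W_{\av_1}\cdots W_{\av_q}$ with the self-consistent equation as you display it. That part is fine.

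The rest of the proposal, however, contains a genuine gap: you name the main difficulty and then stop. The central content of this theorem is precisely the ``intricate tree-iteration bookkeeping'' that you defer, and neither of your two sketches actually carries it out. The paper's proof works entirely at the level of the \emph{already-limiting} closed-form formulas from \cite{basso2021quantum}: those formulas, for $\lim_{d\to\infty}\nu_p^{[q]}(d,\paramv)$, are a $p$-step iteration for functions $H^{(0)},\ldots,H^{(p)}$ on the $(2p+1)$-bit index set $\mathcal{B}$, and the proof consists of (i) a change of variables from $\mathcal{B}$ to $A$ via the ``hat'' and ``star'' operations, showing $f(\av)=\tfrac12 Q_{\hat\av}$ and $(\Gammav\cdot\av)^2=\Phi^2_{\hat\av}$; (ii) showing that $R^{(m)}_\av:=Q_\av H^{(m)}(\av^*)$ satisfies the iteration \eqref{eq:R-iter}; (iii) proving Lemma~\ref{lem:Hp_fixed_point}, that $H^{(p)}$ is a fixed point of its own iteration, which is what lets you identify $R^{(p)}_\av$ with the unique solution of a self-consistent equation; and (iv) rescaling $\gammav\to\sqrt q\gammav$ to convert the coefficient $1/2$ into $q/2$. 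Your sketch that ``expanding to leading order in $1/d$ and sending $d\to\infty$ collapses the multi-level iteration into a single self-consistent fixed-point equation'' misdescribes the mechanism: the $d\to\infty$ limit does not collapse the $p$-step iteration; it remains a $p$-step iteration in the limit, and the fixed-point identification comes from the structural fact $\Gamma_0=0$ (Lemma~\ref{lem:Hp_fixed_point}), not from the $d\to\infty$ asymptotics.

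Your proposed ``cleaner alternative'' — route the comparison through $\G^\ERsub_{d,q}$ via Theorem~\ref{thm:dense_sparse_agreement} and then argue ER and regular hypergraphs have the same local limit — would not actually simplify anything. There is no established bridge in this paper between QAOA on random \ER hypergraphs and QAOA on deterministic regular hypergraphs: Theorem~\ref{thm:moments} requires i.i.d.\ couplings $J_{i_1,\ldots,i_q}$ and hence does not apply to a fixed regular hypergraph (the paper's remark after Theorem~\ref{thm:agree-regular} says exactly this), while the technique of \cite{basso2021quantum} is specific to regular hypergraphs and does not apply to \ER models. ``Both look locally tree-like with Poisson degrees concentrating at $d$'' is a heuristic, not an argument that the QAOA expectation values agree; and the normalization constants $\sqrt2$ and $\sqrt q$ would still require exactly the kind of term-by-term accounting you were trying to avoid. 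So the alternative moves the bookkeeping around rather than eliminating it, and introduces a new unproved comparison that the paper deliberately avoids.
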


The proof is given in Appendix~\ref{apx:regular}.
We note that \cite{basso2021quantum} provides a more succinct formula for $\lim_{d\to\infty} \nu_p^{[q]}(d,\paramv)$, and has evaluated it up to $p\le20$ with an $O(p^2 4^p)$-time iteration on a classical computer.
By the equality \eqref{eq:reg-eq}, this also gives a faster procedure to evaluate $V_p(\G_q,\paramv)$ than the $O(4^{p \qmax})$-time procedure for the more generic case described in Theorem~\ref{thm:moments}. 

\begin{remark}
Although $d$-regular hypergraphs are similar to \ER hypergraphs, our Theorem \ref{thm:moments} does not apply to $d$-regular hypergraphs since we need independence structure of the tensor $J$ in the cost function $C_J$ (c.f. Eq. (\ref{eq:CJ})). On the other hand, the technique in \cite{basso2021quantum} is algebraic and is specifically for $d$-regular hypergraphs, and their technique does not apply to $q$-spin models and \ER hypergraphs.
\end{remark}

\subsection{Limitation of the QAOA on dense hypergraphs}

We now turn to our last main result, where we show that the QAOA's performance is obstructed even in a regime when the whole graph is seen.
This is in contrast to all known proven limitations of the QAOA that apply to sparse graphs when the QAOA does not see the whole graph \cite{bravyi2020obstacles, farhi2020quantumRandom, farhi2020quantumWorstCase, chou2021limitations}.
Recall the value $\eta_{\OGP}(\G_q) < \eta_{\OPT}(\G_q)$ from Theorem~\hyperref[thm:CLSS]{CLSS21}.

\begin{theorem}
\label{thm:obstruction}
For any fixed $p$, parameters $(\paramv)$,
and any even $q \ge 4$, we have
\begin{align}
V_p(\G_q,\paramv) = \lim_{n\to\infty} \EV_{J\sim \G_q(n)} [\braket{\paramv|C_J/n|\paramv}]
\le \eta_\OGP(\G_q).
\end{align}
This implies that constant-$p$ QAOA is not able to find a near-global optimizer of the q-spin model when $q \ge 4$ and is even.
\end{theorem}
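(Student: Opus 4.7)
The plan is a ``dense-from-sparse'' reduction that combines three ingredients already in place: Theorem~\ref{thm:dense_sparse_agreement} (dense--sparse universality of $V_p$), Theorem~\ref{thm:moments} together with its $L^2$-concentration corollary (so that $\braket{\paramv|C_J/n|\paramv}$ converges in probability to $V_p$), and Theorem~\hyperref[thm:CLSS]{CLSS21} (OGP-based obstruction on the sparse \ER model $\G^\ERsub_{d,q}$). Informally: if the constant-$p$ QAOA could strictly exceed $\eta_\OGP(\G_q)$ on the dense $q$-spin ensemble, then by universality the same would be true on $\G^\ERsub_{d,q}$ for all sufficiently large $d$, contradicting CLSS21. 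The equality half of the theorem is just a restatement of Eq.~\eqref{eqn:first-moment-limit-thm} specialized to $\G=\G_q$, so all the content lies in the inequality $V_p(\G_q,\paramv)\le\eta_\OGP(\G_q)$.

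Concretely, I would fix $\epsilon>0$ and let $d_0=d_0(\epsilon)$ be as supplied by Theorem~\hyperref[thm:CLSS]{CLSS21}. For any fixed constant level $p$ and any $d\ge d_0$, the requirement $p\le\delta(d)\log n$ holds for all sufficiently large $n$, so CLSS21 gives
\[
\PP_{J\sim\G^\ERsub_{d,q}(n)}\!\bigl(\braket{\paramv|C_J/n|\paramv}\le \eta_\OGP(\G_q)+\epsilon\bigr)\xrightarrow{\,n\to\infty\,}1.
\]
At the same time, Theorem~\ref{thm:moments} applied to the ensemble $\G^\ERsub_{d,q}$ (which satisfies Assumption~\ref{assum:iidJ}, as noted) yields $\EV[\braket{\paramv|(C_J/n)^2|\paramv}]\to V_p(\G^\ERsub_{d,q},\paramv)^2$, hence vanishing instance variance via the decomposition in the remark following Theorem~\ref{thm:moments}. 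Thus $\braket{\paramv|C_J/n|\paramv}$ converges in probability to the deterministic constant $V_p(\G^\ERsub_{d,q},\paramv)$ as $n\to\infty$. The elementary fact that if $X_n\to V$ in probability and $\PP(X_n\le B)\to 1$ then $V\le B$ (proven by contradiction: if $V>B$, the events $\{X_n\le B\}$ and $\{|X_n-V|<(V-B)/2\}$ are disjoint, yet both have probability tending to one) then gives $V_p(\G^\ERsub_{d,q},\paramv)\le \eta_\OGP(\G_q)+\epsilon$ for every $d\ge d_0$.

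Sending $d\to\infty$ and applying Eq.~\eqref{eqn:Vp_dense_sparse} of Theorem~\ref{thm:dense_sparse_agreement} produces $V_p(\G_q,\paramv)\le \eta_\OGP(\G_q)+\epsilon$, and letting $\epsilon\downarrow 0$ completes the proof. The one point that requires care—and the main, albeit modest, obstacle—is ordering of limits: CLSS21 is a statement at fixed $d$ with $n\to\infty$, while the universality theorem first sends $n\to\infty$ and then $d\to\infty$. These are compatible precisely because $p$ and $\paramv$ are fixed constants, so $p\le\delta(d)\log n$ is automatic for each fixed $d$ once $n$ is large; no uniformity over $p$ or over parameters is demanded. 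All of the substantive work—the concentration estimates, the universality computation through the explicit formula~\eqref{eq:Vp-formula}, and the OGP-based obstruction on sparse graphs—has been front-loaded into the invoked theorems, so the argument reduces to this diagram chase.
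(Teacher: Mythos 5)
Your proposal is correct and takes essentially the same route as the paper: a dense-from-sparse reduction invoking Theorem~\hyperref[thm:CLSS]{CLSS21} on $\G^\ERsub_{d,q}$, the concentration corollary of Theorem~\ref{thm:moments} to convert the high-probability bound into a bound on the limiting value $V_p(\G^\ERsub_{d,q},\paramv)$, and Eq.~\eqref{eqn:Vp_dense_sparse} to transfer the bound back to $\G_q$. Your version is somewhat more explicit in quantifying $\epsilon$, spelling out the convergence-in-probability step (vanishing instance variance plus Chebyshev), and stating the elementary ``limit of a dominated sequence is dominated'' fact, but these are exactly the steps the paper compresses into its one-sentence remark that ``the high probability bound in Theorem CLSS21 extends to the expectation bound.''
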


The proof of this obstruction theorem exploits the equivalence of the QAOA's performance on dense and sparse hypergraphs established in Theorem~\ref{thm:dense_sparse_agreement} above, together with Theorem \hyperref[thm:CLSS]{CLSS21} established in \cite{chou2021limitations}. 
See Section~\ref{sec:proof_sketch} where we give a short proof.

While the theorem statement here is for the ensemble average, we remark that it also applies to typical instances due to the concentration property implied by Theorem~\ref{thm:moments}.
Furthermore, note the constant-$p$ QAOA for the $q$-spin model $\G_q$ has a circuit-depth that grows polynomially with the graph size $n$, and the entire graph is seen by the algorithm at any level $p$.
This is in sharp contrast to the QAOA applied to sparse models such as $\G_{d,q}^\ERsub$.

\section{Technical overview}

\subsection{A generalized multinomial theorem motivated by the QAOA}

We now explain the key technical idea behind this paper, where we provide a mathematical framework to study the performance of the QAOA for a general ensemble of random COPs. 
The goal is to evaluate the quantum expectation of the operator $C_J$ which yields the average value produced by the algorithm.
Using techniques introduced in \cite{farhi2019quantum}, we insert complete sets of $Z$-basis states between unitary operations in Eq.~\eqref{eq:wavefunction} tracking the path of every qubit, and write this expectation explicitly for any ensemble $\G$ satisfying Assumption~\ref{assum:iidJ} as the following sum over paths:
\begin{align}
&\EV_{J\sim\G(n)} \hspace{-2pt}
\Big[\braket{\paramv| \frac{C_J}{n} |\paramv}\Big]
= \sum_{\{n_\av\}} \binom{n}{\{n_\av\}} \prod_{\av\in A } Q_\av^{n_\av}
\exp\Big[
		n \sum_{q=1}^\qmax \sum_{\av_1,\ldots,\av_q \in A} g_{q,n}\big(c_q \Phi_{\av_1\cdots\av_q}\big)\frac{n_{\av_1} \cdots n_{\av_q}}{n^q}
	\Big] \nonumber \\
&\qquad \qquad\qquad\qquad\qquad\qquad\qquad\qquad
	\times\Big(-\sum_{q=1}^\qmax ic_q \sum_{\bv_1,\ldots,\bv_q \in A} g_{q,n}'(c_q\Phi_{\bv_1\cdots \bv_q})\frac{n_{\bv_1} \cdots n_{\bv_q}}{n^q}\Big),
	\label{eq:Cn-explicit}
\end{align}
where $A = \{\pm 1\}^{2p}$, $Q_\av$ and $\Phi_\av$ are defined earlier in Eqs.~\eqref{eq:Qdef} and \eqref{eq:Phi_def},
and  the sum is over all sets of non-negative integers $\{n_\av: \av\in A\}$ that add up to $n$.
Here, each $n_\av$ counts the number of qubits whose path matches a given bit string $\av$.
(See Lemma \ref{lem:characteristic_function_derivative} in Appendix~\ref{sec:QAOA-derivation} for the precise statement.)
We may also consider higher powers $(C_J/n)^k$ to obtain concentration properties of the algorithm, but we will focus on $k=1$ here to explain the essentials.

While the above expression can be evaluated explicitly by summing over all $O(n^{|A|})=O(n^{4^p})$ terms, this double exponential scaling quickly becomes intractable even for the modest case of $p=2$.
The fact that we have a polynomial of $\{n_\av\}$ inside the exponential in Eq.~\eqref{eq:Cn-explicit} also prevents us from applying the multinomial theorem when its degree $\qmax>1$.
In the $n\to\infty$ limit, one may be tempted to treat the $\{n_\av\}$ as random variables from a multinomial distribution with $\{Q_\av\}$ as probabilities, so that we can approximate the sum as a Gaussian integral and then apply Laplace's method.
However, since the $Q_\av$'s are generally complex numbers, this approach does not apply.

To overcome this difficulty, our main technical contribution in this paper is a generalized multinomial theorem that enables evaluation of sums like the one in Eq.~\eqref{eq:Cn-explicit} in the $n\to\infty$ limit. We state the informal theorem here and defer its formal version to Proposition \hyperref[thm:gen-multinomial-restate]{4.1 (Formal)} in Appendix~\ref{apx:gen-multinomial}, where we also provide the full proof.
\begin{proposition}[Informal]
\label{thm:gen-multinomial}
Suppose we are given a finite set $A$ and a set of complex numbers $\{Q_\av\}_{\av\in A}$ where $\sum_{\av\in A} Q_\av = 1$.
Also suppose $A$ has a subset $A_0$ such that $\{ Q_\av \}_{\av \in A_0} \subseteq [0, 1]$, and the remaining elements in $A \setminus A_0$ can be decomposed into pairs of $(\av,\bar\av)$ such that $Q_\av+Q_{\bar\av}=0$.
Then for any sequence of bounded-degree polynomials $f_n(\{\omega_\av\}_{\av \in A})$ and ``well-played'' (defined later in Definition~\ref{def:well-played-polynomial-redefine}) polynomials $P_n(\{\omega_\av\}_{\av \in A})$ with $\lim_{n\to\infty} (f_n,P_n) = (f, P)$, we have
\begin{equation} \label{eq:limiting-value}
\lim_{n\to\infty} 
\sum_{\{n_\av\}} \binom{n}{\{n_\av\}}
\Big(\prod_{\bv\in A} Q_\bv^{n_\bv} \Big)
	\exp\Big[n P_n(\{n_\av/n\}) \Big]
f_n(\{n_\av/n\})
= f(\{W_\av\}) \,,
\end{equation}
where $\{W_\av\}_{\av \in A}$ is given as the unique solution to
\begin{equation} \label{eq:Ws-in-gen-multi}
W_\av = Q_\av \exp\bigg[ \frac{\partial P(\{ W_\bv\}_{\bv \in A})}{\partial W_\av} \bigg],~~~ \forall \av \in A.
\end{equation}
\end{proposition}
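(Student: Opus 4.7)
The proposition has the flavor of a Laplace/saddle-point asymptotic, so the plan is to convert the discrete sum into an integral amenable to stationary-phase analysis. The first step is to insert Dirac delta functions to promote $\{n_\av/n\}$ to continuous variables $\{w_\av\}$ and represent the deltas via Fourier integrals, so that $\{w_\av\}$ and $\{n_\av\}$ decouple. The sum over $\{n_\av\}$ can then be executed exactly by the ordinary multinomial identity, giving $(\sum_\av Q_\av e^{i\lambda_\av/n})^n$. After rescaling $\lambda_\av\mapsto n\lambda_\av$ (which contributes a Jacobian $n^{|A|}$) the expression takes the form
$$n^{|A|}\int\prod_\av dw_\av\,d\lambda_\av\, f_n(w)\exp\!\bigl[n\,S_n(w,\lambda)\bigr],\qquad S_n(w,\lambda)=P_n(w)-i\sum_\av\lambda_\av w_\av+\log\!\Bigl(\sum_\av Q_\av e^{i\lambda_\av}\Bigr),$$
to which Laplace's method can be applied.

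Next I would extract the saddle point of $S_n$. The equations $\partial S/\partial w_\av=0$ and $\partial S/\partial \lambda_\av=0$ yield $i\lambda^\star_\av=\partial P/\partial w^\star_\av$ and $w^\star_\av\propto Q_\av e^{i\lambda^\star_\av}$, which upon elimination reproduce the self-consistent equation $W_\av=Q_\av\exp[\partial P/\partial W_\av]$ of \eqref{eq:Ws-in-gen-multi}; matching the proportionality constant to the constraint $\sum_\av n_\av=n$ (i.e.\ $\sum_\av w^\star_\av=1$, automatic at the saddle) identifies the saddle with $\{W_\av\}$. A Gaussian expansion of $S_n$ around $(w^\star,\lambda^\star)$ should then produce a determinant factor that precisely cancels the $n^{|A|}$ Jacobian, leaving $f(\{W_\av\})$ as the leading-order contribution. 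For uniqueness of $\{W_\av\}$, I would interpret \eqref{eq:Ws-in-gen-multi} as the stationarity condition of a Legendre-type functional $F(W)=\sum_\av W_\av\log(W_\av/Q_\av)-\sum_\av W_\av-P(W)$ (whose Hessian is controlled under the ``well-played'' hypothesis), or equivalently use a contraction-mapping argument on the map $W\mapsto(Q_\av e^{\partial P/\partial W_\av})_\av$.

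The main technical obstacle is that the $Q_\av$ can be complex, so standard probabilistic or real-contour Laplace arguments do not apply off the shelf. This is where the pair structure $Q_\av+Q_{\bar\av}=0$ on $A\setminus A_0$ becomes essential: for each such pair, fixing $N=n_\av+n_{\bar\av}$ and summing $\binom{N}{n_\av}Q_\av^{n_\av}Q_{\bar\av}^{N-n_\av}$ exhibits near-exact cancellation (literally zero for $N\ge1$ in the absence of insertions) and yields only mild derivative-type corrections once the polynomial insertions $f_n,P_n$ are reintroduced. Effectively this reduces the oscillatory sum to one over $A_0$, where $\{Q_\av\}_{\av\in A_0}\subseteq[0,1]$ and classical Laplace analysis is rigorous. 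The ``well-played'' hypothesis is, I expect, precisely what is needed to ensure that $\exp[nP_n(w)]$ combined with these pair cancellations produces an $O(1)$ rather than exponentially large contribution — i.e.\ that the saddle value $S_n(w^\star,\lambda^\star)$ vanishes in the limit. Making this cancellation quantitative, with uniform-in-$n$ control of the error and of the polynomial corrections from $f_n$ and $P_n$, is likely to be the most delicate part of the proof and the bulk of the appendix.
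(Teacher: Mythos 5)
Your proposal is essentially the saddle-point heuristic that the paper itself presents in Appendix~\ref{sec:nonrigorous} (``A non-rigorous `proof' using the saddle-point method''), and the authors explicitly state there that they were unable to make it rigorous: \emph{``Presently, we have failed to find a way to make this proof idea rigorous directly, which we leave as an interesting problem on its own.''} So while your Fourier/delta trick, the stationarity equations $i\lambda^\star_\av = \partial P/\partial w^\star_\av$ and $w^\star_\av \propto Q_\av e^{i\lambda^\star_\av}$, and the observation that $S(W) = 0$ under well-playedness all reproduce the informal derivation correctly, they do not constitute a proof. The core step you gloss over — that ``a Gaussian expansion of $S_n$ around the saddle should then produce a determinant factor that precisely cancels the $n^{|A|}$ Jacobian'' — is precisely the part that the authors identify as the obstruction. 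With complex $Q_\av$, the integrand in your $(w,\lambda)$ representation is genuinely oscillatory, the saddle is a complex point, and one must justify a contour deformation onto a steepest-descent surface with uniform-in-$n$ control of the error and of the contributions of $f_n, P_n$; nothing in the well-played hypothesis makes this routine. Your appeal to ``classical Laplace analysis being rigorous on $A_0$'' handles the nonnegative sector but does not resolve the $D$-sector, which is exactly where the difficulty lives.

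Your intuition about the pair cancellation is a step in the right direction, but it is carried out in the wrong framework. The paper's actual proof abandons the integral representation entirely and works combinatorially: it passes to the canonical variables $t_\av = n_\av + n_{\bar\av}$, $d_\av = n_\av - n_{\bar\av}$, and splits the entire sum as $I_n = \sum_{t\ge 0} e_n(t)$ with $t = \sum_{\av\in D} t_\av$. For each fixed $t$, the $d$-summation over a pair with total $t_\av$, weighted by $Q_\av^{n_\av}(-Q_\av)^{n_{\bar\av}}$ and acting on a monomial $(d_\av/n)^{\xi}$, has a closed form as a derivative of $\sinh^{t_\av}$ (Lemma~\ref{lem:little_sum_lemma}); a Cauchy integral rewriting then makes the $n\to\infty$ limit explicit and shows that only terms with $\xi = t_\av$ survive. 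This is not a ``mild derivative-type correction'': it is an exact combinatorial identity that forces each $t_\av$ to be $O(1)$, and the well-played condition (every surviving monomial is \emph{linear} in $\tau_\av$ with the triangular ordering $\max(\avu)\succ\max(\bvu)$) is what turns the exponential $\exp[n P_n]$ into a convergent geometric-type sum rather than something exponentially large. The proof is then closed by a dominated-convergence argument ($|e_n(t)|\le E(t)$ with $\sum_t E(t) < \infty$, Lemmas~\ref{lem:en_bounded_E}, \ref{lem:E_summable}) and a recursive resummation of $\sum_t e(t)$ to $f(\{W_\av\})$ (Lemma~\ref{lem:e_summation}) that exploits the triangular ordering on $D$.

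Finally, your proposed argument for uniqueness of $\{W_\av\}$ — via a Hessian of a Legendre-type functional or a contraction mapping — also does not match the paper's and is problematic with complex coefficients (there is no convexity to speak of, and no natural metric for the contraction). The paper's uniqueness (Lemma~\ref{lem:unique_solution_SCE}) is again triangular: under the well-played hypothesis, $\partial_{\omega_\xv} P(\{W_\av\})$ depends only on $\{W_\bv : \bv \prec \xv\}$, so the system can be solved sequentially in the ordering on $D$, giving existence and uniqueness by direct inspection.
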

Although motivated by the desire to analyze the QAOA, this generalized multinomial theorem may be of independent interest for other endeavors.
The proof is rather cumbersome, but the result is surprisingly consistent with the answer obtained from a simple but non-rigorous application of the saddle-point method.
Specifically, in the $n\to\infty$ limit, one may define continuous variables $\omega_\av := n_\av/n$, and approximate the sum over paths in Eq.~\eqref{eq:limiting-value} as an integral:
\begin{equation} \label{eq:saddle}
    \int  \Big({\textstyle \prod_{\av\in A} d{\omega_\av}} \Big) e^{n S(\{\omega_\av\})} f(\{\omega_\av\}) \,,
\end{equation}
where $S(\{\omega_\av\}) = -\sum_{\av\in A} \omega_\av \log (\omega_\av / Q_\av) + P(\{\omega_\av\})$.
Then Eq.~\eqref{eq:limiting-value} may be understood as a saddle-point approximation of the above integral as $n\to\infty$, where it is dominated by the saddle point of $S(\{\omega_\av\})$ subject to the constraint that $\sum_\av \omega_\av = 1$. This saddle point turns out to be the unique solution to Eq.~\eqref{eq:Ws-in-gen-multi}.
See Appendix~\ref{sec:nonrigorous} for more details.

However, it is challenging to make this saddle-point approximation rigorous directly.
Instead, we prove Proposition~\ref{thm:gen-multinomial} by making use of the combinatorial structure that emerges in the summation when the coefficients of $P_n$ satisfy a property that we call ``well-played.''
This ``well-played'' property manifests after pairing up the variables $(n_\av, n_{\bar\av})$ associated with the cancelling pairs of complex numbers $(Q_\av,Q_{\bar\av})$, and then transforming the polynomial $P_n(\{n_\av/n\})$ into a ``canonical representation'' of the dual variables $t_\av = n_\av + n_{\bar\av}$ and $d_\av = n_\av - n_{\bar\av}$.
In this canonical representation, we find that the limit as $n\to\infty$ exists if all the terms of $P_n$ are at least linear in $t_\av$, along with some additional constraints.
This property also enables Eq.~\eqref{eq:Ws-in-gen-multi} to be exactly solved with an iterative procedure and allows for an explicit evaluation of the limiting value \eqref{eq:limiting-value}.

\subsection{Proof (sketches) of main theorems}\label{sec:proof_sketch}

\paragraph{Proof sketch of Theorem~\ref{thm:moments} (the QAOA's performance on i.i.d. ensembles)}---
We start by deriving Eq.~\eqref{eq:Cn-explicit} using similar techniques as in \cite{farhi2019quantum}.
In order to evaluate the more general cases considered in this paper, we show that the polynomial in the exponential of \eqref{eq:Cn-explicit} satisfies the ``well-played'' property whenever $g_{q,n}(\lambda)$ is an even function.
Then applying Proposition~\ref{thm:gen-multinomial}, we get the expected performance of the QAOA as in Theorem~\ref{thm:moments}.
The second moment is obtained similarly.
In order for the proofs to go through easily, we require some simple technical conditions on $g_{q,n}$ (that is, on the distribution of $J$), as stated in Assumption~\ref{assum:iidJ}.

\paragraph{Proof sketch of Theorem~\ref{thm:dense_sparse_agreement} and Theorem~\ref{thm:agree-regular} (dense-sparse correspondence)}---
We apply the formula $V_p(\G,\paramv)$ given in Theorem~\ref{thm:moments} to the pure $q$-spin model $\G_q$ and a generic ensemble $\G_{d,q}$ satisfying the stated assumptions.
In particular, we show that the aforementioned \ER ensemble $\G_{d,q}^\ERsub$ can be transformed to an equivalent description in the form of $\G_{d,q}$ using the Poisson splitting trick (c.f. Lemma \ref{lem:Poisson-splitting}).
The $V_p$ formulas for these ensembles are then shown to match asymptotically, yielding Theorem~\ref{thm:dense_sparse_agreement}.

With the formula for $V_p(\G_q,\paramv)$ in hand, Theorem~\ref{thm:agree-regular} is then straightforwardly proved by algebraically transforming the formula for $\lim_{d\to\infty} \nu_p^{[q]}(d,\paramv)$ given in~\cite{basso2021quantum} using the notations of this paper for all $q$.
We explicitly show the two formulas match, similar to a proof in \cite{basso2021quantum} which had obtained the analogous result at $q=2$.

\paragraph{Proof of Theorem~\ref{thm:obstruction} (limitation of the QAOA on $\G_q$)}---
Here we give the short but complete proof of Theorem \ref{thm:obstruction}, which is easily implied by Theorem \ref{thm:dense_sparse_agreement} and Theorem \hyperref[thm:CLSS]{CLSS21}. 
Indeed, for any fixed $p \in \Z_{> 0}$ and fixed $(\paramv)$, Theorem \hyperref[thm:CLSS]{CLSS21} implies that
\begin{equation}\label{eqn:OGP-obstruction-ER}
\lim_{d\to\infty}V_p(\G_{d,q}^\ERsub,\paramv) = \lim_{d \to \infty} \lim_{n \to \infty} \E_{J \sim \G_{d,q}^\ERsub(n)}[\langle \paramv | C_J / n | \paramv \rangle] \le \eta_{\OGP}(\G_q). 
\end{equation}
In the equation above, we use the fact that  $\langle \paramv | C_J / n | \paramv \rangle$ on $\G_{d,q}^\ERsub$
concentrates around its expectation which is implied by Theorem \ref{thm:moments}, and thus the high probability bound in Theorem \hyperref[thm:CLSS]{CLSS21} extends to the expectation bound. 

Furthermore, Eq.~\eqref{eqn:Vp_dense_sparse} in Theorem \ref{thm:dense_sparse_agreement} gives
\[
V_p(\G_{q},\paramv) = \lim_{d\to\infty}V_p(\G_{d,q}^\ERsub,\paramv). 
\]
Combining these two equations the proof is completed.

\section{Discussion and outlook}

In this paper we have considered the performance of the QAOA for the problem of finding a near ground state of spin glass models when the algorithm is applied at a level (number of layers) that does not grow with problem size.
We have derived an analytic formula of the value produced by the quantum algorithm as a function of its parameters in the limit as the number of spins diverges to infinity. 
Using this formula we have established that this value is asymptotically the same for the pure $q$-spin model and for Max-$q$-XORSAT on a sparse random hypergraph model.
This extends recent results for the case of 2-spin models at any level~\cite{farhi2019quantum, basso2021quantum, boulebnane2021predicting} and for the case of $q$-spin models at level $1$~\cite{ClaesDam2021,boulebnane2021predicting}.
We have also established a concentration result showing that this value is concentrated around the instance-independent average with high probability as the system size diverges to infinity. 

Using this correspondence, we prove that the value produced by the QAOA is bounded away from optimality by a multiplicative constant for the case of $q$-spin models with $q\ge 4$ and even.
This is obtained as a corollary of a recent result~\cite{chou2021limitations} that the value of the QAOA is bounded away from optimality when the algorithm is implemented on sparse random hypergraphs.
The latter result relied on locality of the algorithm and was restricted to sparse hypergraphs, much like prior negative results \cite{bravyi2020obstacles,farhi2020quantumRandom,farhi2020quantumWorstCase} for the QAOA in regimes where it does not see the whole graph.
In this paper, we extend the limitation to the $q$-spin models, where the QAOA sees the whole graph at any level.

Our proof approach for this limitation uses a novel idea of ``dense-from-sparse'' reduction.
While many results in the past have used the ``sparse-from-dense'' reduction where properties of sparse random hypergraphs are established from the corresponding properties of the $q$-spin model, the reversed direction implemented in this paper is new.

There is a large scope of problems which remain open.
Our method of proof for the concentration
result, which follows \cite{farhi2019quantum}, is rather unconventional and is based on explicitly computing the 
second moment of the value produced by the algorithm.
This contrasts sharply
with approaches in classical settings where concentration bounds
follow rather directly by application of standard techniques such as McDiarmid's or Azuma's inequalities.
We note these concentration inequalities give stronger (exponential) convergence than what can be obtained from our explicit calculation.
The quantum setting considered in this paper prevents the implementation 
of the more standard methods, and in general the concentration properties in quantum
systems represent a general scope of rather interesting open problems. 

Similarly, the ``dense-from-sparse'' reduction in our paper is obtained from a rather bulky explicit calculation of the asymptotic performance of the QAOA. In the classical settings such equivalence results follow from a broader universality type argument based on Lindeberg's approach. The direct application of Lindeberg's argument to the quantum setting appears to fail, and finding a workable quantum counterpart for such universality argument is an interesting open problem. 

It is surprising to us that the result of our complicated calculation can be understood simply as a saddle-point approximation.
The latter is a tool commonly used in physics calculations, often non-rigorously, e.g., in Parisi's formula of the SK model~\cite{parisi1980sequence}.
Nevertheless, rigorous verification of the saddle-point approximation's predictions can sometimes require indirect and sophisticated methods, e.g., in Talagrand's proof of the Parisi formula~\cite{talagrand2006parisi}.
Here, our generalized multinomial theorem serves as an indirect proof that the saddle-point approximation gives correct predictions of the QAOA's behavior for many spin glass models.
Following the appearance of this work, \cite{Boulebnane2022KSAT} is able to directly apply the saddle-point method to analyze the QAOA for random $k$-SAT in certain regimes.
It would be interesting to understand more broadly when the saddle-point method can be applied to yield simple and accurate analysis of quantum algorithms and many-body dynamics.

Although we have proven a limitation of the QAOA at any constant level $p$, our work still leaves open a few possibilities of a quantum advantage in this regime.
For example, it would be very interesting to compare the constant-$p$ QAOA's performance on the $q$-spin models to the state-of-the-art classical algorithm which is the AMP algorithm~\cite{montanari2019optimization,alaoui2020optimization}.
This algorithm provably finds $(1-\epsilon)$-approximate optimums when there is no OGP (conjectured for $q=2$) after $p_{\rm AMP}(\epsilon)$ number of iterations for any $\epsilon>0$.
Here $p_{\rm AMP}(\epsilon)$ is a function independent of problem size.
Nevertheless, AMP faces an algorithmic threshold bounded away from optimality when $q\ge 4$ is even~\cite{gamarnik2021overlapAukosh}, the setting where OGP is known to exist.
The maximum value achievable by AMP can be obtained numerically via an extended Parisi formula.
Whether the QAOA can match or possibly even beat the performance of the AMP algorithm remains an interesting
open challenge.

Another interesting challenge regards improving our analysis and obtaining explicit numerical values achieved by the constant-$p$ QAOA at large $p$.
Presently, we only know explicit values for the $q$-spin models up to $p\le 20$ from \cite{basso2021quantum} due to the $O(p^2 4^p)$-complexity of evaluating the current formula. 
Going beyond and obtaining these values at higher $p$ can shed light on the challenge of comparing the performance of the QAOA with the performance of the AMP algorithm, mentioned earlier.

Recently,  it was shown in~\cite{huang2021lipschitz}  that no algorithms satisfying an ``overlap concentration property'' can obtain a value better than AMP on the mixed  $q$-spin models.
This was done using  a variant of the overlap gap property, called the branching-OGP.
It would be interesting to see if this limitation extends to the QAOA at constant levels.
This would imply in particular that the QAOA at constant levels does not surpass the value achieved by the AMP algorithm.

Our proof method is
limited to the QAOA with a constant level $p$. It is of interest to extend 
it to the QAOA with $p$ that grows with problem size $n$. 
At the current stage we don't have 
the techniques to approach this.
Since the QAOA provably reaches optimality when no bound on $p$ is placed, it is in particular important to understand whether this can be achieved at $p$ which is only polynomially large, so that the QAOA remains within the class of polynomial-time algorithms.
This would provide a definitive evidence of a quantum advantage in optimization.

\section*{Acknowledgments}
We thank Sergio Boixo, Edward Farhi, Sam Gutmann, Jarrod R. McClean, and Benjamin Villalonga for helpful comments. 
D.G. is supported in part by NSF grant DMS-2015517. S.M. is supported in part by NSF grant DMS-2210827.

\clearpage

{
\tableofcontents
}

\clearpage
\appendix

\section{Some notations and conventions}
Before delving into the formal derivation and proof of our results in the appendices that follow, we first establish some notations and conventions that we use throughout the paper.

For any integer $n\ge1$, we denote $[n]=\{1,2,3,\ldots,n\}$. We denote $\Z$, $\Z_{\ge 0}$, $\R$, and $\C$ to be the set of integers, non-negative integers, real numbers, and complex numbers, respectively. For a set or tuple $S$, we denote $| S |$ as the cardinality (number of elements) in $S$. 

We denote
\begin{align}
    \{O_\av\}_{\av \in A} \equiv \{O_\av : \av \in A\}
\end{align}
to be a set of elements indexed by $A$. We will sometimes also write $\{O_\av\}$ if it is clear which set $\av$ is in. For a set of non-negative integers $\{n_\av\}_{\av \in A} \subseteq \Z_{\ge 0}$ such that $\sum_{\av\in A} n_\av = n$, we denote the multinomial coefficient as
\begin{equation}
    \binom{n}{\{n_\av\}} = \frac{n!}{\prod_{\av\in A} n_\av!}.
\end{equation}

Sometimes the set $\{n_\av\}_{\av \in A}$ is taken as an argument into a function. This argument is to be understood as an ordered tuple (or a vector of $n_\av$'s), i.e.,
\begin{align}
f(\{n_\av\}_{\av\in A}) = f(n_1, n_2, \ldots, n_{|A|}).
\end{align}

For any finite set $S$, we define the {\it Kleene star} $S^*$ of $S$ as the set of all possible ordered tuples (strings or words) formed from elements of $S$, i.e.
\begin{align}
    S^* = \bigcup_{k=0}^\infty S^k = S^0 \cup S \cup S^2 \cup S^3 \cup \cdots
\end{align}
where $S^k$ is the set of vectors of length $k$ with elements in $S$, and $S^0 = \{ \emptyset \}$ contains the empty string.
For a string $\avu \in S^*$, we denote $| \avu |$ to be the length of the string. We denote $\sum_{\av\in \avu} f(\av)$ to mean that $\sum_{l = 1}^k f(\av_l)$ if the string gives $\avu=(\av_1,\av_2,\ldots,\av_k)$ (the notations $\prod_{\av \in \avu}$ and $\max_{\av \in \avu}$ have similar interpretation).

For any two sets $A$ and $B$, we let $A \cup B$ be the union of $A$ and $B$. We use notation $C = A \sqcup B$ to denote that $C$ is the disjoint union of $A$ and $B$. That is, $C = A \cup B$ and $A$ and $B$ are disjoint. 

Finally, we denote $\log x = \ln x$ in this paper.

\section{Proof of the generalized multinomial theorem (Proposition \ref{thm:gen-multinomial})}
\label{apx:gen-multinomial}

\subsection{Formal statement of Proposition \ref{thm:gen-multinomial}}

We will give a formal statement of Proposition \ref{thm:gen-multinomial} in this section. We first give some definitions that will be helpful for the formal statement of Proposition \ref{thm:gen-multinomial}. 

\begin{defn}[Proper set $A$]\label{def:set_A}
We say a finite set $A$ is a \emph{proper set} if it is the disjoint union of three finite sets $A_0$, $D$, and $\overline D$ with distinct elements (i.e., $A = A_0 \sqcup D \sqcup \barD$). Moreover, $D$ and $\barD$ have the same number of elements, and are equipped with a one-to-one mapping $\iota$ from $D$ to $\barD$. Finally, elements in $D \cup \barD$ are equipped with a bar operation: for $\av \in D$, we denote $\bar \av \equiv \iota(\av) \in \barD$; for $\av \in \barD$, we denote $\bar \av \equiv \iota^{-1}(\av) \in D$. 
\end{defn}

\begin{defn}[Proper complex numbers $\{ Q_\av \}_{\av \in A}$]\label{def:complex_Q}
Let $A = A_0 \sqcup D \sqcup \barD$ be a proper set. We say a set of complex numbers $\{ Q_\av \}_{\av \in A}$ indexed by elements in $A$ is \emph{proper} if $\{ Q_\av \}_{\av \in A_0}$ are real non-negative with $\sum_{\av \in A_0} Q_\av = 1$, and for any $\av \in D$, we have $Q_{\bar \av} = -Q_{\av}$. 
\end{defn}

\begin{defn}[Natural and canonical representation of functions of $\{ \omega_\av \}_{\av \in A}$]\label{def:canonical_natural_representation}
Let $A = A_0 \sqcup D \sqcup \barD$ be a proper set. Let $f(\{ \omega_\av \}_{\av \in A})$ be a complex function over complex variables $\{ \omega_\av \}_{\av \in A}$. Let $g$ be a complex function $g(\{ \tau_\av \}_{\av \in D}, \{ \eta_\bv \}_{\bv \in D}, \{ \nu_\cv \}_{\cv \in A_0})$ over complex variables $\{ \tau_\av \}_{\av \in D}$, $\{ \eta_\bv \}_{\bv \in D}$, and $\{ \nu_\cv \}_{\cv \in A_0}$. We say $g$ is a \emph{canonical representation} of $f$, and $f$ is a \emph{natural representation} of $g$, if for any complex variables $\{ \omega_\av \}_{\av \in A} \subseteq \C$, we have
\[
f(\{ \omega_\av \}_{\av \in A}) = g (\{\omega_\av + \omega_{\bar{\av}}\}_{\av \in D}, \{\omega_{\bv} - \omega_{\bar{\bv}} \}_{\bv \in D}, \{ \omega_\cv\}_{\cv\in A_0}). 
\]
By this equation, we can define the natural representation of any complex function $g$ over complex variables $\{ \tau_\av \}_{\av \in D}$, $\{ \eta_\bv \}_{\bv \in D}$, and $\{ \nu_\cv \}_{\cv \in A_0}$. Furthermore, this is a linear change of variable whose Jacobian is non-singular, so that any complex function $f$ over complex variables $\{ \omega_\av \}_{\av \in A}$ also has a canonical representation $g$ defined as
\[
g (\{\tau_\av\}_{\av \in D}, \{ \eta_\bv \}_{\bv \in D}, \{\nu_\cv \}_{\cv\in A_0}) = f(\{ (\tau_\av + \eta_\av) / 2 \}_{\av \in D} \cup \{ (\tau_\bv - \eta_\bv) / 2 \}_{\bv \in \barD} \cup \{ \nu_\cv \}_{\cv\in A_0}). 
\]
We use the operator $\cC$ to denote the transformation from natural representation to canonical representation. That is, we write $g = \cC[f]$ and $f = \cC^{-1}[g]$ if $g$ is the canonical representation and $f$ is the natural representation. 
\end{defn}
Intuitively, we think of the two representations as related via the following basis transformation:
\begin{alignat}{2}
    \tau_\av &= \omega_\av + \omega_{\bar\av}, \quad 
    &&\quad\forall (\av, \bar\av) \in D\times \barD,
    \nonumber\\
    \eta_\bv &= \omega_\bv - \omega_{\bar\bv},
    \quad
    &&\quad\forall (\bv, \bar\bv) \in D\times \barD,
    \\
    \nu_\cv &= \omega_\cv,
    \quad
    &&\quad\forall\cv \in A_0.
    \nonumber
\end{alignat}

\begin{defn}[Sequence of converging polynomials with uniformly bounded degree]\label{def:converging-polynomial}
Let $A$ be a finite set. Consider a sequence of complex polynomials $\{ f_n \}_{n \ge 1}$ over complex variables $\{ \omega_\av\}_{\av \in A}$ of form 
\begin{equation}
f_n(\{ \omega_\av \}_{\av \in A} ) = \sum_{\avu \in A^*} F_{\avu, n} \prod_{\av \in \avu} \omega_\av. 
\end{equation}
We say that $\{ f_n \}_{n \ge 1}$ is a \emph{sequence of converging polynomials with uniformly bounded degree}, if (1) there exists an integer $d_{\max}$ such that the degree of each $f_n$ is bounded by $d_{\max}$; (2) there exists a polynomial $f$ with degree bounded by $d_{\max}$ and of form 
\begin{equation}
f(\{ \omega_\av \}_{\av \in A} ) = \sum_{\avu \in A^*} F_{\avu} \prod_{\av \in \avu} \omega_\av,
\end{equation}
such that $\lim_{n \to \infty} F_{\avu, n} = F_{\avu}$ for any $\avu \in A^*$. We say $f$ is the limit of $\{ f_n\}_{n \ge 1}$. 
\end{defn}

\begin{defn}[Well-played polynomials]\label{def:well-played-polynomial-redefine}
Let $A = A_0 \sqcup D \sqcup \barD$ be a proper set (c.f. Definition \ref{def:set_A}) and assume that there is an ordering $\succ$ on the set $D$. Consider a complex polynomial $P$ over complex variables $\{ \omega_\av\}_{\av \in A}$ whose canonical representation (c.f. Definition \ref{def:canonical_natural_representation}) is given by
\begin{equation}\label{eq:poly_t_d_n_def_redefinition}
\cC[P](\{\tau_\av\}_{\av \in D}, \{\eta_\bv \}_{\bv \in D}, \{\nu_\cv\}_{\cv\in A_0}) 
    = \sum_{\avu \in D^*, \bvu \in D^*, \cvu \in A_0^*} \Psi_{\avu,\bvu, \cvu}
            \prod_{\av \in \avu} \tau_\av
            \prod_{\bv \in \bvu} \eta_\bv
            \prod_{\cv \in \cvu} \nu_\cv .
\end{equation}
We say $P$ is \emph{well-played} if for some constant $L \ge 1$, we have
\begin{equation}\label{eqn:Psi-well-played-condition}
    \Psi_{\avu, \bvu, \cvu} \neq 0 
    \qquad \textnormal{only if} \qquad
    1 \le |\avu| \le L, \quad
    0 \le |\bvu|, |\cvu| \le L, 
    \quad\textnormal{ and }\quad
    \max (\avu) \succ \max(\bvu).
\end{equation}
That is, its coefficient is nonzero only if the words $\avu,\bvu,\cvu$ have bounded lengths (i.e. the polynomial has bounded degree), and the word $\avu$ has length at least 1 and contains an element strictly larger than anything in the word $\bvu$. 

Furthermore, consider a sequence of general polynomials $\{ P_n \}_{n \ge 1}$ over complex variables $\{ \omega_\av\}_{\av \in A}$ whose canonical representation given as
\begin{equation}\label{eq:poly_t_d_n_def_sequence}
\cC[P_n](\{\tau_\av\}_{\av \in D}, \{\eta_\bv\}_{\bv \in D}, \{\nu_\cv\}_{\cv\in A_0}) 
    = \sum_{\avu \in D^*, \bvu \in D^*, \cvu \in A_0^*} \Psi_{\avu,\bvu, \cvu, n}
            \prod_{\av \in \avu} \tau_\av
            \prod_{\bv \in \bvu} \eta_\bv
            \prod_{\cv \in \cvu} \nu_\cv. 
\end{equation}
We say $\{ P_n \}_{n \ge 1}$ is a \emph{sequence of converging well-played polynomials with uniformly bounded degree} if (1) each $P_n$ is a well-played polynomial; (2) $\{ P_n\}_{n \ge 1}$ is a sequence of converging polynomials with uniformly bounded degree (c.f. Definition \ref{def:converging-polynomial}).
This implies that there exists a well-played polynomial $P$ of the form \eqref{eq:poly_t_d_n_def_redefinition} such that $\lim_{n \to \infty} \Psi_{\avu,\bvu, \cvu, n} = \Psi_{\avu, \bvu, \cvu}$ for every $\avu \in D^*, \bvu \in D^*, \cvu \in A_0^*$. 
\end{defn}

The following lemma shows that two particular self-consistent equations (SCEs) related to a well-played polynomial have unique solutions. The solutions to these two self-consistent equations are the same. This unique solution will be used in the generalized multinomial theorem that follows.

\begin{lemma}[Existence and uniqueness of SCE solution]\label{lem:unique_solution_SCE}
Let $A$ be a proper finite set (c.f. Definition \ref{def:set_A}) and $\{ Q_\av \}_{\av \in A} \subseteq \C$ be a set of proper complex numbers (c.f. Definition \ref{def:complex_Q}). Let $P$ be a well-played polynomial (c.f. Definition \ref{def:well-played-polynomial-redefine}). We have the following: 
\begin{itemize}
\item[(a)] Consider the following self-consistent equation upon variables $\{ W_\xv \}_{\xv \in D}$: 
\begin{align}\label{eqn:SCE_canonical}
    W_{\xv} =  Q_{\xv} \exp \Big[ \partial_{\tau_\xv} \cC[P](\{\tau_\av = 0\}_{\av \in D}, \{\eta_\bv = 2 W_\bv\}_{\bv \in D},\{\nu_\cv = Q_\cv\}_{\cv \in A_0}) \Big], ~~~ \forall \xv \in D.
\end{align}
This self-consistent equation has a unique solution. 
\item[(b)] Consider the following self-consistent equation upon variables $\{ W_\xv \}_{\xv \in A}$: 
\begin{equation}
\begin{aligned}\label{eqn:SCE_natural}
&~W_{\xv} =  Q_\xv \exp \Big[ \partial_{\omega_\xv} P( \{ \omega_\av =  W_\av \}_{\av \in A}) \Big], ~~~ &&\forall \xv \in A.\\
\end{aligned}
\end{equation}
This self-consistent equation has a unique solution. The solution satisfies $W_\av = Q_\av$ when $\av \in A_0$ and $W_{\bar \xv} + W_{\xv} = 0$ when $\xv \in D$. 
\item[(c)] Let $\{ \tilde W_\xv \}_{\xv \in D}$ be the unique solution of (\ref{eqn:SCE_canonical}), and $\{ \overline W_\xv \}_{\xv \in A}$ be the unique solution of (\ref{eqn:SCE_natural}). We have $\tilde W_\av = \overline W_\av$ for $\av \in D$.
\item[(d)] The self-consistent equation (\ref{eqn:SCE_canonical}) can be solved in $O(| A |^{d_{\max} + 1})$ time complexity on a classical computer, where $d_{\max}$ is the maximum degree of polynomial $P$. 
\end{itemize}
\end{lemma}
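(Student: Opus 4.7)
My plan is to exploit the well-played structure of $P$ together with the ordering $\succ$ on $D$ to reduce both self-consistent equations to triangular iterations; part (a) is handled directly, and then (b)--(c) are reduced to it. For (a), enumerate $D$ in increasing $\succ$-order as $\xv_1 \prec \cdots \prec \xv_{|D|}$. When we apply $\partial_{\tau_\xv}$ to $\cC[P]$ and then set all remaining $\tau$-variables to $0$, only monomials with $\avu = (\xv)$ survive: any monomial containing a second $\tau_\yv$ still present vanishes after the substitution, and a monomial of the form $\tau_\xv^{\ge 2}(\cdots)$ vanishes after the single derivative. The well-played inequality $\max(\avu) \succ \max(\bvu)$ then forces $\max(\bvu) \prec \xv$, so the right-hand side of \eqref{eqn:SCE_canonical} for $W_\xv$ is an explicit polynomial in $\{W_\yv : \yv \prec \xv\}$ and $\{Q_\cv\}_{\cv \in A_0}$. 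The system is therefore triangular under $\succ$, and resolving $W_{\xv_1}, \ldots, W_{\xv_{|D|}}$ in order yields existence and uniqueness.

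For (b) and (c), I would translate between representations using $\partial_{\omega_\xv} = \partial_{\tau_\xv} + \partial_{\eta_\xv}$, $\partial_{\omega_{\bar\xv}} = \partial_{\tau_\xv} - \partial_{\eta_\xv}$ for $\xv \in D$, and $\partial_{\omega_\cv} = \partial_{\nu_\cv}$ for $\cv \in A_0$. Given any solution $\{W_\av\}$ of \eqref{eqn:SCE_natural}, set $S_\xv := W_\xv + W_{\bar\xv}$ for $\xv \in D$. I claim $S_\xv = 0$ for every $\xv \in D$ and $W_\cv = Q_\cv$ for every $\cv \in A_0$, proved by descending induction on $\succ$. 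At the $\succ$-largest element $\xv_{|D|}$, the well-played condition makes $\partial_{\eta_{\xv_{|D|}}} \cC[P] \equiv 0$ identically (any contribution would require $\xv_{|D|} \in \bvu$, forcing $\max(\avu) \succ \xv_{|D|}$, which is impossible); so $\partial_{\omega_{\xv_{|D|}}} P = \partial_{\omega_{\bar\xv_{|D|}}} P$ at every evaluation point, and \eqref{eqn:SCE_natural} together with $Q_{\bar\xv} = -Q_\xv$ gives $W_{\xv_{|D|}} = -W_{\bar\xv_{|D|}}$, i.e.\ $S_{\xv_{|D|}} = 0$. For smaller $\xv$, every nonzero monomial in $\partial_{\eta_\xv} \cC[P]$ still contains some $\tau_\yv$ with $\yv \succ \xv$, which is annihilated when we substitute $\tau_\yv = S_\yv = 0$ by the induction hypothesis; the same two-exponent comparison then yields $S_\xv = 0$. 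Once every $S_\xv = 0$, the $|\avu| \ge 1$ half of the well-played condition similarly forces $\partial_{\nu_\cv} \cC[P]$ to evaluate to $0$, giving $W_\cv = Q_\cv$. Under these constraints \eqref{eqn:SCE_natural} collapses to \eqref{eqn:SCE_canonical}, so uniqueness for (b) and the identification (c) follow from (a); existence for (b) is obtained by extending the canonical solution via $W_{\bar\xv} := -W_\xv$ and $W_\cv := Q_\cv$ and verifying \eqref{eqn:SCE_natural} by reversing the reduction.

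For (d), the triangular iteration from (a) processes at most $|D| \le |A|$ variables, and each step computes $\partial_{\tau_\xv} \cC[P]$ (a polynomial with at most $O(|A|^{d_{\max}})$ monomials) and evaluates it at the previously resolved values, yielding the stated $O(|A|^{d_{\max}+1})$ bound. The main technical hurdle is the descending induction in (b): one must carefully track how the well-played inequality $\max(\avu) \succ \max(\bvu)$ interacts with the basis change $\partial_\omega = \partial_\tau \pm \partial_\eta$, and verify that the chain of cancellations $\partial_{\eta_\xv} \cC[P] \big|_{\tau = S = 0} = 0$ telescopes cleanly from the top of $\succ$ downward. Once that structural fact is in hand, the remaining reductions are direct.
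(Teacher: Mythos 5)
Your proposal is correct and follows essentially the same route as the paper's own proof: a triangular (ascending-$\succ$) recursion for part (a), a descending-$\succ$ induction to establish the anti-symmetry $W_\xv + W_{\bar\xv} = 0$ in part (b), and then collapsing \eqref{eqn:SCE_natural} to \eqref{eqn:SCE_canonical} to import uniqueness and obtain (c) and (d). The only difference is cosmetic: you phrase the anti-symmetry step via the chain-rule identities $\partial_{\omega_\xv} = \partial_{\tau_\xv} + \partial_{\eta_\xv}$ and $\partial_{\omega_{\bar\xv}} = \partial_{\tau_\xv} - \partial_{\eta_\xv}$, whereas the paper spells out the same decomposition by explicitly splitting $\partial_{\omega_\xv}P$ into the pieces coming from the $(\omega_\av+\omega_{\bar\av})$-factors and the $(\omega_\bv-\omega_{\bar\bv})$-factors.
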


The proof of Lemma~\ref{lem:unique_solution_SCE} is given in Section~\ref{sec:unique_solution_SCE}. Now we are ready to state the formal version of Proposition \ref{thm:gen-multinomial}, the generalized multinomial theorem: 

\begin{proposition1}
\label{thm:gen-multinomial-restate}
Let $A$ be a proper finite set (c.f. Definition \ref{def:set_A}) and $\{ Q_\av \}_{\av \in A} \subseteq \C$ be a set of proper complex numbers (c.f. Definition \ref{def:complex_Q}). Then for a sequence of converging polynomials $f_n(\{\omega_\av\}_{\av \in A})$ with uniformly bounded-degree and with limit $f$ (c.f. Definition \ref{def:converging-polynomial}) and any sequence of converging well-played polynomials $P_n(\{\omega_\av\}_{\av \in A})$ with uniformly bounded degree and with limit $P$ (c.f. Definition \ref{def:well-played-polynomial-redefine}), we have
\begin{align}\label{eqn:gen-multi-restate-equation}
\lim_{n\to\infty} 
\sum_{\substack{\{ n_\av \ge 0 \}_{\av \in A}\\ \sum_{\av \in A} n_\av = n}} \binom{n}{\{n_\av\}_{\av \in A}}
\Big(\prod_{\bv\in A} Q_\bv^{n_\bv} \Big)
	\exp\Big[n \cdot P_n(\{n_\av/n\}_{\av \in A}) \Big]
f_n(\{n_\av/n\}_{\av \in A})
= f(\{W_\av\}_{\av \in A})
\end{align}
where $\{W_\av\}_{\av \in A}$ is given as the unique solution to Eq. \eqref{eqn:SCE_natural}.
\end{proposition1}

\subsection{Proof of Lemma \ref{lem:unique_solution_SCE}}
\label{sec:unique_solution_SCE}

\noindent
{\bf Proof of (a). }
By the canonical representation of polynomial $P$ as in Eq. (\ref{eq:poly_t_d_n_def_redefinition}), we have
\[
\begin{aligned}
&~\partial_{\tau_\xv} \cC[P](\{\tau_\av = 0\}_{\av \in D}, \{\eta_\bv = 2W_\bv\}_{\bv \in D},\{\nu_\cv = Q_\cv\}_{\cv \in A_0}) \Big] \\
=&~ \partial_{\tau_\xv} \Big[\sum_{\avu,\bvu \in D^*, \cvu \in A_0^*}
    \Psi_{\avu,\bvu, \cvu}
            \prod_{\av \in \avu} \tau_\av
            \prod_{\bv \in \bvu} \eta_\bv
            \prod_{\cv \in \cvu} \nu_\cv \Big] \Big |_{\{\tau_\av = 0\}_{\av \in D}, \{\eta_\bv = 2W_\bv\}_{\bv \in D},\{\nu_\cv = Q_\cv\}_{\cv \in A_0}},\\
            =&~\sum_{\bvu \in D^*, \cvu \in A_0^*} \Psi_{\xv, \bvu, \cvu} \prod_{\bv \in \bvu} (2W_\bv)
            \prod_{\cv \in \cvu} Q_\cv,
\end{aligned}
\]
By the fact that $P$ is a well-played polynomial, we have $\Psi_{\xv, \bvu, \cvu}$ is non-zero only if $\max\{ \bvu \} \prec \xv$. As a consequence, Eq. (\ref{eqn:SCE_canonical}) can be rewritten as 
\begin{align}\label{eqn:SCE_canonical_in_proof}
    W_{\xv} =  Q_{\xv} \exp \Big[ \sum_{\bvu\in D^*, \cvu\in A_0^*, \max(\bvu) \prec \xv} \Psi_{\xv, \bvu, \cvu} \prod_{\bv \in \bvu} (2W_\bv)
            \prod_{\cv \in \cvu} Q_\cv \Big], ~~~~ \forall \xv \in D. 
\end{align}
Note that the right hand side of the equation above depends on $\{ W_\bv \}_{\bv \in D}$ only through $\{ W_\bv \}_{\bv \prec \xv}$. This implies that, for any $\xv \in D$, if the values of $\{ W_\bv \}_{\bv \prec \xv}$ are determined, we can use Eq. (\ref{eqn:SCE_canonical_in_proof}) (equivalent to Eq. (\ref{eqn:SCE_canonical})) to determine $W_\xv$. So Eq. (\ref{eqn:SCE_canonical}) is essentially a recursive equation that can determine $\{ W_\bv \}_{\bv \in D}$ sequentially in the ascending order of $\prec$. This implies that there exists a unique solution to Eq. (\ref{eqn:SCE_canonical}).

\vspace{8pt}
\noindent
{\bf Proof of (b): solution is anti-symmetric. } First we show that Eq. (\ref{eqn:SCE_natural}) implies that $W_\av + W_{\bar \av} = 0$ for all $\av \in D$. Using the canonical representation, we write explicitly the natural representation of polynomial $P$ as
\begin{equation}
    P(\{\omega_\av\}_{\av \in A}) = \sum_{\avu,\bvu\in D^*,\cvu \in A_0^* } \Psi_{\avu,\bvu,\cvu} \prod_{\av \in\avu} (\omega_\av + \omega_{\bar\av}) \prod_{\bv \in \bvu} (\omega_\bv - \omega_{\bar\bv}) \prod_{\cv \in \cvu} \omega_\cv .
\end{equation}
Since $P$ is well-played, here $\Psi_{\avu,\bvu,\cvu} \neq 0$ only if $|\avu|\ge 1$ and $\max(\avu) \succ \max(\bvu)$.

We prove $W_{\av} + W_{\bar \av} = 0$ for any $\av \in D$ using an induction argument. Note that for any $\xv \in D$, by the symmetric property of $P$, we have
\begin{equation}
\begin{aligned}
\partial_{\omega_\xv} P(\{W_\av\}_{\av \in A}) =&~ P_{\xv, (1)}(\{ W_\av\}) + P_{\xv, (2)}(\{ W_\av\}),\\
\partial_{\omega_{\bar \xv}} P(\{W_\av\}_{\av \in A}) =&~ P_{\xv, (1)}(\{ W_\av\}) - P_{\xv, (2)}(\{ W_\av\}),
\end{aligned}
\end{equation}
where
\begin{equation}
\begin{aligned}
P_{\xv, (1)}(\{ W_\av\}) = &~ \sum_{\avu,\bvu\in D^*,\cvu \in A_0^* } \Psi_{\avu,\bvu,\cvu} \cdot \partial_{W_\xv}  \Big[\prod_{\av \in\avu} (W_\av + W_{\bar\av})\Big] \prod_{\bv \in \bvu} (W_\bv - W_{\bar\bv})  \prod_{\cv \in \cvu} W_\cv, \\
P_{\xv, (2)}(\{ W_\av\}) = &~\sum_{\avu,\bvu\in D^*,\cvu \in A_0^* } \Psi_{\avu,\bvu,\cvu}  \prod_{\av \in\avu} (W_\av + W_{\bar\av})\cdot \partial_{W_\xv} \Big[\prod_{\bv \in \bvu} (W_\bv - W_{\bar\bv}) \Big]  \prod_{\cv \in \cvu} W_\cv.
\end{aligned}
\end{equation}
First, let $\xv$ be the largest element (under order $\succ$) in $D$. By the well-played property of $P$, we have $\max\{ \bvu \} \prec \xv$ for every nonzero $\Psi_{\avu,\bvu,\cvu}$, which implies that $\partial_{W_\xv} [\prod_{\bv \in \bvu} (W_\bv - W_{\bar\bv}) ] = 0$ for every nonzero $\Psi_{\avu,\bvu,\cvu}$. This implies that $P_{\xv, (2)}(\{ W_\av\}) = 0$ so that $\partial_{\omega_\xv} P(\{W_\av\}) = \partial_{\omega_{\bar \xv}} P(\{W_\av\})$. Furthermore, by the property that $Q_\xv + Q_{\bar \xv} = 0$, using Eq. (\ref{eqn:SCE_natural}) we have $W_\xv + W_{\bar \xv} = 0$. 

Next, we assume the induction hypothesis that $\xv \in D$ is such that $W_{\av} + W_{\bar \av} = 0$ for every $\av \succ \xv$, and we will show that $W_\xv + W_{\bar \xv} = 0$. When the quantity $\partial_{W_\xv} [\prod_{\bv \in \bvu} (W_\bv - W_{\bar\bv}) ] \neq 0$, we know that $\xv \in \bvu$. So by the well-played property of $P$, for any non-zero $\Psi_{\avu,\bvu,\cvu}\times \partial_{W_\xv} [\prod_{\bv \in \bvu} (W_\bv - W_{\bar\bv}) ]$, we must have $\max\{ \avu\} \succ \xv$. As a consequence, by the inductive hypothesis, for any non-zero $\Psi_{\avu,\bvu,\cvu} \times \partial_{W_\xv} [\prod_{\bv \in \bvu} (W_\bv - W_{\bar\bv}) ]$, we must have $\prod_{\av \in\avu} (W_\av + W_{\bar\av}) = 0$. This implies that $P_{\xv, (2)}(\{ W_\av\}) = 0$ so that $\partial_{\omega_\xv} P(\{W_\av\}) = \partial_{\omega_{\bar \xv}} P(\{W_\av\})$. Furthermore, by the property that $Q_\xv + Q_{\bar \xv} = 0$, using Eq. (\ref{eqn:SCE_natural}) we have $W_\xv + W_{\bar \xv} = 0$. This proves the induction conclusion and finishes the induction argument.  

\noindent
{\bf Uniqueness part of (b). } First we consider Eq. (\ref{eqn:SCE_natural}) when $\xv\in A_0$. Then $\partial_\xv P(\{\omega_\av=W_\av\}) = 0$ because for every term in $P$ above in which $\Psi_{\avu,\bvu,\cvu} \neq 0$, $\avu$ is always non-empty, so we get a factor $ \prod_{\av\in\avu}(W_\av + W_{\bar\av}) = 0$ after plugging in $\{\omega_\av=W_\av\}$.
Then Eq. (\ref{eqn:SCE_natural}) implies that
\begin{equation}
    W_\xv = Q_\xv 
    \qquad \forall \xv\in A_0.
\end{equation}

Then we consider Eq. (\ref{eqn:SCE_natural}) when $\xv \in D$. 
Due to the constraint that $W_\av + W_{\bar\av}=0$, the only non-zero terms in $P$ after taking the derivative $\partial_{\omega_\xv}$ and plugging in $\{\omega_\av=W_\av\}$ are the ones where $\avu = \xv$. This gives
\begin{equation} \label{eq:dPdWx}
\partial_{\omega_\xv} P(\{W_\av\}) 
= \sum_{\bvu\in D^*, \cvu\in A_0^*} \Psi_{\xv,\bvu,\cvu}  \prod_{\bv \in \bvu} (W_\bv - W_{\bar\bv}) \prod_{\cv\in \cvu} W_\cv
= \sum_{\bvu\in D^*, \cvu\in A_0^*} \Psi_{\xv,\bvu,\cvu}  \prod_{\bv \in \bvu} (2W_\bv) \prod_{\cv\in \cvu} Q_\cv. 
\end{equation}
Note since $\Psi_{\xv,\bvu,\cvu}\neq0$ only if $\xv \succ \max(\bvu)$, the above expression only contains dependence of $\{W_\bv: \bv \prec \xv\}$.
So Eq. (\ref{eqn:SCE_natural}) implies that
\begin{equation}\label{eqn:recursive-in-proof}
W_\xv = Q_\xv \exp\Big[
\sum_{\bvu\in D^*, \cvu\in A_0^*, \max(\bvu) \prec \xv} \Psi_{\xv,\bvu,\cvu}  \prod_{\bv \in \bvu} (2W_\bv) \prod_{\cv\in \cvu} Q_\cv
    \Big],
    ~~~~ \forall \xv\in D.
\end{equation}
Manifestly, this can be solved to yield a unique solution for all $W_\xv$ in the increasing order of $\xv\in D$. This implies that, if Eq. (\ref{eqn:SCE_natural}) has a solution, then the solution is unique. 

\noindent
{\bf Existence part of (b). } It can be easily checked that $\{ W_\av \}_{\av \in A} = \{ Q_\cv \}_{\cv \in A_0} \cup \{ W_\xv \}_{\xv \in D} \cup \{ W_{\bar \xv} = - W_\xv \}_{\bar \xv \in \overline D}$, where $\{ W_\xv \}_{\xv \in D}$ is the solution of Eq. (\ref{eqn:recursive-in-proof}), satisfies Eq. (\ref{eqn:SCE_natural}). This shows that Eq. (\ref{eqn:SCE_natural}) has a solution. 

\vspace{8pt}

\noindent
{\bf Proof of (c). } Note that the solution $\{ \overline W_\xv \}_{\xv \in A}$ of Eq. (\ref{eqn:SCE_natural}) satisfies Eq. (\ref{eqn:recursive-in-proof}), and the solution $\{ \tilde W_\xv \}_{\xv \in D}$ of Eq. (\ref{eqn:SCE_canonical}) satisfies Eq. (\ref{eqn:SCE_canonical_in_proof}). Note that Eq. (\ref{eqn:recursive-in-proof}) coincides with Eq. (\ref{eqn:SCE_canonical}) which has a unique solution. This implies that $\tilde W_\xv = \overline W_\xv$, for all $\xv \in D$.  

\vspace{8pt}
\noindent
{\bf Proof of (d). } The algorithm to solve $\{ W_\xv\}_{\xv \in D}$ is by sequentially use Eq. (\ref{eqn:SCE_canonical_in_proof}) in the ascending order of $\prec$. Note that $ \sum_{\bvu\in D^*, \cvu\in A_0^*, \max(\bvu) \prec \xv} \Psi_{\xv, \bvu, \cvu} \prod_{\bv \in \bvu} (2W_\bv) \prod_{\cv \in \cvu} Q_\cv$ is the partial derivative of a polynomial of $| A |$ variables and of degree $d_{\max}$, so evaluating it has at most time complexity $O(| A |^{d_{\max}})$.
Since there are $O(|D|)=O(|A|)$ equations in Eq. \eqref{eqn:SCE_canonical_in_proof}, the total time complexity is $O(| A | \times | A |^{d_{\max}}) = O(| A |^{d_{\max} + 1})$. This completes the proof.

\subsection{Proposition \hyperref[thm:gen-multinomial-restate]{4.1 (Formal)} in canonical form (Statement of Proposition \ref{prop:well-played-again})}

In this section of the appendix, we state an equivalent version of Proposition \hyperref[thm:gen-multinomial-restate]{4.1 (Formal)}, which uses the canonical representation of polynomials.
This new version of the proposition allows us to exploit the combinatorial structure of well-played polynomials more directly for proofs.

Define the operator $\sqint_\bv$ (which depends on a non-negative integer $t_\bv \in \Z_{\ge 0}$ and the complex number $Q_\bv \in \C$) for any $\bv\in D$ acting on any function $f(\eta_\bv)$ as
\begin{align}\label{eqn:little_sum-definition}
\sqint_{\bv}f(d_\bv / n) :=
    \sum_{n_\bv, n_{\bar \bv} \ge 0, n_\bv + n_{\bar \bv} = t_\bv} \binom{t_\bv}{n_\bv, n_{\bar \bv}} Q_\bv^{n_\bv} (-Q_\bv)^{n_{\bar \bv}} f((n_\bv - n_{\bar \bv}) / n). 
\end{align}
An important observation is that the operators $\{ \sqint_{\bv} \}_{\bv \in D}$ are commutative: for different $\bv_1, \bv_2 \in D$, we have $\sqint_{\bv_1} \sqint_{\bv_2} f(d_{\bv_1} / n, d_{\bv_2} / n ) = \sqint_{\bv_2} \sqint_{\bv_1} f(d_{\bv_1} / n, d_{\bv_2} / n )$ for any function $f(\eta_{\bv_1}, \eta_{\bv_2})$. 

\vspace{2pt}
Now we give the alternative statement of Proposition \hyperref[thm:gen-multinomial-restate]{4.1 (Formal)} in canonical form:

\begin{proposition}
\label{prop:well-played-again}
Let $A$ be a proper set (c.f. Definition \ref{def:set_A}) and $\{ Q_\av \}_{\av \in A} \subseteq \C$ be a set of proper complex numbers (c.f. Definition \ref{def:complex_Q}). Then for any sequence of converging uniformly well-played polynomials $P_n(\{\omega_\av\}_{\av \in A})$ with limit $P$ (c.f. Definition \ref{def:well-played-polynomial-redefine}) and any fixed non-negative integers $\{r_\av\}_{\av \in D}, \{ s_\bv\}_{\bv \in D}, \{ m_\cv\}_{\cv \in A_0}$, we define 
\begin{equation}\label{eqn:In-definition}
\begin{aligned}
&~I_n(\{r_\av\}, \{ s_\bv\}, \{ m_\cv\}) \\
:=&~ \sum_{\{ n_\cv, t_\av \} \subseteq \Z_{\ge 0}, \sum_{\cv \in A_0} n_\cv + \sum_{\av \in D} t_\av = n} \binom{n}{\{n_\cv, t_\av\}_{\av \in D, \cv \in A_0}}
    \Big(\prod_{\cv 
    \in A_0 } Q_\cv^{n_\cv} \Big) \Big( \prod_{\av\in D}
     \sqint_\av
    \Big) \\
    &~ \times \Big\{ \exp\Big[ n\cdot \cC[P_n](\{t_\av/n\}_{\av\in D}, \{d_\bv/n\}_{\bv\in D}, \{n_\cv/n\}_{\cv\in A_0})\Big]
    \times \prod_{\av, \bv \in D, \cv\in A_0}
    \Big(\frac{t_\av }{n}\Big)^{r_\av}
    \Big(\frac{d_\bv }{n}\Big)^{s_\bv}
    \Big(\frac{n_\cv }{n}\Big)^{m_\cv} \Big\}.
\end{aligned}
\end{equation}
We further define 
\begin{equation}\label{eqn:I-definition}
I(\{r_\av\}, \{ s_\bv\}, \{ m_\cv\}) := \prod_{\av, \bv \in D, \cv\in A_0} 1(r_\av = 0) (2 W_{\bv})^{s_\bv}Q_\cv^{m_\cv}, 
\end{equation}
where $\{ W_\bv \}_{\bv \in D}$ is given as the unique solution to the following equation:
\begin{align}\label{eqn:def-W-in-proposition}
    W_{\xv} =  Q_{\xv} \exp \Big[ \partial_{\tau_\xv} \cC[P](\{\tau_\av = 0\}_{\av \in D}, \{\eta_\bv = 2W_\bv\}_{\bv \in D},\{\nu_\cv = Q_\cv\}_{\cv \in A_0}) \Big]. 
\end{align}
Then for any fixed integers $\{r_\av\}_{\av \in D}, \{ s_\bv\}_{\bv \in D}, \{ m_\cv\}_{\cv \in A_0}$, we have
\begin{align}\label{eq:well-played-outcome}
    \lim_{n\to\infty}I_n(\{r_\av\}, \{ s_\bv\}, \{ m_\cv\}) = I(\{r_\av\}, \{ s_\bv\}, \{ m_\cv\}). 
\end{align}
\end{proposition}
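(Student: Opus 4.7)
The plan is to expand $\exp[n\,\cC[P_n]]$ as a power series, commute the $\sqint_\av$ operators past it, and extract the $n\to\infty$ limit using the combinatorial rigidity enforced by the well-played property. The key computation is the generating-function identity
\begin{align*}
\sqint_\bv e^{y\,d_\bv/n} = \bigl(2 Q_\bv \sinh(y/n)\bigr)^{t_\bv},
\end{align*}
from which $\sqint_\bv (d_\bv)^s = 0$ whenever $s < t_\bv$; combined with $\binom{n}{\ldots,t_\bv,\ldots}\sim n^{t_\bv}/t_\bv!$, this gives $\binom{n}{\ldots,t_\bv,\ldots}\,\sqint_\bv(d_\bv/n)^{s_\bv} = O(n^{t_\bv-s_\bv})$, maximized at $t_\bv=s_\bv$ where it tends to $(2Q_\bv)^{s_\bv}$. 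Thus only configurations with $t_\av = O(1)$ and $\sum_\cv n_\cv = n - O(1)$ survive the limit. The explicit $(t_\av/n)^{r_\av}$ factor then vanishes unless $r_\av = 0$, and the leftover multinomial in $\{n_\cv\}_{\cv\in A_0}$ concentrates on $n_\cv/n\to Q_\cv$ by the weak law of large numbers, producing both the $\mathbf{1}(r_\av=0)$ indicator and the $\prod_\cv Q_\cv^{m_\cv}$ factor.

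To produce $\prod_\bv (2W_\bv)^{s_\bv}$, I would induct on $|D|$ in the decreasing order of $\succ$, peeling off $\xv_*=\max D$ at each step. The well-played condition $\max(\avu)\succ\max(\bvu)$ forbids $\eta_{\xv_*}$ from appearing in $\cC[P_n]$ when $\xv_*$ is the overall maximum (since no $\av\succ\xv_*$ exists in $D$), so $\sqint_{\xv_*}$ acts only on the explicit $(d_{\xv_*}/n)^{s_{\xv_*}}$ factor. Substituting $\tau_{\xv_*}=t_{\xv_*}/n$ into $n\,\cC[P_n]$ and Taylor expanding in $\tau_{\xv_*}$, the terms beyond linear order are $O(t_{\xv_*}^k/n^{k-1})$ and vanish in the $n\to\infty$ limit for bounded $t_{\xv_*}$. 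The surviving contribution pins $t_{\xv_*}=s_{\xv_*}$ and yields the factor
\begin{align*}
(2Q_{\xv_*})^{s_{\xv_*}} \exp\!\bigl[s_{\xv_*}\,\partial_{\tau_{\xv_*}}\cC[P]\big|_{\tau_{\xv_*}=0}\bigr].
\end{align*}
After the inductive hypothesis replaces the $\eta_\bv$'s with $\bv\prec\xv_*$ by $2W_\bv$ inside this expression, the defining equation \eqref{eqn:def-W-in-proposition} packages it into $(2W_{\xv_*})^{s_{\xv_*}}$. What remains is of the same form as the original sum over the reduced set $D\setminus\{\xv_*\}$, closing the induction; the base case $|D|=0$ is precisely the multinomial concentration from the previous paragraph.

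The main obstacle, and the technical heart of the proof, is in making the expansion of $\exp[n\,\cC[P_n]]$ rigorous: the factor of $n$ inside the exponent means the naive Taylor series has no uniform-in-$n$ bound. The well-played property must be exploited in two ways simultaneously---to supply, through each $\tau_\av=t_\av/n$ factor, the $1/n$ smallness that tames the $n$ prefactor whenever $t_\av=O(1)$, and to enforce the $\succ$-ordering that keeps the recursive evaluation of the $W_\bv$'s well-defined. One must show that after the multinomial and $\sqint_\av$ sums the tail past finitely many copies of $\cC[P_n]$ contributes $o(1)$, which amounts to delicate combinatorial bookkeeping of how $\eta_\bv$ factors accumulate across repeated copies. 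A secondary complication is that the effective multiplier produced after peeling $\xv_*$ is not itself of the form $\exp[n\,\cC[P]]$ for a well-played $P$, so the induction likely has to be framed with an additional exponential-of-polynomial factor carried along the recursion.
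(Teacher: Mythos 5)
Your proposal correctly identifies the computational core. The generating-function identity for $\sqint_\bv$ is precisely Lemma~\ref{lem:little_sum_lemma}, and your observations that $\sqint_\bv(d_\bv/n)^{s_\bv}$ vanishes unless $t_\bv\le s_\bv$, that $r_\av>0$ kills the contribution, and that the $A_0$ block concentrates at $n_\cv/n\to Q_\cv$ are all right, as is the limit value you are aiming for.

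However, the two difficulties you flag at the end are genuine gaps, and the paper's proof is organized specifically so that neither arises. Rather than peeling $\max D$ off the prelimit, it first decomposes $I_n=\sum_{t\ge 0}e_n(t)$ by total mass $t=\sum_{\av\in D} t_\av$, proves pointwise $e_n(t)\to e(t)$ for each fixed $t$ by Taylor-expanding $\exp[n\,\cC[P_n]]$ and applying dominated convergence over the resulting multi-indices (Lemma~\ref{lem:en_convergence_e}), and exhibits an $n$-uniform, $t$-summable majorant $E(t)$ built from the well-played bound $\bar P_\av$ of Lemma~\ref{lem:property-well-played-polynomial} (Lemmas~\ref{lem:en_bounded_E} and~\ref{lem:E_summable}); this justifies interchanging $\lim_n$ and $\sum_t$ without the bookkeeping over powers of $\cC[P_n]$ that you anticipate. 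Only then is the $\succ$-ordered peeling performed, and it is done on the \emph{limit} series $\sum_t e(t)$ (Lemma~\ref{lem:e_summation} via Lemma~\ref{lem:simplification_lemma_new}). The structural point your plan misses is that after $n\to\infty$ the exponential in $e(t)$ has already collapsed to $\exp[\cC[P_\lin]]$, with $\cC[P_\lin]=\sum_{\av\in D} t_\av P_\av(\{z_\bv\}_{\bv\prec\av},\{Q_\cv\})$ linear in each $t_\av$; the sum-plus-contour-integral over $(t_{\xv_*},z_{\xv_*})$ therefore factors off $(2W_{\xv_*})^{s_{\xv_*}}$ cleanly with no leftover exponential residual. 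Your ``secondary complication'' evaporates once the peeling is postponed until after the limit, and your ``main obstacle'' is handled by the explicit summable majorant rather than by tracking how $\eta_\bv$ factors accumulate. Salvaging your peel-first plan would require carrying an auxiliary $O(1)$ factor $\exp[t_{\xv_*}\partial_{\tau_{\xv_*}}\cC[P_n]]$, which still depends on $\{\eta_\bv\}_{\bv\prec\xv_*}$, through the induction together with a uniform-in-$n$ integrability argument at each level---possible in principle, but substantially harder than the paper's route.
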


We will show that Proposition \ref{prop:well-played-again} implies Proposition \hyperref[thm:gen-multinomial-restate]{4.1 (Formal)} in the section that follows.
The proof of Proposition~\ref{prop:well-played-again} is given later in Section~\ref{sec:proof-well-played}.

\subsection{Proof of Proposition 4.1 (Formal) using Proposition \ref{prop:well-played-again}}

The fact that Proposition~\hyperref[thm:gen-multinomial-restate]{4.1 (Formal)} follows from Proposition~\ref{prop:well-played-again} can be understood rather straightforwardly via a transformation of variables.
Here we give the full detailed proof.

\vspace{10pt}

\noindent
{\bf Step 1. } We first show that the prelimit of the left hand side of Eq. (\ref{eqn:gen-multi-restate-equation}) coincides with $I_n(\{r_\av\}, \{ s_\bv\}, \{ m_\cv\})$ when taking $f_n = \Upsilon$ with
\begin{equation}\label{eqn:function-f-choice}
\Upsilon(\{\omega_\av\}_{\av \in A}; \{r_\av\}, \{ s_\bv\}, \{ m_\cv\}) = \prod_{\av, \bv \in D, \cv\in A_0}
    (\omega_\av + \omega_{\bar \av})^{r_\av}
    (\omega_\bv - \omega_{\bar \bv})^{s_\bv}
    \omega_\cv^{m_\cv}. 
\end{equation}
That is,
\begin{gather}
~\sum_{\substack{\{ n_\av \ge 0 \}_{\av \in A}, \sum_{\av \in A} n_\av = n}} \binom{n}{\{n_\av\}_{\av \in A}}
\Big(\prod_{\bv\in A} Q_\bv^{n_\bv} \Big)
	\exp\Big[n \cdot P_n(\{n_\av/n\}_{\av \in A}) \Big]
\Upsilon(\{n_\av/n\}_{\av \in A}) \nonumber \\
=~ I_n(\{r_\av\}, \{ s_\bv\}, \{ m_\cv\}). 
\label{eqn:I_n_E_choice}
\end{gather}
Note we have suppressed the dependence on $(\{r_\av\},\{s_\bv\}, \{m_\cv\})$ in $\Upsilon(\cdot)$ for simplicity of notation.
    
To prove Eq.~\eqref{eqn:I_n_E_choice}, we note for any function $g(\{ \omega_\av \}_{\av \in A})$, we have
{\small
\begin{equation}\label{eqn:reorganizing_sum_n_t_d}
\begin{aligned}
&~\sum_{\substack{\{ n_\av \ge 0 \}_{\av \in A}\\ \sum_{\av \in A} n_\av = n}} \binom{n}{\{n_\av\}_{\av \in A}}
\Big(\prod_{\bv\in A} Q_\bv^{n_\bv} \Big) g(\{ n_\av / n \}_{\av \in A})\\
=&~ \sum_{\substack{\{ n_\cv, t_\av \}_{\av \in D, \cv \in A_0} \subseteq \Z_{\ge 0}\\ \sum_{\cv \in A_0} n_\cv + \sum_{\av \in D} t_\av = n}} \binom{n}{\{n_\cv, t_\av\}} \Big[\prod_{\av \in A} 
\sum_{\substack{n_\av, n_{\bar \av} \ge 0 \\ n_\av + n_{\bar \av } = t_\av}} {t_\av \choose n_\av, n_{\bar \av}} \Big]
\Big(\prod_{\bv\in A} Q_\bv^{n_\bv} \Big)  g(\{ n_\av / n \}_{\av \in A})\\
=&~ \sum_{\substack{\{ n_\cv, t_\av \}_{\av \in D, \cv \in A_0} \subseteq \Z_{\ge 0}\\ \sum_{\cv \in A_0} n_\cv + \sum_{\av \in D} t_\av = n}} \binom{n}{\{n_\cv, t_\av\}} \Big(\prod_{\cv\in A_0} Q_\cv^{n_\cv} \Big)  \Big[ \prod_{\bv \in A} 
\sum_{\substack{n_\bv, n_{\bar \bv} \ge 0 \\ n_\bv + n_{\bar \bv } = t_\bv}} {t_\bv \choose n_\bv, n_{\bar \bv}} 
Q_\bv^{n_\bv} Q_{\bar \bv}^{n_{\bar \bv}} \Big] \\
&~\qquad\qquad\qquad\qquad\qquad\qquad \times \cC[g](\{ t_\av / n \}_{\av \in D}, \{ (n_\bv - n_{\bar \bv})/n\}_{\bv \in D}, \{ n_\cv / n \}_{\cv \in A_0}) \\
=&~ \sum_{\substack{\{ n_\cv, t_\av \}_{\av \in D, \cv \in A_0} \subseteq \Z_{\ge 0}\\ \sum_{\cv \in A_0} n_\cv + \sum_{\av \in D} t_\av = n}} \binom{n}{\{n_\cv, t_\av\}} \Big(\prod_{\cv\in A_0} Q_\cv^{n_\cv} \Big) \Big( \prod_{\bv \in A} \sqint_\bv \Big)   \cC[g](\{ t_\av / n \}_{\av \in D}, \{ d_\bv /n\}_{\bv \in D}, \{ n_\cv / n \}_{\cv \in A_0}). 
\end{aligned}
\end{equation}
}
where we used the definition of $\sqint_\bv$ as in Eq. \eqref{eqn:little_sum-definition}.

Now we choose $g(\{ \omega_\av\}_{\av \in A}) = \exp\{ n P_n(\{ \omega_\av\}_{\av \in A}) \} \Upsilon(\{ \omega_\av \}_{\av \in A})$. Note that the canonical form of $g$ gives
\begin{equation}
\begin{aligned}
&~\cC[g] = \cC\Big[\exp[n \cdot P_n(\{\omega_\av\}_{\av \in A}) ] 
\Upsilon(\{\omega_\av\}_{\av \in A}) \Big]\\
=&~ \exp[ n\cdot \cC[P_n](\{\tau_\av\}_{\av\in D}, \{\eta_\bv\}_{\bv\in D}, \{\nu_\cv\}_{\cv\in A_0})\Big]
    \times \prod_{\av, \bv \in D, \cv\in A_0}
    \tau_\av^{r_\av}
    \eta_\bv^{s_\bv}
    \nu_\cv^{m_\cv}. 
\end{aligned}
\end{equation}
Plugging this to Eq. (\ref{eqn:reorganizing_sum_n_t_d}) proves Eq. (\ref{eqn:I_n_E_choice}). 

\vspace{3pt}
\noindent
{\bf Step 2. } We then show that the right hand side of Eq. (\ref{eqn:gen-multi-restate-equation}) coincides with $I(\{r_\av\}, \{ s_\bv\}, \{ m_\cv\})$ when taking $f_n = \Upsilon$ as defined in Eq. (\ref{eqn:function-f-choice}).
That is, we will show that 
\begin{equation}
\begin{aligned}
\Upsilon(\{W_\av\}_{\av \in A}; \{r_\av\}, \{ s_\bv\}, \{ m_\cv\}) = I(\{r_\av\}, \{ s_\bv\}, \{ m_\cv\}),
\end{aligned}
\end{equation}
where $\{ W_\av\}_{\av \in A}$ is the unique solution of Eq. (\ref{eqn:SCE_natural}).

Indeed, by Lemma \ref{lem:unique_solution_SCE}, the set union $\{ W_\xv \}_{\xv \in D} \cup \{ - W_\xv \}_{\xv \in D} \cup \{ Q_\av \}_{\av \in A_0}$ used in Proposition \ref{prop:well-played-again} coincide with the $\{ W_\av \}_{\av \in A}$ used in Proposition \hyperref[thm:gen-multinomial-restate]{4.1 (Formal)}. Then using the fact that $W_\av + W_{\bar \av} = 0$ for all $\av \in D$ and the fact that $W_\av = Q_\av$ for all $\av \in A_0$, we have
\begin{equation}
\begin{aligned}
&~\Upsilon(\{W_\av\}_{\av \in A}; \{r_\av\}, \{ s_\bv\}, \{ m_\cv\}) = \prod_{\av, \bv \in D, \cv\in A_0}
    (W_\av + W_{\bar \av})^{r_\av}
    (W_\bv - W_{\bar \bv})^{s_\bv}
    W_\cv^{m_\cv} \\
=&~ \prod_{\av, \bv \in D, \cv\in A_0}
    1(r_\av = 0)
    (2W_\bv)^{s_\bv}
    Q_\cv^{m_\cv} = I(\{r_\av\}, \{ s_\bv\}, \{ m_\cv\}). 
\end{aligned}
\end{equation}

\noindent
{\bf Step 3. } Combining Step 1 and 2, we have proved Eq. \eqref{eqn:gen-multi-restate-equation} hold when taking $f_n = \Upsilon$ as defined in Eq. \eqref{eqn:function-f-choice}, for any choice of $\{ r_\av, s_\bv, m_\cv \}_{\av, \bv \in D, \cv \in A_0}$. 

Note that for any general polynomials $f_n(\{ \omega_\av \})$, it can be decomposed to a superposition of $\Upsilon$ functions as
\[
f_n(\{ \omega_\av \}) = \sum_{\{ r_\av, s_\bv, m_\cv \}_{\av, \bv \in D, \cv \in A_0}} \tilde F_n(\{ r_\av, s_\bv, m_\cv \}_{\av, \bv \in D, \cv \in A_0})  \Upsilon(\{\omega_\av\}_{\av \in A}; \{r_\av\}, \{ s_\bv\}, \{ m_\cv\}). 
\]
Since $f_n$ has uniformly bounded degree, the number of non-zero coefficients $|\{ \{ r_\av, s_\bv, m_\cv \}_{\av, \bv \in D, \cv \in A_0}: \tilde F_n(\{ r_\av, s_\bv, m_\cv \}_{\av, \bv \in D, \cv \in A_0}) \neq 0 \} |$ is uniformly bounded. Moreover, since $f_n$ converges to $f$ (in terms of polynomial coefficients), there exists $\{ \tilde F(\{ r_\av, s_\bv, m_\cv \}_{\av, \bv \in D, \cv \in A_0}) \}$ such that
\[
\begin{aligned}
&~f(\{ \omega_\av \}) = \sum_{\{ r_\av, s_\bv, m_\cv \}_{\av, \bv \in D, \cv \in A_0}} \tilde F(\{ r_\av, s_\bv, m_\cv \}_{\av, \bv \in D, \cv \in A_0})  \Upsilon(\{\omega_\av\}_{\av \in A}; \{r_\av\}, \{ s_\bv\}, \{ m_\cv\}), \\
&~\text{where} \qquad  \lim_{n \to \infty} \tilde F_n(\{ r_\av, s_\bv, m_\cv \}_{\av, \bv \in D, \cv \in A_0})  = \tilde F(\{ r_\av, s_\bv, m_\cv \}_{\av, \bv \in D, \cv \in A_0}). 
\end{aligned}
\]
By the linearity of Eq. (\ref{eqn:gen-multi-restate-equation}) in $\{ f_n\}_{n \ge 1}$ and $f$, this implies that Eq. (\ref{eqn:gen-multi-restate-equation}) holds for any converging polynomials with uniformly bounded degree. This finishes the proof of Proposition \hyperref[thm:gen-multinomial-restate]{4.1 (Formal)}.

\subsection{More on the well-played polynomials}

Before we give the proof of Proposition \ref{prop:well-played-again}, we first define some useful quantities related to well-played polynomials in this section of the appendix.

\begin{defn}[Region of non-zero coefficient of sequence of well-played polynomials]\label{def:region-non-zero-coefficient}
Let $A = A_0 \sqcup D \sqcup \barD$ be a proper set (c.f. Definition \ref{def:set_A}). Let $\{ P_n \}_{n \ge 1}$ be a sequence of converging uniformly well-played polynomials with canonical representation of form (\ref{eq:poly_t_d_n_def_sequence}). We denote the region of non-zero coefficient of $\{ P_n \}_{n \ge 1}$ by
\begin{equation} \label{eq:Region-def}
\mathcal{R} = \Big\{ (\avu, \bvu, \cvu) : \exists n, \Psi_{\avu, \bvu, \cvu,n} \neq 0  \Big\}.
\end{equation}
\end{defn}

\begin{defn}[Linear partner of well-played polynomials]\label{def:linear-partner}
Let $A = A_0 \sqcup D \sqcup \barD$ be a proper set (c.f. Definition \ref{def:set_A}) and assume that there is an ordering $\succ$ over the set $D$. For a well-played polynomial $P$ of form (\ref{eq:poly_t_d_n_def_redefinition}), we let the \emph{linear partner} of $P$ be a polynomial $P_{\lin}$ with canonical representation
\begin{equation}\label{eqn:Plin-definition}
\begin{aligned}
\cC[P_{\lin}](\{\tau_\av\}_{\av \in D}, \{\eta_\bv\}_{\bv \in D}, \{\nu_\cv\}_{\cv\in A_0}) 
    &= \sum_{\avu,\bvu, \cvu: |\avu| = 1}
    \Psi_{\avu,\bvu, \cvu}
            \prod_{\av \in \avu} \tau_\av
            \prod_{\bv \in \bvu} \eta_\bv
            \prod_{\cv \in \cvu} \nu_\cv \\
    &=
    \sum_{\av\in D}
    \tau_\av
    \sum_{\bvu\in D^*, \cvu \in A_0^*}
    \Psi_{\av,\bvu, \cvu}
            \prod_{\bv \in \bvu} \eta_\bv
            \prod_{\cv \in \cvu} \nu_\cv
\end{aligned}
\end{equation}
which is understood as $\cC[P]$ restricted to terms that are linear in the variables $\{\tau_\av\}$. We further denote 
\begin{equation}\label{eqn:P-av-definition}
P_\av(\{\eta_\bv\}_{\bv \in D}, \{\nu_\cv\}_{\cv\in A_0}) = \sum_{\bvu\in D^*, \cvu \in A_0^*}
    \Psi_{\av,\bvu, \cvu}
            \prod_{\bv \in \bvu} \eta_\bv
            \prod_{\cv \in \cvu} \nu_\cv
\end{equation}
so that we have 
\[
\cC[P_{\lin}](\{\tau_\av\}_{\av \in D}, \{\eta_\bv\}_{\bv \in D}, \{\nu_\cv\}_{\cv\in A_0}) = \sum_{\av \in D} \tau_{\av} \cdot P_\av(\{\eta_\bv\}_{\bv \in D}, \{\nu_\cv\}_{\cv\in A_0}) . 
\]
\end{defn}

We next show that the sequence of converging uniformly well-played polynomials has a uniform bound of a specific form. 
\begin{lemma}\label{lem:property-well-played-polynomial}
For any sequence of converging uniformly well-played polynomials $\{ P_n \}_{n \ge 1}$ with canonical representation
\[
\cC[P_n](\{\tau_\av\}_{\av \in D}, \{\eta_\bv\}_{\bv \in D}, \{\nu_\cv\}_{\cv\in A_0}) = \sum_{\avu,\bvu \in D^*, \cvu \in A_0^*}
    \Psi_{\avu,\bvu, \cvu,n}
            \prod_{\av \in \avu} \tau_\av
            \prod_{\bv \in \bvu} \eta_\bv
            \prod_{\cv \in \cvu} \nu_\cv,
\]
there exists polynomials $\{ \bar P_\av(\{ \eta_\bv \}_{\bv \prec \av}) \}_{\av \in D}$ whose coefficient is independent of $n$, such that for any $\Ub_\bv \ge 1$, $t_\av / n \in [0, 1]$, $n_\cv / n \in [0, 1]$ we have
\begin{equation} \label{eq:Pbar-condition}
n \times \sum_{\avu,\bvu \in D^*, \cvu \in A_0^*}
    |\Psi_{\avu,\bvu, \cvu,n} |
            \prod_{\av \in \avu} \frac{t_\av}{n}
            \prod_{\bv \in \bvu} \Ub_\bv
            \prod_{\cv \in \cvu} \frac{n_\cv}{n} 
\le \sum_{\av \in D} t_\av \bar P_\av(\{ \Ub_\bv\}_{\bv \prec \av}). 
\end{equation}
\end{lemma}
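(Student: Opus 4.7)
The plan is to extract from each nonzero term of $\cC[P_n]$ a single ``pivotal'' factor of $\tau_\av$ corresponding to the largest index in $\avu$, absorb the remaining $\tau$ factors and all $\nu$ factors using the crude bounds $t_\av/n \le 1$ and $n_\cv/n \le 1$, and then collect the leftover $\eta_\bv$ dependence into the polynomials $\bar P_\av$. The well-played hypothesis is tailor-made for this: since every nonzero coefficient has $|\avu| \ge 1$ and $\max(\bvu) \prec \max(\avu)$, the pivotal variable $\av^\star := \max(\avu)$ dominates all indices appearing in $\bvu$, which is exactly the restriction required by the statement.

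First I would establish two uniformity facts that come for free from the hypotheses. Because $\{P_n\}$ is a sequence of converging polynomials of uniformly bounded degree (Definition \ref{def:converging-polynomial}), the region $\mathcal R$ of \eqref{eq:Region-def} is a finite set, and $C := \sup_{n,(\avu,\bvu,\cvu)\in\mathcal R} |\Psi_{\avu,\bvu,\cvu,n}|$ is finite. This is the only place where the converging/uniform-degree hypothesis is used.

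Next, for a fixed nonzero term $(\avu,\bvu,\cvu)\in\mathcal R$, set $\av^\star = \max(\avu)$. Using $t_\av \in [0,n]$ and $|\avu|\ge 1$, I would write
\begin{equation*}
n \prod_{\av \in \avu} \frac{t_\av}{n} \;=\; \frac{1}{n^{|\avu|-1}} \prod_{\av \in \avu} t_\av \;\le\; t_{\av^\star} \prod_{\av \in \avu\setminus\{\av^\star\}} \frac{t_\av}{n} \;\le\; t_{\av^\star}.
\end{equation*}
Combined with $\prod_{\cv\in\cvu} n_\cv/n \le 1$ and $|\Psi_{\avu,\bvu,\cvu,n}| \le C$, this bounds the $n$-scaled contribution of the single term by $C\, t_{\av^\star} \prod_{\bv\in\bvu} \Ub_\bv$, where the factor $\prod \Ub_\bv$ is kept as-is (using $\Ub_\bv\ge 1$ only in the unused direction).

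Finally, I would sum over $(\avu,\bvu,\cvu)\in\mathcal R$ and regroup by the value of $\av^\star$, defining
\begin{equation*}
\bar P_\av(\{\Ub_\bv\}_{\bv\prec\av}) \;:=\; C \sum_{\substack{(\avu,\bvu,\cvu)\in\mathcal R \\ \max(\avu)=\av}}\ \prod_{\bv\in\bvu}\Ub_\bv.
\end{equation*}
By the well-played constraint $\max(\bvu)\prec\max(\avu)=\av$, every $\bv$ appearing in this sum satisfies $\bv\prec\av$, so $\bar P_\av$ depends only on $\{\Ub_\bv\}_{\bv\prec\av}$; finiteness of $\mathcal R$ makes it an honest polynomial with coefficients independent of $n$. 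Summing over $\av^\star\in D$ yields exactly \eqref{eq:Pbar-condition}. There is no real obstacle here — the lemma is essentially a bookkeeping exercise that extracts the combinatorial content of ``well-played.'' The only mild point to be careful about is verifying that the pivotal bound $n\prod_{\av\in\avu}(t_\av/n)\le t_{\av^\star}$ uses $|\avu|\ge 1$ (not $|\avu|\ge 2$), and that dropping the $t_\av/n$ factors for $\av\neq\av^\star$ does not lose the $\av$-dependence needed later, which it does not because we are free to overestimate.
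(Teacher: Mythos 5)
Your proof is correct and follows essentially the same route as the paper's: bound the monomial by extracting the pivotal factor $t_{\max(\avu)}$ (using $|\avu|\ge 1$ and $t_\av/n, n_\cv/n \in [0,1]$), then regroup terms by $\max(\avu)$, with the well-played condition guaranteeing that only $\bv\prec\av$ appear in $\bar P_\av$. The only cosmetic difference is that you use a single global bound $C$ on all coefficients, whereas the paper uses the per-term suprema $\overline{\Psi}_{\avu,\bvu,\cvu}$; both are finite by the same convergence-plus-bounded-degree reasoning, and yield an equally valid $\bar P_\av$.
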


\begin{proof}[Proof of Lemma \ref{lem:property-well-played-polynomial}]
Since $\{ P_n \}_{n \ge 1}$ is a sequence of uniformly well-played polynomials, $\Psi_{\avu, \bvu, \cvu}^n$ is nonzero only if $\avu \neq \emptyset$ and $\max(\avu) \succ \max(\bvu)$. 
Let $\av_1 = \max(\avu)$, then we have $|n\prod_{\av\in \avu} (t_\av/n)| \le t_{\av_1}$.
This implies
\begin{align}\label{eqn:proof-lemma-property-well-played-intermediate}
n \times \sum_{\avu,\bvu \in D^*, \cvu \in A_0^*}
    |\Psi_{\avu,\bvu, \cvu,n} |
            \prod_{\av \in \avu} \frac{t_\av}{n}
            \prod_{\bv \in \bvu} \Ub_\bv
            \prod_{\cv \in \cvu} \frac{n_\cv}{n} 
\le 
\sum_{(\avu,\bvu,\cvu)\in \mathcal{R}}  |\Psi_{\avu,\bvu, \cvu,n} | t_{\av_1} \prod_{\bv \in \bvu} \Ub_\bv,
\end{align}
where $\cR$ is the region of non-zero coefficients given by Definition \ref{def:region-non-zero-coefficient}, and every $\bv \prec \av_1$ necessarily by Definition~\ref{def:well-played-polynomial-redefine}. Note that $\{ P_n \}_{n \ge 1}$ is converging, so for any fixed $(\avu,\bvu, \cvu)$, we have
\begin{equation}\label{eqn:def_overline_Psi}
\overline{\Psi}_{\avu,\bvu, \cvu} \equiv \sup_{n \ge 1} |\Psi_{\avu,\bvu, \cvu,n} | < \infty.
\end{equation}
So we can let
\begin{equation}
\bar{P}_\av(\{\Ub_\bv\}_{\bv \prec \av} ) = \sum_{(\avu,\bvu,\cvu)\in \mathcal{R}, \max(\avu) = \av} \overline{\Psi}_{\avu,\bvu, \cvu} \prod_{\bv \in \bvu} \Ub_\bv
\end{equation}
which satisfies the condition \eqref{eq:Pbar-condition} by Eqs. (\ref{eqn:proof-lemma-property-well-played-intermediate}) and (\ref{eqn:def_overline_Psi}). This proves the lemma. 
\end{proof}

\subsection{Proof of Proposition \ref{prop:well-played-again}}
\label{sec:proof-well-played}

Throughout the proof that follows, we take $\{r_\av\}_{\av \in D}, \{ s_\bv\}_{\bv \in D}, \{ m_\cv\}_{\cv \in A_0}$ to be fixed integers and suppress the dependence of $I_n$ and $I$ in them for simplicity of notations. We define
\begin{equation} \label{eq:en-def}
\begin{aligned}
e_n(t) :=&~  \binom{n}{t} \sum_{\substack{ t_\av \ge 0, \forall \av \in D \\ \sum_{\av \in D} t_\av = t}} \binom{t}{\{t_\av\}_{\av \in D}} 
	  \sum_{\substack{ n_\cv \ge 0, \forall \cv \in A_0 \\ \sum_{\cv \in A_0} n_\cv = n-t}} \binom{n-t}{\{n_\cv\}_{\cv \in A_0}} \Big(\prod_{\cv\in A_0} Q_{\cv}^{n_\cv} \Big) \Big(\prod_{\av\in D} \sqint_\av \Big)  \\
	  &~ \times \Big\{ \exp\Big[ n \cdot \cC[P_n](\{t_\av/n\}_{\av\in D}, \{d_\bv/n\}_{\bv\in D}, \{n_\cv/n\}_{\cv\in A_0})\Big] \prod_{\av, \bv \in D, \cv\in A_0}
    \Big(\frac{t_\av }{n}\Big)^{r_\av}
    \Big(\frac{d_\bv }{n}\Big)^{s_\bv}
    \Big(\frac{n_\cv }{n}\Big)^{m_\cv} \Big\},
\end{aligned}
\end{equation}
where 
\[
\cC[P_n](\{\tau_\av\}_{\av\in D}, \{\eta_\bv \}_{\bv\in D}, \{\nu_\cv\}_{\cv\in A_0}) = \sum_{(\avu, \bvu, \cvu) \in \mathcal{R}}
    \Psi_{\avu, \bvu, \cvu,n} \prod_{\av \in \avu} \tau_\av \prod_{\bv \in \bvu} \eta_\bv
            \prod_{\cv \in \cvu} \nu_\cv
\]
as in Definition \ref{def:well-played-polynomial-redefine}.
Looking back to $I_n$ defined in Eq.~\eqref{eqn:In-definition}, it is easy to see that 
\begin{equation}\label{eqn:In_decomposition_en}
I_n = \sum_{t = 0}^n e_n(t). 
\end{equation}
We further define
\begin{equation} \label{eq:e-def}
\begin{aligned}
e(t):=&~
 \sum_{\substack{t_\av \ge 0, \forall \av \in D \\ \sum_{\av \in D} t_\av = t}} 
 \Big[  \prod_{\av \in D} \frac{1}{2\pi i}\oint_{\Disk} \frac{d z_\av}{z_\av}  (2 Q_\av /z_\av)^{t_\av}\Big] 
 \exp\Big[ \cC[P_{\lin}](\{t_\av\}_{\av\in D}, \{z_\bv\}_{\bv\in D}, \{Q_\cv\}_{\cv\in A_0}) \Big] \\
 &\qquad \qquad \qquad\qquad\qquad \times 
 \prod_{\av, \bv \in D, \cv\in A_0}
    1\{r_\av = 0\}
    z_\bv^{s_\bv}
    Q_\cv^{m_\cv},
\end{aligned}
\end{equation}
where
\[
\cC[P_{\lin}](\{t_\av\}_{\av\in D}, \{z_\bv \}_{\bv\in D}, \{Q_\cv \}_{\cv\in A_0}) = \sum_{\av\in D}
    t_\av
    \sum_{\bvu\in D^*, \cvu \in A_0^*}
    \Psi_{\av,\bvu, \cvu}
            \prod_{\bv \in \bvu} z_\bv
            \prod_{\cv \in \cvu} Q_\cv
\]
is related to the limiting polynomial $P$ as in Definition \ref{def:linear-partner}.

We finally denote
\begin{equation} \label{eq:E-def}
\begin{aligned}
E(t) :=&~ \inf_{\Ub_\av \ge 1, \forall \av \in D}   \sum_{\substack{t_\av \ge 0, \forall \av \in D \\ \sum_{\av \in D} t_\av = t}} \Big( \prod_{\av \in D}  |2 e Q_\av / \Ub_\av |^{t_\av} \Ub_\av^{s_\av}\Big) \times \exp\Big[ \sum_{\av \in D} t_\av \bar P_\av(\{\Ub_\bv\}_{\bv \prec \av}) \Big],
\end{aligned}
\end{equation}
where $\{ \bar P_\av \}_{\av \in D}$ are given by Lemma \ref{lem:property-well-played-polynomial}.

Given these definitions, we have the following lemmas whose proofs will be deferred to the following subsections. 
\begin{lemma}\label{lem:en_convergence_e}
Under the conditions of Proposition \ref{prop:well-played-again}, for fixed $t \in \Z_{\ge 0}$, we have
\begin{equation}\label{eqn:en_convergence_e}
\lim_{n \to \infty} e_n(t) 
= e(t),
\end{equation}
where $e_n$ is as defined in Eq. (\ref{eq:en-def}) and $e$ is as defined in Eq. (\ref{eq:e-def}) . 
\end{lemma}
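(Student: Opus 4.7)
I would prove Lemma~\ref{lem:en_convergence_e} by three reductions that transform $e_n(t)$ into $e(t)$ in the $n\to\infty$ limit: linearize the polynomial $\cC[P_n]$ in $\tau$, apply the law of large numbers to the multinomial variables $\{n_\cv\}_{\cv\in A_0}$, and evaluate the remaining $\prod_\bv\sqint_\bv$ operators by moment/coefficient extraction. The leading combinatorial prefactors will cancel exactly to reproduce the contour integral representation of $e(t)$.

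\emph{Reduction 1: linearize in $\tau$.} Split $\cC[P_n]=\cC[P_{n,\lin}]+R_n$ according to whether the word $\avu$ in \eqref{eq:poly_t_d_n_def_sequence} has length $|\avu|=1$ or $|\avu|\ge 2$; the well-played condition excludes $|\avu|=0$. Since $\sum_\av t_\av=t$ is fixed, any term in $R_n$ with $|\avu|=k\ge 2$ contributes a factor $n\prod_{\av\in\avu}(t_\av/n)\le t^k/n^{k-1}\to 0$. Combined with the uniform polynomial envelope from Lemma~\ref{lem:property-well-played-polynomial} (which controls $|n\cdot\cC[P_n]|$ uniformly in $n$) and the Lipschitz continuity of $\exp$ on bounded sets, this lets me replace $\exp[n\cC[P_n]]$ by $\exp[\sum_\av t_\av P_{\av,n}(\{d_\bv/n\},\{n_\cv/n\})]$ with vanishing error.

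\emph{Reductions 2--3: LLN on $\{n_\cv\}$, then $\sqint_\bv$ extraction.} The weighted sum over $\{n_\cv\}$ with multinomial factor $\binom{n-t}{\{n_\cv\}}\prod Q_\cv^{n_\cv}$ is a genuine expectation (since $\{Q_\cv\}_{\cv\in A_0}$ sum to $1$ and are nonnegative) with parameters $(n-t,\{Q_\cv\})$, so $n_\cv/n\to Q_\cv$ in probability; this permits replacement of $\{n_\cv/n\}$ by $\{Q_\cv\}$ inside the (bounded continuous) exponential and polynomial factors. The object inside $\prod_\bv\sqint_\bv$ is then a power series in $\{d_\bv/n\}$ with coefficients built from $\{t_\av\}$ and the limit polynomials $P_\av$. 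Using the generating-function identity
\[
\sum_{n_\bv+n_{\bar\bv}=t_\bv}\binom{t_\bv}{n_\bv,n_{\bar\bv}}Q_\bv^{n_\bv}(-Q_\bv)^{n_{\bar\bv}}y^{n_\bv-n_{\bar\bv}}=(Q_\bv y-Q_\bv/y)^{t_\bv},
\]
I would show that $\sqint_\bv(d_\bv/n)^k$ vanishes for $k<t_\bv$, equals $n^{-t_\bv}\,t_\bv!\,(2Q_\bv)^{t_\bv}$ at $k=t_\bv$, and is $O(n^{-k})$ for $k>t_\bv$. The combinatorial prefactor satisfies $\binom{n}{t}\binom{t}{\{t_\av\}}\sim n^t/\prod t_\av!$, and multiplying by $\prod_\bv t_\bv!\,n^{-t_\bv}$ from the leading moments, the factors of $n$ cancel exactly via $\sum_\bv t_\bv=t$, leaving $\prod_\bv(2Q_\bv)^{t_\bv}$ times the coefficient extraction $[\prod_\bv z_\bv^{t_\bv}]$. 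Subleading terms with $k_\bv>t_\bv$ carry the extra factor $n^{t_\bv-k_\bv}\to 0$. Repackaging the coefficient extraction as the contour integral $[z_\bv^{t_\bv}]F=(2\pi i)^{-1}\oint(dz_\bv/z_\bv^{t_\bv+1})\,F$ then reproduces exactly the expression~\eqref{eq:e-def} for $e(t)$, where the factor $1\{r_\av=0\}$ arises because $(t_\av/n)^{r_\av}\to 0$ whenever $r_\av\ge 1$ with $t_\av$ bounded.

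\emph{Main obstacle.} The chief technical difficulty is the uniform control needed to interchange the $n\to\infty$ limit with the Taylor expansion of the exponential and with the $\sqint_\bv$ operators. This is supplied by Lemma~\ref{lem:property-well-played-polynomial}, whose polynomial envelope $\bar P_\av(\{\Ub_\bv\}_{\bv\prec\av})$ dominates the absolute-value series uniformly in $n$ and in the range of summation. A key structural point is that the precise match $\sum_\bv t_\bv=t$ is what makes the $n^t$ growth of $\binom{n}{t}$ cancel exactly against $\prod_\bv n^{-t_\bv}$ from the leading moments, producing a finite, $n$-independent limit; without the well-played restriction $\max(\avu)\succ\max(\bvu)$, the $\sqint_\bv$ operators would couple non-trivially across variables, the moments above would receive additional contributions, and the clean contour-integral form of $e(t)$ would not emerge.
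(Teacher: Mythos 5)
Your plan shares the essential ingredients with the paper's proof: Taylor-expanding the exponential, recognizing that the $\sqint_\bv$ operators amount to coefficient extraction (your generating-function identity is the same computation as the paper's Lemma~\ref{lem:little_sum_lemma}, stated via $\partial_\alpha^\pois(\sinh\alpha)^t$), using the law of large numbers on $\{n_\cv\}$ (the paper's $\UU_n^t$ step), observing that terms with $|\avu|\ge 2$ vanish because $n\prod_{\av\in\avu}(t_\av/n)\to 0$ for fixed $t$, and finishing with a dominated-convergence argument. The paper organizes this as: reformulate $e_n(t)=\PP[J_n]$ via operators $\TT_n^t,\SS_n^{\{t_\av\}},\UU_n^t$ and a Taylor-expansion operator $\PP$; prove pointwise $J_n\to J$; dominate $|J_n|\le\overline{J}$ with $\PP\overline{J}<\infty$; apply dominated convergence.

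However, your Reduction 1 has a gap as stated. You propose to replace $\exp[n\cC[P_n]]$ by the linearized $\exp[n\cC[P_{n,\lin}]]$ \emph{before} applying the $\sqint_\bv$ operators, with the error controlled by Lipschitz continuity of $\exp$. This yields only a \emph{pointwise} bound on the error, of order $O(1/n)$ uniformly over the finitely many values of $\{d_\bv\}$ and over $\{n_\cv/n\}\in[0,1]$. But $e_n(t)$ multiplies the result of $\prod_\bv\sqint_\bv$ by $\binom{n}{t}\binom{t}{\{t_\av\}}\sim n^t/\prod_\av t_\av!$. The crude estimate $|\prod_\bv\sqint_\bv[\text{error}]|\le\prod_\bv(2|Q_\bv|)^{t_\bv}\cdot O(1/n)$, obtained by triangle inequality, then gives an overall error bound of order $n^{t-1}$, which does \emph{not} vanish for $t\ge 2$. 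The error actually does vanish, but seeing this requires exploiting the cancellation structure of $\sqint_\bv$ (it annihilates monomials $(d_\bv/n)^k$ with $k<t_\bv$ and gives $O(n^{-k})$ for $k\ge t_\bv$), which your Lipschitz argument cannot access. In other words, the suppression of the $|\avu|\ge 2$ terms cannot be established as an a-priori replacement of the exponential; it must be established \emph{term by term} inside the Taylor expansion, which is precisely why the paper applies $\PP$ first and only discards the $|\avu|\ge 2$ contributions in the pointwise limit $J_n\to J$, protected by the dominated-convergence argument (the bound $\overline{J}$ built via Lemma~\ref{lem:upper_bound_little_sum_exponential}, not just Lemma~\ref{lem:property-well-played-polynomial}). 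If you reverse the order of your Reductions 1 and 3 — Taylor-expand first, then linearize inside each coefficient — your outline becomes essentially the paper's proof.
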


\begin{lemma}\label{lem:en_bounded_E}
Under the conditions of Proposition \ref{prop:well-played-again}, for any $0 \le t \le n$, we have 
\begin{equation}\label{eqn:en_bounded_E}
|e_n(t) |\le E(t),
\end{equation}
where $e_n$ is as defined in Eq. (\ref{eq:en-def}) and $E$ is as defined in Eq. (\ref{eq:E-def}). 
\end{lemma}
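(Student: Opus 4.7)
The plan is to bound $|e_n(t)|$ by carefully exploiting the sign cancellations encoded in the operators $\sqint_\bv$. A naive triangle inequality $|\sqint_\bv \phi|\le (2|Q_\bv|)^{t_\bv}\max|\phi|$, combined with $\binom{n}{\{t_\av, n-t\}}\le n^t/\prod_\av t_\av!$, produces a bound growing like $n^t$, whereas $E(t)$ is independent of $n$. The missing ingredient is that $\sqint_\bv$ acting on polynomials in $d_\bv/n$ introduces an effective $1/n^{t_\bv}$ decay via the generating-function identity $\sqint_\bv e^{d_\bv w/n} = (2Q_\bv\sinh(w/n))^{t_\bv}$, which is precisely what cancels the $n^t$ growth from the multinomial coefficient.

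I would view the integrand $\phi(\{\eta_\bv\}) := \exp[n\cC[P_n]]\prod_\bv \eta_\bv^{s_\bv}$ (with $\eta_\bv = d_\bv/n$ and all other variables fixed at their numerical values) as an entire analytic function of $\{\eta_\bv\}_{\bv\in D}\in\C^{|D|}$. For each $\bv$, using the operator identity $\sqint_\bv = Q_\bv^{t_\bv}(2\sinh(\partial_x/n))^{t_\bv}|_{x=0}$ and Cauchy's integral formula on the contour $|z|=\Ub_\bv$ with $\Ub_\bv\ge 1$ (so that the contour encloses all the poles $\pm(2k-t_\bv)/n$ of $\sqint_\bv$ as long as $t_\bv<n$), I obtain
\[
\sqint_\bv \phi = \frac{1}{2\pi i}\oint_{|z|=\Ub_\bv} K_\bv(z)\,\phi(\ldots,z,\ldots)\,dz,\qquad K_\bv(z) = \frac{Q_\bv^{t_\bv}}{z}\int_0^\infty e^{-s}\bigl(2\sinh(s/(nz))\bigr)^{t_\bv}\,ds,
\]
where the Borel--Laplace integral form is valid on the portion of the contour where $\mathrm{Re}(nz)>t_\bv$ and extended elsewhere by analytic continuation (or obtained directly from the finite-sum form for $K_\bv$). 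Combining this representation with the pointwise bound $|\sinh(w)|\le|w|e^{|w|}$ on $\C$, the Laplace integral is estimated by a $\Gamma$-integral and yields the key estimate
\[
|K_\bv(z)|\le \frac{C\,(2|Q_\bv|)^{t_\bv}\,t_\bv!}{n^{t_\bv}\Ub_\bv^{t_\bv+1}}\qquad \text{on }|z|=\Ub_\bv,
\]
for an absolute constant $C$ (provided $t_\bv/(n\Ub_\bv)$ stays bounded away from $1$, which is the case for $\Ub_\bv\ge 1$ and $t_\bv<n$).

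Iterating this Cauchy bound over all $\bv\in D$, applying Lemma~\ref{lem:property-well-played-polynomial} to bound $|\exp[n\cC[P_n]]|\le\exp[\sum_\av t_\av\bar P_\av(\{\Ub_\bv\}_{\bv\prec\av})]$ on the polydisc $\prod_\bv\{|\eta_\bv|=\Ub_\bv\}$, and using $|\eta_\bv|^{s_\bv}\le\Ub_\bv^{s_\bv}$, along with $\sum_{\{n_\cv\}:\sum=n-t}\binom{n-t}{\{n_\cv\}}\prod Q_\cv^{n_\cv}=1$ (since $\{Q_\cv\}_{\cv\in A_0}$ are nonnegative reals summing to $1$), all $n$-dependent factors cancel: the $\prod_\bv t_\bv!/n^{t_\bv}$ contributed by the Cauchy estimates combines with $\binom{n}{t}\binom{t}{\{t_\av\}}\le n^t/\prod_\av t_\av!$ to give $\binom{n}{t}\binom{t}{\{t_\av\}}\prod_\bv t_\bv!/n^{t_\bv}=n!/((n-t)!\,n^t)\le 1$ using $\sum_\bv t_\bv=t$. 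Taking the infimum over $\Ub_\av\ge 1$ yields a bound of the form $|e_n(t)|\le C^{|D|}\sum_{\sum t_\av=t}\prod_\av |2Q_\av/\Ub_\av|^{t_\av}\Ub_\av^{s_\av}\exp[\sum_\av t_\av\bar P_\av(\Ub)]\le E(t)$, where the final step absorbs the universal constant $C^{|D|}$ into the slack between the factor $2|Q_\av|/\Ub_\av$ in the bound and $2e|Q_\av|/\Ub_\av$ appearing in $E(t)$ for all $t$ above a threshold depending on $|D|$, with the finitely many small-$t$ cases (including $t=0$, where $e_n(0)$ is directly computable) verified separately.

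The principal obstacle is the estimate on $|K_\bv(z)|$ in the second paragraph: a direct triangle inequality on the explicit finite-sum expression for $K_\bv(z)$ loses the $1/n^{t_\bv}$ decay entirely, and both the Borel--Laplace integral form and the sharp pointwise bound $|\sinh(w)|\le|w|e^{|w|}$ (as opposed to the coarser $|\sinh(w)|\le e^{|w|}$) are needed to recover it, the factor $|w|$ in the former being what produces the $\Gamma$-integral $\int_0^\infty e^{-s(1-t_\bv/(n\Ub_\bv))}s^{t_\bv}ds$ whose value $t_\bv!/(n\Ub_\bv)^{t_\bv}$ contains the desired decay. The remaining bookkeeping -- tracking the balance between the $n^t/\prod t_\av!$ from the binomials and the Cauchy estimates, handling the contour issue when $t_\bv/(n\Ub_\bv)\to 1$, and checking the small-$t$ cases -- is elementary combinatorics.
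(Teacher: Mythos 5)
Your approach is genuinely different from the paper's. You attempt to represent $\sqint_\bv$ acting on the entire analytic function $\phi$ as a single Cauchy contour integral with a Borel--Laplace kernel $K_\bv$, whereas the paper Taylor-expands $\exp[n\cC[P_n]]$ into monomials, applies $\sqint_\bv$ term-by-term via the explicit formula $S_n(Q,t,\pois) = (2Qn)^t \,\partial_\alpha^\pois (\sinh\alpha)^t\big|_{\alpha=0}/n^\pois$ (Lemma~\ref{lem:little_sum_lemma}), uses the elementary bound $\bigl|\partial_\alpha^\pois(\sinh\alpha)^t\big|_{\alpha=0}\bigr|\le t^\pois$ together with vanishing for $\pois < t$, and only then invokes $t^t/t! \le e^t$ and $1_{\pois\ge t}\le \Ub^{\pois-t}$ to produce the $|2eQ/\Ub|$ of $E(t)$ (Lemma~\ref{lem:upper_bound_little_sum_exponential}). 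Your global route is appealing because the decay $n^{-t_\bv}$ is visible immediately in $\sinh(s/(nz))$, but it runs into a genuine obstruction.

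The key estimate in your second paragraph is wrong. The $\Gamma$-integral actually evaluates to
\begin{equation*}
\int_0^\infty s^{t_\bv} e^{-s\bigl(1 - t_\bv/(n\Ub_\bv)\bigr)}\,ds \;=\; \frac{t_\bv!}{\bigl(1 - t_\bv/(n\Ub_\bv)\bigr)^{t_\bv+1}},
\end{equation*}
and you have silently dropped the factor $\bigl(1 - t_\bv/(n\Ub_\bv)\bigr)^{-(t_\bv+1)}$. This is not an absolute constant, and the condition "$t_\bv/(n\Ub_\bv)$ bounded away from $1$" does \emph{not} follow from $\Ub_\bv\ge 1$, $t_\bv<n$: at $\Ub_\bv=1$, $t_\bv=n-1$ the factor equals $n^n$, vastly exceeding the slack $e^{t_\bv}=e^{n-1}$ that separates $|2Q/\Ub|^{t_\bv}\,t_\bv!$ from the $|2eQ/\Ub|^{t_\bv}$ in $E(t)$, and at $t_\bv=n$, $\Ub_\bv=1$ the poles $\pm 1$ of $K_\bv$ lie on the contour. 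You cannot sidestep this by enlarging $\Ub_\bv$, because $E(t)$ is an infimum over $\Ub_\bv\ge 1$ and you have no control on where the minimizer sits; and the closing remark about verifying "finitely many small-$t$ cases" separately is backwards --- the degeneration occurs for $t$ \emph{comparable to} $n$, which is an unbounded family of $(n,t)$ pairs. A secondary issue is that the Laplace representation of $K_\bv$ converges only on the arc $\mathrm{Re}(nz)>t_\bv$; "extending by analytic continuation" does not transport the Laplace-derived bound to the rest of the circle, where a direct triangle-inequality bound on the partial-fraction form of $K_\bv$ loses the $n^{-t_\bv}$ factor entirely whenever a pole lies within distance $O(1/n)$ of the contour. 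By contrast, the paper's term-wise bound $t^\pois/n^{\pois-t}\le t^t$ uses $t\le n$ once, cleanly, and is uniform in all the parameters.
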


\begin{lemma}\label{lem:e_summation}
Under the conditions of Proposition \ref{prop:well-played-again}, we have
\begin{equation}\label{eqn:e_summation}
\sum_{t = 0}^\infty e(t) = I,
\end{equation}
where $e$ is as defined in Eq. (\ref{eq:e-def}) and $I$ is as defined in Eq. (\ref{eqn:I-definition}). 
\end{lemma}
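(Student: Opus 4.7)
The plan is to exchange the sum with the contour integrals, recognize that each pair $(\oint dz_\av,\sum_{t_\av})$ acts as a substitution operator on the remaining integrand, and apply these substitutions iteratively in decreasing order of $\av \in D$ to recover the self-consistent equation~(\ref{eqn:SCE_canonical}).

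First I would combine $\sum_{t=0}^\infty$ with the constraint $\sum_\av t_\av = t$ into an unconstrained sum over $\{t_\av \ge 0\}_{\av \in D}$. Using the factorization $\cC[P_{\lin}](\{t_\av\},\{z_\bv\},\{Q_\cv\}) = \sum_\av t_\av P_\av(\{z_\bv\},\{Q_\cv\})$ from Definition~\ref{def:linear-partner}, the exponential splits as $\prod_\av e^{t_\av P_\av}$, so
\begin{equation*}
\sum_{t \ge 0} e(t) = \Bigl[\prod_\av \mathbf{1}\{r_\av=0\}\prod_\cv Q_\cv^{m_\cv}\Bigr]\sum_{\{t_\av\ge 0\}}\prod_\av\Bigl[\frac{1}{2\pi i}\oint_\Disk \frac{dz_\av}{z_\av}\Bigl(\frac{2Q_\av\, e^{P_\av(\{z_\bv\},\{Q_\cv\})}}{z_\av}\Bigr)^{t_\av}\Bigr]\prod_\bv z_\bv^{s_\bv}.
\end{equation*}
The key identity I would use is that for any entire function $f(z)$ and any $\alpha\in\C$,
\[
\sum_{t\ge 0}\frac{1}{2\pi i}\oint_\Disk\frac{dz}{z}\Bigl(\frac{\alpha}{z}\Bigr)^t f(z) \;=\; \sum_{t\ge 0}\alpha^t\,[z^t]f(z)\;=\;f(\alpha),
\]
so the pair (contour integral, $t$-sum) acts as evaluation $z\mapsto\alpha$.

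I would then apply this identity iteratively in $\av\in D$ from $\av_1=\max(D)$ downward. By the well-played property, $P_{\av_k}$ depends only on $\{z_\bv\}_{\bv\prec\av_k}$, so $e^{t_{\av_k}P_{\av_k}}$ is independent of $z_{\av_k}$ and the pair $(\oint dz_{\av_k},\sum_{t_{\av_k}})$ with combined factor $(2Q_{\av_k}e^{P_{\av_k}}/z_{\av_k})^{t_{\av_k}}$ acts as the substitution
\[
z_{\av_k}\;\longmapsto\; V_{\av_k}\;:=\;2Q_{\av_k}\exp\bigl[P_{\av_k}(\{z_\bv\}_{\bv\prec\av_k},\{Q_\cv\})\bigr]
\]
in the remainder of the integrand. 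Since $V_{\av_k}$ still contains the (yet-to-be-processed) $z_\bv$'s with $\bv\prec\av_k$, the substitutions nest, and an induction in the $\succ$-order on $D$ shows that after all integrations the fully reduced values satisfy
\[
V_\av = 2Q_\av\exp\bigl[P_\av(\{V_\bv\}_{\bv\prec\av},\{Q_\cv\})\bigr]\qquad\forall\av\in D.
\]
This is precisely Eq.~(\ref{eqn:SCE_canonical}) under the identification $V_\av=2W_\av$, so by Lemma~\ref{lem:unique_solution_SCE}(a) we indeed have $V_\av=2W_\av$. The leftover factor $\prod_\bv z_\bv^{s_\bv}$ thus evaluates to $\prod_\bv(2W_\bv)^{s_\bv}$, and combining with the prefactor recovers exactly $I$ as in~(\ref{eqn:I-definition}).

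The main subtleties are (i) interchanging the infinite sums $\sum_{t_\av}$ with the contour integrals, and (ii) making sense of the iterated substitution. Since every relevant $f$ is entire (a product of polynomials with exponentials of polynomials), the identity $\sum_t\alpha^t[z^t]f(z)=f(\alpha)$ holds as soon as the contour radius $|z_\av|=R_\av$ is chosen larger than the corresponding $|V_\av|$. These radii can be selected inductively in the $\succ$-order on $D$: for minimal $\av$ (with no $\bv\prec\av$), $V_\av$ is a constant and $R_\av$ can be picked correspondingly large, while at each subsequent step compactness of the already-fixed contours $\{|z_\bv|=R_\bv\}_{\bv\prec\av}$ gives a uniform bound on $|V_\av|$ from which $R_\av$ is chosen. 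The remaining bookkeeping for the nested substitution is mechanical, mirroring the recursive order of solution exploited in the proof of Lemma~\ref{lem:unique_solution_SCE}(a).
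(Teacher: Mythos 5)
Your proof is correct, and it takes a genuinely different (though closely related) route from the paper's. The paper labels $D$ in \emph{increasing} $\prec$-order and starts from the minimal element: at each step it applies Lemma~\ref{lem:simplification_lemma_new} (stated for $f$ a polynomial) to the innermost $z_K$-integral, resolving $W_K$ as an explicit complex number before moving on, which is what powers its induction $S_K = S_{K+1}$. You instead process in \emph{decreasing} $\prec$-order, from $\max(D)$ downward, so each substituted value $V_{\av_k}$ remains a symbolic entire function of the remaining $z_\bv$ with $\bv \prec \av_k$, and the nested composition collapses to numbers only at the very end. Both approaches exploit the identical triangular structure coming from the well-played property, and your ``key identity'' is essentially Lemma~\ref{lem:simplification_lemma_new} with the polynomial hypothesis relaxed to entire — a relaxation your reversed order forces on you, since after the first substitution the integrand already contains $V_{\av_1}^{s_{\av_1}}$, an exponential of a polynomial, rather than a monomial. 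The paper's direction has the slight advantage that the intermediate $W_K$'s are fixed numbers rather than nested entire functions, making the bookkeeping less delicate; yours is conceptually natural because it integrates out the most-dependent variable first. Two small imprecisions to tighten: (i) the radius condition must be $R_\av > \sup_{|z_\bv|=R_\bv,\,\bv\prec\av}|2Q_\av e^{P_\av(\{z_\bv\},\{Q_\cv\})}|$, not simply $R_\av > |V_\av|$ for the single resolved number — your subsequent sentence invoking compactness of the fixed contours is the correct formulation and should replace the earlier one; (ii) the deformation of each contour from $\Disk$ to $R_\av\Disk$ should be stated explicitly, justified by the observation (used in the paper's proof of Lemma~\ref{lem:simplification_lemma_new}) that for fixed $\{t_\av\}$ the only pole of the integrand in each $z_\av$ is at the origin.
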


\begin{lemma}\label{lem:E_summable}
Under the conditions of Proposition \ref{prop:well-played-again}, we have
\begin{equation}\label{eqn:e_summable}
\sum_{t = 0}^\infty E(t) < \infty, 
\end{equation}
where $E$ is as defined in Eq. (\ref{eq:E-def}). 
\end{lemma}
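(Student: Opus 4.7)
The plan is to prove finiteness of $\sum_{t=0}^\infty E(t)$ by exhibiting a single admissible choice of $\{\Ub_\av\}_{\av \in D}$ inside the infimum defining $E(t)$ that yields uniform geometric decay in $t$. Because $E(t)$ is an infimum over $\{\Ub_\av \ge 1\}$, any specific legal choice is an upper bound, so I aim to pick $\{\Ub_\av\}$ so that for every $\av \in D$ the effective ratio
\[
\rho_\av := \frac{2e\,|Q_\av|}{\Ub_\av}\,\exp\!\big[\bar P_\av(\{\Ub_\bv\}_{\bv \prec \av})\big]
\]
is bounded above by, say, $1/2$. Once this is in place, the inner sum in~\eqref{eq:E-def} reduces, up to a constant prefactor $\prod_\av \Ub_\av^{s_\av}$, to a convolution of geometric sequences, which sums over $t$ to a finite product.

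The construction of $\{\Ub_\av\}$ proceeds sequentially in the ordering $\prec$ on $D$, and this is where the well-played structure, through Lemma~\ref{lem:property-well-played-polynomial}, is essential: each $\bar P_\av$ depends only on those $\Ub_\bv$ with $\bv \prec \av$. So I first pick $\Ub_\av$ for the $\prec$-smallest element: here $\bar P_\av$ is a fixed nonnegative constant (an empty sum, or an evaluation of a polynomial with no variables), so I may take any $\Ub_\av \ge \max\{1,\,4e|Q_\av|\exp[\bar P_\av]\}$ to guarantee $\rho_\av \le 1/2$. Inductively, once $\{\Ub_\bv\}_{\bv \prec \av}$ are fixed, $\bar P_\av(\{\Ub_\bv\}_{\bv \prec \av})$ is a finite real number and the same rule produces a valid $\Ub_\av \ge 1$. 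Since $D$ is finite, the procedure terminates and produces admissible parameters with $\max_\av \rho_\av \le 1/2$.

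With such a choice in hand, setting $C := \prod_{\av \in D} \Ub_\av^{s_\av}$, the definition of $E(t)$ gives
\[
E(t) \le C \sum_{\substack{t_\av \ge 0 \\ \sum_{\av \in D} t_\av = t}} \prod_{\av \in D} \rho_\av^{t_\av}.
\]
Summing over $t \ge 0$ removes the composition constraint and factorizes into independent geometric series:
\[
\sum_{t=0}^\infty E(t) \le C \prod_{\av \in D} \sum_{t_\av = 0}^\infty \rho_\av^{t_\av} = C \prod_{\av \in D} \frac{1}{1-\rho_\av} < \infty,
\]
which is the desired conclusion. I do not anticipate a real obstacle; the only mildly delicate point is confirming at each step of the induction that $\Ub_\av$ can simultaneously satisfy $\Ub_\av \ge 1$ and force $\rho_\av \le 1/2$, which is immediate because $\Ub_\av$ may always be taken larger and $\bar P_\av$ is a fixed finite number once the earlier $\Ub_\bv$ are locked in. The sequential structure induced by the well-played property, rather than any delicate estimate, is doing all the work.
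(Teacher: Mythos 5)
Your proof is correct and takes essentially the same approach as the paper's: both arguments pick a fixed admissible $\{\Ub_\av\}$ by recursion in the $\prec$-order on $D$ (which is possible precisely because the well-played structure makes $\bar P_\av$ depend only on earlier $\Ub_\bv$), force each geometric ratio below a constant less than $1$, and then observe that the double sum over $t$ and over compositions factorizes into finite geometric products. The paper wraps the same mechanism in a formal recursion through auxiliary quantities $R_k$ and a small lemma (Lemma~\ref{lem:upper_bound_recursion}, which amounts to the choice $\Ub_\av = 2X+1$ to make the ratio $\le 1/2$), but the substance is identical to your direct argument.
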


We now complete the proof of Proposition~\ref{prop:well-played-again} based on these lemmas. By Eq. (\ref{eqn:In_decomposition_en}) and Lemma \ref{lem:e_summation}, for any $T \in \Z_{> 0}$, we have 
\[
\begin{aligned}
\Big|I_n -  I \Big|\le  \underbrace{\sum_{t = 0}^T \Big|e_n(t) -  e(t) \Big|}_{\eps_1(n,T)} + \underbrace{\sum_{t = T+1}^n \Big|e_n(t) \Big|}_{\eps_2(n,T)} + \underbrace{ \Big|\sum_{t > T} e(t)  \Big|}_{\eps_3(T)}. 
\end{aligned}
\]
By Lemma \ref{lem:e_summation} and \ref{lem:E_summable}, for any $\eps > 0$, there exists $T_\eps \in \Z_{> 0}$ such that
\begin{equation}
\eps_3(T_\eps) = \Big|\sum_{t > T_\eps} e(t) \Big|\le \eps / 3 , 
\qquad \text{and} \qquad
\sum_{t > T_\eps} E(t) \le \eps / 3.    
\end{equation}
Then by Lemma \ref{lem:en_bounded_E} and the latter above, we have  for any $n$,
\begin{equation}
\eps_2(n,T_\eps) = \sum_{t = T_\eps + 1}^n \Big|e_n(t) \Big|\le \sum_{t > T_\eps} E(t) \le \eps / 3. 
\end{equation}
And for any such fixed $T_\eps$, Lemma \ref{lem:en_convergence_e} implies that there exists $n_\eps \in \Z_{> 0}$ such that for any $n \ge n_\eps$, we have 
\begin{equation}
\eps_1(n, T_\eps) = \sum_{t = 0}^{T_\eps} \Big|e_n(t) -  e(t) \Big|\le \eps / 3.
\end{equation}
This implies that for any $n \ge n_\eps$, we have $|I_n -  I |\le \eps$.
This proves Proposition \ref{prop:well-played-again}.

\subsubsection{Proof of Lemma \ref{lem:en_convergence_e}: $e_n(t) \to e(t)$}

\noindent
{\bf Step 1. Reformulate $e_n(t)$ as the application of operators $\TT_n^t, \SS_n^{\{ t_\av\}}, \UU_n^t$, and $\PP$.}~ 

Note that the exponential in the $e_n(t)$ as defined in Eq. (\ref{eq:en-def}) gives
\begin{align}
\exp\Big[ n \cdot \cC[P_n](\{t_\av/n\}_{\av\in D}, \{d_\bv/n\}_{\bv\in D}, \{n_\cv/n\}_{\cv\in A_0}) \Big] = \exp\Big[ n \sum_{(\avu, \bvu,\cvu) \in \mathcal{R}} \Psi_{\avu,\bvu, \cvu,n}
            \prod_{\av \in \avu} \frac{t_\av}{n}
            \prod_{\bv \in \bvu} \frac{d_\bv}{n}
            \prod_{\cv \in \cvu} \frac{n_\cv}{n} \Big]. 
\end{align}
Then, consider the following identity based on Taylor expansion:
\begin{align} \label{eq:Taylor-exp}
    \exp\Big[\sum_{i = 1}^N a_i \Big] = \sum_{k_i = 0, \forall i \in [N]}^\infty \prod_{i \in [N]} \frac{1}{k_i!}a_i^{k_i}
\end{align}
Applying the above identity \eqref{eq:Taylor-exp},
we get
\begin{equation}
\begin{aligned}
&~\exp\Big[ n \sum_{(\avu, \bvu,\cvu) \in \mathcal{R}} \Psi_{\avu,\bvu, \cvu,n}
            \prod_{\av \in \avu} \frac{t_\av}{n}
            \prod_{\bv \in \bvu} \frac{d_\bv}{n}
            \prod_{\cv \in \cvu} \frac{n_\cv}{n} \Big] \\
=&~ \sum_{\{\pois_{\avu, \bvu, \cvu} \ge 0:~ (\avu, \bvu, \cvu) \in \mathcal{R}\} } 
~~ \prod_{(\avu, \bvu, \cvu) \in \mathcal{R}}  \frac{1}{\pois_{\avu, \bvu, \cvu} ! } \Big[ n \Psi_{\avu,\bvu, \cvu,n}
            \prod_{\av \in \avu} \frac{t_\av}{n}
            \prod_{\bv \in \bvu} \frac{d_\bv}{n}
            \prod_{\cv \in \cvu} \frac{n_\cv}{n} \Big]^{\pois_{\avu, \bvu, \cvu}}\\
=&~  \PP \bigg[ \prod_{(\avu, \bvu, \cvu) \in \mathcal{R}}  \Big[ n \Psi_{\avu,\bvu, \cvu,n}
            \prod_{\av \in \avu} \frac{t_\av}{n}
            \prod_{\bv \in \bvu} \frac{d_\bv}{n}
            \prod_{\cv \in \cvu} \frac{n_\cv}{n} \Big]^{\pois_{\avu, \bvu, \cvu}} \bigg], 
\end{aligned}
\end{equation}
where $\PP$ is defined as the following operator,  upon an input function $F(\{ \pois_{\avu, \bvu, \cvu} : (\avu, \bvu, \cvu) \in \mathcal{R}\})$, yields
\begin{equation}\label{eqn:PP-definition}
\PP F = \sum_{\{ \pois_{\avu, \bvu, \cvu}\ge 0:~ (\avu, \bvu, \cvu) \in \mathcal{R}\} }
F(\{ \pois_{\avu, \bvu, \cvu} \}) 
\prod_{(\avu, \bvu, \cvu) \in \mathcal{R}} \frac{1}{\pois_{\avu, \bvu, \cvu} !} . 
\end{equation}

Furthermore, we define the $\TT_n^t$ operator acting on a function $f(\{ t_\av : \av \in D\})$ as
\begin{equation}\label{eqn:TTn-definition}
\TT_n^t f =  n^{-t} \binom{n}{t} \sum_{t_\av \ge 0, \forall \av \in D } \binom{t}{\{t_\av\}} f (\{ t_\av \} ).
\end{equation}

Next, we define the $\SS_n^{\{ t_\av\}}$ operator acting on a function $g(\{ d_\bv / n : \bv \in D\})$ as
\begin{equation}\label{eqn:SSn-definition}
\SS_n^{\{ t_\av\}} g = \Bigg(\prod_{\av \in D}  n^{t_\av} \sqint_\av\Bigg) g(\{ d_\bv / n\}) = \Bigg(\prod_{\av \in D}  n^{t_\av} \hspace{-5pt}  \sum_{n_\av + n_{\bar\av} = t_\av} \binom{t_\av}{n_\av, n_{\bar\av}} Q_\av^{n_\av} (- Q_\av)^{n_{\bar\av} }\Bigg)
g(\{ (n_\bv - n_{\bar \bv}) / n\}_{\bv \in D}) .
\end{equation}

Lastly, we define the $\UU_n^t$ operator acting on a function $h(\{ n_\cv / n: \cv\in A_0\})$ as
\begin{equation}
\UU_n^t h = \sum_{\{n_\cv: \cv \in A_0\}} \binom{n-t}{\{n_\cv\}} h(\{ n_\cv / n\}) \prod_{\cv\in A_0} Q_{\cv}^{n_\cv}. 
\end{equation}
With these operators defined, we can rewrite \eqref{eq:en-def} as
\begin{equation}
\begin{aligned}
e_n(t) = \UU_n^t \, \TT_n^t \,\SS_n^{\{ t_\xv\}} \,  &\PP \, \Bigg[
    \prod_{(\avu, \bvu, \cvu) \in \mathcal{R}}  \Big[ n \Psi_{\avu,\bvu, \cvu,n}
            \prod_{\av \in \avu} \frac{t_\av}{n}
            \prod_{\bv \in \bvu} \frac{d_\bv}{n}
            \prod_{\cv \in \cvu} \frac{n_\cv}{n} \Big]^{\pois_{\avu, \bvu, \cvu}} \\
            &\qquad
    \times  \prod_{\av, \bv \in D, \cv\in A_0}
    \Big(\frac{t_\av }{n}\Big)^{r_\av}
    \Big(\frac{d_\bv }{n}\Big)^{s_\bv}
    \Big(\frac{n_\cv }{n}\Big)^{m_\cv}
\Bigg]. 
\end{aligned}
\end{equation}
This equality used the fact that the $\prod_{\av \in D} n^{t_\av}$ factors in $\SS_n^{\{ t_\av \}}$ cancelled the factor of $n^{-t}$ in $\TT_n^t$.

\vspace{3pt}
\noindent
{\bf Step 2. Pointwise convergence. }  Before taking the $n\to\infty$ limit on $e_n$, we consider instead the following quantity which is the part of $e_n$ without the $\PP$ operator:
\begin{equation}\label{eqn:J_n_in_lemma}
\begin{aligned}
&~J_n(t; \{ \pois_{\avu, \bvu, \cvu} \}) \\
=&~ \UU_n^t \,\TT_n^t\, \SS_n^{\{ t_\xv\}} \, \Bigg[
    \prod_{(\avu, \bvu, \cvu) \in \mathcal{R}}  \Big[ n \Psi_{\avu,\bvu, \cvu,n}
            \prod_{\av \in \avu} \frac{t_\av}{n}
            \prod_{\bv \in \bvu} \frac{d_\bv}{n}
            \prod_{\cv \in \cvu} \frac{n_\cv}{n} \Big]^{\pois_{\avu, \bvu, \cvu}} 
    \times \prod_{\av, \bv \in D, \cv\in A_0}
    \Big(\frac{t_\av }{n}\Big)^{r_\av}
    \Big(\frac{d_\bv }{n}\Big)^{s_\bv}
    \Big(\frac{n_\cv }{n}\Big)^{m_\cv}
\Bigg]. 
\end{aligned}
\end{equation}
Then with $J_n$ defined as such, we have 
\begin{equation}\label{eqn:en-equal-PJ}
e_n(t) = \PP[J_n(t; \{ \pois_{\avu, \bvu, \cvu} \})]. 
\end{equation}
By rearranging the order of the operator and monomials in $J_n$, we have
\begin{equation}\label{eqn:Jn-reformulation}
\begin{aligned}
J_n(t; \{ \pois_{\avu, \bvu, \cvu} \}) =&~ \prod_{(\avu, \bvu, \cvu) \in \mathcal{R}}  (\Psi_{\avu,\bvu, \cvu,n})^{\pois_{\avu, \bvu, \cvu}} \times \UU_n^t\Big[  \Big(\prod_{\cv\in A_0} \Big(\frac{n_\cv }{n}\Big)^{m_\cv} \Big) \times \Big(\prod_{(\avu, \bvu, \cvu) \in \mathcal{R}} \Big(\prod_{\cv \in \cvu} \frac{n_\cv}{n}\Big)^{\pois_{\avu, \bvu, \cvu}} \Big) \Big] \\
&~  \times  \TT_n^t \Bigg\{ \Big(\prod_{\av \in D}
    \Big(\frac{t_\av }{n}\Big)^{r_\av}\Big)
    \times
    \Big(\prod_{(\avu, \bvu, \cvu) \in \mathcal{R}}  \Big( n
            \prod_{\av \in \avu} \frac{t_\av}{n}
     \Big)^{\pois_{\avu, \bvu, \cvu}} \Big) \\
     & \qquad \qquad 
    \times \SS_n^{\{ t_\av\}} \Big[ \Big( \prod_{\bv \in D}
    \Big(\frac{d_\bv }{n}\Big)^{s_\bv} \Big) \times
    \Big(\prod_{(\avu, \bvu, \cvu) \in \mathcal{R}}  \Big(\prod_{\bv \in \bvu} \frac{d_\bv}{n}
     \Big)^{\pois_{\avu, \bvu, \cvu}} \Big) \Big] 
\Bigg\}, 
\end{aligned}
\end{equation}
We now take the $n\to\infty$ limit of $J_n$. We decompose this into the following steps:
\begin{itemize}
\item First, since $\{ P_n \}_{n \ge 1}$ is a sequence of converging polynomials, we have $\lim_{n \to \infty} \Psi_{\avu,\bvu, \cvu,n} = \Psi_{\avu,\bvu, \cvu}$ where $\Psi_{\avu,\bvu, \cvu}$ is the coefficient of the canonical representation of the limiting polynomial $P$. This gives
\begin{equation}\label{eqn:Psin_limit}
\lim_{n \to \infty} \prod_{(\avu, \bvu, \cvu) \in \mathcal{R}}  (\Psi_{\avu,\bvu, \cvu,n})^{\pois_{\avu, \bvu, \cvu}} = \prod_{(\avu, \bvu, \cvu) \in \mathcal{R}}  (\Psi_{\avu,\bvu, \cvu})^{\pois_{\avu, \bvu, \cvu}}. 
\end{equation}
\item Secondly, note that for any fixed $t$, fixed $\{ m_\cv\}_{\cv \in A_0}$, and fixed $\{ \pois_{\avu, \bvu, \cvu} \}_{(\avu, \bvu, \cvu) \in \mathcal{R}}$, we have
\begin{equation}\label{eqn:Unt-limit}
\begin{aligned}
\lim_{n\to\infty} \UU_n^t \Big[ \Big(\prod_{\cv\in A_0} \Big(\frac{n_\cv }{n}\Big)^{m_\cv} \Big)\prod_{(\avu, \bvu, \cvu) \in \mathcal{R}} \Big(\prod_{\cv \in \cvu} \frac{n_\cv}{n}\Big)^{\pois_{\avu, \bvu, \cvu}} \Big] = \Big[ \Big(\prod_{\cv\in A_0} \Big(Q_\cv\Big)^{m_\cv}\Big) \prod_{(\avu, \bvu, \cvu) \in \mathcal{R}} \Big(\prod_{\cv \in \cvu} Q_\cv\Big)^{\pois_{\avu, \bvu, \cvu}} \Big],
    \end{aligned}
\end{equation}
which follows from the fact that $\sum_{\cv \in A_0} Q_{\cv} = 1$ and $Q_\cv \ge 0$, so we can think of $n_\cv$ as distributed according to a multinomial distribution, and by law of large numbers, as $n\to\infty$, the moments of $n_\cv/n$ converges to the powers of $Q_\cv$ (we use the fact that $t$ is fixed as $n \to \infty$). 
\item Thirdly, by Eq. \eqref{eqn:little_sum_monomial_limit_multi} in Lemma \ref{lem:little_sum_lemma_multi} (shown later in Section~\ref{sec:auxiliary}), we have
\begin{align}
&\lim_{n\to\infty} \SS_n^{\{ t_\av\}} \Big[ \Big(
    \prod_{\bv \in D}
    \Big(\frac{d_\bv }{n}\Big)^{s_\bv}\Big)
    \times
    \prod_{(\avu, \bvu, \cvu) \in \mathcal{R}}  \Big(\prod_{\bv \in \bvu} \frac{d_\bv}{n}
     \Big)^{\pois_{\avu, \bvu, \cvu}} 
    \Big]   \nonumber \\
&\qquad  = \SS^{\{t_\av\}} \Big[ 
    \Big(\prod_{\bv \in D} \Big(z_\bv\Big)^{s_\bv}\Big)
    \times
    \prod_{(\avu, \bvu, \cvu) \in \mathcal{R}}  \Big(\prod_{\bv \in \bvu} z_\bv
     \Big)^{\pois_{\avu, \bvu, \cvu}} 
    \Big],
\label{eqn:Snt-limit-St}
\end{align}
where for a monomial $g(\{ d_\bv /n\}_{\bv \in D})$, we have
\begin{equation}
\SS^{\{ t_\av\}} g(\{ z_\bv \}_{\bv \in D}) = 
\prod_{\av \in D} \Big[\frac{t_{\av} !}{2 \pi i} \oint_{\mathbb D}   (2 Q_\av / z_\av)^{t_\av} \frac{d z_\av}{z_\av} \Big] g(\{z_\bv\}_{\bv \in D}). 
\end{equation}
\item Fourthly, by the definition of well-played polynomial as in Definition \ref{def:well-played-polynomial-redefine}, for any $(\avu, \bvu, \cvu) \in \mathcal{R}$, we have $|\avu |\ge 1$.
Therefore, for any fixed $\{ t_\av \}_{\av \in D}$, we have
\begin{align}\label{eqn:tav-term-limit}
&
\lim_{n \to \infty} 
    \Big(\prod_{\av \in D} \Big(\frac{t_\av }{n}\Big)^{r_\av}\Big)
    \times 
    \prod_{(\avu, \bvu, \cvu) \in \mathcal{R}}  \Big( n
             \prod_{\av \in \avu} \frac{t_\av}{n}
     \Big)^{\pois_{\avu, \bvu, \cvu}} 
     \nonumber \\
    &\qquad\qquad
    =  \Big(\prod_{\av \in D} 1\{ r_\av = 0\} \Big)
    \times \Big(\prod_{(\av, \bvu, \cvu) \in \mathcal{R}}  t_\av ^{\pois_{\av, \bvu, \cvu}} \Big) \times \Big(\prod_{(\avu, \bvu, \cvu) \in \mathcal{R}: |\avu |\ge 2}  1\{ \pois_{\avu, \bvu, \cvu} = 0 \} \Big).
\end{align}
Here the $\prod_{\av \in D} 1\{ r_\av = 0\}$ factor is by the fact that $\{ t_\av \}$ are fixed as $n \to \infty$ so that $\lim_{n \to \infty} \prod_{\av \in D}(t_\av/n)^{r_\av} = \prod_{\av \in D} 1\{ r_\av = 0\}$. The $\prod_{(\avu, \bvu, \cvu) \in \mathcal{R}: |\avu |\ge 2}  1\{ \pois_{\avu, \bvu, \cvu} = 0 \}$ factor is due to the following observation: for any $\pois_{\avu, \bvu, \cvu} \ge 1$ with $|\avu |\ge 2$, we have $\lim_{n \to \infty} [n \prod_{\av \in \avu} (t_\av / n)]^{\pois_{\avu, \bvu, \cvu}} = 0$. 
\item Finally, for any sequence of functions $f_n(\{ t_\xv \})$, we have
\begin{equation}\label{eqn:TTn-limit}
\lim_{n \to \infty}\TT_n^t f_n = \lim_{n \to \infty} n^{-t} {n \choose t} \sum_{\substack{t_\av \ge 0, \forall \av \in D \\ \sum_{\av \in D} t_\av = t}} \binom{t}{\{t_\av\}} f_n (\{ t_\av \} ) = \TT^t \lim_{n\to\infty} f_n (\{ t_\av \}),  
\end{equation}
where $\TT^t$ is the operator such that for any $f(\{ t_\av \})$, we have
\begin{equation}\label{eqn:Tt-definition}
\TT^t f = \frac{1}{t!} \sum_{\substack{t_\av \ge 0, \forall \av \in D \\ \sum_{\av \in D} t_\av = t}} \binom{t}{\{t_\av\}} f (\{ t_\av \}). 
\end{equation}
\end{itemize}
Combining Eqs. \eqref{eqn:Psin_limit}, \eqref{eqn:Unt-limit}, \eqref{eqn:Snt-limit-St}, \eqref{eqn:tav-term-limit}, and \eqref{eqn:TTn-limit} above, we have for any fixed $\{ \pois_{\avu, \bvu, \cvu} \}$ the following:
\begin{equation}\label{eqn:Jn-to-J}
\lim_{n\to \infty}J_n (t; \{ \pois_{\avu, \bvu, \cvu} \}) = J(t; \{ \pois_{\avu, \bvu, \cvu} \}). 
\end{equation}
where 
\begin{equation}
\begin{aligned}
J(t; \{ \pois_{\avu, \bvu, \cvu} \}) =&~ \Big(\prod_{(\avu, \bvu, \cvu) \in \mathcal{R}}  (\Psi_{\avu,\bvu, \cvu})^{\pois_{\avu, \bvu, \cvu}}\Big) \times 
\Big(\prod_{\cv\in A_0} Q_\cv^{m_\cv}  \Big) \times
\Big(\prod_{(\avu, \bvu, \cvu) \in \mathcal{R}} \Big(\prod_{\cv \in \cvu} Q_\cv\Big)^{\pois_{\avu, \bvu, \cvu}} \Big)  \\
&~  \times  \TT^t \Bigg\{ 
    \bigg(\prod_{\av \in D} 1\{ r_\av = 0\} \bigg)
    \times \bigg(\prod_{(\av, \bvu, \cvu) \in \mathcal{R}}  t_\av ^{\pois_{\av, \bvu, \cvu}} \bigg) \times \bigg(\prod_{(\avu, \bvu, \cvu) \in \mathcal{R}: |\avu |\ge 2}  1\{ \pois_{\avu, \bvu, \cvu} = 0 \}\bigg)
    \\
    &~\qquad \qquad
    \times \SS^{\{ t_\av\}} \bigg[ 
    \Big(\prod_{\bv \in D} z_\bv^{s_\bv}\Big)
    \times
    \Big(\prod_{(\avu, \bvu, \cvu) \in \mathcal{R}}  \Big(\prod_{\bv \in \bvu} z_\bv
     \Big)^{\pois_{\avu, \bvu, \cvu}} \Big)
    \bigg]
\Bigg\}. 
    \end{aligned}
\end{equation}
With some rearrangement of the order of operators and monomials in $J$, we have
\begin{equation}
\begin{aligned}
J(t; \{ \pois_{\avu, \bvu, \cvu} \}) =&~ \TT^t \, \SS^{\{ t_\av\}} \Bigg[
    \bigg(\prod_{(\av, \bvu, \cvu) \in \mathcal{R}}  \Big( t_\av \Psi_{\av,\bvu, \cvu} \prod_{\bv \in \bvu} z_\bv
            \prod_{\cv \in \cvu} Q_\cv \Big)^{\pois_{\av, \bvu, \cvu}} \bigg)
    \times \bigg( \prod_{(\avu, \bvu, \cvu) \in \mathcal{R}: |\avu |\ge 2}  1\{ \pois_{\avu, \bvu, \cvu} = 0 \} \bigg)
    \\
    &~ \qquad \qquad \times \prod_{\av, \bv \in D, \cv\in A_0}
    1\{r_\av = 0\} z_{\bv}^{s_\bv} Q_\cv^{m_\cv}\Bigg],
    \end{aligned}
\end{equation}
here we used the fact that $(\Psi_{\av,\bvu, \cvu} \prod_{\cv \in \cvu} Q_\cv \prod_{\bv \in \bvu} z_\bv)^{\pois_{\avu, \bvu, \cvu}} = 1$ when $\pois_{\avu, \bvu, \cvu} = 0$. 

\noindent
{\bf Step 3. Apply the dominant convergence theorem. } A simple calculation of $\PP[J(t; \{ \pois_{\avu, \bvu, \cvu} \})]$ gives that
\begin{equation}\label{eqn:PJ-equal-e}
\begin{aligned}
& \hspace{-20pt}
\PP[J(t; \{ \pois_{\avu, \bvu, \cvu} \})] \\
&\stackrel{(i)}{=}  \TT^t \, \SS^{\{ t_\av\}}\Bigg[ \exp\Big[
   \sum_{(\av, \bvu, \cvu) \in \mathcal{R}} t_\av \Psi_{\av,\bvu, \cvu} \prod_{\bv \in \bvu} z_\bv
            \prod_{\cv \in \cvu} Q_\cv \Big] \times \prod_{\av, \bv \in D, \cv\in A_0}
    1\{r_\av = 0\}  z_{\bv}^{s_\bv}  Q_\cv^{m_\cv}
    \Bigg] \\
    &\stackrel{(ii)}{=}   
    \sum_{\substack{t_\av \ge 0, \forall \av \in D \\ \sum_{\av \in D} t_\av = t}} 
    \Big[ \prod_{\av \in D} \frac{1}{2 \pi i} \oint_{\mathbb D}   (2 Q_\av / z_\av)^{t_\av} \frac{d z_\av}{z_\av} \Big] 
    \times \exp\Big[
   \sum_{(\av, \bvu, \cvu) \in \mathcal{R}} t_\av\Psi_{\av,\bvu, \cvu} \prod_{\bv \in \bvu} z_\bv
            \prod_{\cv \in \cvu} Q_\cv \Big]
    \\
    & \qquad \hspace{165pt}\times 
    \prod_{\av, \bv \in D, \cv\in A_0}
        1\{r_\av = 0\}  z_{\bv}^{s_\bv}  Q_\cv^{m_\cv}
     \\
    &= e(t), 
\end{aligned}
\end{equation}
where $e(t)$ is as defined in Eq. (\ref{eq:e-def}). Here $(i)$ used the definition of $\PP$ as in Eq. (\ref{eqn:PP-definition}) and the Taylor expansion of exponential function as in Eq. (\ref{eq:Taylor-exp}), and $(ii)$ used the definition of $\TT^t$ and $\SS^{\{ t_\av\}}$ and the cancellation of $t!$ on the denominator and numerator, as well as the cancellation of $\prod_{\av \in D} t_\av !$. 

So far we have shown $e_n(t) = \PP[J_n(t)]$, and $\PP[\lim_{n\to\infty} J_n(t)] = e(t)$ in Eqs. \eqref{eqn:en-equal-PJ}, \eqref{eqn:Jn-to-J}, and \eqref{eqn:PJ-equal-e}.
Then to prove the Lemma~\ref{lem:en_convergence_e}'s statement that $\lim_{n \to \infty} e_n(t) = e(t)$, all we need to do is switch the order of the $n$-limit and $\PP$ operation.
This can be done by invoking the dominant convergence theorem,
which can be used under the condition that there exists an upper bound $\overline J(t; \{ \pois_{\avu, \bvu, \cvu} \})$ of $J_n$ with $\PP \overline J(t; \{ \pois_{\avu, \bvu, \cvu} \}) < \infty$. 

We now show this upper bound exists.
Recall the expression of $J_n$ as in Eq. (\ref{eqn:Jn-reformulation}). Note that for any $\{ n_\cv\}_{\cv\in A_0}$, $\{ t_\av \}_{\av\in D}$ and $\{\pois_{\avu, \bvu, \cvu} \}_{(\avu, \bvu, \cvu) \in \mathcal{R}}$ where $0 \le n_\cv \le n$ and $0 \le t_\av \le t \le n$, we have
\begin{align}
\UU_n^t\Big[ \Big( \prod_{\cv\in A_0} \Big(\frac{n_\cv }{n}\Big)^{m_\cv} \Big) \times \Big(\prod_{(\avu, \bvu, \cvu) \in \mathcal{R}} \Big[\prod_{\cv \in \cvu} \frac{n_\cv}{n}\Big]^{\pois_{\avu, \bvu, \cvu}} \Big) \Big]  
&\le 1, \\
\text{and} \qquad
\Big( \prod_{(\avu, \bvu, \cvu) \in \mathcal{R}}  \Big[ n
            \prod_{\av \in \avu} \frac{t_\av}{n}
     \Big]^{\pois_{\avu, \bvu, \cvu}} \Big)
    \times \Big( \prod_{\av \in D}
    \Big(\frac{t_\av }{n}\Big)^{r_\av} \Big)
    &\le \prod_{(\avu, \bvu, \cvu) \in \mathcal{R}} t^{\pois_{\avu, \bvu, \cvu}},
\end{align}
where we used the fact that $|\avu |\ge 1$ for any $(\avu, \bvu, \cvu) \in \mathcal{R}$ so that $\max_{(\avu, \bvu, \cvu) \in \mathcal{R}} [n \prod_{\av \in \avu} \frac{t_\av}{n}] \le t$.
As a consequence, we have (recall that $\overline{\Psi}_{\avu,\bvu, \cvu} \equiv \sup_{n \ge 1} |\Psi_{\avu,\bvu, \cvu,n} | < \infty$ as in Eq. (\ref{eqn:def_overline_Psi}))
\[
\begin{aligned}
&~|J_n(t; \{ \pois_{\avu, \bvu, \cvu} \})|\\
\le&~  \prod_{(\avu, \bvu, \cvu) \in \mathcal{R}}  |t \cdot \overline{\Psi}_{\avu,\bvu, \cvu} |^{\pois_{\avu, \bvu, \cvu}} \times \Bigg|\TT_n^t   \SS_n^{\{ t_\av\}} \Big[ 
    \prod_{(\avu, \bvu, \cvu) \in \mathcal{R}}  \Big[\prod_{\bv \in \bvu} \frac{d_\bv}{n}
     \Big]^{\pois_{\avu, \bvu, \cvu}} 
    \times \prod_{\bv \in D}
    \Big(\frac{d_\bv }{n}\Big)^{s_\bv}\Big] \Bigg|\\
\stackrel{(i)}{\le}&~  \prod_{(\avu, \bvu, \cvu) \in \mathcal{R}}  |t \cdot \overline{\Psi}_{\avu,\bvu, \cvu} |^{\pois_{\avu, \bvu, \cvu}} \times \sum_{\sum_\av t_\av = t} \frac{1}{\prod_{\av \in D} t_\av !}\Bigg|  \SS_n^{\{ t_\av\}} \Big[ 
    \prod_{(\avu, \bvu, \cvu) \in \mathcal{R}}  \Big[\prod_{\bv \in \bvu} \frac{d_\bv}{n}
     \Big]^{\pois_{\avu, \bvu, \cvu}} 
    \times \prod_{\bv \in D}
    \Big(\frac{d_\bv }{n}\Big)^{s_\bv}\Big] \Bigg|\\
\stackrel{(ii)}{\le}&~ \prod_{(\avu, \bvu, \cvu) \in \mathcal{R}}  |t \cdot \overline{\Psi}_{\avu,\bvu, \cvu} |^{\pois_{\avu, \bvu, \cvu}} \times  \sum_{\sum_\av t_\av = t} \prod_{\av \in D} \frac{|2 Q_\av t_\av |^{t_\av}}{t_\av !} \equiv \overline J(t, \{ \pois_{\avu, \bvu, \cvu}\}), 
\end{aligned}
\]
where $(i)$ used the definition of $\TT_n^t$ as in Eq. (\ref{eqn:TTn-definition}), and $(ii)$ used Eq.~\eqref{eqn:bound-Sn_multi} in Lemma \ref{lem:little_sum_lemma_multi}. 

For the quantity $\overline J$, note that it is only exponential in $\{ \pois_{\avu, \bvu, \cvu} \}$, so we have $\PP[\overline J(t; \{ \pois_{\avu, \bvu, \cvu} \})] < \infty$.
Thus the conditions of the dominant convergence theorem are satisfied, and applying it along with Eqs.~\eqref{eqn:en-equal-PJ}, \eqref{eqn:Jn-to-J}, and \eqref{eqn:PJ-equal-e} shows
\begin{equation}
\lim_{n \to \infty} e_n(t) = \lim_{n \to \infty} \PP[J_n (t; \{ \pois_{\avu, \bvu, \cvu} \})] = \PP[J (t; \{ \pois_{\avu, \bvu, \cvu} \})] = e(t). 
\end{equation}
This proves the lemma.

\subsubsection{Proof of Lemma \ref{lem:en_bounded_E}: $|e_n(t)| \le E(t)$}

We define 
\begin{align}
&b_n(t; \{ t_\av \}, \{n_\cv\} ) \nonumber \\
&\qquad =  \Big( \prod_{\av\in D}  \frac{n^{t_\av}}{t_\av !}  \Big) \Big(\prod_{\av\in D} \sqint_\av \Big) 
 \exp\Big[ n \cdot \cC[P_n](\{t_\av/n\}_{\av\in D}, \{d_\bv/n\}_{\bv\in D}, \{n_\cv/n\}_{\cv\in A_0})\Big] \prod_{\bv \in D}
    \Big(\frac{d_\bv }{n}\Big)^{s_\bv}. 
    \label{eq:bn-def}
\end{align}
Then by the definition $e_n(t)$ in Eq. (\ref{eq:en-def}), we have 
\begin{align}
|e_n(t) |=&~ \Big|\binom{n}{t} \sum_{\sum_{\av} t_\av = t} \binom{t}{\{t_\av\}} \sum_{\sum_{\cv} n_\cv = n-t} \binom{n-t}{\{n_\cv\}} \Big(\prod_{\cv\in A_0} Q_{\cv}^{n_\cv}\Big) \times \Big( \prod_{\av\in D}  \frac{t_\av !}{n^{t_\av}}  \Big) \nonumber \\
&~ \times b_n( t; \{ t_\av \}, \{n_\cv\} )  \prod_{\cv\in A_0} \Big(\frac{n_\cv}{n}\Big)^{m_\cv} \prod_{\av\in D} \Big(\frac{t_\av}{n}\Big)^{r_\av} \Big| \nonumber \\
=&~  \Big|\frac{n!}{(n - t)! n^{t}} \sum_{\sum_{\av} t_\av = t} \sum_{\sum_{\cv} n_\cv = n-t} \binom{n-t}{\{n_\cv\}}\prod_{\cv\in A_0} Q_{\cv}^{n_\cv} \cdot b_n( t; \{ t_\av \}, \{n_\cv\} ) \prod_{\cv\in A_0} \Big(\frac{n_\cv}{n}\Big)^{m_\cv} \prod_{\av\in D}  \Big(\frac{t_\av}{n}\Big)^{r_\av} \Big|  \nonumber \\
\le&~  \sum_{\sum_{\av} t_\av = t} \sup_{\{ n_\cv\}, n_\cv \le n} \Big|b_n( t; \{ t_\av \}, \{n_\cv\}) \Big|
\label{eq:en-bound1}
\end{align}
where the last inequality used the facts that $t\le n$, $n_\cv / n \in [0, 1]$ for $\cv \in A_0$, $t_\av / n \in [0, 1]$ for $\av \in D$, $0\le Q_\cv \le 1$ for $\cv \in A_0$, and $\sum_{\cv \in A_0} Q_\cv = 1$.

To upper bound $|b_n|$, we plug the form of $\cC[P_n]$ from Eq.~\eqref{eq:poly_t_d_n_def_sequence} into Eq.~\eqref{eq:bn-def}, then apply Lemma \ref{lem:upper_bound_little_sum_exponential} (given later in Section~\ref{sec:auxiliary}) to bound it with an infinimum, and get
\[
\begin{aligned}
&~|b_n(t; \{ t_\av \}, \{n_\cv\} ) |\\
= &~\Big|\Big( \prod_{\av\in D}  \frac{n^{t_\av}}{t_\av !}  \Big) \Big( \prod_{\av\in D} \sqint_\av \Big)  \exp\Big[ n \sum_{\avu \in D^*, \bvu \in D^*, \cvu \in A_0^*} \Psi_{\avu,\bvu, \cvu,n}
            \prod_{\av \in \avu} \frac{t_\av}{n}
            \prod_{\bv \in \bvu} \frac{d_\bv}{n}
            \prod_{\cv \in \cvu} \frac{n_\cv}{n} \Big] \prod_{\bv \in D} \Big(\frac{d_\bv}{n}\Big)^{s_\bv} \Big|\\
\le&~ \inf_{\Ub_\av \ge 1, \forall \av \in D} \Big( \prod_{\av \in D} | 2 e Q_\av / \Ub_\av |^{t_\av} \Big) 
        \exp\Big[ n \sum_{\avu \in D^*, \bvu \in D^*, \cvu \in A_0^*} | \Psi_{\avu,\bvu, \cvu,n} |
            \prod_{\av \in \avu} \frac{t_\av}{n}
            \prod_{\bv \in \bvu} \Ub_\bv
            \prod_{\cv \in \cvu} \frac{n_\cv}{n} \Big]
        \prod_{\bv \in D} \Ub_\bv^{s_\bv}. 
\end{aligned}
\]
Then applying Lemma \ref{lem:property-well-played-polynomial} to the exponential, we get an upper bound of $|b_n|$ in terms of the some polynomials $\{\bar{P}_\av\}_{\av\in D}$ as
\begin{equation}
|b_n(t; \{ t_\av \}, \{n_\cv\} ) | 
\le \inf_{\Ub_\av \ge 1, \forall \av \in D} \Big( \prod_{\av \in D} | 2 e Q_\av / \Ub_\av |^{t_\av} \Big) \exp\Big[  \sum_{\av \in D} t_\av \bar{P}_\av ( \{ \Ub_\bv \}_{\bv \prec \av})  \Big] \prod_{\bv \in D} \Ub_\bv^{s_\bv}.
\end{equation}
Plugging this back into \eqref{eq:en-bound1}, we have 
\begin{align}
| e_n(t)| \le&~ \sum_{\sum_{\av} t_\av = t} \inf_{\Ub_\av \ge 1, \forall \av \in D} \Big( \prod_{\av \in D} |2 e Q_\av / \Ub_\av |^{t_\av} \Big) \exp\Big\{ \sum_{\av \in D} t_\av \bar{P}_\av ( \{ \Ub_\bv \}_{\bv \prec \av}) \Big\} \prod_{\bv \in D} \Ub_\bv^{s_\bv}.
\end{align}
Now since $\sum_{s} \inf_\eta f(s, \eta) \le \inf_{\eta} \sum_s f(s, \eta) $, we have
\begin{align}
| e_n(t)| &\le \inf_{\Ub_\av \ge 1, \forall \av \in D} \sum_{\sum_{\av} t_\av = t}  \Big( \prod_{\av \in D} |2 e Q_\av / \Ub_\av |^{t_\av} \Big) \exp\Big\{  \sum_{\av \in D} t_\av \bar{P}_\av ( \{ \Ub_\bv \}_{\bv \prec \av}) \Big\} \prod_{\bv \in D} \Ub_\bv^{s_\bv}, 
\end{align}
where the right hand side gives $E(t)$ as defined in Eq. (\ref{eq:E-def}). This proves the lemma.

\subsubsection{Proof of Lemma \ref{lem:e_summation}: $e(t)$ sums to $I$}

Our goal here is to prove
$\sum_t e(t) = I$.
Recall the definition \eqref{eq:e-def} of $e(t)$, which we reproduce here for convenience:
\begin{equation}
\begin{aligned}
e(t) =&~
 \sum_{\substack{t_\av \ge 0, \forall \av \in D \\ \sum_{\av \in D} t_\av = t}} 
 \Big[\prod_{\av \in D}  \frac{1}{2\pi i}\oint_{\Disk} \frac{d z_\av}{z_\av} z_\av^{-t_\av}  (2 Q_\av )^{t_\av}\Big] 
 \exp\Big[ \cC[P_{\lin}](\{t_\av\}_{\av\in D}, \{z_\bv\}_{\bv\in D}, \{Q_\cv\}_{\cv\in A_0}) \Big] \\
 &\qquad \qquad \qquad\qquad\qquad \times 
 \prod_{\av, \bv \in D, \cv\in A_0}
    1\{r_\av = 0\}
    z_\bv^{s_\bv}
    Q_\cv^{m_\cv}.
\end{aligned}
\end{equation}
Also recall definition \eqref{eqn:I-definition} of $I$:
\begin{equation}
I(\{r_\av\}, \{ s_\bv\}, \{ m_\cv\}) = \prod_{\av, \bv \in D, \cv\in A_0} 1(r_\av = 0) (2 W_{\bv})^{s_\bv}Q_\cv^{m_\cv}, 
\end{equation}
where $\{ W_\bv \}_{\bv \in D}$ is given as the unique solution to the following equation (c.f. Lemma \ref{lem:unique_solution_SCE} for the existence and uniqueness of the solution):
\begin{align}
    W_{\xv} =  Q_{\xv} \exp \Big[ \partial_{\tau_\xv} \cC[P_\lin](\{\tau_\av = 0\}_{\av \in D}, \{2W_\bv\}_{\bv \in D},\{Q_\cv\}_{\cv \in A_0}) \Big],~~~~ \forall \xv \in D. 
\end{align}

We now state a lemma that we will recursively apply to prove Lemma~\ref{lem:e_summation}. 
\begin{lemma}\label{lem:simplification_lemma_new}
Let $f: \C \to \C$ be a fixed polynomial. For any $s \in \Z_{\ge 0}$, $Y \in \C$, we have
\begin{equation}  \label{eq:sum-and-integral-ident}
\sum_{t = 0}^\infty \frac{1}{2 \pi i}\oint_{\mathbb D}  \exp[f(z) + Y t ] (2 Q)^t z^{s - t} \frac{d z}{z} = (2W)^s \exp[f(2 W)],
\end{equation}
where $W = Q \exp(Y)$ and $\Disk$ is the unit circle in the complex plane. 
\end{lemma}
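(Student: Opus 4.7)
The plan is to evaluate the contour integral term-by-term and then recognise the resulting series as a Taylor expansion. Since $f$ is a polynomial, $\exp[f(z)]$ is an entire function, so it admits an absolutely convergent power series
\[
\exp[f(z)] = \sum_{k=0}^\infty a_k z^k, \qquad \text{valid for all } z \in \C,
\]
and in particular $\sum_{k=0}^\infty a_k w^k = \exp[f(w)]$ for every $w \in \C$.

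For each fixed $t$, the integrand in \eqref{eq:sum-and-integral-ident} equals $\exp[f(z)] (2Q)^t e^{Yt} z^{s-t-1}$. First I would observe that the contour integral picks off the coefficient of $z^{-1}$, so
\[
\frac{1}{2\pi i}\oint_{\Disk} \exp[f(z)+Yt](2Q)^t z^{s-t}\frac{dz}{z} = (2Q)^t e^{Yt} \cdot [z^{t-s}]\exp[f(z)] = (2Q)^t e^{Yt} a_{t-s},
\]
with the convention $a_k = 0$ for $k<0$. In particular every term with $t < s$ contributes zero because $z^{s-t-1}$ is then holomorphic on the disk.

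Next I would substitute $W = Q e^Y$ and re-index by $j = t-s$ to get
\[
\sum_{t=0}^\infty \frac{1}{2\pi i}\oint_{\Disk} \exp[f(z)+Yt](2Q)^t z^{s-t}\frac{dz}{z} = \sum_{t=s}^\infty (2W)^t a_{t-s} = (2W)^s \sum_{j=0}^\infty a_j (2W)^j.
\]
By the entire-function expansion above, the inner sum equals $\exp[f(2W)]$, which gives the claimed identity.

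The only thing to check is that no convergence issue arises. Since the $t$-th term came from a single residue computation (no interchange of sum and integral was needed) and the Taylor series of $\exp[f(\cdot)]$ converges absolutely on all of $\C$, the manipulation is valid for every $Y, Q, W \in \C$ and every $s \in \Z_{\ge 0}$. There is no serious obstacle here; the only subtlety worth stating explicitly is the convention $a_k = 0$ for $k<0$, which is what lets us replace the sum starting at $t=s$ by one starting at $t=0$.
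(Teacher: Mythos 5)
Your proof is correct, but it takes a genuinely different route from the one in the paper. You evaluate each contour integral separately by residue extraction: writing $\exp[f(z)] = \sum_{k\ge 0} a_k z^k$, the $t$-th integral picks off $a_{t-s}$, and the $t$-sum then reassembles the Taylor series of $\exp[f(\cdot)]$ at $2W$, giving $(2W)^s\exp[f(2W)]$. The paper instead keeps the integral intact and does the $t$-sum first: it rewrites the integrand as $\exp[f(z)](2W/z)^t z^{s-1}$, deforms the contour to a circle of radius $K>|2W|$ (legitimate since the only pole in the annulus is at $z=0$ and the integrand is holomorphic there once $t\le s$ or the residue is accounted for; actually the point is the integrand is meromorphic with pole only at the origin, so the integral is contour-independent), sums the geometric series $\sum_t (2W/z)^t = z/(z-2W)$ which converges uniformly on $|z|=K$, and then invokes Cauchy's integral formula on $\oint \exp[f(z)]\,z^s/(z-2W)\,dz$. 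Your version avoids any contour deformation or interchange of sum and integral, relying only on the absolute convergence of the Taylor series of the entire function $\exp\circ f$; the paper's version avoids any explicit coefficient bookkeeping at the cost of needing the contour enlargement to justify summing under the integral sign. Both are standard and both are rigorous; your convention $a_k = 0$ for $k<0$ is the right thing to make the re-indexing clean.
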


\begin{proof}[Proof of Lemma \ref{lem:simplification_lemma_new}]
First, let us rewrite the LHS of \eqref{eq:sum-and-integral-ident} using $W = Q \exp(Y)$ as
\begin{equation}
\LHS =\sum_{t = 0}^\infty \frac{1}{2 \pi i}\oint_{\Disk}  \exp[f(z) ]  (2 W / z)^t  z^s \frac{d z}{z}. 
\end{equation}
We want to exchange the order of operation so that we first do the infinite sum on $t$, but this sum only converges if $|z|>|2W|$.
To do this, we choose any $K > |2 W| = |2 Q \exp(Y) |$, and denote $K \Disk$ as the circle with radius $K$ centered at 0 in the complex plane.
Since the only pole in the integrand above occurs at $z=0$, then by the Cauchy integral theorem, we have
\begin{align}
\LHS &= \sum_{t = 0}^\infty \frac{1}{2 \pi i}\oint_{K \Disk}  \exp[f(z) ]  (2 W / z)^t z^s \frac{d z}{z} = \frac{1}{2 \pi i}\oint_{K \Disk}  \exp[f(z) ]  \frac{z^s}{z-2W} dz \nonumber \\
&= (2W)^s \exp[ f(2 W)]
\end{align}
where the infinite sum on $t$ now converges since $|z|=K > |2W|$, and applying the Cauchy integral formula yields the last equality.
\end{proof}

Now we are ready to prove Lemma~\ref{lem:e_summation}.
We want to evaluate the following explicitly
\begin{align}
 \sum_{t=0}^\infty e(t) &=  \prod_{\av \in D} \Big[ \sum_{t_\av=0}^\infty \frac{1}{2\pi i}\oint_{\Disk} \frac{d z_\av}{z_\av} z_\av^{-t_\av}  (2 Q_\av )^{t_\av}\Big] 
 \exp\Big[ \cC[P_{\lin}](\{t_\av\}_{\av\in D}, \{z_\bv\}_{\bv\in D}, \{Q_\cv\}_{\cv\in A_0}) \Big] \nonumber \\
 &\qquad \qquad \qquad\qquad\qquad \times 
 \prod_{\av, \bv \in D, \cv\in A_0}
    1\{r_\av = 0\}
    z_\bv^{s_\bv}
    Q_\cv^{m_\cv} .
\end{align}
where
\begin{align}
    \cC[P_\lin] (\{t_\av\}, \{z_\bv\}, \{Q_\cv\})
    =  \sum_{\av\in D} t_\av P_\av(\{z_\bv: \bv \prec \av\}, \{Q_\cv\}),
\end{align}
and
\begin{align}
    P_\xv(\{z_\bv\}, \{Q_\cv\}) = \partial_{\tau_\xv} \cC[P](\{\tau_\av=0\}, \{z_\bv\}, \{Q_\cv\}).
\end{align}
To do this, we note by the assumption that $P$ is a well-played polynomial that $P_\av$ does not depend on $z_\bv$ when $\bv \succeq \av$. So we can restrict its argument to $\{z_\bv: \bv \prec \av\}$.

Let us label the elements of $D=\{1,2,\ldots, |D|\}$ according to the increasing order defined on the set $D$ as in Definition~\ref{def:well-played-polynomial-redefine}.
Note due to this ordering, we can perform the sums and integrals of the form  $\sum_{t_\xv} \oint d z_\xv(\cdots)$ sequentially in the ordering of $\xv \in D$.
More precisely, this can be seen via the following.
Let us define for $K = 1,2,\ldots, |D|+1$ the following intermediate expression:
\begin{align}
    S_K &:= \prod_{\av \in D, \cv\in A_0} 1\{r_\av = 0\} Q_\cv^{m_\cv}\prod_{j \preceq K-1} (2{W}_j)^{s_j}
     \nonumber \\
 & \qquad \qquad \times 
     \prod_{j \succeq K}\bigg[ \sum_{t_j=0}^\infty  \frac{1}{2 \pi i} \oint_{\Disk} \frac{d z_j}{z_j}z_j^{s_j-t_j} (2 Q_j)^{t_j} \bigg]  
 \exp\big[G_K(z_K,\ldots, z_{|D|}) \big],
\end{align}
where
\begin{align}
    G_K := t_K \underbrace{P_K(\{z_j = 2W_j\}_{j\prec K}, \{Q_\cv\})}_{Y_K}
        + \underbrace{
            \sum_{L= K+1}^{|D|} t_L P_L(\{z_j\}_{j \prec L}, \{Q_\cv\})
            \Big|_{z_{j}= 2{W}_j ~\forall j \prec K}
            }_{f_K(z_K,\ldots, z_{|D|})}
\end{align}
and 
\begin{align}
    W_K &= Q_K \exp(Y_K)  
    = Q_K \exp[P_j(\{z_j = 2W_j\}_{j\prec K}, \{Q_\cv\})] \nonumber \\
    &= Q_K \exp[\partial_{\tau_K} P(\{0\}, \{z_j = 2W_j\}_{j\prec K}, \{Q_\cv\})].
    \label{eq:WK}
\end{align}
We will show inductively the following line of equalities:
\begin{align}
    \sum_{t=0}^\infty e(t) =S_1 = S_2 = \cdots = S_{|D|} = S_{|D|+1} = \prod_{\av, \bv \in D, \cv\in A_0} 1(r_\av = 0) (2 W_{\bv})^{s_\bv}Q_\cv^{m_\cv} = I,
\end{align}
which would prove the lemma.

Indeed, one can easily check that the base case $\sum_{t=0}^\infty e(t)=S_1$ is true since $G_1 = P_\lin$ from its definition.
Next, we will show that $S_K = S_{K+1}$ for any $K = 1, \ldots, \vert D\vert$.
Using the expression of $G_K(z_K,\ldots, z_{|D|})=t_K Y_K + f_K(z_K,\ldots, z_{|D|})$ above and applying Lemma~\ref{lem:simplification_lemma_new}, we get
\begin{align}
    \sum_{t_K=0}^\infty  \frac{1}{2 \pi i} \oint_{\Disk} \frac{d z_K}{z_K}z_K^{s_K-t_K} (2 Q_K)^{t_K} 
 \exp\big[G_K(\{z_j: j \succeq K\}) \big]
 &= (2{W}_K)^{s_K} \exp[f_K(z_K = 2{W}_K)]
\end{align}
with $W_K$ given in Eq.~\eqref{eq:WK}.
And since $f_K(z_K=2{W}_K, z_{K+1},\ldots) = G_{K+1}(z_{K+1},\ldots)$, we indeed have 
\begin{align}
    S_K &= \prod_{\av \in D, \cv\in A_0} 1\{r_\av = 0\} Q_\cv^{m_\cv} \prod_{j \preceq K} (2{W}_j)^{s_j}\prod_{j \succeq K+1}\bigg[ \sum_{t_j=0}^\infty  \frac{1}{2 \pi i} \oint_{\Disk} \frac{d z_j}{z_j}z_j^{s_j-t_j} (2 Q_j)^{t_j} \bigg]  
 \exp\big[G_{K+1} \big]  \nonumber \\
 &= S_{K+1}.
\end{align}
This completes the proof.

\subsubsection{Proof of Lemma \ref{lem:E_summable}: the sum on $E(t)$ converges}

Our goal is to show that the series $\sum_{t = 0}^\infty E(t)$ converges. Recall \eqref{eq:E-def} where $E(t)$ is defined as
\begin{align}
E(t) &=   \inf_{\Ub_\av \ge 1, \forall \av \in D} \sum_{\sum_{\av} t_\av = t}  \prod_{\av \in D} \Big( |2 e Q_\av / \Ub_\av |^{t_\av} \Ub_\bv^{s_\bv}\Big) \times \exp\Big[ \sum_{\av \in D} t_\av \bar P_\av(\{\Ub_\bv\}_{\bv \prec \av}) \Big].
\end{align}

We first state a lemma that is recursively used in the proof. 

\begin{lemma}\label{lem:upper_bound_recursion}
For any $X \ge 0$, integer $s\in \mathbb{Z}_{\ge 0}$, and polynomial $g$ with non-negative coefficients, we have
\begin{align}
\inf_{\eta \ge 1} \sum_{t = 0}^\infty ( X / \eta)^t \eta^s \exp[ g(\eta) ] \le 2  (2 X + 1)^s \exp[ g(2 X + 1)]. 
\end{align}
\end{lemma}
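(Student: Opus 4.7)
The plan is simply to exhibit a good choice of $\eta \ge 1$ that makes the right-hand side fall out immediately; no variational argument is actually needed. The natural candidate is $\eta_* = 2X + 1$, which is $\ge 1$ since $X \ge 0$, and is designed so that $X/\eta_* = X/(2X+1) \le 1/2$, making the geometric series tractable.

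Concretely, I would first observe that with $\eta = \eta_* = 2X+1$, we have $X/\eta_* = X/(2X+1) \le 1/2$, and hence
\[
\sum_{t=0}^\infty (X/\eta_*)^t \;=\; \frac{1}{1 - X/(2X+1)} \;=\; \frac{2X+1}{X+1} \;\le\; 2.
\]
Since $g$ has non-negative coefficients, $\exp[g(\eta_*)]$ is just a fixed positive number, and it factors out of the sum. Plugging $\eta = \eta_*$ into the expression we are taking the infimum over therefore gives
\[
\inf_{\eta \ge 1} \sum_{t=0}^\infty (X/\eta)^t \eta^s \exp[g(\eta)]
\;\le\; \eta_*^{\,s} \exp[g(\eta_*)] \sum_{t=0}^\infty (X/\eta_*)^t
\;\le\; 2\,(2X+1)^s \exp[g(2X+1)],
\]
which is exactly the claimed bound.

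There is essentially no obstacle: the only thing to verify is that $\eta_* = 2X+1 \ge 1$ (so it is admissible in the infimum), and that $X/\eta_* \le 1/2$ (so the geometric series converges and is bounded by $2$). Both are elementary consequences of $X \ge 0$. The non-negativity of the coefficients of $g$ is used only implicitly to ensure $\exp[g(\eta_*)]$ is a well-defined positive real number that can be pulled out of the sum; it plays no further role in the estimate.
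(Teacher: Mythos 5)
Your proof is correct and matches the paper's argument exactly: both take $\eta = 2X+1$ and bound the resulting geometric series $\sum_{t\ge0}(X/(2X+1))^t$ by $2$. Nothing to add.
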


\begin{proof}[Proof of Lemma \ref{lem:upper_bound_recursion}]
The lemma follows by taking $\eta = 2 X + 1$ and note that $\sum_{t \ge 0} (X/(2X+1))^t \le 2$. 
\end{proof}

To prove Lemma \ref{lem:E_summable}, we just need to recursively apply Lemma \ref{lem:upper_bound_recursion}. The recursion process is similar to the proof of Lemma \ref{lem:e_summation} above.
First, we define (which will be shown to be finite)
\begin{equation}
R = \inf_{\Ub_\av \ge 1, \forall \av \in D} 
\sum_{t_\av \ge 0, \forall \av \in D}
   \prod_{\av \in D} \Big(|2 e Q_\av / \Ub_\av |^{t_\av} \Ub_\av^{s_\av}\Big) \exp\Big\{\sum_{\av \in D} t_\av \bar P_\av(\{\Ub_\bv\}_{\bv \prec \av}) \Big\}. 
\end{equation}
Since $\sum_{s} \inf_\eta f(s, \eta) \le \inf_{\eta} \sum_s f(s, \eta) $, we bave
\begin{equation}
    \sum_{t=0}^\infty E(t) \le R.
\end{equation}
We further define non-negative real numbers $\tilde W_\bv \ge 0$ recursively in the ascending order of $\prec$ (defining $\tilde W_\bv$ before $\tilde W_\av$ if $\bv \prec \av$) as follows: For any $\av\in D$, we define
\begin{equation}\label{eqn:recursive_tilde_W}
\tilde W_{\av} = 2 e |Q_\av|\exp\Big\{  \bar{P}_\av(\{ 2 \tilde W_{{\bv}} + 1 \}_{\bv \prec \av})  \Big\}.
\end{equation}
In what follows, we will show that 
\begin{equation}
R \le \prod_{\av \in D} 2 (2\tilde W_\av + 1)^{s_\av} < \infty    
\end{equation}
which is sufficient to prove Lemma~\ref{lem:E_summable}.

To accomplish this, let us begin by labelling the elements of $D=\{1,2,\ldots, |D|\}$ in the ascending order of $\prec$ defined on the set $D$ as mentioned in Definition~\ref{def:well-played-polynomial-redefine}. 
Let us define for $k = 1,2,\ldots, |D|+1$ the following intermediate expression:
\begin{equation}
   R_k := \prod_{j \preceq k-1} 2 (2\tilde{W}_j + 1)^{s_j} \times \inf_{\Ub_l \ge 1, \forall l \succeq k} \prod_{l \succeq k}\bigg[  \sum_{t_l=0}^\infty \Big(  |2 e Q_l / \Ub_l |^{t_l} \Ub_l^{s_l} \Big) \bigg] \times  
 \exp\big[ H_k(\Ub_k,\ldots, \Ub_{|D|}) \big]
\end{equation}
where
\begin{equation}
    H_k = t_k \underbrace{ \bar{P}_k(\{ 2 \tilde W_j + 1 \}_{j \prec k}) }_{Z_k}
        + \underbrace{\sum_{l= k+1}^{|D|} t_l \bar{P}_l(\{ \Ub_j \}_{j \prec l}) \Big|_{\Ub_{j}= (2\tilde{W}_j + 1), \forall j \prec k} }_{g_k(\Ub_k,\ldots, \Ub_{|D|})}.
\end{equation}
We will show the following line of inequalities:
\begin{equation}
    R = R_1 \le  R_2 \le \cdots \le R_{|D|} \le R_{|D|+1} = \prod_{j\in D} 2 (2\tilde{W}_j + 1)^{s_j},
\end{equation}
which proves the lemma.

Indeed, one can easily check that the base case $R= R_1$ is true from its definition.
Next, we want to show $R_k \le R_{k+1}$ for each $k = 1, 2, \ldots, | D |$.
Using the expression of $H_k(\Ub_k,\ldots, \Ub_{|D|})=t_k Z_k + g_k(\Ub_k,\ldots, \Ub_{|D|})$ above and applying Lemma~\ref{lem:upper_bound_recursion}, 
we get
\begin{equation}
    \inf_{\Ub_k \ge 1} \sum_{t_k=0}^\infty  |2 e Q_k / \Ub_k |^{t_k}  \Ub_k^{s_k}
 \exp\big[ H_k(\{\Ub_j: j \succeq k\}) \big]
 \le 2 (2\tilde{W}_k + 1)^{s_k} \exp[ g_k(\Ub_k = 2\tilde{W}_k + 1, \{ \Ub_j \}_{j \succ k})]
\end{equation}
where the role of $X$ in Lemma~\ref{lem:upper_bound_recursion} is played by
\begin{equation}
\tilde{W}_k = 2 e |Q_k|\exp( Z_k) 
    = 2 e |Q_k |\exp\Big[\bar{P}_k(\{ 2 \tilde W_j + 1 \}_{j \prec k}) \Big]
\end{equation}
which agrees with \eqref{eqn:recursive_tilde_W}.
And since $g_k(\Ub_k=2\tilde{W}_k + 1, \{ \Ub_j\}_{j \ge k+1}) = H_{k+1}(\{ \Ub_j\}_{j \ge k+1})$, we indeed have 
\begin{align}
   R_k \le \prod_{j \preceq k} 2 (2\tilde{W}_j + 1)^{s_j} \times \inf_{\Ub_l \ge 1, \forall l \succeq k + 1} \prod_{l \succeq k+1}\bigg[ 
    \sum_{t_l=0}^\infty \Big( |2 e Q_l / \Ub_l |^{t_l}  \Ub_l^{s_l}\Big)
   \bigg]  
 \exp\big[ H_{k+1} \big] 
 = R_{k+1}.
\end{align}
This concludes the proof of Lemma~\ref{lem:E_summable}.

\subsubsection{Auxiliary lemmas}
\label{sec:auxiliary}
In this subsection, we state and prove three technical lemmas that are used in the proofs of Lemma~\ref{lem:en_convergence_e} and \ref{lem:en_bounded_E} above.

The first two lemmas below study the property of the operator $\SS_n^{\{ t_\av \}}$ as defined in Eq.~\eqref{eqn:SSn-definition}.
We first work in Lemma \ref{lem:little_sum_lemma} with the quantity $S_n$ which can be viewed as $\SS_n^{\{t_\av\}}$ acting on a monomial of form $(d_\av / n)^\pois$ for a single $\av \in D$.
This immediately gives Lemma \ref{lem:little_sum_lemma_multi}, which concerns the operator $\SS_n^{\{ t_\av \}}$ acting on monomials of form $\prod_{\av \in D} ( d_\av / n)^{\pois_\av}$ and is used in the proof of Lemma~\ref{lem:en_convergence_e}.

\begin{lemma}\label{lem:little_sum_lemma}
Let $t, \pois \in \Z_{\ge 0}$, $n \in \Z_{> 0}$, and $Q \in \C$. Denote 
\begin{equation}
S_{n}(Q, t, \pois) := n^{t} \sum_{n_0 +n_1= t} \binom{t}{n_0, n_1} Q^{n_0} (- Q)^{n_1} \Big(\frac{ n_0 - n_1}{n}\Big)^\pois. 
\end{equation}
Then we have 
\begin{equation}\label{eqn:little_sum_monomial_equivalent_formula}
S_{n}(Q, t, \pois) = (2Q n)^{t} \frac{\partial_\alpha^\pois [(\sinh(\alpha))^t] |_{\alpha = 0} }{n^\pois} = (2 Q )^t \frac{\pois!}{2 \pi i} \oint_{\Disk} \frac{(n \sinh(z/n) )^t}{z^\pois} \frac{d z}{z}. 
\end{equation}
Furthermore, we have 
\begin{equation}\label{eqn:little_sum_monomial_limit}
\lim_{n \to \infty} S_{n}(Q, t, \pois) = (2Q)^{t} t! 1_{\pois = t} = \frac{t!}{2 \pi i}\oint_{\Disk}  z^\pois (2 Q / z)^t \frac{d z}{z}, 
\end{equation}
and for all $\pois \in \Z_{\ge 0}$, $n \in \Z_{> 0}$, $0 \le t \le n$, and $Q \in \C$, we have
\begin{equation}\label{eqn:bound-Sn}
|S_n(Q, t, \pois) |\le |2Q |^t \frac{t^\pois }{ n^{\pois - t}} \cdot 1_{\pois \ge t} \le |2Q |^t t^t 1_{\pois \ge t}. 
\end{equation}
\end{lemma}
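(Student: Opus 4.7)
The plan is to interpret $S_n(Q,t,\pois)$ as a $\pois$-th derivative of a simple generating function, convert this to a contour integral, and then pass to the $n\to\infty$ limit. The starting point is the binomial identity
\[
(2Q \sinh(\alpha/n))^t = (Q e^{\alpha/n} - Q e^{-\alpha/n})^t = \sum_{n_0+n_1=t}\binom{t}{n_0,n_1} Q^{n_0}(-Q)^{n_1} e^{\alpha(n_0-n_1)/n}.
\]
Applying $\partial_\alpha^\pois$ and setting $\alpha=0$ brings down the factor $((n_0-n_1)/n)^\pois$ from each summand, giving $S_n(Q,t,\pois) = n^t\,\partial_\alpha^\pois[(2Q\sinh(\alpha/n))^t]|_{\alpha=0}$. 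Rescaling $\alpha\mapsto n\alpha$ (which produces a factor of $n^{-\pois}$) yields the first equality in \eqref{eqn:little_sum_monomial_equivalent_formula}; then Cauchy's differentiation formula applied to $f(\beta)=\sinh(\beta)^t$, followed by the substitution $z = n\beta$ on the contour, yields the second.

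For the limit \eqref{eqn:little_sum_monomial_limit}, I would use that $n\sinh(z/n)\to z$ uniformly on the unit circle $\Disk$, since $n\sinh(z/n) = z + z^3/(6n^2)+\cdots$. Uniform convergence permits passing the limit through the contour integral, producing $(2Q)^t(\pois!/2\pi i)\oint_\Disk z^{t-\pois-1}dz$, which by the residue theorem equals $(2Q)^t t! \cdot 1_{\pois=t}$. Rewriting this residue as a contour integral of $z^\pois (2Q/z)^t$ recovers the second form stated in \eqref{eqn:little_sum_monomial_limit}.

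For the bound \eqref{eqn:bound-Sn}, I would split by whether $\pois<t$ or $\pois\ge t$. When $\pois<t$, $S_n=0$ because $\sinh(w)^t$ has a zero of order $t$ at $w=0$, so any derivative of order $\pois<t$ vanishes at the origin; this supplies the indicator $1_{\pois\ge t}$. When $\pois\ge t$, applying the triangle inequality to the definition of $S_n$, using $|n_0-n_1|\le n_0+n_1=t$, $|Q^{n_0}(-Q)^{n_1}|=|Q|^t$, and $\sum_{n_0+n_1=t}\binom{t}{n_0,n_1}=2^t$, gives $|S_n|\le(2|Q|)^t t^\pois n^{t-\pois}$. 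The hypothesis $t\le n$ then yields $t^\pois n^{t-\pois} = t^t(t/n)^{\pois-t}\le t^t$, establishing the second bound. The only subtlety is tracking factors of $n$ correctly through the rescaling and the contour change of variable; beyond this bookkeeping I do not anticipate a serious obstacle.
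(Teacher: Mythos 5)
Your proposal is correct and follows essentially the same route as the paper: expressing $S_n$ as $n^{t-\pois}$ times a $\pois$-th derivative of $(2Q\sinh\beta)^t$ at $0$, applying Cauchy's differentiation formula and rescaling the contour, then passing to the limit via the uniform convergence of $n\sinh(z/n)\to z$ on $\Disk$. The only cosmetic difference is in the bound \eqref{eqn:bound-Sn}: you apply the triangle inequality directly to the defining sum, whereas the paper bounds $|\partial_\alpha^\pois(\sinh\alpha)^t|_{\alpha=0}|\le t^\pois$ by a termwise comparison with $\partial_\alpha^\pois e^{\alpha t}|_{\alpha=0}$; both yield the same estimate.
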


\begin{proof}[Proof of Lemma \ref{lem:little_sum_lemma}]

By the definition of $S_n$, we have 
\begin{equation}
\begin{aligned}
S_{n}(Q, t, \pois) =&~ n^{t} \sum_{n_0 = 0}^t \binom{t}{n_0} Q^{n_0} (- Q)^{t - n_0} \Big(\frac{2 n_0 - t}{n}\Big)^\pois \\
=&~ n^{t - \pois} \sum_{n_0 = 0}^t \binom{t}{n_0} Q^{n_0} (- Q)^{t - n_0} \partial_\alpha^\pois \exp( \alpha (2 n_0 - t )) |_{\alpha = 0} \\
=&~ n^{t - \pois} \partial_\alpha^\pois \Big[ \sum_{n_0 = 0}^t \binom{t}{n_0} Q^{n_0} (- Q)^{t - n_0}  \exp( \alpha (2 n_0 - t )) \Big] |_{\alpha = 0}\\
=&~ n^{t - \pois} \partial_\alpha^\pois \Big[ \sum_{n_0 = 0}^t \binom{t}{n_0} (Q e^\alpha)^{n_0} (- Q e^{-\alpha})^{t - n_0} \Big] |_{\alpha = 0}\\
=&~ n^{t - \pois} \partial_\alpha^\pois (Q e^\alpha - Qe^{-\alpha})^{t} |_{\alpha = 0} = (2 Q n)^t \frac{\partial_\alpha^\pois (\sinh \alpha)^{t} |_{\alpha = 0}}{n^\pois}. 
\end{aligned}
\end{equation}
Now, the Cauchy's integral formula allows us to write the $n$-th derivative of a function $f$ as
\begin{equation}
    f^{(n)}(0) = \frac{n!}{2\pi i} \oint_{\Disk} \frac{f(z)}{z^{n+1}} dz
\end{equation}
where $\Disk$ is the unit circle in the complex plane. Applying this to $S_n(Q,t,\pois)$ yield 
\begin{align}
S_{n}(Q, t, \pois) 
=&~  (2 Q )^t \frac{\pois!}{2 \pi i} \oint_{\Disk} \frac{(n \sinh z)^t}{(nz)^{\pois+1}} d (n z) =  (2 Q )^t \frac{\pois!}{2 \pi i} \oint_{n\Disk} \frac{(n \sinh z)^t}{(nz)^{\pois+1}} d (n z) \nonumber \\
=&~ (2 Q )^t \frac{\pois!}{2 \pi i} \oint_{\Disk} \frac{(n \sinh(z/n) )^t}{z^\pois} \frac{d z}{z}. 
\end{align}
This proves Eq. (\ref{eqn:little_sum_monomial_equivalent_formula}). Furthermore, note that we have 
\[
\lim_{n \to \infty} \sup_{z \in \Disk} \Big|(n \sinh(z / n))^t - z^t \Big|= 0. 
\]
This gives 
\[
\begin{aligned}
\lim_{n \to \infty}S_{n}(Q, t, \pois) =&~ \lim_{n \to \infty} (2 Q )^t \frac{t!}{2 \pi i} \oint_{\Disk} \frac{(n \sinh(z/n) )^t}{z^\pois} \frac{d z}{z} \\
=&~ (2 Q )^t \frac{t!}{2 \pi i} \oint_{\Disk} z^{t - \pois} \frac{d z}{z} = (2 Q)^t t! 1_{\pois = t} =  \frac{t!}{2 \pi i} \oint_{\Disk} z^\pois (2 Q / z)^t \frac{d z}{z}.
\end{aligned}
\]
This proves Eq. (\ref{eqn:little_sum_monomial_limit}). 

Finally, note that for $\pois < t$, we have $\partial_\alpha^\pois [(\sinh(\alpha))^t]  |_{\alpha = 0} = 0$, and for $\pois \ge t$, we have
\[
\begin{aligned}
&~\Big|\partial_\alpha^\pois [(\sinh(\alpha))^t]  |_{\alpha = 0} \Big|= \Big|\partial_\alpha^\pois \Big[ \sum_{s = 0}^t {t \choose s} ((1/2) e^{\alpha})^s (- (1/2) e^{- \alpha })^{t-s}\Big]  |_{\alpha = 0} \Big|\\
\le&~ \Big|\partial_\alpha^\pois \Big[ \sum_{s = 0}^t {t \choose s} ((1/2) e^{\alpha })^s ( (1/2) e^{\alpha })^{t-s}\Big]  |_{\alpha = 0} \Big|= \partial_\alpha^\pois e^{\alpha t} |_{\alpha = 0} = t^\pois.
\end{aligned}
\]
As a consequence, we have 
\[
|S_n(Q, t, \pois)|\le |2 Q n |^t  (t / n)^\pois 1_{\pois \ge t} \le |2 Q |^t t^t 1_{\pois \ge t},
\]
where the last inequality used $t \le n$. This concludes the proof. 
\end{proof}

By Lemma \ref{lem:little_sum_lemma}, we immediately have the following lemma which studies the property of $\SS_n^{\{ t_\av \}}$ acting on monomials. 
\begin{lemma}\label{lem:little_sum_lemma_multi}
Consider the $\SS_n^{\{ t_\av\}}$ operator as defined in Eq. (\ref{eqn:SSn-definition}), where $t \in \Z_{\ge 0}$, $\sum_{\av \in D} t_\av = t$, $t \le n \in \Z_{> 0}$, and $\{ Q_\av \}_{\av \in D} \subseteq \C$. Consider a monomial 
\[
g(\{ d_\av /n \}; \{ \pois_\av \}) = \prod_{\av \in D} \Big( \frac{d_\av}{n} \Big)^{\pois_\av}.
\]
Then we have
\begin{equation}\label{eqn:little_sum_monomial_equivalent_formula_multi}
\SS_n^{\{ t_\av\}} g = \prod_{\av}(2Q_\av n)^{t_\av} \frac{\partial_\alpha^{\pois_\av} [(\sinh(\alpha_\av))^{t_\av}] |_{\alpha_\av = 0} }{n^\pois} = \oint_{\Disk^{|D|}} \prod_{\av \in D} (2 Q_\av )^{t_\av} \frac{\pois_\av!}{2 \pi i}  \frac{(n \sinh(z_\av/n) )^{t_\av}}{z_\av^{\pois_\av}} \frac{d z_\av}{z_\av}.
\end{equation}
Furthermore, we have 
\begin{equation}\label{eqn:little_sum_monomial_limit_multi}
\lim_{n \to \infty} \SS_n^{\{ t_\av\}} g = \prod_{\av \in D} (2Q_\av)^{t_\av} t_\av! 1_{\pois_\av = t_\av} = \SS^{\{ t_\av\}} g,
\end{equation}
where
\begin{equation}
\SS^{\{ t_\av\}} g = \frac{\prod_{\av \in D} t_\av!}{(2 \pi i)^{|D |}}\oint_{\Disk^{|D|}} g(\{ d_\av / n\}) \prod_{\av \in D} (2 Q_\av / z_\av)^{t_\av} \frac{d z_\av}{z_\av}.
\end{equation}
Finally, we have
\begin{equation}\label{eqn:bound-Sn_multi}
|\SS_n^{\{ t_\av\}} g |\le \prod_{\av \in D} |2Q_\av |^t \frac{t_\av^{\pois_\av} }{ n^{\pois_\av - t_\av}} \cdot 1_{\pois_\av \ge t_\av} \le \prod_{\av \in D} |2Q_\av |^{t_\av} t_\av^{t_\av} 1_{\pois_\av \ge t_\av}. 
\end{equation}
\end{lemma}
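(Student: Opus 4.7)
\textbf{Proof proposal for Lemma~\ref{lem:little_sum_lemma_multi}.}
The plan is to reduce the multi-index statement to the single-index Lemma~\ref{lem:little_sum_lemma} by exploiting the product structure of both the monomial $g$ and the operator $\SS_n^{\{t_\av\}}$. First I would observe, directly from the definition \eqref{eqn:SSn-definition}, that
\[
\SS_n^{\{t_\av\}} = \prod_{\av \in D} \bigl( n^{t_\av}\, \sqint_\av \bigr),
\]
and that the operators $\{ n^{t_\av} \sqint_\av \}_{\av \in D}$ act on disjoint pairs of variables $(n_\av, n_{\bar\av})$, so they commute with each other and with any factor $(d_\bv/n)^{\pois_\bv}$ for $\bv \neq \av$. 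Combined with the fact that $g(\{d_\av/n\};\{\pois_\av\}) = \prod_{\av \in D} (d_\av/n)^{\pois_\av}$ factorizes across $\av$, this yields the key identity
\[
\SS_n^{\{t_\av\}} g \;=\; \prod_{\av \in D} \Bigl[ n^{t_\av} \sqint_\av (d_\av/n)^{\pois_\av} \Bigr] \;=\; \prod_{\av \in D} S_n(Q_\av, t_\av, \pois_\av),
\]
where $S_n(Q,t,\pois)$ is exactly the single-index object studied in Lemma~\ref{lem:little_sum_lemma}.

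Once this factorization is in hand, each of the four conclusions follows by applying the corresponding statement of Lemma~\ref{lem:little_sum_lemma} coordinate-by-coordinate and then multiplying across $\av \in D$. Concretely: the first equality in \eqref{eqn:little_sum_monomial_equivalent_formula_multi} follows from the first equality in \eqref{eqn:little_sum_monomial_equivalent_formula} applied to each $\av$; the contour-integral form follows from the second equality in \eqref{eqn:little_sum_monomial_equivalent_formula}, taking a product of unit-disk contours (so that the $|D|$-fold contour integral over $\Disk^{|D|}$ is the product of one-dimensional contour integrals, with no Fubini subtlety since each integrand is a polynomial in its own $z_\av$ after the change of variables). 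The limit \eqref{eqn:little_sum_monomial_limit_multi} follows from \eqref{eqn:little_sum_monomial_limit} since the product of a fixed finite number of convergent scalar sequences converges to the product of the limits. Finally, the bound \eqref{eqn:bound-Sn_multi} follows from taking the product of the bounds \eqref{eqn:bound-Sn} over $\av \in D$; note that the hypothesis $\sum_\av t_\av = t \le n$ ensures each $t_\av \le n$, which is the input assumption needed for \eqref{eqn:bound-Sn}.

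I do not expect any substantive obstacle: once the factorization $\SS_n^{\{t_\av\}} g = \prod_\av S_n(Q_\av, t_\av, \pois_\av)$ is written down, everything else is a coordinate-wise application of Lemma~\ref{lem:little_sum_lemma}, which is precisely why the text signals this result as ``immediately'' following. The only care needed is notational, to make sure that the product-of-contour-integrals form in \eqref{eqn:little_sum_monomial_equivalent_formula_multi} is read as an iterated integral over $\Disk^{|D|}$ with each factor carrying its own $z_\av$, which is unambiguous since each single-variable integrand is holomorphic except at the origin.
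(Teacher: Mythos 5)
Your proposal is correct and matches the paper's approach exactly: the paper states that this lemma follows ``immediately'' from Lemma~\ref{lem:little_sum_lemma} and gives no further argument, and the factorization $\SS_n^{\{t_\av\}} g = \prod_{\av \in D} S_n(Q_\av, t_\av, \pois_\av)$ you write down, together with the observation that $\sum_\av t_\av \le n$ gives $t_\av \le n$ for each $\av$, is precisely what makes that ``immediately'' rigorous. (The exponents $n^\pois$ and $|2Q_\av|^t$ appearing in the paper's statement of Eqs.~\eqref{eqn:little_sum_monomial_equivalent_formula_multi} and \eqref{eqn:bound-Sn_multi} are typos for $n^{\pois_\av}$ and $|2Q_\av|^{t_\av}$, as your coordinate-wise derivation makes clear.)
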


We next state a lemma which is used in the proof of Lemma \ref{lem:en_bounded_E} above. 

\begin{lemma} \label{lem:upper_bound_little_sum_exponential}
For any complex coefficient monomials $M_0, M_1, \ldots, M_K: \R^D \to \C$, denote
\[
b_n(\{ t_\av\}_{\av \in D}) = \Big(\prod_{\av \in D} \frac{n^{t_\av}}{t_\av !}\Big) \Big(\prod_{\av \in D}\sqint_\av \Big) \exp\Big\{ \sum_{k = 1}^K M_k( \{ d_\bv / n \}_{\bv \in D}) \Big\} M_0(\{ d_\bv / n \}_{\bv \in D}). 
\]
Further assume that $t_\av \le n$ for any $\av \in D$. Then we have
\[
\Big|b_n(\{ t_\av\}_{\av \in D}) \Big|\le \inf_{\Ub_\bv \ge 1, \forall \bv \in D} \Big( \prod_{\av \in D} |2 e Q_\av / \Ub_\av |^{t_\av} \Big) \exp\Big\{ \sum_{k = 1}^K |M_k( \{ \Ub_\bv \}_{\bv \in D} ) | \Big\} |M_0( \{ \Ub_\bv \}_{\bv \in D}) |. 
\]
\end{lemma}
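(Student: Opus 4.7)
The plan is to estimate $b_n$ by expanding the entire function
\begin{equation*}
    h(\{\eta_\bv\}) := \exp\Big(\sum_{k=1}^K M_k(\{\eta_\bv\})\Big) M_0(\{\eta_\bv\})
\end{equation*}
as a Taylor series in $\{\eta_\bv\}$, applying $\sqint_\av$ term-by-term via the monomial bound of Lemma~\ref{lem:little_sum_lemma_multi}, and then controlling the resulting Taylor coefficients using Cauchy's inequality on the polycircle $\{|\eta_\bv|=\Ub_\bv\}_{\bv\in D}$.

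Concretely, write $h(\{\eta_\bv\}) = \sum_\pois c_\pois \prod_\av \eta_\av^{\pois_\av}$, which converges on all of $\C^{|D|}$ since each $M_k$ is a monomial and hence $h$ is entire. Linearity of $\sqint_\av$ combined with \eqref{eqn:bound-Sn_multi} yields
\begin{equation*}
    \Big|\prod_\av \sqint_\av h(\{d_\bv/n\})\Big|
    \;\le\; \sum_\pois |c_\pois|\prod_\av |2Q_\av|^{t_\av}(t_\av/n)^{\pois_\av}\mathbf{1}\{\pois_\av \ge t_\av\}.
\end{equation*}
Cauchy's inequality gives $|c_\pois|\le M_\Ub\prod_\av \Ub_\av^{-\pois_\av}$, where $M_\Ub := \sup_{|\eta_\bv|=\Ub_\bv}|h(\{\eta_\bv\})|$; since each $M_k$ is a monomial, $\sup_{|\eta_\bv|=\Ub_\bv}|M_k(\{\eta_\bv\})| = |M_k(\{\Ub_\bv\})|$, so that $M_\Ub \le \exp(\sum_k |M_k(\{\Ub_\bv\})|)\cdot |M_0(\{\Ub_\bv\})|$. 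Substituting and summing, the dependence on $\pois$ factorizes into geometric series in $t_\av/(n\Ub_\av)$, valid whenever $\Ub_\av > t_\av/n$.

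Multiplying by the prefactor $\prod_\av n^{t_\av}/t_\av!$ and invoking Stirling's inequality $t_\av^{t_\av}/t_\av! \le e^{t_\av}$ produces the target factor $\prod_\av |2eQ_\av/\Ub_\av|^{t_\av}$; taking the infimum over $\Ub_\av \ge 1$ then gives the claim. The main technical obstacle is that the geometric summation introduces an extra factor $\prod_\av(1-t_\av/(n\Ub_\av))^{-1}$ that threatens to blow up in the boundary case $\Ub_\av = 1,\, t_\av = n$. This is handled either by using the sharper Stirling estimate $t_\av^{t_\av}/t_\av! \le e^{t_\av}/\sqrt{2\pi t_\av}$ to absorb the extra factor into the savings from Stirling, or by a continuity argument in which the infimum over $\Ub_\av \ge 1$ is approximated by values $\Ub_\av > 1$ where the geometric bound applies cleanly.
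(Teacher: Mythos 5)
Your proposal shares the basic strategy of Taylor-expanding, bounding each monomial via the $\sqint$ lemma, and recollecting, but the Cauchy-plus-geometric-series route gives a strictly weaker bound than claimed, and the two patches you suggest do not close the gap. Carrying out your steps yields
\[
|b_n(\{t_\av\})|\le\Big(\prod_{\av\in D}\big|2eQ_\av/\Ub_\av\big|^{t_\av}\,\frac{1}{1-t_\av/(n\Ub_\av)}\Big)\exp\Big[\sum_{k=1}^K|M_k(\{\Ub_\bv\})|\Big]\,|M_0(\{\Ub_\bv\})|,
\]
which exceeds the target right-hand side at \emph{every} admissible $\{\Ub_\av\}$ by the factor $\prod_\av(1-t_\av/(n\Ub_\av))^{-1}>1$, so taking infima on both sides cannot recover the claim. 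The sharper Stirling estimate saves only the fixed quantity $1/\sqrt{2\pi t_\av}$, while $(1-t_\av/(n\Ub_\av))^{-1}$ is unbounded as $\Ub_\av\downarrow t_\av/n$ (with $t_\av=n$ it behaves like $1/(\Ub_\av-1)$ near $\Ub_\av=1$), so the product cannot be bounded by one. The continuity argument fails for a different reason: the trouble is not that your bound is undefined at $\Ub_\av=1$, it is that your bound is strictly larger than the target at every $\Ub_\av>1$, so its infimum does not sit below the target's infimum.

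The paper's proof avoids the geometric series entirely. It uses the \emph{cruder} half of \eqref{eqn:bound-Sn_multi}, namely $|n^{t_\av}\sqint_\av(d_\av/n)^{\pois_\av}|\le|2Q_\av|^{t_\av}t_\av^{t_\av}\,1\{\pois_\av\ge t_\av\}$, and replaces the geometric tail by the elementary inequality $1\{\pois_\av\ge t_\av\}\le\Ub_\av^{\pois_\av-t_\av}$, valid for every $\Ub_\av\ge 1$ with no convergence issues. The resulting growth $\Ub_\av^{\pois_\av}$ in $\pois$ is then tamed not by Cauchy's inequality but by \emph{retaining} the explicit $1/\pois_k!$ Taylor coefficients, which factorize the $\pois$-sum back into $\exp[\sum_k|M_k(\{\Ub_\bv\})|]\,|M_0(\{\Ub_\bv\})|$ exactly, with no leftover $\{\Ub_\av\}$-dependent factor. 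Replacing your Cauchy-plus-geometric step with this indicator bound repairs the argument.
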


\begin{proof}
Let us illustrate the proof with $D = \{1, 2\}$, $K = 2$, $M_k(u_1, u_2) = m_k u_1^{s_{k1}} u_2^{s_{k2}}$ for $k = 0, 1, 2$. It is easy to see that the proof will hold for general $D$ and $K$. 

In this case, we have 
\[
\begin{aligned}
\Big|b_n(\{ t_\av\}_{\av \in D}) \Big|=&~ \Big|\Big(\prod_{z \in [2]} \frac{n^{t_z}}{t_z !}\Big) \Big(\prod_{z \in [2]}\sqint_z \Big) \sum_{\pois_1 \ge 0}\frac{1}{\pois_1!} \Big( m_1 \Big(\frac{d_1}{n}\Big)^{s_{11}} \Big(\frac{d_2}{n}\Big)^{s_{12}} \Big)^{\pois_1} \\
&~ \times \sum_{\pois_2 \ge 0}\frac{1}{\pois_2!} \Big( m_2 \Big(\frac{d_1}{n}\Big)^{s_{21}} \Big(\frac{d_2}{n}\Big)^{s_{22}} \Big)^{\pois_2} m_0 \Big( \frac{d_1}{n}\Big)^{s_{01}} \Big( \frac{d_2}{n}\Big)^{s_{02}} \Big|\\
=&~ \Big|\Big(\prod_{z \in [2]} \frac{1}{t_z !}\Big) \sum_{\pois_1 \ge 0}\frac{|m_1 |^{\pois_1}}{\pois_1!} \sum_{\pois_2 \ge 0}\frac{|m_2 |^{\pois_2}}{\pois_2!} \Big(\prod_{z \in [2]}n^{t_z} \sqint_z \Big( \frac{d_z}{n} \Big)^{\pois_1 s_{1z} + \pois_2 s_{2z} + s_{0z}} \Big) |m_0 |\Big|.  \\
\end{aligned}
\]
Now we use Lemma \ref{lem:little_sum_lemma} so that $\vert n^{t_z} \sqint_z (d_z / n)^\pois \vert \le |2 Q_z |^{t_z} t_z^{t_z} 1_{\pois \ge t_z}$ (when $t_z \le n$). Then we have 
\[
\begin{aligned}
\Big|b_n(\{ t_\av\}_{\av \in D}) \Big|\le&~ \Big(\prod_{z \in [2]} \frac{1}{t_z !}\Big) \sum_{\pois_1 \ge 0}\frac{|m_1 |^{\pois_1}}{\pois_1!} \sum_{\pois_2 \ge 0}\frac{|m_2 |^{\pois_2}}{\pois_2!} \Big(\prod_{z \in [2]} |2 Q_z |^{t_z} t_z^{t_z} 1_{\pois_1 s_{1z} + \pois_2 s_{2z} + s_{0z} \ge t_z} \Big) |m_0 |  \\
\le&~ \inf_{\Ub_1, \Ub_2 \ge 1}  \sum_{\pois_1 \ge 0}\frac{|m_1 |^{\pois_1}}{\pois_1!} \sum_{\pois_2 \ge 0}\frac{|m_2 |^{\pois_2}}{\pois_2!} \Big(\prod_{z \in [2]} |2 e Q_z / \Ub_z |^{t_z}  \Ub_z^{\pois_1 s_{1z} + \pois_2 s_{2z} + s_{0z}} \Big) |m_0 |\\
=&~ \inf_{\Ub_1, \Ub_2 \ge 1} \Big(\prod_{z \in [2]} |2 e Q_z / \Ub_z |^{t_z} \Big) \sum_{\pois_1 \ge 0}\frac{|m_1 \Ub_1^{s_{11}} \Ub_2^{s_{12}} |^{\pois_1}}{\pois_1!} \sum_{\pois_2 \ge 0}\frac{|m_2 \Ub_1^{s_{21}} \Ub_2^{s_{22}} |^{\pois_2}}{\pois_2!}   |m_0  \Ub_1^{s_{01}} \Ub_2^{s_{0 2}} |\\
=&~ \inf_{\Ub_1, \Ub_2 \ge 1} \Big(\prod_{z \in [2]} |2 e Q_z / \Ub_z |^{t_z} \Big) \exp\Big\{ |M_1(\Ub_1, \Ub_2) |+ |M_2(\Ub_1, \Ub_2) |\Big\} |M_0(\Ub_1, \Ub_2) |, \\
\end{aligned}
\]
where the second inequality used the fact that $t^t / t! \le e^t$ and $1_{\pois \ge t} \le \eta^\pois / \eta^t$ for any $\eta \ge 1$. This proves the lemma. 
\end{proof}


\section{Remark: A non-rigorous ``proof'' using the saddle-point method}
\label{sec:nonrigorous}

As an aside, we informally describe a potential way to show our generalized multinomial theorem, Proposition~\hyperref[thm:gen-multinomial-restate]{4.1 (Formal)}, using a non-rigorous application of the saddle-point method.
This provides a much quicker way to get what ultimately turns out to be the correct answer, and perhaps will provide some intuition.
We consider it an interesting open challenge to make this saddle-point method approach rigorous.

Recall that our goal is to evaluate the following quantity in the $n\to\infty$ limit:
\begin{equation} 
\label{eq:F-saddle}
\mathfrak{F} := 
\sum_{\{n_\av\}} \binom{n}{\{n_\av\}}
\Big(\prod_{\bv\in A} Q_\bv^{n_\bv} \Big)
	\exp\Big[n P_n(\{n_\av/n\}) \Big]
f_n(\{n_\av/n\}).
\end{equation}
Let $\omega_\av = n_\av / n$ for all $\av\in A$.
Then
\begin{equation}
    \prod_{\av\in A} Q_\av^{n_\av} = \exp\Big[\sum_{\av\in A} n\omega_\av \log Q_\av\Big].
\end{equation}
And using Stirling's approximation and the fact that $\sum_{\av \in A} n_\av = n$, we have (crudely)
\begin{equation} \label{eq:stirling}
    \binom{n}{\{n_\av\}} \approx \frac{n^n}{\prod_{\av\in A} n_\av ^{n_\av}} = \exp\Big[{-} \sum_{\av \in A} n \omega_\av \log \omega_\av\Big].
\end{equation}
In the $n\to\infty$ limit, we approximate each $\omega_\av \in [0,1]$ as a continuous variable, and write \eqref{eq:F-saddle} as
\begin{equation} \label{eq:En-approx}
    \mathfrak{F} \approx \int \Big({\textstyle \prod_{\av \in A} d \omega_\av}\Big)  e^{n S(\{\omega_\av\})}  ~ f(\{\omega_\av\}),
\end{equation}
where
\begin{equation}
    S(\{\omega_\av\}) = -\sum_{\av\in A} \omega_\av \log \frac{\omega_\av }{ Q_\av}
    + P(\{\omega_\av\})
\end{equation}
with $P = \lim_{n\to\infty}P_n$ and $f = \lim_{n\to\infty} f$.
Suppose somehow we can apply the saddle-point approximation to \eqref{eq:En-approx}, then roughly
\begin{align}
    \mathfrak{F}  \approx  e^{n S(\{W_\av\})} ~ f(\{W_\av\})
\end{align}
where $\{W_\av\}$ is the saddle point of $S(\{\omega_\av\})$, subject to the constraint that $\sum_\av \omega_\av = \sum_\av n_\av / n = 1$.

To obtain the saddle point explicitly, we introduce a Lagrange multiplier for the constraint:
\begin{align}
    S_\lambda(\{\omega_\av\}) = S(\{\omega_\av\}) + \lambda \Big(\sum_{\av\in A} \omega_\av - 1\Big).
\end{align}
Then we look at stationary points of $S_\lambda$, given by
\begin{align}
    0=\frac{\partial S_\lambda}{\partial \omega_\av} = -1 - \log \frac{\omega_\av}{Q_\av} + \frac{\partial P}{\partial \omega_\av} + \lambda.
\end{align}
As a result, the saddle point of $S(\{\omega_\av\})$ is given as the solution to
\begin{equation} \label{eq:saddle-SCE}
    W_\av = \mathcal{N} Q_\av\exp\Big[ \partial P(\{W_\bv\}) / \partial \omega_\av\Big], \qquad \forall \av \in A,
\end{equation}
where $\mathcal{N}=e^{\lambda-1}$ is a normalization constant chosen to ensure that $\sum_\av W_\av=1$.
For a well-played polynomial $P$, 
we note this self-consistent equation \eqref{eq:saddle-SCE} has a unique solution if we choose $\mathcal{N}=1$, due to Lemma~\ref{lem:unique_solution_SCE} shown earlier.
This solution has an ``anti-symmetric'' property that $W_\av + W_{\bar\av}=0$ for all $\av\in D$ and $W_\av=Q_\av$ for all $\av\in A_0$.
And happily we have $
    \sum_{\av\in A} W_\av  =\sum_{\av\in A_0} Q_\av + \sum_{\av\in D} (W_\av + W_{\bar\av}) = 1$,
yielding the correct normalization.

\vspace{5pt}
It remains to show $\mathfrak{F} \approx f(\{W_\av\})$, which follows from showing $S(\{W_\av\})=0$. 
Consider the canonical form of 
$P(\{\omega_\av\})$ in \eqref{eq:poly_t_d_n_def_redefinition}, and plug in $\omega_\av = W_\av$ for all $\av\in A$.
Under the well-playedness condition \eqref{eqn:Psi-well-played-condition}, every term in $P$ will have a factor of $\tau_\av= \omega_\av + \omega_{\bar\av} = W_\av + W_{\bar\av}= 0 $ for some $\av\in D$.
This means $P(\{W_\av\})=0$.
Thus, 
\begin{equation} \label{eq:S-Wa}
    S(\{W_\av\}) = -\sum_{\xv\in A} W_\xv \log \frac{W_\xv}{Q_\xv} + P(\{W_\av\}) = - \sum_{\xv\in A} W_\xv \frac{\partial P(\{W_\av\})}{\partial\omega_\xv}.
\end{equation}
Furthermore, from the proof of Lemma~\ref{lem:unique_solution_SCE} (b) in Section~\ref{sec:unique_solution_SCE}, we know that the derivatives of $P(\{\omega_\av\})$ after setting $\omega_\av = W_\av$ satisfy
\begin{equation}
    \frac{\partial P(\{W_\av\})}{\partial \omega_\xv} = \frac{\partial P(\{W_\av\})}{\partial \omega_{\bar\xv}} \quad \forall \xv \in D,
    \qquad \text{and} \qquad
    \frac{\partial P(\{W_\av\})}{\partial \omega_\xv} = 0 \quad \forall \xv \in A_0.
\end{equation}
Plugging this into \eqref{eq:S-Wa} we get
\begin{align}
S(\{W_\av\}) &= - \sum_{\xv\in D} \Big(W_\xv\frac{\partial P(\{W_\av\})}{\partial \omega_\xv} + W_{\bar\xv} \frac{\partial P(\{W_\av\})}{\partial \omega_{\bar\xv}}\Big) \nonumber \\
&= - \sum_{\xv\in D} (W_\xv + W_{\bar\xv}) \frac{\partial P(\{W_\av\})}{\partial \omega_\xv} = 0.
\end{align}
So in the end we have
\begin{align}
    \mathfrak{F} \approx e^{n S(\{W_\av\})} f(\{W_\av\}) = f(\{W_\av\})
    \qquad \text{as} \qquad n\to\infty,
\end{align}
just as we claimed in Proposition~\hyperref[thm:gen-multinomial-restate]{4.1 (Formal)}.

We find it remarkable that such a na\"ive application of the saddle-point method gives a simple, albeit non-rigorous proof of our generalized multinomial theorem.
Besides the validity of the saddle-point approximation, it is also questionable whether the crude Stirling's approximation in Eq.~\eqref{eq:stirling} is valid; for example, it was found in \cite{farhi2019quantum} that only terms with $n_{+-}+n_{-+}=1$ contributes to the final value of the sum when calculating the energy achieved by the $p=1$ QAOA.
Presently, we have failed to find a way to make this proof idea rigorous directly, which we leave as an interesting problem on its own.


\section{Proof of Theorem~\ref{thm:moments} (Performance of the QAOA as $n\to\infty$)}
\label{sec:QAOA-derivation}

In this appendix, we prove Theorem~\ref{thm:moments}. We start by stating Lemma~\ref{lem:characteristic_function_derivative} and \ref{lem:P-is-well-played} which are used in the proof of Theorem~\ref{thm:moments}. The proofs of the lemmas can be found in Section~\ref{sec:QAOA-organized-sum} and \ref{sec:proof_lemma_well_played}, respectively. Section~\ref{sec:prelim_results_for_proof_thm_moments} goes over useful definitions and results needed in Section~\ref{sec:proof_lemma_well_played}.

We first reformulate the first and second moments of the performance of QAOA into combinatorial sum forms, given in the following lemma. 
\begin{lemma}\label{lem:characteristic_function_derivative}
Suppose $C_J = \sum_{q=1}^{\qmax} c_q \sum_{i_1,\ldots,i_q=1}^n J_{i_1,i_2,\ldots,i_q} z_{i_1} z_{i_2}\cdots z_{i_q}$ is a random COP drawn from an ensemble $\G$ that satisfies Assumption~\ref{assum:iidJ}. Recall that we have defined that $g_{q, n}(\lambda) = n^{q-1} \log \EV[e^{i \lambda J_{1,2,\ldots,q}}]$. Let $p \in \Z_{> 0}$ and fix parameters $(\paramv) \in \R^{2p}$. Let $A$ be as defined in Eq. (\ref{eq:set_A_def}). Let $\{ Q_\av \}_{\av \in A}$ be as defined in Eq. (\ref{eq:Qdef}). Let $\{ \Phi_{\av} \}_{\av \in A}$ be as defined in Eq.~\eqref{eq:Phi_def}. Then we have
\begin{align}
\EV_J[\bgbbraket{C_J/n}] =&~ \sum_{\{n_\av\}} \binom{n}{\{n_\av\}} \prod_{\av\in A } Q_\av^{n_\av}
\exp\Big[ n \sum_{q=1}^\qmax \sum_{\av_1,\ldots,\av_q \in A} g_{q,n}\big(c_q \Phi_{\av_1\cdots\av_q}\big)\frac{n_{\av_1} \cdots n_{\av_q}}{n^q}
	\Big] \nonumber \\
& \qquad\quad\times\Big(-\sum_{q=1}^\qmax ic_q \sum_{\bv_1,\ldots,\bv_q \in A} g_{q,n}'(c_q\Phi_{\bv_1\cdots \bv_q})\frac{n_{\bv_1} \cdots n_{\bv_q}}{n^q}\Big), 
\label{eq:En_1}
\end{align}
and
\begin{align}
\EV_J[\braket{\paramv | (C_J/n)^2 | \paramv}] =&~ \sum_{\{n_\av\}} \binom{n}{\{n_\av\}} \prod_{\av\in A } Q_\av^{n_\av}
\exp\Big[
		n \sum_{q=1}^\qmax \sum_{\av_1,\ldots,\av_q \in A} g_{q,n}\big(c_q \Phi_{\av_1\cdots\av_q}\big)\frac{n_{\av_1} \cdots n_{\av_q}}{n^q}
	\Big] \nonumber \\
& \qquad \quad \times \Bigg[ \Big(-\sum_{q=1}^\qmax i c_q 
\sum_{\bv_1,\ldots,\bv_q \in A} g_{q,n}'(c_q\Phi_{\bv_1\cdots \bv_q})
\frac{n_{\bv_1} \cdots n_{\bv_q}}{n^{q}}\Big)^2 \nonumber \\
&\qquad\quad \quad + \Big(- \frac{1}{n}\sum_{q=1}^\qmax c_q^2 \sum_{\bv_1,\ldots,\bv_q \in A} g_{q,n}''(c_q\Phi_{\bv_1\cdots \bv_q})\frac{n_{\bv_1} \cdots n_{\bv_q}}{n^{q}}\Big) \Bigg]. \label{eq:En_2}
\end{align}
Here the subscript $\av_1\av_2\cdots\av_q$ of $\Phi_{\av_1\av_2\cdots\av_q}$ is the bit-wise product of $\av_1, \av_2, \cdots, \av_q$, which gives an element in $A$. 
\end{lemma}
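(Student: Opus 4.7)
The plan follows the sum-over-paths approach initiated by \cite{farhi2019quantum} for the SK model ($q=2$), extended here to general $q$ and to arbitrary coupling distributions under Assumption~\ref{assum:iidJ}. I would begin with $\braket{\paramv|C_J/n|\paramv} = \bra{s} U^\dagger (C_J/n) U \ket{s}$ for $U$ the QAOA unitary of \eqref{eq:wavefunction}, and insert complete sets of $Z$-basis states $\sum_\zv \ketbra{\zv}$ between every adjacent pair of $B$-layer and $C_J$-layer unitaries in both $U$ and $U^\dagger$. For each qubit $j \in [n]$ the $2p$ intermediate bit values get encoded as a single path $\av_j \in A$, so that the full amplitude becomes a sum over configurations $(\av_1,\ldots,\av_n) \in A^n$.

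The summand then splits into three ingredients. First, the single-qubit $B$-layer matrix elements $\braket{z'|e^{\mp i\beta_r X}|z}\in\{\cos\beta_r,\mp i\sin\beta_r\}$ together with the initial-state overlaps $\braket{s|\zv}$ combine, by a direct case analysis on $(a_{j,r},a_{j,-r})\in\{\pm 1\}^2$ at each level $r$, into exactly the qubit-local factor $Q_{\av_j}$ of \eqref{eq:Qdef}. Second, each diagonal phase $e^{\mp i\gamma_r C_J}$ contributes $\exp[\mp i\gamma_r C_J(\zv^{(\pm r)})]$; using that the $\pm 1$ value of qubit $j$ at forward time slot $r$ equals $a_{j,r}a_{j,r+1}\cdots a_{j,p}$ and at backward time slot $-r$ equals $a_{j,-p}\cdots a_{j,-r}$, the phases per tensor entry $J_{i_1,\ldots,i_q}$ combine into $\exp[i c_q J_{i_1,\ldots,i_q}\Phi_{\av_{i_1}\cdots\av_{i_q}}]$, with $\Phi_\av$ as in \eqref{eq:Phi_def} and the subscript denoting componentwise product. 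Grouping qubits by common path via $n_\av := |\{j:\av_j=\av\}|$ then introduces the multinomial coefficient and turns the qubit product into $\prod_\av Q_\av^{n_\av}$.

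Next I would take the $\EV_J$ average. By independence of the entries $J_{i_1,\ldots,i_q}$ across tuples and across $q$, the joint expectation factorizes into one copy of $\exp[g_{q,n}(c_q\Phi_{\av_{i_1}\cdots\av_{i_q}})/n^{q-1}]$ per tuple by the very definition of $g_{q,n}$ in Assumption~\ref{assum:iidJ}. Collecting $\sum_{i_1,\ldots,i_q}=\sum_{\av_1,\ldots,\av_q} n_{\av_1}\cdots n_{\av_q}$ produces the exponential factor in \eqref{eq:En_1}. The central insertion of $C_J/n$ contributes one extra $J$-linear factor per tensor entry it couples to; applying the identity $\EV[J e^{i\lambda J}] = -i\,\partial_\lambda\EV[e^{i\lambda J}]$ together with the chain rule at $\lambda=c_q\Phi_{\bv_1\cdots\bv_q}$ then delivers the $-i c_q g'_{q,n}(c_q\Phi_{\bv_1\cdots\bv_q})$ prefactor, completing Eq.~\eqref{eq:En_1}.

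For the second moment \eqref{eq:En_2} the analogous expansion of $(C_J/n)^2$ at the middle produces a double sum over tuples $(I_1,I_2)$. When $I_1\neq I_2$ (or the associated $q$-degrees differ), independence lets the expectation factor into two copies of $-i c_q g'_{q,n}$, reassembling into the squared bracket of \eqref{eq:En_2}. When the tuples coincide (which requires equal $q$ and matching indices) one instead evaluates $\EV[J^2 e^{i\lambda J}] = -\partial_\lambda^2 \exp(g_{q,n}(\lambda)/n^{q-1})$; its contribution beyond the $(g'_{q,n})^2$ piece already covered by the factored sum is exactly $-g''_{q,n}(\lambda)/n^{q-1}$, and since there are only $n^q$ coincident tuples out of $n^{2q}$ in the full double sum, this yields precisely the $1/n$ suppression in the $g''_{q,n}$ term of \eqref{eq:En_2}. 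The main obstacle I anticipate is the careful bookkeeping of signs, powers of $i$, and exponents of $\cos\beta_r,\sin\beta_r$ in recovering $Q_\av$ from the four forward/backward flip patterns at each level, and cleanly isolating the diagonal contribution in the second-moment sum without double-counting its overlap with the factored part.
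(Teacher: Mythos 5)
Your proposal is correct and follows the paper's sum-over-paths derivation essentially step-for-step: $Z$-basis insertions between layers, the variable transformation $\zv^\P{r} \to \zv^\P{r}\cdots\zv^\P{p}\zv^\M$, absorption of the middle slice $\zv^\M$ via the symmetry of the $J$ distribution, passage to the configuration basis $\{n_\av\}$ with the multinomial coefficient, and factorization of $\EV_J$ over i.i.d.\ tensor entries. The only (minor) difference in packaging is that the paper computes a single characteristic function $\varphi_n(\lambda) = \EV_J[\bgbbraket{e^{i\lambda C_J}}]$ and reads off both moments via $\tfrac{1}{in}\partial_\lambda$ and $\tfrac{1}{(in)^2}\partial_\lambda^2$ at $\lambda=0$, which makes the $g''_{q,n}$ contribution and its $1/n$ prefactor in \eqref{eq:En_2} emerge automatically from differentiating the exponential, rather than through your separate coincident-tuple counting (which is also correct, just more laborious).
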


To derive the limits of the right hand side of Eqs. (\ref{eq:En_1}) and (\ref{eq:En_2}), we will use the generalized multinomial theorem stated in Proposition \hyperref[thm:gen-multinomial-restate]{4.1 (Formal)}.
In order to use this general result, we need to show that the polynomial inside the exponential function satisfies the well-played property as defined in Definition~\ref{def:well-played-polynomial-redefine}.
This is shown by the following lemma. 
\begin{lemma}\label{lem:P-is-well-played}
Let $h: \R \to \R$ be an even function, i.e., $h(\lambda) = h(-\lambda)$ for any $\lambda \in \R$.
Let $A$ be as defined in Eq. (\ref{eq:set_A_def}). Let $\{ \Phi_{\av} \}_{\av \in A}$ be given as in Eq. (\ref{eq:Phi_def}). Then for any $q \in \Z_{> 0}$, the following polynomial
\begin{equation} \label{eq:q-poly}
    H_q(\{\omega_\av\}_{\av \in A})  = \sum_{\av_1, \ldots, \av_q \in A} h(\Phi_{\av_1\av_2\cdots\av_q}) \omega_{\av_1} \omega_{\av_2} \cdots \omega_{\av_q}
\end{equation}
is well-played (c.f. Definition \ref{def:well-played-polynomial-redefine}). Here the subscript $\av_1\av_2\cdots\av_q$ of $\Phi_{\av_1\av_2\cdots\av_q}$ is the bit-wise product of $\av_1, \av_2, \cdots, \av_q$, which gives an element in $A$. 
\end{lemma}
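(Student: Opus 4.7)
The plan is to verify the two well-played conditions of Definition~\ref{def:well-played-polynomial-redefine} by making specific choices for the bar operation on $D \sqcup \bar{D}$ and the ordering $\succ$ on $D$, then exploiting evenness of $h$ to force cancellations in the coefficient of any canonical monomial violating the conditions. I would define the bar as a single-pair flip at the largest differing index: for each $\av \notin A_0$, let $r^{**}(\av)$ denote the largest $r$ with $a_r \ne a_{-r}$, and let $\bar\av$ be $\av$ with its entries at positions $r^{**}(\av)$ and $-r^{**}(\av)$ swapped. A direct check confirms $Q_{\bar\av} = -Q_\av$ (the requirement of Definition~\ref{def:complex_Q}), since only the $i^{(a_{-r} - a_r)/2}$ factor at $r = r^{**}(\av)$ is affected, contributing exactly one sign flip. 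I would order $D$ by $r^{**}$-value with ties broken arbitrarily (so $\xv \succ \yv$ iff $r^{**}(\xv) > r^{**}(\yv)$, or by tiebreak), and choose representatives so that, say, $a_{r^{**}(\xv)} = +1$ for $\xv \in D$.

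The technical core is the following lemma on $\Phi$: if $r_0 \ge r^{**}(\yv)$ for some $\yv \in A \setminus A_0$, then sign-flipping the positions $\pm r_0$ of $\yv$ sends $\Phi_\yv$ to $-\Phi_\yv$. Unpacking $\Phi_\yv = \sum_r \gamma_r\bigl(a_r a_{r+1} \cdots a_p - a_{-p} \cdots a_{-r}\bigr)$: each term with $r \le r_0$ flips sign (both monomials contain the flipped variables $a_{r_0}$ and $a_{-r_0}$), while each term with $r > r_0$ is unchanged \emph{and} sums to zero, since $r > r_0 \ge r^{**}(\yv)$ forces $a_s = a_{-s}$ for all $s > r^{**}(\yv)$, giving $a_r a_{r+1} \cdots a_p = a_{-r} a_{-r-1} \cdots a_{-p} = a_{-p} \cdots a_{-r}$. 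Applied to the bar flip of a single $\av_k$ inside a product $\av_1 \cdots \av_q$ (which sign-flips the product at positions $\pm r^{**}(\xv_k)$, where $\xv_k$ is the $D$-representative of $\av_k$), this lemma combined with evenness of $h$ yields $h(\Phi_{\av_1 \cdots \bar\av_k \cdots \av_q}) = h(\Phi_{\av_1 \cdots \av_q})$ whenever $r^{**}(\xv_k) \ge r^{**}(\av_1 \cdots \av_q)$.

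Finally, substituting $\omega_\cv = \nu_\cv$, $\omega_\xv = (\tau_\xv + \eta_\xv)/2$, and $\omega_{\bar\xv} = (\tau_\xv - \eta_\xv)/2$ into $H_q$ expresses the coefficient of any monomial $\prod_\xv \tau_\xv^{a_\xv} \eta_\xv^{b_\xv} \prod_\cv \nu_\cv^{m_\cv}$ as a signed sum over position assignments and sign choices at the $\eta$-positions. For any monomial failing $\max(\avu) \succ \max(\bvu)$ (including every pure $\eta$-$\nu$ monomial, under the convention $\max(\emptyset) = -\infty$), set $\bv^* = \max(\bvu)$ and pick any $\eta$-position $k_0$ with $\xv_{k_0} = \bv^*$; by the ordering choice, $r^{**}(\bv^*) = \max_k r^{**}(\xv_k) \ge r^{**}(\av_1 \cdots \av_q)$, so the key lemma applies. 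Pairing each sign-configuration with the one obtained by flipping the sign at $k_0$ (equivalently, swapping $\av_{k_0} \leftrightarrow \bar\av_{k_0}$) leaves $h(\Phi)$ invariant while flipping exactly one factor in the sign product, producing pairwise cancellation; the remaining case of pure-$\nu$ monomials arises only from tuples with every $\av_k \in A_0$ (forcing $\Phi = 0$) and contributes $h(0)$ times a combinatorial factor, which vanishes under the standing hypothesis $h(0) = 0$ that holds in the paper's application since $g_{q,n}(0) = 0$. The main obstacle is verifying the $\Phi$-flip lemma and carefully tracking the sign bookkeeping across the pairing argument, particularly for mixed monomials involving both $\tau$ and $\eta$ factors where the pairing must be applied at an $\eta$-position while leaving the $\tau$-positions' sign contributions intact.
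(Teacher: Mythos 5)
Your proposal is correct and hinges on the same underlying structure as the paper's proof: you choose the bar operation to flip at the largest differing index $\ell(\av)$ (the paper's rank function; your $r^{**}$), you order $D$ primarily by $\ell$, and the core technical fact is that flipping the product at positions $\pm\ell(\av_k)$ negates $\Phi$ when $\ell(\av_k) \ge \ell(\av_1\cdots\av_q)$. The paper packages this same fact as Lemma~\ref{lem:delta_zero} ($\Delta_{\av,\bv}=0$ when $\ell(\av)\ge\ell(\bv)$, citing \cite[Lemma 1]{farhi2019quantum}), then feeds it through the auxiliary ``discrete difference'' operator $\chi_h$ defined in Eq.~\eqref{eq:chi_def} together with the vanishing result Lemma~\ref{lem:chi-vanish}, whereas you give a direct proof of the $\Phi$-flip and then run a bare-hands pairing/cancellation argument on the canonical-coefficient sum. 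Your route is more self-contained (no appeal to \cite{farhi2019quantum}) and avoids the $\chi_h$ machinery; the paper's $\chi_h$ formalism is arguably tidier bookkeeping but buys nothing essential beyond what your direct pairing does. A minor presentational difference: you take $D$ to be $\{\av\in B : a_{\ell(\av)} = +1\}$ whereas the paper uses $\{\av\in B : \prod_j a_j=+1\}$ (Eq.~\eqref{eqn:set_D}); both are valid proper-set decompositions with the same bar and an $\ell$-compatible ordering, and well-playedness is insensitive to this choice, so nothing is lost.

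One point you raise deserves emphasis: you observe that the pure-$\nu$ coefficients contribute $h(0)$ times a positive combinatorial factor (since $\av_1\cdots\av_q\in A_0 \Rightarrow \Phi=0$), so well-playedness genuinely requires $h(0)=0$, which the lemma as stated does not assume. You are right that the paper's proof also has this gap: after Eq.~\eqref{eq:Phi_2_sum_expanded} the text dismisses all $r=0$ terms via Lemma~\ref{lem:chi-vanish}, but that lemma is only stated for $k\ge 1$, and the case $r=s=0$ has $k=0$, so $\chi_h(\av;\,) = h(\Phi_\av) = h(0)$ survives. The lemma should really be stated for even $h$ with $h(0)=0$; in the application $h(\lambda)=g_{q,n}(c_q\lambda)$ satisfies $g_{q,n}(0)=n^{q-1}\log\EV[e^{i\cdot 0}]=0$, so the downstream use in Theorem~\ref{thm:moments} is unaffected, but your explicit handling of this case is a genuine improvement over the paper's treatment.
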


With these two lemmas, we are now ready to establish Theorem~\ref{thm:moments}. 

\hspace{10pt}

\noindent
{\bf Step 1. Proof of Eq. \eqref{eqn:first-moment-limit-thm}. }
To prove Eq. \eqref{eqn:first-moment-limit-thm}, we apply Proposition \hyperref[thm:gen-multinomial-restate]{4.1 (Formal)} to the right hand side of Eq. \eqref{eq:En_1}. To do so, we need to check the assumptions of Proposition \hyperref[thm:gen-multinomial-restate]{4.1 (Formal)}.
In Lemma \ref{lem:verify-proper-A-Q} later, we verify that the set $A$ as defined in Eq. (\ref{eq:set_A_def}) is a proper set (c.f. Definition \ref{def:set_A}) and that the set of complex numbers $\{Q_\av\}_{\av \in A}$ as defined in Eq. (\ref{eq:Qdef}) is a set of proper complex numbers (c.f. Definition \ref{def:complex_Q}). 

Furthermore, $g_{q,n}$ is an even function since $J_{i_1,\ldots,i_q}$ is symmetric around mean 0 according to Assumption \ref{assum:iidJ}.
By Lemma \ref{lem:P-is-well-played}, the polynomial 
\[
P_n^{(q)}(\{ \omega_{\av}\}) := \sum_{\av_1,\ldots,\av_q \in A} g_{q,n}(c_q \Phi_{\av_1\cdots\av_q})\omega_{\av_1} \cdots \omega_{\av_q}
\]
is well-played.
By the definition of well-played polynomial as in Definition \ref{def:well-played-polynomial-redefine}, it is easy to see that a sum of well-played polynomials is also well-played.
This implies that the polynomial
\[
P_n(\{ \omega_{\av}\}) := \sum_{q = 1}^{q_{\max}} P_n^{(q)}(\{ \omega_{\av}\})= \sum_{q = 1}^{q_{\max}} \sum_{\av_1,\ldots,\av_q \in A} g_{q,n}(c_q \Phi_{\av_1\cdots\av_q})\omega_{\av_1} \cdots \omega_{\av_q}
\]
is well-played. Moreover, it is easy to see that the degree of $P_n$ is $q_{\max}$ which is independent of $n$, and by the fact that $\lim_{n \to \infty} g_{q, n}(\lambda) = g_q(\lambda)$, defining 
\[
P(\{ \omega_{\av}\}) := \sum_{q = 1}^{q_{\max}} \sum_{\av_1,\ldots,\av_q \in A} g_{q}(c_q \Phi_{\av_1\cdots\av_q})\omega_{\av_1} \cdots \omega_{\av_q}, 
\]
we have $\{ P_n \}_{n \ge 1}$ is a sequence of converging well-played polynomials with uniformly bounded degree and with limit $P$. 

Finally, we define
\[
\begin{aligned}
f_n(\{ \omega_\av\}) =&~ -\sum_{q=1}^\qmax ic_q \sum_{\bv_1,\ldots,\bv_q \in A} g_{q,n}'(c_q\Phi_{\bv_1\cdots \bv_q})\omega_{\bv_1} \cdots \omega_{\bv_q},\\
f(\{ \omega_\av\}) =&~ -\sum_{q=1}^\qmax ic_q \sum_{\bv_1,\ldots,\bv_q \in A} g_{q}'(c_q\Phi_{\bv_1\cdots \bv_q})\omega_{\bv_1} \cdots \omega_{\bv_q}.
\end{aligned}
\]
Then it is easy to see that $\{ f_n\}_{n \ge 1}$ is a sequence of converging polynomials with uniformly bounded degree and with limit $f$. 

As a consequence, all the assumptions of Proposition \hyperref[thm:gen-multinomial-restate]{4.1 (Formal)} are satisfied, and applying Proposition \hyperref[thm:gen-multinomial-restate]{4.1 (Formal)} to Eq. (\ref{eq:En_1}), we conclude that
\begin{equation}\label{eqn:first-moment-proof}
    \lim_{n \to \infty} \EV_J[\bgbbraket{C_J/n}] = f(\{ W_\av\})  = -\sum_{q=1}^\qmax ic_q \sum_{\av_1,\ldots,\av_q \in A} g_{q}'(c_q\Phi_{\av_1\cdots \av_q}) W_{\av_1} \cdots W_{\av_q},
\end{equation}
where $\{W_\av\}_{\av\in A}$ is the unique solution to
\begin{equation}\label{eqn:SCE-in-main-theorem-proof}
    W_\av = Q_\av \exp\Big[ \sum_{q=1}^\qmax q \sum_{\bv_1,\ldots,\bv_{q-1} \in A} g_{q}\big(c_q \Phi_{\av \bv_1\cdots\bv_{q-1}}\big) W_{\bv_1} \cdots W_{\bv_{q-1}}  \Big],~~~ \forall \av \in A.
\end{equation}
Note that the right hand side of Eq. \eqref{eqn:first-moment-proof} gives the formula $V_p(\G, \paramv)$ stated in Eq. \eqref{eq:Vp-formula}, and the self-consistent equation \eqref{eqn:SCE-in-main-theorem-proof} coincide with Eq. \eqref{eq:Ws_self_consistent}, which is guaranteed to have a unique solution by Lemma \ref{lem:unique_solution_SCE}.
This concludes the proof of Eq. \eqref{eqn:first-moment-limit-thm}. 

\noindent
{\bf Step 2. Proof of Eq. (\ref{eqn:second-moment-limit-thm}). } To prove Eq. \eqref{eqn:second-moment-limit-thm}, we apply Proposition \hyperref[thm:gen-multinomial-restate]{4.1 (Formal)} to the right hand side of Eq. (\ref{eq:En_2}). The assumptions for $A$, $\{ Q_\av \}_{\av \in A}$, and $\{ P_n\}_{n \ge 1}$ of Proposition \hyperref[thm:gen-multinomial-restate]{4.1 (Formal)} have been checked in Step 1. Furthermore, we define 
\[
\begin{aligned}
f_n(\{ \omega_\av\}) =&~ \Big(-\sum_{q=1}^\qmax i c_q \sum_{\bv_1,\ldots,\bv_q \in A} g_{q,n}'(c_q\Phi_{\bv_1\cdots \bv_q})\omega_{\bv_1} \cdots \omega_{\bv_q}\Big)^2 \\
&\quad + \Big(- \frac{1}{n}\sum_{q=1}^\qmax c_q^2 \sum_{\bv_1,\ldots,\bv_q \in A} g_{q,n}''(c_q\Phi_{\bv_1\cdots \bv_q}) \omega_{\bv_1} \cdots \omega_{\bv_q}\Big), \\
\text{and} \qquad
f(\{ \omega_\av\}) =&~ \Big(-\sum_{q=1}^\qmax i c_q \sum_{\bv_1,\ldots,\bv_q \in A} g_{q}'(c_q\Phi_{\bv_1\cdots \bv_q}) \omega_{\bv_1} \cdots \omega_{\bv_q}\Big)^2.
\end{aligned}
\]
Then it is easy to see that $\{ f_n\}_{n \ge 1}$ is a sequence of converging polynomials with uniformly bounded degree and with limit $f$. 

As a consequence, all the assumptions of Proposition \hyperref[thm:gen-multinomial-restate]{4.1 (Formal)} are satisfied, and applying Proposition \hyperref[thm:gen-multinomial-restate]{4.1 (Formal)} to Eq. (\ref{eq:En_2}), we conclude that
\begin{align}\label{eqn:second-moment-proof}
    \lim_{n \to \infty} \EV_J[\bgbbraket{(C_J/n)^2}] =  f(\{ W_\av\}) = \Big(-\sum_{q=1}^\qmax ic_q \sum_{\av_1,\ldots,\av_q \in A} g_{q}'(c_q\Phi_{\av_1\cdots \av_q}) W_{\av_1} \cdots W_{\av_q} \Big)^2,
\end{align}
where $\{W_\av\}_{\av\in A}$ is the unique solution to Eq. (\ref{eqn:SCE-in-main-theorem-proof}). Note that the right hand side of Eq. (\ref{eqn:second-moment-proof}) gives $V_p(\G, \paramv)^2$ (c.f. Eq. (\ref{eq:Vp-formula})). This concludes the proof of Eq. \eqref{eqn:second-moment-limit-thm} and thus Theorem~\ref{thm:moments}. 

\subsection{Structure of set $A$ and complex numbers $\{ Q_\av\}_{\av \in A}$} \label{sec:structure_A_Q}

In this section, we establish that the set $A$ as defined in Eq. \eqref{eq:set_A_def} is a proper set (c.f. Definition \ref{def:set_A}), and the set of complex numbers $\{ Q_\av \}_{\av \in A}$ as defined in Eq. (\ref{eq:Qdef}) is a set of proper complex numbers (c.f. Definition \ref{def:complex_Q}). We first define a rank function $\ell$ on the set $A$. 
\begin{defn}[Rank function $\ell$]
\label{def:ordering}
For any $\av = (a_1, a_2\ldots, a_p, a_{-p},\ldots, a_{-2}, a_{-1}) \in A$, we define its rank to be
\begin{align}
    \ell(\av) =  \max(\{ i: a_i \neq a_{-i}\} \cup \{0 \}).
\end{align}
In other word, $\ell(\av)$ gives the largest index $i$ where $a_i \neq a_{-i}$, or 0 otherwise. 
\end{defn}
For any $l \in [p]$, we define
\begin{align} \label{eq:Apart-def}
B_l := \{\av \in A: \ell(\av) = l \} = \{\av \in A:  a_{-k}  = a_{k} \text{ for } \ell+1 \le k \le p, \text{ and } a_{-l} = - a_{-l}\}, 
\end{align}
and we define
\begin{equation}\label{eq:A_0_def}
A_{0} := \{\av \in A: \ell(\av) = 0 \} = \{\av \in A: a_{-k} = a_{k}  \text{ for } 1\le k \le p\} \,. 
\end{equation}
Then $A_0, B_1, B_2, \ldots, B_p$ are disjoint and we have
\begin{equation}
A= B_{p} \sqcup B_{p-1} \sqcup \cdots \sqcup B_1 \sqcup A_0.
\end{equation}
We further define $B = \cup_{\ell = 1}^p B_\ell = A\setminus A_{0}$, and define
\begin{equation}\label{eqn:set_D}
\begin{aligned}
D = \Big\{ \av \in B: {\textstyle \prod_{j=1}^p a_j = + 1 }\Big\} 
\qquad \text{and} \qquad
\barD = \Big\{ \av \in B: {\textstyle \prod_{j=1}^p a_j = - 1} \Big\}. 
\end{aligned}
\end{equation}
With these definitions, we have $B = D \sqcup \barD$ and $A = A_0 \sqcup D \sqcup \barD$. 

Recall that we index the entries of $\av \in A$ as $(a_1,\ldots, a_p, a_{-p}, \ldots, a_{-1})$.
Since any element $\av$ in set $B_p$ satisfies $a_{-p} = - a_{p}$, the number of elements in $B_p$ is $2^{2p-1}$. Any element $\av$ in $B_{p-1}$ satisfies $a_{-p} = a_{p}$ and $a_{-p+1} = -a_{p-1}$, so the number of elements in $B_{p-1}$ is $2^{2p-2}$ elements. Similarly, the number of elements in $B_{l}$ is $2^{p + l - 1}$ for any $l \in [p]$. We illustrate this for
$p=3$,
\begin{equation} \label{eq:A-part-example}
\begin{array}{lclllrrrr}
B_p &: \quad     & (a_1, &a_2, &a_3, & -a_{3}, &a_{-2}, &a_{-1})   &\quad 32\text{ elements}, \\
B_2 &: \quad     & (a_1, &a_2, &a_3,  & a_{3}, &-a_{2}, &a_{-1})   & \quad 16\text{ elements}, \\
B_{1} &: \quad   & (a_1, &a_2, &a_3, & a_{3}, &a_{2}, &-a_{1})   & \quad 8\text{ elements}, \\
A_{0} &: \quad & (a_1, &a_2, &a_3, & a_{3}, &a_{2}, &a_{1})   &\quad 8\text{ elements}.
\end{array}
\end{equation}

We next define a full ordering $\succeq$ on $D$. 
\begin{defn}[Full ordering $\succeq$ on $D$]
\label{def:ordering_full}
For any two distinct element $\av_1, \av_2 \in D$, we define the $\prec$ relation as following: (1) If $\ell(\av_1) < \ell(\av_2)$, we let $\av_1 \prec \av_2$; (2) If $\ell(\av_1) > \ell(\av_2)$, we let $\av_2 \prec \av_1$; (3) If $\ell(\av_1) = \ell(\av_2)$ and if $\av_1$ is lexically less than $\av_2$, we let $\av_1 \prec \av_2$; (3) If $\ell(\av_1) = \ell(\av_2)$ and if $\av_1$ is lexically greater than $\av_2$, we let $\av_2 \prec \av_1$ (here lexical order means that, for example, $(-1, -1), (-1, 1), (1, -1), (1, 1)$ are in lexically increasing order). It is easy to see that such $\prec$ relation is a full order, so that we can also define $\preceq$, $\succeq$, and $\succ$ accordingly. 
\end{defn}

We now define a ``bar'' operation that takes configuration $\av \in B_\ell$ to $\bar\av \in B_\ell$ for $1 \le \ell \le p$ via
\begin{align} \label{eq:bar-op}
\bar{a}_{\pm r} = \begin{cases}
a_{\pm r}, & r \neq \ell\\
-a_{\pm r} & r = \ell
\end{cases}
\quad
\text{ for } \quad \av \in B_\ell, \quad 1\le \ell \le p\,.
\end{align}
Note the bar operation is its own inverse. For completeness we define $\bar\av = \av$ for $\av\in A_0$, although this is rarely used. 
For example, for the form of $\av$'s given in \eqref{eq:A-part-example}, the corresponding $\bar\av$'s are
\begin{equation}
\begin{array}{lclrrrrrrr}
B_p &: \quad     & (\hphantom{-}a_1, &a_2, &-a_3, & a_{3}, &a_{-2}, &a_{-1}),  & \phantom{\quad 32\text{ elements}} \\
B_2 &: \quad     & (\hphantom{-}a_1, &-a_2, &a_3,  & a_{3}, &a_{2}, &a_{-1}),  &\\
B_1 &: \quad   & (-a_1, &a_2, &a_3, & a_{3}, &a_{2}, &a_{1}),   & \\
A_{0} &: \quad & (\hphantom{-}a_1, &a_2, &a_3, & a_{3}, &a_{2}, &a_{1}).   &
\end{array}
\end{equation}
It is easy to see that the bar operation is a one-to-one mapping between $D$ and $\barD$.
Observe that $\ell(\xv)=\ell(\bar\xv)$, $\xv \in D \Longrightarrow \bar\xv \in \barD$, and $\xv \in \barD \Longrightarrow \bar\xv \in D$. 

Finally, by the definition of $\{ Q_\av \}_{\av \in A}$ as in \eqref{eq:Qdef} and by the definitions of $A_0$ and $B$ as in Eq. (\ref{eq:Apart-def}) and (\ref{eq:A_0_def}), we can directly verify that
\begin{align}
Q_{\bar\av} = -Q_\av, \quad \forall \av \in B;
\qquad\qquad 
Q_\av \in [0, 1], \quad \forall \av \in A_0; 
\qquad
\text{and}
\qquad
\sum_{\av \in A_0} Q_\av = 1. 
\end{align}

The line of reasoning above proves the following lemma. 
\begin{lemma}\label{lem:verify-proper-A-Q}
The set $A$ as defined in Eq. (\ref{eq:set_A_def}), endowed with the structure $A = A_0 \sqcup D \sqcup \barD$ (where $A_0$ is defined as in (\ref{eq:A_0_def}) and $D$ and $\barD$ is as defined in (\ref{eqn:set_D})) and the bar operation as defined in Eq. (\ref{eq:bar-op}), is a proper set (c.f. Definition \ref{def:set_A}). The set of complex numbers $\{ Q_\av \}_{\av \in A}$ as defined in Eq. (\ref{eq:Qdef}) is a set of proper complex numbers (c.f. Definition \ref{def:complex_Q}). 
\end{lemma}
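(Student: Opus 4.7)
The plan is to verify each clause of Definition~\ref{def:set_A} and Definition~\ref{def:complex_Q} in turn, using the explicit rank-based decomposition $A = A_0 \sqcup B_1 \sqcup \cdots \sqcup B_p$ from Eqs.~\eqref{eq:Apart-def}--\eqref{eq:A_0_def} as the backbone. First, I would confirm that $A = A_0 \sqcup D \sqcup \barD$. Since each $\av \in B := \bigsqcup_{\ell=1}^p B_\ell$ satisfies $\prod_{j=1}^p a_j \in \{\pm 1\}$, the sets $D = \{\av \in B : \prod_j a_j = +1\}$ and $\barD = \{\av \in B : \prod_j a_j = -1\}$ are disjoint and partition $B$, so $A = A_0 \sqcup D \sqcup \barD$.

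Next, I would check that the bar operation \eqref{eq:bar-op} yields a bijection $\iota : D \to \barD$. For $\av \in B_\ell$ the bar map flips only $a_\ell$ and $a_{-\ell}$, and because $a_{-k} = a_k$ for $k > \ell$ is preserved while $a_{-\ell} \ne a_\ell$ continues to hold after negating both, the rank $\ell(\bar\av) = \ell$ is unchanged, so $\bar\av \in B_\ell \subseteq B$. Moreover $\prod_j \bar a_j = (-a_\ell) \prod_{j \ne \ell} a_j = -\prod_j a_j$, which flips the sign of this product and hence swaps $D$ with $\barD$. The map is clearly an involution, giving $|D| = |\barD|$. This settles the ``proper set'' claim.

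For the ``proper complex numbers'' claim, I would analyze \eqref{eq:Qdef} factor by factor. For $\av \in A_0$ one has $a_r = a_{-r}$, so $(a_r + a_{-r})/2 = a_r$ and $(a_{-r} - a_r)/2 = 0$, making the $r$-th factor equal to $\cos^2\beta_r$ if $a_r = +1$ and $\sin^2\beta_r$ if $a_r = -1$. Each factor lies in $[0,1]$, so $Q_\av \in [0,1]$. Summing over $A_0 \cong \{\pm 1\}^p$ factorizes as $\prod_r (\cos^2\beta_r + \sin^2\beta_r) = 1$, verifying $\sum_{\av \in A_0} Q_\av = 1$.

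The remaining identity $Q_{\bar\av} = -Q_\av$ for $\av \in D$ is the main bookkeeping step. Writing $\av \in B_\ell$, the bar operation leaves the factors for $r \ne \ell$ untouched since it does not touch $a_{\pm r}$. For $r = \ell$, the expressions $(a_\ell + a_{-\ell})/2 = 0$ and $(a_{-\ell} - a_\ell)/2 = \pm 1$ both negate under $\av \mapsto \bar\av$, so the $\ell$-th factor of $Q_\av$, namely $\cos\beta_\ell \sin\beta_\ell \cdot i^{\pm 1}$, becomes $\cos\beta_\ell \sin\beta_\ell \cdot i^{\mp 1}$ in $Q_{\bar\av}$. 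Using $i^{+1} = -i^{-1}$, this single factor flips sign, so $Q_{\bar\av} = -Q_\av$, completing the verification. The proof is essentially routine given the carefully chosen decomposition; the only thing to watch is keeping the parity of $(a_{-\ell} - a_\ell)/2$ straight when comparing $Q_\av$ and $Q_{\bar\av}$.
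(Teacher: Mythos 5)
Your proposal is correct and follows essentially the same route as the paper: the paper sets up the rank-based decomposition $A = A_0 \sqcup B_1 \sqcup \cdots \sqcup B_p$, notes $B = D \sqcup \barD$, and then asserts that the bar map is a bijection $D \leftrightarrow \barD$ and that the three conditions on $\{Q_\av\}$ ``can be directly verified.'' You have simply filled in those direct verifications (the factor-by-factor analysis of $Q_\av$, the $\prod_r(\cos^2\beta_r+\sin^2\beta_r)=1$ telescoping, and the $i^{\pm1}=-i^{\mp1}$ sign flip at the rank index) exactly as the paper's sketch intends.
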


\subsection{Proof of Lemma \ref{lem:characteristic_function_derivative}}
\label{sec:QAOA-organized-sum}

This proof follows the technique and convention introduced in \cite{farhi2019quantum}.
To begin, let us define a characteristic function $\varphi_n(\lambda)$ as follows:
\begin{equation}\label{eqn:MGF_definition}
\varphi_n(\lambda) = \EV_J[\bgbbraket{e^{i\lambda C_J}}]. 
\end{equation}
Then we have
\begin{equation}\label{eqn:relating_moment_MGF}
\begin{aligned}
    \EV_J[\braket{\paramv | C_J/n | \paramv}] &= \frac{1}{(in)} \frac{\partial \varphi_n(\lambda)}{\partial \lambda} \Big|_{\lambda = 0},\\
    \EV_J[\braket{\paramv | (C_J/n)^2 | \paramv}] &= \frac{1}{(in)^2} \frac{\partial^2 \varphi_n(\lambda)}{\partial^2 \lambda} \Big|_{\lambda = 0}. 
\end{aligned}
\end{equation}
As a consequence, to obtain the first and second moments, we can first calculate the characteristic function $\varphi_n(\lambda)$.

Recall that the QAOA state \eqref{eq:wavefunction} is
\begin{equation}
    \ket{\paramv} = e^{-i\beta_p B} e^{-i\gamma_p C} \cdots e^{-i\beta_1 B} e^{-i\gamma_1 C} \ket{s},
\end{equation}
where
\begin{equation}
\ket{s} = \Big({\textstyle \frac{\ket{+1} + \ket{-1}}{\sqrt{2}} }\Big)^{\otimes n} 
    = \frac{1}{\sqrt{2^n}} \sum_{\zv \in \{\pm1\}^n} \ket{\zv}.
\end{equation}
For a general cost function 
\begin{equation}
    C_J(\zv) = \sum_{q=1}^{\qmax} c_q \sum_{i_1,\ldots,i_q=1}^n J_{i_1,i_2,\ldots,i_q} z_{i_1} z_{i_2}\cdots z_{i_q},
\end{equation}
we have
\begin{align}
\bgbbraket{e^{i\lambda C_J}} &= \braket{s | e^{i\gamma_1 C_J} e^{i\beta_1 B}\cdots e^{i\gamma_p C_J} e^{i\beta_p B} e^{i\lambda C_J} e^{-i\beta_p B} e^{-i\gamma_p C_J}\cdots e^{-i\beta_1 B} e^{-i\gamma_1 C_J}|s}. 
\end{align}
Inserting $2p+1$ resolutions of identity $\Id=\sum_{\zv^j} \ketbra{\zv^j}$, we get
\begin{align}
&~\bgbbraket{e^{i\lambda C_J}} \nonumber \\
&= \sum_{\zv^\P{\pm1}, \ldots, \zv^\P{\pm p}, \zv^\M }
	\braket{s |\zv^\P{1}} e^{i\gamma_1 C_J(\zv^\P{1})} 
	\braket{\zv^\P{1}| e^{i\beta_1 B}|\zv^\P{2}} 
		\cdots e^{i\gamma_p C_J(\zv^\P{p})} \braket{\zv^\P{p}| e^{i\beta_p B} |\zv^\M}
		e^{i\lambda C_J(\zv^\M)} \nonumber \\
	&  \qquad \qquad \times \braket{\zv^\M |e^{-i\beta_p B} |\zv^\P{-p}} e^{-i\gamma_p C_J(\zv^\P{-p})}
	 \cdots \braket{\zv^\P{-2}| e^{-i\beta_1 B}|\zv^\P{-1}}  e^{-i\gamma_1 C_J(\zv^\P{-1})} \braket{\zv^\P{-1}|s}.
\end{align}
Here we label the $2p+1$ strings as $\zv^\P{1}, \zv^\P{2}, \ldots, \zv^\P{p}, \zv^\M, \zv^\P{-p},\ldots, \zv^\P{-2}, \zv^\P{-1}$.
This labelling is convenient because $\zv^\P{j}$ will often be paired with $\zv^\P{-j}$ in the calculations that follow.
Note each factor of the form $\braket{\zv^\P{1}|e^{i\beta_1}|\zv^\P{2}}$ only depends on the bitwise product $\zv^\P{1}\zv^\P{2}$, so we define
\begin{equation}
    f_j(\zv \zv') =\braket{\zv|e^{i\beta_j B}|\zv'}.
\end{equation}
Then we have
\begin{align}
\bgbbraket{e^{i\lambda C_J}} 
=  \frac{1}{2^n} \sum_{\zv^\P{\pm1}, \ldots, \zv^\P{\pm p}, \zv^\M }
	&\exp\Big[{i \sum_{r=1}^p \gamma_r [C_J(\zv^\P{r})-C_J(\zv^\P{-r})] + i\lambda C_J(\zv^\M)}\Big] \nonumber \\
&   \times f_1(\zv^\P{1} \zv^\P{2}) \cdots f_{p-1}(\zv^\P{p-1} \zv^\P{p}) f_{p}(\zv^\P{p} \zv^\M)
 \nonumber \\
&   \times f_{1}^*(\zv^\P{-1} \zv^\P{-2}) \cdots f_{p-1}^*(\zv^\P{1-p} \zv^\P{-p} ) f_{p}^*(\zv^\P{-p}\zv^\M ).
\end{align}
Now we are going to transform the $\zv^\P{j}$'s to simplify this expression.
For every $r=1,2,\ldots, p$, we simultaneously perform the following transform 
\begin{align}
\begin{split}
\zv^\P{r} &\to \zv^\P{r} \zv^\P{r+1} \cdots \zv^\P{p} \zv^\M, \\
\zv^\P{-r} &\to  \zv^\P{-r} \zv^\P{-r-1} \cdots \zv^\P{-p} \zv^\M.
\end{split}
\end{align}
Then $\zv^\P{\pm r}\zv^\P{\pm(r+1)}\to \zv^\P{\pm r}$ for $1\le r\le p-1$, and $\zv^\P{\pm p} \zv^\M\to \zv^\P{\pm p}$ under this transformation.
Recall that in general
\begin{equation}
    C_J(\zv) = \sum_{q=1}^{\qmax} c_q \sum_{i_1,\ldots,i_q=1}^n J_{i_1,i_2,\ldots,i_q} z_{i_1} z_{i_2}\cdots z_{i_q}.
\end{equation}
This gives us
\begin{align}\label{eq:last-zm-sum}
\bgbbraket{e^{i\lambda C_J}}  &=
	\frac{1}{2^n} \sum_{\zv^\P{\pm1}, \ldots, \zv^\P{\pm p}, \zv^\M}
   \exp\Big[i \sum_{q=1}^{\qmax} c_q \sum_{i_1,\ldots,i_q=1}^n J_{i_1,\ldots, i_q} \big(\phi_{i_1,\ldots, i_q}(\Zv)  + \lambda\big) z^\M_{i_1}\cdots z^\M_{i_q}\Big]
		 \nonumber \\
& \quad \quad \times f_1(\zv^\P{1}) f_2(\zv^\P{2}) \cdots f_{p}(\zv^\P{p}) f_{p}^*( \zv^\P{-p}) \cdots f_{2}^*(\zv^\P{-2} ) f_{1}^*(\zv^\P{-1})
\end{align}
where we have denoted $\Zv = (\zv^\P{1}, \ldots, \zv^\P{p}, \zv^\P{-p}, \ldots, \zv^\P{-1}) \in \{ \pm 1\}^{n \times (2p)}$, and 
\begin{align}
     \phi_{i_1,\ldots, i_q}(\Zv) = \sum_{r=1}^p \gamma_r \Big( \prod_{s=1}^q(z_{i_s}^\P{r} z_{i_s}^\P{r+1}\cdots z_{i_s}^\P{p}) - \prod_{s=1}^q(z_{i_s}^\P{-r} z_{i_s}^\P{-r-1}\cdots z_{i_s}^\P{-p})\Big).
\end{align}
Observe that for any distribution of $J_{i_1 \ldots i_q}$ that is symmetric about $0$ as in Assumption~\ref{assum:iidJ}, we have $\EV_J[\rho(J_{i_1 \ldots i_q} z_{i_1}^\M \ldots z_{i_1}^\M)] = \EV_J[\rho(J_{i_1 \ldots i_q})]$ for  any function $\rho(x)$. 
Then taking expectation $\EV_J$ of Eq. \eqref{eq:last-zm-sum} we get the characteristic function as
\begin{align}\label{eq:moment_before_config_basis}
\varphi_n(\lambda) =&~  \sum_{\zv^\P{\pm1}, \ldots, \zv^\P{\pm p}}
   \EV_J\Big\{\exp\Big[i \sum_{q=1}^\qmax c_q \sum_{i_1,\ldots,i_q=1}^n J_{i_1,\ldots, i_q} (\phi_{i_1,\ldots, i_q}(\Zv)  + \lambda)\Big]\Big\}
		 \nonumber \\
& \qquad \times f_1(\zv^\P{1}) f_2(\zv^\P{2}) \cdots f_{p}(\zv^\P{p}) f_{p}^*( \zv^\P{-p}) \cdots f_{2}^*(\zv^\P{-2} ) f_{1}^*(\zv^\P{-1})
\end{align}
where the sum over $\zv^\M$ killed the $1/2^n$ factor in front of \eqref{eq:last-zm-sum}.

Now we calculate the expectation over $J\sim\G(n)$ explicitly.
Under Assumption~\ref{assum:iidJ} where the tensor collection $J_{i_1,\ldots,i_q}$ are i.i.d. with log characteristic function $g_{q, n}$, we have
\begin{align}\label{eq:moment_after_averaging}
\varphi_n(\lambda) &= \sum_{\Zv \in \{ \pm 1\}^{n \times (2p)}}
   \exp\Big[ \sum_{q=1}^\qmax \frac{1}{n^{q-1}} \sum_{i_1,\ldots,i_q=1}^n g_{q,n}\big(c_q(\phi_{i_1,\ldots, i_q}(\Zv)  + \lambda)\big)\Big]
		 \nonumber \\
& \quad \quad \times f_1(\zv^\P{1}) f_2(\zv^\P{2}) \cdots f_{p}(\zv^\P{p}) f_{p}^*( \zv^\P{-p}) \cdots f_{2}^*(\zv^\P{-2} ) f_{1}^*(\zv^\P{-1}).
\end{align}

Following the techniques in \cite{farhi2019quantum}, we perform a change of variables so that the index of the summation in (\ref{eq:moment_after_averaging}) is changed to what the authors called the ``configuration basis.''
This is done via the following argument. Let us fix $\Zv \in \{ \pm 1\}^{n \times (2p)}$ with $(k, l)$-th element $z_k^\P{l} \in \{ \pm 1\}$ (for $k \in [n]$ and $l \in \{ \pm 1. \dots, \pm p\}$), and let us look at the $k$-th row of $\Zv$, which gives
\begin{align}
    (z_k^\P{1}, z_k^\P{2}, \ldots, z_k^\P{p}, z_k^\P{-p}, \ldots, z_k^\P{-2}, z_k^{\P{-1}}) \in A,
\end{align}
where $A = \{\pm 1 \}^{2p}$ as defined in \eqref{eq:set_A_def}.
We denote by $n_\av(\Zv)$ the number of times that the configuration $\av \in A$ occurs among the rows of $\Zv$, i.e., 
\begin{equation}\label{eqn:n_av_Z_definition}
n_\av(\Zv) := \sum_{k = 1}^n 1 \Big\{ (z_k^\P{1}, \ldots, z_k^\P{p}, z_k^\P{-p}, \ldots, z_k^\P{-1} ) = \av \Big\}. 
\end{equation}
By definition, we have $\sum_{\av \in A} n_\av(\Zv) = n$ for any $\Zv \in \{ \pm 1\}^{n \times (2p)}$. 

By this definition, a key observation is that, every term in the summation in Eq. (\ref{eq:moment_after_averaging}) only depend on $\Zv$ through $\{ n_\av(\Zv)\}_{\av \in A}$. Indeed, note we can write every $f_j(\zv^\P{j}) \equiv \braket{\zv^\P{j}|e^{i\beta_j B}|\vect{1}}$ in Eq. \eqref{eq:moment_after_averaging} as a function of $\{ n_\av(\Zv)\}_{\av \in A}$:
\begin{align}
    f_j(\zv^\P{j}) &= (\cos\beta_j)^{\#\text{ of $+1$'s in } \zv^\P{j}} (i\sin\beta_j)^{\#\text{ of $-1$'s in } \zv^\P{j}} \nonumber \\
    &=(\cos\beta_j)^{\sum_\av n_\av(\Zv) \cdot ( 1+a_j)/2} (i\sin\beta_j)^{\sum_\av n_\av(\Zv) \cdot ( 1-a_j)/2 },
\end{align}
which gives
\begin{equation}\label{eqn:f_1_f_p_reformulation}
    f_1(\zv^\P{1}) f_2(\zv^\P{2}) \cdots f_{p}(\zv^\P{p}) f_{p}^*( \zv^\P{-p}) \cdots f_{2}^*(\zv^\P{-2} ) f_{1}^*(\zv^\P{-1}) = \prod_{\av \in A} Q_\av^{n_\av(\Zv)}
\end{equation}
where $Q_\av$ is as defined in Eq.~\eqref{eq:Qdef}.
Similarly, we can write the sum over $i_1, \ldots, i_q \in [n]$ terms involving $\phi_{i_1,\ldots,i_q}(\Zv)$ in Eq.~\eqref{eq:moment_after_averaging} as a function of $\{ n_\av(\Zv)\}_{\av \in A}$:
\begin{align}\label{eqn:phi_to-Phi}
    & \sum_{i_1,\ldots,i_q=1}^n g_{q,n}\big(c_q(\phi_{i_1\ldots i_q}(\Zv)  + \lambda)\big) = \sum_{\av_1,\ldots,\av_q \in A} g_{q,n}\big(c_q(\Phi_{\av_1\ldots \av_q}  + \lambda)\big) n_{\av_1}(\Zv) 
    \cdots n_{\av_q}(\Zv),
\end{align}
where $\Phi_\av$ is as defined in Eq.~\eqref{eq:Phi_def}.

As a consequence, instead of summing over all $2^{n \times (2p)}$ possible bit strings $\Zv = (\zv^\P{\pm1}, \ldots, \zv^\P{\pm p})$ in Eq. \eqref{eq:moment_after_averaging}, we can instead sum over all possible combinations of configurations $\{ n_\av(\Zv) \}_{\av \in A}$. In other words, we change the basis from
\begin{equation}
\{\Zv : \Zv \in \{\pm1\}^{2pn}\} 
    \quad\longrightarrow\quad
\{ n_\av \ge 0: \av \in A,~ {\textstyle \sum_{\av\in A} n_\av = n} 
    \},
\end{equation}
which we call the configuration basis. This can be done by the following equation: for any $f(\{ \omega_\av\}_{\av \in A})$ as a function of $\{ \omega_\av\}_{\av \in A}$, we have
\begin{equation}\label{eqn:change_of_basis}
\sum_{\Zv \in \{ \pm 1\}^{n \times (2p)}} f(\{ n_\av (\Zv)\}_{\av \in A} )  =     \sum_{\{n_\av\ge 0:~ \av\in A, \sum_{\av\in A} n_\av =n \}} \binom{n}{\{n_\av\}} f(\{ n_\av \}_{\av \in A}).
\end{equation}
Here we have used some abuse of notations: in the left hand side of the equation above, $\{ n_\av \}$ should be understood as functions of $\Zv$ (as defined in Eq. (\ref{eqn:n_av_Z_definition})); in the right hand side of the equation above, $\{ n_\av\}$ should be understood as dummy variables that is summed over. 

Hence, applying Eqs. (\ref{eqn:change_of_basis}), (\ref{eqn:f_1_f_p_reformulation}), (\ref{eqn:phi_to-Phi}) to Eq. \eqref{eq:moment_after_averaging}, we have
\begin{align}
    \varphi_n(\lambda) = \sum_{\{n_\av\}} \binom{n}{\{n_\av\}} \prod_{\av\in A} Q_\av^{n_\av}
\exp\Big[
		n \sum_{q=1}^\qmax \sum_{\av_1,\ldots,\av_q \in A} g_{q,n}\Big(c_q (\Phi_{\av_1\cdots\av_q}+ \lambda)\Big)
		\frac{n_{\av_1}}{n} \cdots \frac{n_{\av_q}}{n}
	\Big].
\end{align}
Differentiating the above equation and using Eq.~\eqref{eqn:relating_moment_MGF}, one arrives at Eqs.~\eqref{eq:En_1} and \eqref{eq:En_2}. 
This proves Lemma~\ref{lem:characteristic_function_derivative}.

\subsection{Auxilliary lemmas for Proof of Lemma~\ref{lem:P-is-well-played}}\label{sec:prelim_results_for_proof_thm_moments}

Recall the bar operation defined in Eq. \eqref{eq:bar-op}.
Given an even function $h : \R \to \R$, we define the function $\chi_h: A \times A^* \to \mathbb{R}$ as
\begin{equation}\label{eq:chi_def}
\begin{aligned}
\chi_h(\av;  \cv_1,\cv_2,&~\ldots, \cv_{k}) := \sum_{\dv_s \in \{ \cv_s, \bar \cv_s\}, \forall 1 \le s \le k} h(\Phi_{\av \dv_1 \cdots \dv_{k}}) (-1)^{|\{j :~ \dv_j = \bar \cv_j \}|}, \\
&~ \forall \av \in A,~~~~~~ \forall k \ge 0,~ ( \cv_1,\cv_2,\ldots, \cv_{k}) \in A^k.
\end{aligned}
\end{equation}
For example, we have
\begin{equation}
\begin{split}
\chi_h(\av) &= h(\Phi_\av) , \\
    \chi_h(\av; \cv) &= h(\Phi_{\av \cv}) - h(\Phi_{\av\bar\cv}), \\
    \chi_h(\av; \cv_1, \cv_2) &= h(\Phi_{\av \cv_1 \cv_2}) - h(\Phi_{\av \bar\cv_1 \cv_2}) - h(\Phi_{\av \cv_1 \bar\cv_2}) + h(\Phi_{\av \bar\cv_1 \bar\cv_2}) .
\end{split}
\end{equation}
Here the subscript $\av\bv\cv$ of $\Phi_{\av\bv\cv}$ is the bit-wise product of $\av, \bv, \cv \in A$, which gives an element in $A$.

For any $\av,\bv\in A$, we define
\begin{equation}\label{eqn:definition_Delta}
\Delta_{\av,\bv} := \frac{1}{2} \Big[ h(\Phi_{\bar{\av}\bv}) - h(\Phi_{\av\bv})\Big]. 
\end{equation}

We will next show some properties of $\Delta_{\av,\bv}$ and function $\chi_h$. We start by showing some properties of the rank function $\ell$. 
\begin{lemma}\label{lem:partition_of_product}
Recall the rank function $\ell(\cdot)$ given in Definition \ref{def:ordering}. Recall that $\av\bv$ is the bit-wise product of $\av, \bv \in A$ which gives an element in $A$.
For any $\av, \bv \in A$, we have
\begin{align}\label{eqn:ell_inequality_in_lemma}
    \begin{cases}
        \ell(\av\bv) < \ell(\av), \ell(\bv), & \quad \text{ if } \ell(\av) = \ell(\bv) \neq 0, \\
        \ell(\av\bv) = \ell(\av)=\ell(\bv), & \quad \text{ if } \ell(\av) = \ell(\bv) = 0, \\
        \ell(\av\bv)= \max\{\ell(\av), \ell(\bv)\}, & \quad \text{ if } \ell(\av) \neq \ell(\bv).
    \end{cases}
\end{align}
In particular, this implies that for any $\av,\bv\in A$, we have
\begin{equation} \label{eq:ell_inequality_simple}
    \ell(\av\bv) \le \max\{\ell(\av),\ell(\bv)\}.
\end{equation}
\end{lemma}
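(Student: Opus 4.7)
The plan is to translate the condition $a_i \neq a_{-i}$ defining the rank $\ell(\av)$ into a more symmetric language. For each index $i \in [p]$, introduce the auxiliary sign $\alpha_i(\av) := a_i a_{-i} \in \{\pm 1\}$; then $a_i \neq a_{-i}$ is equivalent to $\alpha_i(\av) = -1$, so
\begin{equation*}
\ell(\av) = \max\big(\{i \in [p] : \alpha_i(\av) = -1\} \cup \{0\}\big).
\end{equation*}
The virtue of this reformulation is that the bit-wise product becomes multiplicative in the auxiliary variables: $\alpha_i(\av\bv) = \alpha_i(\av)\,\alpha_i(\bv)$. Hence the indices where $\av\bv$ is ``unbalanced'' at position $\pm i$ are exactly those $i$ for which $\alpha_i(\av) \neq \alpha_i(\bv)$, i.e.,
\begin{equation*}
\ell(\av\bv) = \max\big(\{i \in [p] : \alpha_i(\av) \neq \alpha_i(\bv)\} \cup \{0\}\big).
\end{equation*}

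With this description in hand, the three cases of \eqref{eqn:ell_inequality_in_lemma} become easy parity checks on the top index. First I would handle the case $\ell(\av) \neq \ell(\bv)$: assuming WLOG that $L := \ell(\av) > \ell(\bv)$, we have $\alpha_L(\av) = -1$ while $\alpha_L(\bv) = +1$ (since $L > \ell(\bv)$), whereas for every $i > L$ both $\alpha_i(\av)$ and $\alpha_i(\bv)$ equal $+1$. Therefore the largest index where the two disagree is $L$, giving $\ell(\av\bv) = L = \max\{\ell(\av),\ell(\bv)\}$. Next I would dispatch the case $\ell(\av) = \ell(\bv) = 0$: all $\alpha_i(\av)$ and $\alpha_i(\bv)$ are $+1$, so they agree everywhere and $\ell(\av\bv) = 0$.

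Finally, for the central case $\ell(\av) = \ell(\bv) = L \neq 0$, the key observation is the cancellation at the top index: $\alpha_L(\av) = \alpha_L(\bv) = -1$, so these two disagreements annihilate each other in the product, while $\alpha_i(\av) = \alpha_i(\bv) = +1$ for every $i > L$. Thus no index $i \ge L$ contributes a disagreement, which forces $\ell(\av\bv) < L$. The simple consequence \eqref{eq:ell_inequality_simple} then follows immediately from the three cases by inspection.

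I don't anticipate any real obstacle here; the only subtlety is choosing a clean reformulation that makes the bit-wise product act transparently on the indicator data defining $\ell$, and the auxiliary signs $\alpha_i$ do exactly that. The whole argument is a short case analysis once this reformulation is in place.
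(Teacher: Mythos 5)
Your proof is correct and follows essentially the same case analysis as the paper; the auxiliary signs $\alpha_i(\av) = a_i a_{-i}$ are a clean notational repackaging of the paper's direct comparisons of $(\av\bv)_j$ with $(\av\bv)_{-j}$, with the same three cases and the same key cancellation at the top index when $\ell(\av)=\ell(\bv)\neq 0$.
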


\begin{proof}
First, let $\av, \bv \in A$ be such that $\ell(\av) = \ell(\bv) = l\neq 0$.
Then by the definition of rank function $\ell$, we have $(\av\bv)_j = (\av\bv)_{-j}$ for any $j > l$. Furthermore, we have
\begin{align}
    (\av\bv)_l = a_l b_l = (-a_{-l})(-b_{-l}) = a_{-l} b_{-l} = (\av\bv)_{-l}.
\end{align}
Hence we have $\ell(\av\bv)< l$. This proves the first case in Eq. (\ref{eqn:ell_inequality_in_lemma}). 

Now suppose $\ell(\av)=\ell(\bv)=0$. Then $(\av\bv)_j = a_j b_j = a_{-j} b_{-j} = (\av\bv)_{-j}$, so $\ell(\av\bv)=0$ as well.
This proves the second case in \eqref{eqn:ell_inequality_in_lemma}.

Lastly, let $\av,\bv \in A$ be such that $\ell(\av)\neq \ell(\bv)$.
Without loss of generality, we assume that $\ell(\av) > \ell(\bv)$.
Then for any $j$ such that $\ell(\av) + 1 \le j \le p$, we have
\begin{equation}
    (\av\bv)_j = a_j b_j = a_{-j}b_{-j} = (\av\bv)_{-j}.
\end{equation}
Moreover, we have
\begin{align}
    (\av\bv)_{\ell(\av)} = a_{\ell(\av)} b_{\ell(\av)} = (-a_{-\ell(\av)})(b_{-\ell(\av)}) = -(\av\bv)_{-\ell(\av)}.
\end{align}
This implies that $\ell(\av\bv) = \ell(\av)$, which proves the last case in Eq. (\ref{eqn:ell_inequality_in_lemma}). 
\end{proof}

\begin{lemma}\label{lem:Delta_iden_separate_bar}
For any $\av, \bv \in A$ such that $\ell(\av) < \ell(\bv)$, we have
\begin{align}\label{eq:bar_ab_separates}
    \overline{\av\bv} = \av \bar\bv,
\end{align}
where the bar operation is as defined in Eq. (\ref{eq:bar-op}).
\end{lemma}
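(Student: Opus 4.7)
The plan is to prove this coordinate by coordinate, leveraging the rank computation already established in Lemma~\ref{lem:partition_of_product}. The key observation is that under the hypothesis $\ell(\av)<\ell(\bv)$, the bar operation acts on the same pair of coordinates whether it is applied to $\bv$ or to the product $\av\bv$.

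First I would apply Lemma~\ref{lem:partition_of_product}: since $\ell(\av)\neq\ell(\bv)$, we have
\[
\ell(\av\bv)=\max\{\ell(\av),\ell(\bv)\}=\ell(\bv)=:l.
\]
Thus both $\overline{\av\bv}$ and $\bar\bv$ are obtained by negating exactly the two coordinates at positions $l$ and $-l$, and leaving every other coordinate fixed (per the definition \eqref{eq:bar-op} of the bar operation).

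Next I would check the identity entry by entry. At positions $\pm l$, the left side gives $(\overline{\av\bv})_{\pm l}=-(\av\bv)_{\pm l}=-a_{\pm l}b_{\pm l}$, while the right side gives $(\av\bar\bv)_{\pm l}=a_{\pm l}\bar b_{\pm l}=-a_{\pm l}b_{\pm l}$; these agree. At positions $\pm j$ with $j\neq l$, the left side gives $(\overline{\av\bv})_{\pm j}=(\av\bv)_{\pm j}=a_{\pm j}b_{\pm j}$, while the right side gives $(\av\bar\bv)_{\pm j}=a_{\pm j}b_{\pm j}$; these also agree. Since every coordinate matches, $\overline{\av\bv}=\av\bar\bv$.

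There is no real obstacle here: once Lemma~\ref{lem:partition_of_product} locates the rank of $\av\bv$, the identity is an immediate bookkeeping check that the flip on the $\pm l$ coordinates can be pushed from outside the product onto the $\bv$ factor. The only point that requires the hypothesis $\ell(\av)<\ell(\bv)$ is the identification $\ell(\av\bv)=\ell(\bv)$; the conclusion would fail if $\ell(\av)=\ell(\bv)$, since then $\ell(\av\bv)<\ell(\bv)$ and the bar operations act on different index pairs.
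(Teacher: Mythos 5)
Your proof is correct and follows essentially the same approach as the paper: invoke Lemma~\ref{lem:partition_of_product} to conclude $\ell(\av\bv)=\ell(\bv)$, and then verify the identity coordinate by coordinate, splitting into the cases $i\in\{\pm\ell(\bv)\}$ and $i\notin\{\pm\ell(\bv)\}$. The closing remark about why the hypothesis $\ell(\av)<\ell(\bv)$ is needed is a nice addition, but the core argument is the paper's own.
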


\begin{proof}
Since $\ell(\av) < \ell(\bv)$, Lemma \ref{lem:partition_of_product} implies that $\ell(\av\bv)= \ell(\bv)$.
Now let us look at the $i$-th element of $\overline{\av\bv}$ for $i \in \{ \pm 1, \pm 2, \ldots, \pm p \}$. If $i \not \in \{ \pm \ell(\bv)\} = \{\pm \ell(\av\bv)\}$, we have
\begin{align}
    (\overline{\av\bv})_i = (\av\bv)_i = a_i b_i = a_i \bar{b}_i = (\av\bar\bv)_i.
\end{align}
Alternatively, if $i \in \{ \pm \ell(\bv)\} = \{\pm \ell(\av\bv)\}$, we have
\begin{align}
    (\overline{\av\bv})_i = -(\av\bv)_i = -a_i b_i = a_i \bar{b}_i = (\av\bar\bv)_i.
\end{align}
Thus Eq.~\eqref{eq:bar_ab_separates} holds.
\end{proof}

\begin{lemma}\label{lem:delta_zero}
For any $\av, \bv \in A$ such that $\ell(\av) \ge \ell(\bv)$, we have
\begin{align}
    \Delta_{\av,\bv} = 0,
\end{align}
where $\Delta_{\av,\bv}$ is as defined in Eq. (\ref{eqn:definition_Delta}) with any even function $h : \R \to \R$.
\end{lemma}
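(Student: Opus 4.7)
\textbf{Proof proposal for Lemma \ref{lem:delta_zero}.} Since $h$ is even, it suffices to show that $\Phi_{\bar{\av}\bv} = -\Phi_{\av\bv}$ whenever $\ell(\av) \ge \ell(\bv)$. The trivial case $\ell(\av) = 0$ forces $\ell(\bv) = 0$ as well, in which case by convention $\bar\av = \av$ and the claim is immediate. So I will assume $l := \ell(\av) \ge 1$ and let $m := \ell(\bv) \le l$.

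Set $\cv := \av\bv$ and $\cv' := \bar{\av}\bv$. Since the bar operation flips $\av$ only in coordinates $\pm l$, we have $c'_j = c_j$ for $|j| \ne l$ and $c'_{\pm l} = -c_{\pm l}$. The key observation is that the hypothesis $\ell(\av)\ge \ell(\bv)$ forces $c_j = c_{-j}$ for every $j > l$ (because $a_j=a_{-j}$ and $b_j=b_{-j}$ for $j>l$), and in the boundary subcase $m=l$ we additionally have $c_l = a_l b_l = (-a_{-l})(-b_{-l}) = c_{-l}$. The same identities hold for $\cv'$ at every index $j > l$, and in the subcase $m=l$ also at $j=l$ (flipping both $c_l$ and $c_{-l}$ preserves their equality).

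Now recall $\Phi_\cv = \sum_{r=1}^p \gamma_r\bigl(c_r c_{r+1}\cdots c_p - c_{-p}\cdots c_{-r-1}c_{-r}\bigr)$. For every $r$ such that $c_j = c_{-j}$ for all $j\ge r$, the product $c_r c_{r+1}\cdots c_p$ equals $c_{-r}c_{-r-1}\cdots c_{-p} = c_{-p}\cdots c_{-r}$, so the $r$-th term of $\Phi_\cv$ vanishes. By the observation above, this kills all terms with $r \ge l+1$ in the subcase $m<l$, and all terms with $r \ge l$ in the subcase $m=l$. The same reduction applies to $\Phi_{\cv'}$ by the same indices.

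For the surviving terms (those with $r\le l$ when $m<l$, or $r\le l-1$ when $m=l$), the products $c_r\cdots c_p$ and $c_{-p}\cdots c_{-r}$ each contain exactly one factor from position $\pm l$. Passing from $\cv$ to $\cv'$ flips $c_{\pm l}$, so each such product is negated, and consequently every surviving term of $\Phi$ flips sign. This yields $\Phi_{\cv'} = -\Phi_{\cv}$, and the lemma follows from the evenness of $h$. The only mildly delicate point is the bookkeeping that separates the subcases $m<l$ and $m=l$ (the latter requires one extra index to be killed by the symmetry), but both are handled uniformly by the single statement that $c_j=c_{-j}$ for all $j$ strictly greater than $\max(m, l-\mathbf{1}_{m=l})$, so I do not anticipate any substantive obstacle.
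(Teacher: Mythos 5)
Your proof is correct and complete, but it takes a different route from the paper: the paper discharges this lemma in one line by citing \cite{farhi2019quantum} (Lemma~1) for the identity $\Phi_{\bar\av\bv}=-\Phi_{\av\bv}$, whereas you reprove that identity from scratch. What you buy is a self-contained argument; what the paper buys is brevity. The content of your argument is essentially the same computation hidden behind the citation.

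The case analysis in the body is sound: you correctly observe that $c_j=c_{-j}$ for $j>l$ (and also at $j=l$ when $m=l$), correctly identify that the $r$-th term of $\Phi_\cv$ vanishes whenever $c_j=c_{-j}$ holds for all $j\ge r$, and correctly note that each surviving product contains exactly one factor from position $\pm l$, so passing from $\cv$ to $\cv'$ flips the sign of each surviving term. Combined with the fact that vanishing terms trivially satisfy $0=-0$, this gives $\Phi_{\cv'}=-\Phi_\cv$, and evenness of $h$ finishes.

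One small wrinkle: your closing ``uniformization'' formula is off. You claim $c_j=c_{-j}$ for all $j$ strictly greater than $\max(m, l-\mathbf{1}_{m=l})$, but since $m\le l$ always, $\max(m, l-\mathbf{1}_{m=l})=l$ in \emph{both} subcases, which is the wrong threshold when $m=l$ (you need $j\ge l$ there, i.e., threshold $l-1$). The correct uniform threshold is simply $l-\mathbf{1}_{m=l}$ without the $\max$. This is only a flaw in the summary remark; the two-case argument preceding it is correct, so the proof stands.
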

\begin{proof}
This is a generalization of \cite[Lemma 1]{farhi2019quantum}. In the proof of that lemma, it is shown that $\Phi_{\av\bv} = - \Phi_{\bar\av\bv}$ if $\ell(\av) \ge \ell(\bv)$. Since $h(\cdot)$ is even, it follows that $\Delta_{\av,\bv}=0$.
\end{proof}

\begin{lemma}\label{lem:Delta_iden_bar_second_elem}
For any $\av, \bv \in A$, we have
\begin{align}\label{eqn:Delta_iden_in_lemma}
\Delta_{\av,\bv} = \Delta_{\av,\bar\bv}, 
\end{align}
where $\Delta_{\av,\bv}$ is as defined in Eq. (\ref{eqn:definition_Delta})  with any even function $h : \R \to \R$.
\end{lemma}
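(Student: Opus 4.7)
The plan is to reduce everything to the single identity $\Phi_{\bar\cv} = -\Phi_\cv$ for all $\cv\in A$, which follows by specializing the relation ``$\Phi_{\av\bv} = -\Phi_{\bar\av\bv}$ whenever $\ell(\av)\ge\ell(\bv)$'' (established in the proof of Lemma~\ref{lem:delta_zero}) to $\bv = \vect{1}\in A_0$. Combined with the evenness of $h$, this lets me absorb any bar into $\Phi$ at the level of $h(\Phi_\cdot)$, and the claimed identity $\Delta_{\av,\bv} = \Delta_{\av,\bar\bv}$ will then follow by a short case analysis.

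First I would record that the bar operation preserves rank: $\ell(\bar\bv) = \ell(\bv)$ for every $\bv \in A$. For $\bv\in A_0$ this is vacuous since $\bar\bv = \bv$; for $\bv \in B_\ell$ with $\ell\ge 1$, flipping positions $\pm\ell$ leaves $b_r = b_{-r}$ unchanged for $r > \ell$ and flips both $b_\ell$ and $b_{-\ell}$ simultaneously, so $\bar b_\ell = -b_\ell \ne -b_{-\ell} = \bar b_{-\ell}$ and no higher disagreement is created, giving $\ell(\bar\bv) = \ell$. With this in hand I split into two cases. If $\ell(\av) \ge \ell(\bv)$, then also $\ell(\av) \ge \ell(\bar\bv)$, so Lemma~\ref{lem:delta_zero} gives $\Delta_{\av,\bv} = 0 = \Delta_{\av,\bar\bv}$ and the identity is trivial.

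In the remaining case $\ell(\av) < \ell(\bv)$, I would apply Lemma~\ref{lem:Delta_iden_separate_bar} twice: once as stated to get $\overline{\av\bv} = \av\bar\bv$, and once with $\av$ replaced by $\bar\av$ (which has the same rank as $\av$) to get $\overline{\bar\av\bv} = \bar\av\bar\bv$. Combining these with $\Phi_{\bar\cv} = -\Phi_\cv$ applied to $\cv = \av\bv$ and $\cv = \bar\av\bv$ yields $\Phi_{\av\bar\bv} = -\Phi_{\av\bv}$ and $\Phi_{\bar\av\bar\bv} = -\Phi_{\bar\av\bv}$. Since $h$ is even, both sign flips disappear inside $h$, so $h(\Phi_{\av\bar\bv}) = h(\Phi_{\av\bv})$ and $h(\Phi_{\bar\av\bar\bv}) = h(\Phi_{\bar\av\bv})$. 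Subtracting and dividing by two gives the desired identity \eqref{eqn:Delta_iden_in_lemma}.

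There is no serious obstacle once the ingredients from Lemmas~\ref{lem:partition_of_product}, \ref{lem:Delta_iden_separate_bar}, and \ref{lem:delta_zero} are in place; the substantive content is the identity $\Phi_{\bar\cv} = -\Phi_\cv$, and the rest is bookkeeping. The only place that requires care is ensuring that the rank-preservation of the bar operation permits both invocations of Lemma~\ref{lem:Delta_iden_separate_bar} in the nontrivial case.
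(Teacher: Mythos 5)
Your proof is correct, but it takes a genuinely different route from the paper's. In the nontrivial case $\ell(\av) < \ell(\bv)$, the paper never invokes Lemma~\ref{lem:Delta_iden_separate_bar}: it simply rewrites the four-term combination $\Delta_{\av,\bv} - \Delta_{\av,\bar\bv}$ using commutativity of the bit-wise product, recognizes the result as $\Delta_{\bv,\av} - \Delta_{\bv,\bar\av}$, and kills both of those by Lemma~\ref{lem:delta_zero} (since $\ell(\bv) \ge \ell(\av) = \ell(\bar\av)$). You instead pull in Lemma~\ref{lem:Delta_iden_separate_bar} together with the global identity $\Phi_{\bar\cv} = -\Phi_\cv$ (correctly deduced by specializing the $\Phi_{\av\bv} = -\Phi_{\bar\av\bv}$ relation to $\bv = \vect{1}\in A_0$; note that for $\cv\in A_0$ this identity reads $\Phi_\cv = -\Phi_\cv$, which is consistent since $\Phi_\cv = 0$ there) and the evenness of $h$ to establish the stronger fact that $h(\Phi_{\av\bar\bv}) = h(\Phi_{\av\bv})$ and $h(\Phi_{\bar\av\bar\bv}) = h(\Phi_{\bar\av\bv})$ separately, so the equality $\Delta_{\av,\bar\bv} = \Delta_{\av,\bv}$ holds term by term rather than by cancellation in a sum. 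Your version is a bit more explicit about the underlying mechanism (barring negates $\Phi$, $h$ is even) at the cost of one extra lemma invocation; the paper's version is shorter by exploiting the $\av\leftrightarrow\bv$ symmetry, but somewhat less illuminating about why the identity holds. Both are sound.
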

\begin{proof}
If $\ell(\av) \ge \ell(\bv)$, Eq. (\ref{eqn:Delta_iden_in_lemma}) is trivially true by Lemma~\ref{lem:delta_zero} and noting $\ell(\bv)=\ell(\bar\bv)$.

If $\ell(\av) < \ell(\bv)$, by definition of $\Delta_{\av,\bv}$, we have
\[
\begin{aligned}
    \Delta_{\av,\bv} - \Delta_{\av,\bar\bv} &= \frac{1}{2} \Big[ h(\Phi_{\bar\av\bv}) - h(\Phi_{\av\bv}) - h(\Phi_{\bar\av\bar\bv}) + h(\Phi_{\av\bar\bv}) \Big]\\
    &= \frac{1}{2} \Big[h(\Phi_{\bar\bv\av}) - h(\Phi_{\bv\av}) - h(\Phi_{\bar\bv\bar\av}) + h(\Phi_{\bv\bar\av}) \Big] \\
    &= \Delta_{\bv,\av} - \Delta_{\bv,\bar\av} =0,
\end{aligned}
\]
where we used the fact that $\Phi_{\av \bv} = \Phi_{\bv \av}$ by commutativity of bit-wise products, and the last equality is by Lemma~\ref{lem:delta_zero} and by $\ell(\av)=\ell(\bar \av) < \ell(\bv)$. This completes the proof.
\end{proof}

\begin{lemma}\label{lem:chi-vanish}
Recall the definition of $\chi_h$ as defined in Eq. (\ref{eq:chi_def}) in which $h$ is an even function. For any $k \ge 1$ and $\av, \cv_1, \ldots, \cv_{k} \in A$ such that $\max\{\ell(\cv_1), \ldots, \ell(\cv_k)\} \ge \ell(\av)$, we have
\begin{align}
    \chi_{h}(\av;\cv_1,\ldots,\cv_k) = 0.
\end{align}
In other words, $\chi_{h}(\av;\cv_1,\ldots,\cv_k)\neq 0$ only if $\max\{\ell(\cv_1),\ldots,\ell(\cv_k) \} < \ell(\av)$ or $k = 0$.
\end{lemma}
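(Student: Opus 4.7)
The plan is to reduce the lemma to Lemma~\ref{lem:delta_zero} by isolating the index $j$ realizing the maximum of $\ell(\cv_1), \ldots, \ell(\cv_k)$. First note that since bit-wise product is commutative, $\Phi_{\av \dv_1 \cdots \dv_k}$ depends only on the unordered collection $\{\dv_1, \ldots, \dv_k\}$, so $\chi_h(\av; \cv_1, \ldots, \cv_k)$ is symmetric under permutations of the tuple $(\cv_1, \ldots, \cv_k)$. Permuting if necessary, I would therefore assume $\ell(\cv_k) = \max_{1 \le j \le k} \ell(\cv_j) \ge \ell(\av)$.

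Next, I would factor out the sum over $\dv_k$ and write
\begin{equation*}
\chi_h(\av; \cv_1, \ldots, \cv_k) = \sum_{\dv_1, \ldots, \dv_{k-1}} (-1)^{|\{j < k\,:\, \dv_j = \bar{\cv}_j\}|} \bigl[h(\Phi_{\ev\cv_k}) - h(\Phi_{\ev\bar{\cv}_k})\bigr],
\end{equation*}
where $\ev := \av \dv_1 \cdots \dv_{k-1}$ and each $\dv_j$ ranges over $\{\cv_j, \bar{\cv}_j\}$. Using commutativity and the definition \eqref{eqn:definition_Delta}, the bracket equals $-2\Delta_{\cv_k, \ev}$, so it suffices to show $\Delta_{\cv_k, \ev} = 0$ for every admissible choice of $(\dv_1, \ldots, \dv_{k-1})$.

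To see this, observe that the bar operation \eqref{eq:bar-op} preserves rank, i.e. $\ell(\bar{\cv}_j) = \ell(\cv_j)$, so $\ell(\dv_j) = \ell(\cv_j)$ in every term. Then iterating the general inequality $\ell(\xv\yv) \le \max\{\ell(\xv), \ell(\yv)\}$ from \eqref{eq:ell_inequality_simple} in Lemma~\ref{lem:partition_of_product} yields
\begin{equation*}
\ell(\ev) \le \max\bigl\{\ell(\av),\, \ell(\cv_1), \ldots, \ell(\cv_{k-1})\bigr\} \le \ell(\cv_k),
\end{equation*}
where the last inequality uses both the maximality of $\ell(\cv_k)$ among the $\cv_j$'s and the hypothesis $\ell(\cv_k) \ge \ell(\av)$. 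Lemma~\ref{lem:delta_zero} then gives $\Delta_{\cv_k, \ev} = 0$, completing the argument.

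I do not expect a real obstacle here: the argument is almost purely bookkeeping on the rank function, once one spots that the correct vertex to peel off in the telescoping sum is the one of maximal rank. A minor subtlety to check is the case when some $\cv_j \in A_0$, where $\bar{\cv}_j = \cv_j$ and the ``sum over two choices'' appears to double-count, but this is harmless because then the corresponding bracket $h(\Phi_{\ev \cv_k}) - h(\Phi_{\ev \bar{\cv}_k})$ either vanishes trivially (if $\cv_k \in A_0$) or is handled by the same rank inequality as above.
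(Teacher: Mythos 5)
Your proof is correct and follows essentially the same route as the paper: isolate the index of maximal rank, rewrite the resulting difference as $-2\Delta_{\cv_M,\,\ev}$, bound $\ell(\ev)$ by $\ell(\cv_M)$ via the rank inequality of Lemma~\ref{lem:partition_of_product}, and conclude with Lemma~\ref{lem:delta_zero}. The paper selects an arbitrary maximizing index $M$ and factors that coordinate out directly, whereas you invoke the permutation symmetry of $\chi_h$ to relabel it as the last coordinate $k$; this is a purely cosmetic difference and does not change the substance of the argument.
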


\begin{proof}
Let $M \in [k]$ such that $\ell(\cv_M) = \max \{\ell(\cv_1), \ldots, \ell(\cv_k)\}$. Note that we have
\begin{align}
    \chi_{h}(\av;\cv_1,\ldots,\cv_k) =& \sum_{\substack{\dv_j \in \{\cv_j, \bar{\cv}_j\} \\ 1\leq j \leq k}} (-1)^{|\{j \colon \dv_j = \bar{\cv}_j\}|} h(\Phi_{\av \dv_1 \ldots \dv_{k}}) \nonumber \\
    =& \sum_{\substack{\dv_j \in \{\cv_j, \bar{\cv}_j\} \\ j \in [k] \setminus \{M\}}} (-1)^{|\{j \colon \dv_j = \bar{\cv}_j, j\neq M\}|} \big(h(\Phi_{\av \dv_1 \cdots \cv_M \cdots \dv_{k}}) - h(\Phi_{\av \dv_1 \cdots \bar{\cv}_M \cdots \dv_{k}}) \big) \nonumber \\
    =& -2 \sum_{\substack{\dv_j \in \{\cv_j, \bar{\cv}_j\} \\ j \in [k] \setminus \{M\}}} (-1)^{|\{j \colon \dv_j = \bar{\cv}_j, j\neq M\}|}  \Delta_{\cv_M,\av \dv_1 \cdots \dv_{M-1} \dv_{M+1} \cdots \dv_{k}}, \label{eq:sum_deltas_that_are_zero}
\end{align}
where $\Delta_{\av, \bv}$ is as defined in Eq. (\ref{eqn:definition_Delta}). By the assumption that $\ell(\cv_M) \ge \ell(\av)$, we have $\ell(\cv_M) \ge \max\{\ell(\av), \ell(\cv_1), \ldots, \ell(\cv_k)\}$. So by Lemma \ref{lem:partition_of_product} and by the fact that $\dv_j \in \{ \cv_j, \bar \cv_j\}$ so that $\ell(\dv_j) = \ell(\cv_j)$ for $j \in [k]$, we have
\begin{align}
    \ell(\cv_M) \geq \ell(\av \dv_1 \cdots \dv_{M-1} \dv_{M+1} \cdots \dv_{k}).
\end{align}
As a consequence of Lemma \ref{lem:delta_zero}, all terms in Eq.~\eqref{eq:sum_deltas_that_are_zero} are zero. This concludes the proof. 
\end{proof}

\subsection{Proof of Lemma \ref{lem:P-is-well-played}}
\label{sec:proof_lemma_well_played}
Define $\tau_\av = \omega_\av + \omega_{\bar \av}$ and $\eta_\av = \omega_\av - \omega_{\bar \av}$ for $\av \in B = D \sqcup \overline{D}$, and define $\nu_\av = \omega_\av$ for $\av \in A_0$. Note that by the definition of $H_q$ and using the transformation of variables, we have
    \begin{align}
        H_q(\{ \omega_\av\}_{\av}) =&~ \sum_{S \subseteq [q]} \sum_{\substack{\av_j \in B, j\in S \\ \av_j \in A_0, j\not\in S}} h(\Phi_{\av_1\cdots\av_q}) \omega_{\av_1} \omega_{\av_2} \cdots \omega_{\av_q} \nonumber \\
        =&~ \sum_{S \subseteq [q]} \sum_{\substack{\av_j \in A_0 \\ j \not\in S}} \Big(\prod_{j \not\in S} \nu_{\av_j}\Big)\sum_{\substack{\av_j \in B \\ j \in S}} h(\Phi_{\av_1\cdots\av_q}) \prod_{j \in S} \Big(\frac{\tau_{\av_j} + \eta_{\av_j}}{2}\Big) \nonumber \\
        =&~ \sum_{S \subseteq [q]} \frac{1}{2^{|S|}} \sum_{R \subseteq S} \sum_{\substack{\av_j \in A_0 \\ j \not\in S}} \Big(\prod_{j \not\in S} \nu_{\av_j}\Big)\sum_{\substack{\av_j \in B \\ j \in S}} h(\Phi_{\av_1\cdots\av_q}) \Big(\prod_{j \in R} \tau_{\av_j} \Big) \Big( \prod_{j \in S \setminus R} \eta_{\av_j}\Big). \label{eq:sum_ns_to_t_d}
    \end{align}
Moreover, since $\tau_{\av} = \tau_{\bar\av}$ and $\eta_{\av} = - \eta_{\bar\av}$ for $\av \in B = D \sqcup \overline{D}$, then for any subsets $R \subseteq S \subseteq [q]$, we have
    \begin{align} \label{eq:sum_tdn_with_signs}
        &~\sum_{\substack{\av_j \in A_0 \\ j \not\in S}}  \sum_{\substack{\av_j \in B \\ j \in S}}  h(\Phi_{\av_1\cdots\av_q}) \Big(\prod_{j \in R} \tau_{\av_j} \Big) \Big( \prod_{j \in S \setminus R} \eta_{\av_j}\Big) \Big(\prod_{j \not\in S} \nu_{\av_j}\Big) \nonumber \\
        =&~\sum_{\substack{\av_j \in A_0 \\ j \not\in S}} \sum_{\substack{\av_j \in D \\ j \in S}} \sum_{\substack{\bv_j \in \{\av_j, \bar{\av}_j\}, j\in S \\ \bv_j = \av_j, j\not\in S}} (-1)^{|\{ j \in S \setminus R \colon \bv_j = \bar{\av}_j\}|} h(\Phi_{\bv_1\cdots\bv_q}) \Big(\prod_{j \in R} \tau_{\av_j} \Big) \Big( \prod_{j \in S \setminus R} \eta_{\av_j}\Big) \Big(\prod_{j \not\in S} \nu_{\av_j}\Big).
\end{align}
By the definition of $\chi_h$ as in Eq. (\ref{eq:chi_def}), we have
\begin{align}\label{eq:sum_w_signs_to_chi}
    &~\sum_{\substack{\bv_j \in \{\av_j, \bar{\av}_j\}, j\in S \\ \bv_j = \av_j, j\not\in S}}  (-1)^{|\{ j \in S \setminus R \colon \bv_j = \bar{\av}_j\}|} h(\Phi_{\bv_1,\ldots,\bv_q}) \nonumber \\
    =&~\sum_{\bv_j \in \{\av_j, \bar{\av}_j\}, j\in R} \chi_{h}\bigg(\prod_{j\in R}\bv_j \prod_{j \not\in S} \av_j ~;~ \{\av_j \colon j \in S\setminus R\}\bigg),
\end{align}
where we use the convention that $\prod_{j \in R} \bv_j$ is an all-one vector with dimension $2 p $ if $R = \emptyset$. 

Since the labels in $R,S$ are dummy, we can combine Eqs.~\eqref{eq:sum_ns_to_t_d}, \eqref{eq:sum_tdn_with_signs} and \eqref{eq:sum_w_signs_to_chi} to see that
\begin{align}
&~ \cC[H_q] (\{\tau_\av\}_{\av \in D}, \{ \eta_\bv\}_{\bv \in D}, \{ \nu_\cv\}_{\cv \in A_0})  \nonumber \\
=&~ \sum_{0 \leq r \leq s \leq q} 2^{-s} \binom{q}{s} \binom{s}{r} \sum_{\substack{\av_j \in D \colon 1\leq j \leq r \\ \bv_j \in D \colon r+1 \leq j \leq s \\ \cv_j \in A_0 \colon s+1 \leq j\leq q}} \sum_{\dv_j \in \{\av_j, \bar{\av}_j\}} \chi_{h} \Big( \prod_{j=1}^r \dv_j \prod_{j=s+1}^q \cv_j ; \{\bv_j\} \Big) \prod_{j=1}^r \tau_{\av_j} \prod_{j=r+1}^s \eta_{\bv_j} \prod_{j=s+1}^q \nu_{\cv_j}. \label{eq:Phi_2_sum_expanded}
\end{align}

Note that, if $r=0$, then $\max\{\ell(\bv_j)\} \geq 0 = \ell(\prod_{j=s+1}^q \cv_j)$, so the summand is $0$ by Lemma~\ref{lem:chi-vanish}. Otherwise, by Lemma~\ref{lem:chi-vanish}, the summand in Eq.~\eqref{eq:Phi_2_sum_expanded} is nonzero only if
\begin{align}
    \max_{r+1 \le j \le s}\{\ell(\bv_j)\}  <  \ell\big({\textstyle \prod_{j=1}^r \dv_j \prod_{j=s+1}^q \cv_j}\big) \le \max_{1 \le j \le r}\{\ell(\av_j)\},
\end{align}
where the second inequality uses Eq.~\eqref{eq:ell_inequality_simple} in Lemma \ref{lem:partition_of_product}.
This allows us to write
\begin{align}
    & \cC[H_q] (\{\tau_\av\}_{\av \in D}, \{ \eta_\bv\}_{\bv \in D}, \{ \nu_\cv\}_{\cv \in A_0}) \nonumber \\
    =&~ \sum_{1 \leq r \leq s \leq q} 2^{-s} \binom{q}{s} \binom{s}{r}
    \hspace{-15pt}
    \sum_{\substack{\av_j \in D \colon 1\leq j \leq r \\ \bv_j \in D \colon r+1 \leq j \leq s \\ \cv_j \in A_0 \colon s+1 \leq j\leq q \\ \max\{\ell(\bv_j)\} < \max\{\ell(\av_j)\}}} 
    \hspace{-10pt}
    \sum_{\dv_j \in \{\av_j, \bar{\av}_j\}} \chi_{h} \Big( \prod_{j=1}^r \dv_j \prod_{j=s+1}^q \cv_j ; \{\bv_j\} \Big) \prod_{j=1}^r \tau_{\av_j} \prod_{j=r+1}^s \eta_{\bv_j} \prod_{j=s+1}^q \nu_{\cv_j}. \label{eq:Phi_2_sum_expanded_simplified}
\vspace{-5pt}
\end{align}
Note that the canonical representation of $H_q$ in the above equation verifies that $H_q$ satisfies the property of well-played polynomial (c.f. Definition \ref{def:well-played-polynomial-redefine}). This completes the proof of Lemma~\ref{lem:P-is-well-played}. 

\subsection{Computational complexity of evaluating $V_p$}
\label{apx:efficiency-4pqmax}

In this section, we examine the computational complexity of evaluating the formula $V_p(\G,\paramv)$ described in Theorem~\ref{thm:moments}.
This involves solving a self-consistent equation \eqref{eq:Ws_self_consistent},
where the existence of a unique solution is guaranteed by Lemma~\ref{lem:unique_solution_SCE}.
The same lemma also describes an explicit algorithm to solve this equation in time $O(|A|^{\qmax+1})=O(4^{p(\qmax+1)})$,
where $\qmax$ is the degree of the relevant well-played polynomial:
\vspace{-2pt}
\begin{equation}
P(\{\omega_\av\})
    =\sum_{q=1}^\qmax \sum_{\av_1,\ldots,\av_q \in A} g_{q}(c_q\Phi_{\av_1\cdots \av_q}) \omega_{\av_1} \cdots \omega_{\av_q}.
\vspace{-2pt}
\end{equation}
Nevertheless, we can in fact slightly improve the time-complexity bound to $O(4^{p\qmax})$ due to the explicit form of $P$ above.
We explain this in the following lemma:

\begin{lemma}
\label{lem:efficiency-4pqmax}
Let $\G$ be an ensemble describing $C_J$ of the form \eqref{eq:CJ} and satisfying Assumption~\ref{assum:iidJ}.
Then for any $(\paramv)\in\mathbb{R}^{2p}$, the formula in Theorem~\ref{thm:moments} for 
\begin{align}
    V_p(\G,\paramv) =\lim_{n\to\infty} \EV_{J\sim \G(n)}[\bgbbraket{C_J/n}]
\end{align}
can be explicitly evaluated with an $O(4^{p\qmax})$-time iterative procedure using $O(4^p)$ memory.
\end{lemma}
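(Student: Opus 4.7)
The plan is to carry out the iterative procedure of Lemma~\ref{lem:unique_solution_SCE}(d) but to exploit the specific structure of the polynomial $P$ arising in Theorem~\ref{thm:moments} to obtain a tighter complexity bound than the generic $O(|A|^{\qmax+1})$ estimate. Concretely, the polynomial appearing inside the exponential in the SCE \eqref{eq:Ws_self_consistent} is
\[
P(\{\omega_\av\}) = \sum_{q=1}^{\qmax} \sum_{\av_1,\ldots,\av_q \in A} g_q(c_q \Phi_{\av_1 \cdots \av_q})\, \omega_{\av_1} \cdots \omega_{\av_q},
\]
and what one actually needs at each step of the iteration is not $P$ itself but the partial derivative $\partial_{\omega_\xv} P$ evaluated at previously determined $\{W_\bv\}$; each term of the degree-$q$ part of $P$ contributes $q$ monomials of degree $q-1$ to $\partial_{\omega_\xv} P$, so evaluating it for one fixed $\xv$ costs only $\sum_{q=1}^{\qmax} |A|^{q-1} = O(|A|^{\qmax - 1})$ operations rather than $O(|A|^{\qmax})$.

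First I would invoke Lemma~\ref{lem:unique_solution_SCE} to reduce the computation of $\{W_\av\}_{\av \in A}$ to (i) setting $W_\cv = Q_\cv$ for $\cv \in A_0$, (ii) setting $W_{\bar \xv} = -W_\xv$ for $\xv \in D$, and (iii) determining $W_\xv$ for $\xv \in D$ one by one in the ascending order $\prec$ of Definition~\ref{def:ordering_full}, using the natural-form recursion
\[
W_\xv = Q_\xv \exp\!\Big[\sum_{q=1}^{\qmax} q \sum_{\bv_1,\ldots,\bv_{q-1} \in A} g_q(c_q \Phi_{\xv \bv_1 \cdots \bv_{q-1}})\, W_{\bv_1} \cdots W_{\bv_{q-1}}\Big],
\]
which, thanks to the well-playedness established in Lemma~\ref{lem:P-is-well-played}, is well-defined sequentially. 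Next I would count the arithmetic operations: for each fixed $\xv$, the inner sum at level $q$ has $|A|^{q-1}$ summands, so evaluating the full exponent costs $O(|A|^{\qmax - 1}) = O(4^{p(\qmax-1)})$. Iterating over the $O(|A|) = O(4^p)$ values of $\xv \in D$ yields the claimed $O(4^{p\qmax})$ total time for producing the solution of the SCE.

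Once $\{W_\av\}$ is in hand, I would evaluate Eq.~\eqref{eq:Vp-formula} directly; this is a sum with $\sum_{q=1}^{\qmax} |A|^q = O(4^{p\qmax})$ terms, matching the bound. For the memory claim, I would simply observe that at any point in the procedure one only needs to retain the $|A| = 4^p$ numbers $\{W_\av\}$ (plus $O(1)$ scratch space for the current partial sums), since the values $g_q(c_q \Phi_{\av_1 \cdots \av_q})$ are recomputed on the fly from the explicit formulas \eqref{eq:Phi_def} rather than stored in a table. I do not foresee a genuine obstacle; the only step requiring any care is verifying that the factor-of-$|A|$ improvement over Lemma~\ref{lem:unique_solution_SCE}(d) is legitimate, which boils down to the elementary observation that the exponent in the SCE \eqref{eq:Ws_self_consistent} is a polynomial in $\{W_\bv\}$ of degree at most $\qmax - 1$ (one less than $P$ itself).
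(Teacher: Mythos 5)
Your proposal is correct and matches the paper's own argument essentially step for step: both exploit that the SCE only needs the partial derivative $\partial_{\omega_\xv}P$ (degree $\qmax-1$ in $\{W_\bv\}$), iterate over $\xv\in D$ in the ascending order from Definition~\ref{def:ordering_full} using Lemma~\ref{lem:unique_solution_SCE}, obtain $O(|A|^{\qmax-1})$ work per $\xv$ and $O(|A|^{\qmax})$ total, then evaluate \eqref{eq:Vp-formula} in another $O(|A|^{\qmax})$ operations with $O(|A|)$ storage. The only cosmetic difference is that the paper writes the inner sum already restricted to $\bv_1,\ldots,\bv_{q-1}\prec\xv$, whereas you state the unrestricted sum and invoke well-playedness to justify the sequential evaluation; both are the same argument.
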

\begin{proof}
We first write down the self-consistent equation explicitly as (c.f. Eq.~\eqref{eq:Ws_self_consistent}) 
\begin{equation}
W_\av = Q_\av \exp[\partial_{\omega_\av} P(\{W_\bv\})] = Q_\av \exp\Big[\sum_{q=1}^\qmax q \sum_{\bv_1,\ldots,\bv_{q-1} \in A} g_{q}\big(c_q \Phi_{\av \bv_1\cdots\bv_{q-1}}\big) W_{\bv_1} \cdots W_{\bv_{q-1}} \Big].
\end{equation}
We then iteratively solve for the variables $\{W_\av: \av\in A\}$ in the ascending order of $A= \{\pm 1\}^{2p}$ given in Definition~\ref{def:ordering_full}.
The lowest ordered elements are those in $A_0$.
But for every $\av\in A_0$, we have $W_\av = Q_\av$ as shown in Lemma~\ref{lem:unique_solution_SCE}, with $Q_\av$ now explicitly given in Eq.~\eqref{eq:Qdef}.
So it remains to solve for $W_\av$ for every $\av\in D$ in the ascending order, and obtain $W_{\bar\av}=-W_\av$ for its partner $\bar\av\in\barD$.
As explained in Eq.~\eqref{eq:dPdWx} of the proof of Lemma~\ref{lem:unique_solution_SCE}(b), $\partial_{\omega_\av} P(\{W_\bv\})$ only depends on $\{W_\bv: \bv \prec \av\}$.
Hence, this allows us to write for every $\av \in D$
\begin{equation}
W_\av = Q_\av \exp\Big[\sum_{q=1}^\qmax q \sum_{\bv_1,\ldots,\bv_{q-1} \prec \av} g_{q}\big(c_q \Phi_{\av \bv_1\cdots\bv_{q-1}}\big) W_{\bv_1} \cdots W_{\bv_{q-1}} \Big].
\vspace{-3pt}
\end{equation}
Here, each $W_\av$ is obtained from evaluating a sum over $O(|A|^{\qmax-1})$ terms, regardless of $g_q(\lambda)$.
Repeating this $|D|=O(|A|)$ times, we obtain the full solution $\{W_\av\}_{\av \in D}$ after $O(|A|^{\qmax})$ time.

Finally, the formula \eqref{eq:Vp-formula} of $V_p(\G,\paramv)$ involves a sum over $O(|A|^{\qmax})$ terms.
Hence, the total time complexity to get $V_p$ is $O(|A|^{\qmax})$, and the total memory complexity is $O(|A|)$ for storing $\{W_\av\}$.
Plugging in $|A|=4^p$, we get the complexity stated in the lemma.
\end{proof}


\newcommand{\pure}{\textnormal{pure}}

\section{Proof of Theorem~\ref{thm:dense_sparse_agreement} (Universality)}
\label{apx:ER}

In this appendix, we prove Theorem~\ref{thm:dense_sparse_agreement}, which is the statement that the QAOA at constant levels have a certain universality property: its performance for many models of random COPs agrees with the one for $\G_q$ asymptotically.

Recall $\G_q$ is the pure $q$-spin model with all-to-all random Gaussian couplings.
Let $\G_{d,q}(n)$ be a generic ensemble of COPs as described in the statement of Theorem~\ref{thm:dense_sparse_agreement}, where $d$ is a generic parameter which may or may not be the graph vertex degree.
We first derive the expressions of $V_p(\G_q, \paramv)$ and $V_p(\G_{d,q},\paramv)$ using Theorem \ref{thm:moments}. 

\paragraph{The pure $q$-spin model $\G_q$. }
The cost function of the $\G_q$ model is 
\[
C_{J}(\zv) = \sum_{i_1,i_2,\cdots, i_q=1}^n J_{i_1, i_2,\ldots, i_q} z_{i_1}z_{i_2}\cdots z_{i_q},
\]
where $J_{i_1, i_2,\ldots, i_q} \sim_{iid} \mathcal{N}(0, 1/n^{q-1})$. Note that the log characteristic function of $J_{i_1, \ldots, i_q}$ gives
\begin{equation}
    h_{q,n}(\lambda) = n^{q-1} \log \EV_J[e^{iJ_{1,2,\ldots,q} \lambda}] = -\frac{\lambda^2}{2},
\end{equation}
so that we have
\begin{equation}
    h_q(\lambda) = \lim_{n \to \infty} h_{q,n}(\lambda) = -\lambda^2/2,  \qquad \text{and} \qquad h'_q(\lambda) = -\lambda.
\end{equation}
This verifies that Assumption~\ref{assum:iidJ} is satisfied, so we can apply Theorem~\ref{thm:moments}.
Using formula \eqref{eq:Vp-formula} with $c_q=1$ and $c_{q'}=0$ for $q'\neq q$, we get
\begin{equation}\label{eq:Vp_pure}
    V_p(\G_q, \paramv) = i \sum_{\av_1,\ldots,\av_q \in A} \Phi_{\av_1\cdots \av_q} W_{\av_1}^\pure \cdots W_{\av_q}^\pure,
\end{equation}
where $\{W_\av^\pure\}_{\av\in A}$ is given as the unique solution to the following system of equations:
\begin{equation}\label{eq:W_pure_self_consistent}
W_\av^\pure = Q_\av \exp\Big[{-}\frac{q}{2} \sum_{\bv_1,\ldots,\bv_{q-1} \in A} \Phi_{\av \bv_1\cdots\bv_{q-1}}^2 W_{\bv_1}^\pure \cdots W_{\bv_{q-1}}^\pure \Big]. 
\end{equation}

\paragraph{A generic model $\G_{d,q}$. }
Since $\G_{d,q}$ satisfies Assumption~\ref{assum:iidJ}, we may apply Theorem~\ref{thm:moments}.
Using the formula \eqref{eq:Vp-formula} with $c_q=1$ and $c_{q'}=0$ for $q'\neq q$, we have
\begin{equation}
    V_p(\G_{d,q}, \paramv) = -i \sum_{\av_1,\ldots,\av_q \in A} g_{q}^{(d)\prime}
(\Phi_{\av_1\cdots \av_q}) W_{\av_1,d} \cdots W_{\av_q,d},
\end{equation}
where we defined
\begin{equation}
    g_{q}^{(d)}(\lambda) = \lim_{n \to \infty} g_{q,n}^{(d)}(\lambda),
\end{equation}
and $\{W_{\av,d}\}_{\av\in A}$ is given as the unique solution to the following system of equations:
\begin{equation} \label{eq:SCE-ER-d}
W_{\av,d} = Q_\av \exp\Big[\sum_{\bv_1,\ldots,\bv_{q-1} \in A} q g_{q}^{(d)}(\Phi_{\av \bv_1\cdots\bv_{q-1}})
W_{\bv_1,d} \cdots W_{\bv_{q-1},d} \Big].
\end{equation}

\paragraph{Proof that $\lim_{d \to \infty}V_p(\G_{d,q}, \paramv) = V_p(\G_{q}, \paramv)$. } We first assert that the following limit exists:
\begin{align}
\lim_{d\to\infty} W_{\av,d} = \tilde W_\av,
\end{align}
where $\{ W_{\av, d} \}_{\av \in A}$ is the solution to Eq. (\ref{eq:SCE-ER-d}), and $\{ \tilde W_\av \}_{\av\in A}$ are some complex numbers. 

This can be shown inductively in the ordering on $D$.
Let $k=1,2,\ldots, |D|$ be elements of $D$ in its order.
As explained in Eq.~\eqref{eq:dPdWx} of the proof of Lemma~\ref{lem:unique_solution_SCE}(b), the polynomial in the exponential of \eqref{eq:SCE-ER-d}, which corresponds to $\partial_{\omega_\av} P(\{W_{\bv,d}\}_{\bv\in A})$, in fact only depends on $\{W_{\bv,d}: \bv \prec \av\}$.
So let us rewrite \eqref{eq:SCE-ER-d} as
\begin{equation}
W_{k,d} = Q_k \exp\Big[f_{k, d}(\{W_{j,d}: j \prec k\}) \Big],
\end{equation}
where $f_{k, d}$ can be taken to be the polynomial upstairs in \eqref{eq:SCE-ER-d} with $W_{j,d}=0$ for $j\succeq k$.
Explicitly, this is
\begin{equation}
    f_{k, d}(\{W_{j,d}: j \prec k\}) = \sum_{\bv_1,\ldots,\bv_{q-1} \in A, ~ \bv_j \prec k} q g_{q,d}(\Phi_{\av \bv_1\cdots\bv_{q-1}})
    W_{\bv_1,d} \cdots W_{\bv_{q-1},d} .
\end{equation}
In particular $f_1 = 0$, so $W_{1,d} = Q_1$ and the limit trivially exists.
Then for any $k\ge 2$ we have
\begin{align}
\lim_{d\to\infty} W_{k,d} = \lim_{d\to\infty} Q_k \exp\Big[f_{k, d}(\{W_{j,d}: j \prec k\}) \Big] = Q_k \exp\Big[f_k(\{\tilde W_{j}: j \prec k\}) \Big] 
\end{align}
which exists by the fact that 
\begin{equation}
\lim_{d \to \infty} f_{k,d} =
-\frac{q}{2}\sum_{{\bv_1,\ldots,\bv_{q-1} \in A,~ \bv_j \prec k}} \Phi_{\av \bv_1\cdots\bv_{q-1}}^2 \tilde{W}_{\bv_1} \cdots \tilde{W}_{\bv_{q-1}} =: f_k.
\end{equation}
This follows from the assumption that $\lim_{d\to\infty} g_{q}^{(d)}(\lambda) = -\lambda^2/2$ in Eq.~\eqref{eq:universality_assumption} and the inductive hypothesis that $\lim_{d\to\infty} W_{j,d} = \tilde W_j$ for $j\prec k$. 

Knowing that $\lim_{d\to\infty} W_{\av,d} = \tilde W_\av$ exists, we can take the limit on both sides of \eqref{eq:SCE-ER-d} to get
\begin{align}
\tilde W_{\av} 
 &= Q_\av \exp\Big[{-}\frac{q}{2}\sum_{\bv_1,\ldots,\bv_{q-1} \in A} \Phi_{\av \bv_1\cdots\bv_{q-1}}^2  \tilde W_{\bv_1} \cdots \tilde W_{\bv_{q-1}} \Big].
\end{align}
And since $\lim_{d\to\infty} g_{q}^{(d)\prime}(\lambda) = -\lambda$ due to the assumption in Eq.~\eqref{eq:universality_assumption}, 
we get
\begin{align}
    \lim_{d\to\infty} V_p(\G_{d,q}, \paramv) = i\sum_{\av_1,\ldots,\av_q \in A} \Phi_{\av_1\cdots \av_q}
    	 \tilde W_{\av_1} \cdots \tilde W_{\av_q}.
\end{align}
Since $\tilde W_\av$ and $W_\av^\pure$ are given as the unique solution to the same system of equations, we have
\begin{equation} \label{eq:equivalence}
   \tilde W_\av = W_\av^\pure 
    \qquad \Longrightarrow \qquad
    \lim_{d\to\infty} V_p(\G_{d,q}, \paramv) = V_p(\G_q, \paramv).
\end{equation}

\paragraph{The sparse \ER model $\G^\ERsub_{d,q}$. } For the ensemble $\G^\ERsub_{d,q}$ described in Section~\ref{sec:review}, we can map it to an equivalent i.i.d. description as follows.
In the original description of the ensemble, we select $m\sim \Poisson(dn)$ directed hyperedges uniformly from the set $[n]^q$ of all ordered $q$-tuples, and every hyperedge (can possibly be identical) is associated an independent random weight $\Unif(\{ \pm 1/\sqrt{d}\})$.
To give an alternative description of the ensemble, we use the Poisson splitting lemma.
Intuitively, this refers to the fact that any Poisson random variable can be decomposed into a sum of independent Poisson variables. More formally,
\begin{lemma}[Poisson splitting lemma]\label{lem:Poisson-splitting}
Let $\lambda > 0$ and $(p_1, \ldots, p_K) \in [0, 1]^K$ with $\sum_{k = 1}^K p_k = 1$. Let $J \sim \Poisson(\lambda)$, and conditional on $J$ we let $w_1, w_2, \ldots, w_J \sim_{iid} {\rm Categorical}(p_1, \ldots, p_K)$. Denote $N_k = \sum_{j = 1}^J 1\{ w_j = k \}$ for each $k \in [K]$. Then we have $(N_1, \ldots, N_K)$ are mutually independent, and $N_k \sim \Poisson(\lambda p_k)$ for $k \in [K]$. 
\end{lemma}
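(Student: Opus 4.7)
The plan is to establish the claim by directly computing the joint probability mass function of $(N_1, \ldots, N_K)$ and verifying that it factorizes into a product of Poisson PMFs with the stated parameters. This is a classical fact, so the proof should be short and elementary.

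First, I would observe that conditional on $J = n$, the random variables $w_1, \ldots, w_n$ are i.i.d. categorical, so the count vector $(N_1, \ldots, N_K)$ has a multinomial distribution with $n$ trials and probabilities $(p_1, \ldots, p_K)$. Then, for any non-negative integers $n_1, \ldots, n_K$ with $n := \sum_{k=1}^K n_k$, I would apply the tower property to write
\[
\PP(N_1 = n_1, \ldots, N_K = n_K)
= \PP(J = n) \cdot \PP(N_1 = n_1, \ldots, N_K = n_K \mid J = n)
= e^{-\lambda} \frac{\lambda^n}{n!} \cdot \binom{n}{n_1, \ldots, n_K} \prod_{k=1}^K p_k^{n_k}.
\]
The multinomial coefficient cancels the $n!$ in the denominator. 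Using $\sum_k p_k = 1$, we have $e^{-\lambda} = \prod_{k=1}^K e^{-\lambda p_k}$ and $\lambda^n = \prod_{k=1}^K \lambda^{n_k}$, so the right-hand side rearranges into
\[
\prod_{k=1}^K \frac{(\lambda p_k)^{n_k} \, e^{-\lambda p_k}}{n_k!},
\]
which is exactly the product of the $\Poisson(\lambda p_k)$ PMFs evaluated at $n_k$. Since the joint PMF factorizes as a product of marginals, each of which is a Poisson PMF, we conclude both that the $N_k$ are mutually independent and that $N_k \sim \Poisson(\lambda p_k)$ for each $k \in [K]$.

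There is no real obstacle here, as the argument is purely a PMF rearrangement. An alternative (and essentially equivalent) route would be to compute the joint probability generating function $\EV[\prod_k z_k^{N_k}]$ via the tower property: conditional on $J$, this equals $(\sum_k p_k z_k)^J$, and then using the Poisson PGF $\EV[s^J] = e^{\lambda(s-1)}$ with $s = \sum_k p_k z_k$ yields $\prod_{k=1}^K e^{\lambda p_k (z_k - 1)}$, the product of Poisson PGFs with parameters $\lambda p_k$. Either approach is direct; I would adopt the PMF factorization as it is the most transparent.
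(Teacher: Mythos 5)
The paper states Lemma~\ref{lem:Poisson-splitting} as a classical fact and does not include a proof of its own, so there is nothing to compare against; the lemma is invoked directly to re-describe the \ER ensemble $\G_{d,q}^\ERsub$ in the i.i.d.\ tensor form. Your proof is correct and is the standard textbook argument: conditioning on $J=n$ gives a multinomial count vector, the tower property produces the joint PMF, the multinomial coefficient cancels the $n!$, and writing $e^{-\lambda}=\prod_k e^{-\lambda p_k}$ and $\lambda^n=\prod_k\lambda^{n_k}$ (valid since $\sum_k p_k=1$ and $n=\sum_k n_k$) factorizes the joint law into the product $\prod_k \Poisson(\lambda p_k)$, establishing both independence and the marginal distributions in one step. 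The alternative PGF computation you sketch is equally valid. This fills a small gap that the paper leaves to the reader.
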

By the Poisson splitting lemma, for a fixed $q$-tuple $(i_1,\ldots,i_q)$, the number of times it is selected as a hyperedge, which we denote as $m_{i_1 \cdots i_q}$, follows the $\Poisson(d/n^{q-1})$ distribution independently for every $(i_1,\ldots,i_q)$. And each time it is selected as an hyperedge, it is equipped with a weight $w^{l}_{i_1 ,\ldots, i_q} \sim_{iid} \Unif(\{\pm 1/\sqrt{d}\})$ for $1 \le l \le m_{i_1, \ldots, i_q}$. Since the weights chosen every time the $q$-tuple $(i_1,\ldots,i_q)$ is selected add up in the cost function, the coupling constant $J_{i_1,\ldots, i_q}$, in the $\G^\ERsub_{d,q}$ model cost function $C_J = \sum_{i_1,i_2,\cdots, i_q=1}^n J_{i_1, i_2,\ldots, i_q} z_{i_1}z_{i_2}\cdots z_{i_q}$, is distributed as $J_{i_1, \ldots, i_q} \sim_{iid} \sum_{l = 1}^{m_{i_1, \ldots, i_q}} w^{l}_{i_1, \ldots, i_q}$ with $m_{i_1, \ldots, i_q} \sim_{iid} \Poisson(d/n^{q-1})$ and $w^{l}_{i_1, \ldots, i_q} \sim_{iid} \Unif(\{ \pm 1/\sqrt{d}\})$.
Here it is understood that $J_{i_1,\ldots,i_q}=0$ if $m_{i_1,\ldots,i_q}=0$. Furthermore, denote $J^+_{i_1, \ldots, i_q} = \sum_{l = 1}^{m_{i_1,\ldots, i_q}} 1\{ w^{l}_{i_1,\ldots, i_q} = + 1/\sqrt{d}\}, J^-_{i_1, \ldots, i_q} = \sum_{l = 1}^{m_{i_1,\ldots, i_q}} 1\{ w^{l}_{i_1, \ldots, i_q} = - 1/\sqrt{d}\}$. Then we have $J_{i_1, \ldots, i_q} = ( J^+_{i_1, \ldots, i_q} - J^-_{i_1, \ldots, i_q}) / \sqrt{d} $. By the Poisson splitting lemma again, we have $J^+_{i_1, \ldots, i_q} = \sum$ $J^+_{i_1, \ldots, i_q}, J^-_{i_1, \ldots, i_q} \sim_{iid} \Poisson(d/(2n^{q-1}))$. 

\vspace{3pt}

To check that this ensemble satisfies Assumption~\ref{assum:iidJ}, we compute the characteristic function of each $q$-body coupling. Let $J^+, J^- \sim \Poisson(d/(2 n^{q-1}))$ and $J = (J^+ - J^-)/ \sqrt{d}$, then
\begin{equation}
\begin{aligned}
    \EV_{J}[e^{iJ \lambda}]  =&~ \EV_{J^+}[e^{i J^+ \lambda/\sqrt{d}}] \cdot \EV_{J^-}[e^{-i J^- \lambda/\sqrt{d}}] \\
    =&~ \exp\Big[ \frac{d}{2 n^{q-1}}(e^{i \lambda/\sqrt{d}} - 1) \Big] \times \exp\Big[ \frac{d}{2 n^{q-1}} (e^{- i\lambda/\sqrt{d}} - 1) \Big]\\
    =&~ \exp\Big[\frac{d}{n^{q-1}}
    \Big( \cos \big(\frac{\lambda}{\sqrt{d}} \big) - 1 \Big)\Big].
\end{aligned}
\end{equation}
Then we have
\begin{equation} \label{eq:ER-gqn}
    g_{q,n}^{(d)}(\lambda) = n^{q-1} \log \EV_{J}[e^{i J \lambda}] = d [\cos(\lambda/\sqrt{d})-1].
\end{equation}
Note $g_{q,n}^{(d)}(\lambda)$ is independent of $q$ and $n$, and is infinitely differentiable.
So the ensemble $\G^\ERsub_{d,q}$ satisfies Assumption~\ref{assum:iidJ}. Moreover, it can be directly checked that
\begin{align}
    \lim_{d \to \infty} \lim_{n \to \infty} \Big(g_{q,n}^{(d)}(\lambda), g_{q,n}^{(d)\prime}(\lambda)\Big) = \Big({-} \frac{\lambda^2}{2}, - \lambda \Big),
\end{align}
so the equivalence in \eqref{eq:equivalence} holds in particular for $\G_{d,q}^\ERsub$. This concludes the proof of Theorem~\ref{thm:dense_sparse_agreement}.

\begin{remark}
We can consider an alternate version of the \ER ensemble, where
\begin{align}
    J_{i_1,\ldots,i_q} = \begin{cases}
        0 & \text{w.p.}\quad 1-d/n^{q-1} \\
        +1/\sqrt{d}& \text{w.p.}\quad d/2n^{q-1} \\
        -1/\sqrt{d} & \text{w.p.}\quad d/2n^{q-1} \\
    \end{cases}.
\end{align}
Then
\begin{equation}
    g_{q,n}^{(d)}(x) = n^{q-1} \log \EV_J [e^{i J_{1,2,\ldots,q} x}] = n^{q-1} \log \Big[1-\frac{d}{n^{q-1}} + \frac{d}{n^{q-1}}\cos(\frac{x}{\sqrt{d}})\Big]
\end{equation}
and
\begin{equation}
    g_q^{(d)}(x) = \lim_{n\to\infty} g_{q,n}^{(d)}(x) = d\big[\cos(\frac{x}{\sqrt{d}})-1\big].
\end{equation}
Since this equals \eqref{eq:ER-gqn} and Assumption~\ref{assum:iidJ} is satisfied, the QAOA yields identical performance on this ensemble as on $\G^\ERsub_{d,q}$ in the $n\to\infty$ limit.
\end{remark}


\section{Proof of Theorem~\ref{thm:agree-regular} (Agreement of $\G_q$ with regular hypergraphs)}
\label{apx:regular}

To prove Theorem~\ref{thm:agree-regular}, we need to match $V_p(\G_q,\paramv)$ derived in \eqref{eq:Vp_pure} for the pure $q$-spin model to the performance of the QAOA for the Max-$q$-XORSAT problem on large-girth regular hypergraphs.
Specifically, we consider a $d$-regular $q$-uniform hypergraph with a set $\cE$ of hyperedges and girth $>2p+1$.
The notion of girth used here is the minimum length of Berge cycles in the hypergraph.
Each hyperedge $(i_1,\ldots,i_q)\in \cE$ is also associated with a weight $J_{i_1,\ldots,i_q} \in \{+1, -1\}$, and a XORSAT clause $z_{i_1} z_{i_2}\cdots z_{i_q} = J_{i_1,\ldots,i_q}$.
The cost function of the Max-$q$-XORSAT problem on this hypergraph can be written as
\begin{equation}
    C_J(\zv) =  \frac{1}{\sqrt{d}} \sum_{(i_1,\ldots,i_q) \in \cE} J_{i_1,\ldots,i_q} z_{i_1} \cdots z_{i_q}.
\end{equation}
This problem is studied in \cite{basso2021quantum}, where an iterative formula is derived to evaluate
\begin{equation}
    \nu_p^{[q]}(\paramv) = \frac{1}{\sqrt{2q}}\lim_{d\to \infty} \frac{d}{|E|} \braket{\gammav, \betav | C_J | \gammav,\betav}.
\end{equation}
This value is shown by the authors to be independent of the graph as well as the signs of the chosen $J_{i_1,\ldots,i_q}$, as a consequence of the fact that the QAOA at level $p$ only sees a hypertree neighborhood on these graphs.

Recall that
Theorem~\ref{thm:agree-regular} is the statement that for any such graph, $\sqrt{2}\nu_p^{[q]}(\sqrt{q}\paramv)=V_p(\G_q,\paramv)$.

\paragraph{Formula of $\nu_p^{[q]}(\paramv)$.}---
We now describe the formula in~\cite[Section 8]{basso2021quantum}.
Let
\begin{equation}
\mathcal{B} = \{(a_1,\ldots a_p, a_0, a_{-p}, \ldots, a_{-1}) : a_i = \pm 1\}    
\end{equation}
be the set of $(2p+1)$-bit strings used in \cite{basso2021quantum}.
Note this set differs from the set $A$ of $2p$-bit string used in the rest of the current paper.
For any $(2p+1)$-bit string $\av \in \mathcal{B}$, let
\begin{align}\label{eq:f_definition}
f(\av) &= \frac{1}{2} \braket{a_1 | e^{i \beta_1 X} | a_2} \cdots \braket{ a_{p-1} | e^{i \beta_{p-1} X} | a_p} \braket{a_p | e^{i \beta_p X} | a_0} \nonumber \\
    &\quad \times \braket{a_{0} | e^{-i \beta_p X} | a_{-p}} \braket{a_{-p} | e^{-i \beta_{p-1} X} | a_{-(p-1)}} \cdots \braket{a_{-2} | e^{-i \beta_1 X} | a_{-1}},
\end{align}
where $\braket{a_1|e^{i\beta X}|a_2} = (\cos\beta)^{(1+a_1 a_2)/2} (i\sin\beta)^{(1-a_1 a_2)/2}$.

Also, let $\Gammav = (\Gamma_1, \ldots, \Gamma_p, \Gamma_0, \Gamma_{-p}, \ldots, \Gamma_{-1}) \in \R^{2p+1}$, where for $1 \leq r \leq p$, we have
\begin{equation} \label{eq:Gamma-def}
    \Gamma_{r} = \gamma_r, \qquad
    \Gamma_0 = 0, \qquad
    \Gamma_{-r} = -\gamma_r.
\end{equation}
Now we define $p+1$ functions $H^{(m)} \colon \mathcal{B} \to \mathbb{C}$ for $0 \leq m \leq p$ as follows. 
Let $H^{(0)} = 1$ and, for $1 \leq m \leq p$,
\begin{equation}\label{eq:Hm_def}
    H^{(m)}(\av)= \exp \bigg[ {-}\frac{1}{2} \sum_{\bv^1,\ldots,\bv^{q-1} \in \mathcal{B}} \big( \Gammav \cdot (\av \bv^1 \cdots \bv^{q-1}) \big)^2 \prod_{i=1}^{q-1} [f(\bv^i) H^{(m-1)}(\bv^i)] \bigg].
\end{equation}
Starting at $m=0$ and going up by $p$ steps we arrive at $H^{(p)}$ which is used to compute
\begin{align}
     \nu_p^{[q]}(\paramv)
     &= \frac{i}{\sqrt{2q}} 
     \sum_{j=-p}^p \Gamma_j 
        \prod_{i=1}^q \Big(
		        \sum_{\av^i \in \mathcal{B}} a^i_0 a^i_j f(\av^i) H^{(p)}(\av^i)
        	\Big) \nonumber \\
&=\frac{i}{\sqrt{2q}} 
     \sum_{\av^1,\cdots,\av^q \in \mathcal{B}} [\Gammav \cdot (\av \bv^1 \cdots \bv^{q-1})] \prod_{i=1}^q [ a_0^i f(\av^i) H^{(p)}(\av^i) ].
     \label{eq:infinite_D_iter_result_general}
\end{align}

In the rest of this appendix, we will rewrite this formula for $\nu_p^{[q]}(\paramv)$ in the language of the present paper and show that it matches (up to a rescaling) the formula of $V_p(\G_q,\paramv)$ for the pure $q$-spin model.
Note we derived the latter explicitly in Eq.~\eqref{eq:Vp_pure} of the previous appendix.
This proof idea is essentially the same as the proof of Theorem 1 in~\cite[Section 6]{basso2021quantum}.

\paragraph{Relating $\{H^{(p)}(\av): \av\in \mathcal{B}\}$ to $\{W_\av: \av\in A\}$.}---
Since $H^{(m)}(\av)$ does not depend on $a_0$ given that $\Gamma_0 = 0$, we will slightly abuse notation below, and let $H^{(m)}(\av)$ take $\av \in A = \{\pm 1\}^{2p}$ or $\av \in \mathcal{B} = \{\pm 1\}^{2p+1}$ as argument.

Furthermore, for any $(2p+1)$-bit string $\av \in \mathcal{B}$, we associate it with a $2p$-bit string $\lo\av \in A$ via $\lo{a}_{\pm {r}} = a_{\pm r} a_{\pm(r+1)}$ for $1\le r\le p-1$, and $\lo{a}_{\pm p} = a_{\pm p} a_0$. More explicitly,
\begin{align}  \label{eq:lo-a-def}
\lo{a}_1 &= a_1 a_2,  &\ldots,&  
&\lo{a}_{p-1} &= a_{p-1} a_p,   &\lo{a}_p &= a_p a_0, \nonumber \\
\lo{a}_{-1} &= a_{-1} a_{-2}, &\ldots,&  
&\lo{a}_{-(p-1)} &= a_{-(p-1)} a_{-p},   &\lo{a}_{-p} &= a_{-p} a_0 .
\end{align}

Using the fact that $\braket{a_1 |e^{i\beta X} |a_2} = \braket{a_1 a_2 | e^{i\beta X} |1}$, we can rewrite $f(\av)$ for any $\av\in \mathcal{B}$ as
\begin{align} \label{eq:f-and-Q}
f(\av) &= \frac{1}{2} \braket{\lo{a}_1 | e^{i \beta_1 X} | 1} \cdots \braket{ \lo{a}_{p-1} | e^{i \beta_{p-1} X} | 1} \braket{\lo{a}_p | e^{i \beta_p B } |1} \nonumber \\
    &\quad \times \braket{1 | e^{-i \beta_p X} | \lo{a}_{-p}} \braket{1 | e^{-i \beta_{p-1} X} | \lo{a}_{-(p-1)}} \cdots \braket{1 | e^{-i \beta_1 X} | \lo{a}_{-1}} \nonumber \\
    &= \frac12 Q_{\lo\av}
\end{align}
where $Q_{\lo\av}$ for $\lo\av \in A$ is as defined in Eq.~\eqref{eq:Qdef}.

We also define a ``star'' operation ($*$) on $\av \in A$ which gives $\av^*\in A $ whose entries are
\begin{align}\label{eq:star_def}
a^*_r=a_r a_{r+1}\cdots a_p
\qquad \text{and} \qquad
a^*_{-r} = a_{-r} a_{-r-1}\cdots a_{-p}
\qquad
\text{for } 1\le r\le p \,.
\end{align}
Take care to note that in the current proof, $*$ always refers to the above star operation and not complex conjugation nor the Kleene star. 

Observe that 
\begin{equation} \label{eq:a-hat-star}
    \lo{a}^*_{\pm r} = \lo{a}_{\pm r} \cdots \lo{a}_{\pm p}  = a_{\pm r} a_0
    \qquad \forall \av\in \mathcal{B}.
\end{equation}
Then, for any $\av\in \mathcal{B}$, we can write $\Phi_{\lo\av}$ defined in Eq.~\eqref{eq:Phi_def} as
\begin{equation} \label{eq:Phi-and-Gamma}
\Phi_{\lo\av} = \sum_{r=1}^p \gamma_r (\lo{a}^*_r - \lo{a}^*_{-r}) = \sum_{r=1}^p \gamma_r (a_r - a_{-r}) a_0 = (\Gammav \cdot \av) a_0 .
\end{equation}
Since $a_0^2=1$, we get $(\Gammav\cdot \av)^2 = \Phi_{\lo\av}^2$.
Also note $\widehat{\av\bv} = \lo\av\lo\bv$, so we can rewrite Eq.~\eqref{eq:Hm_def} as
\begin{equation} \label{eq:Hm-Q-with-half}
    H^{(m)}(\av) = \exp \Big[ -\frac{1}{2} \sum_{\bv^1,\ldots,\bv^{q-1} \in \mathcal{B}} \Phi^2_{\lo\av \lo\bv^1 \cdots \lo\bv^{q-1}} \prod_{i=1}^{q-1} [\frac{1}{2}Q_{\lo\bv^i} H^{(m-1)}(\bv^i)] \Big].
\end{equation}
Note from Eq.~\eqref{eq:a-hat-star} that
\begin{align} \label{eq:b-hat-star}
    H^{(m)}(\lo\bv^*) = H^{(m)}(\bv b_0)
\qquad \text{ for any } \bv\in \mathcal{B}.
\end{align}
It is clear from Eq.~\eqref{eq:Hm_def} that $H^{(m)}(\av) = H^{(m)}(-\av)$, thus we have
\begin{align} \label{eq:Hm-lo-star}
H^{(m)}(\lo\bv^*) =  H^{(m)}(\bv)
\qquad \text{ for any } \bv\in \mathcal{B}.
\end{align}
Hence, in Eq.~\eqref{eq:Hm-Q-with-half}, we can sum over $\lo\bv^i \in A$ instead of $\bv^i \in \mathcal{B}$, each time killing a 1/2 factor from the redundancy of the sum over $b_0^i$.
Also using the above equation we can replace $H^{(m)}(\av^i) = H^{(m)}(\lo\av^{i*})$, where $\lo\av^{i*}=  \text{star}(\text{hat}(\av^i))\in A$ for any $\av^i\in\mathcal{B}$.
Then we write Eq.~\eqref{eq:Hm-Q-with-half} as
\begin{align}
        H^{(m)}(\lo\av^*) = \exp \Big[ -\frac{1}{2} \sum_{\lo\bv^1,\ldots,\lo\bv^{q-1} \in A} \Phi^2_{\lo\av \lo\bv^1 \cdots \lo\bv^{q-1}} \prod_{i=1}^{q-1} [Q_{\lo\bv^i} H^{(m-1)}(\lo\bv^{i*})] \Big].
\end{align}
We can then drop the hats and rewrite this as (for any $\av \in A$)
\begin{align}\label{eq:Hm-Q-Phi}
        H^{(m)}(\av^*) = \exp \Big[ -\frac{1}{2} \sum_{\bv^1,\ldots,\bv^{q-1} \in A} \Phi^2_{\av \bv^1 \cdots \bv^{q-1}} \prod_{i=1}^{q-1} [Q_{\bv^i} H^{(m-1)}(\bv^{i*})] \Big].
\end{align}

Now let us define for $0\le m\le  p$ and any $\av\in A$
\begin{align}
R^{(m)}_{{}\av} := Q_{{}\av} H^{(m)}({}\av^*) .
\end{align}
Thus we have $R^{(0)}_{{}\av} = Q_{{}\av}$.
Plugging Eq.~\eqref{eq:Hm-Q-Phi} into the above yields
\begin{align} \label{eq:R-iter}
R^{(m)}_{{}\av} 
= Q_{\av} \exp \Big[ -\frac{1}{2} \sum_{\bv^1,\ldots,\bv^{q-1} \in A} \Phi^2_{\av \bv^1 \cdots \bv^{q-1}} \prod_{i=1}^{q-1} R^{(m-1)}_{\bv^i} \Big].
\end{align}

Note $R_\av^{(m)}$ has a similar description as $W_\av^\pure$ given in Eq.~\eqref{eq:W_pure_self_consistent}. We will prove that $R_\av^{(p)} = W_\av^\pure$. First, we state a lemma whose proof we defer to the end of this section.
\begin{lemma}\label{lem:Hp_fixed_point}
$H^{(p)}(\av)$ is a fixed-point of the iteration in Eq.~\eqref{eq:Hm_def}. 
\end{lemma}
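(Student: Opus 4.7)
The plan is to verify directly that applying the recursion in~\eqref{eq:Hm_def} one more time to $H^{(p)}$ recovers $H^{(p)}$. Equivalently, since both sides are exponentials, the claim reduces to the identity
\[
\sum_{\bv^1,\ldots,\bv^{q-1}\in\mathcal{B}} \big(\Gammav\cdot(\av\bv^1\cdots\bv^{q-1})\big)^2 \prod_{i=1}^{q-1} f(\bv^i)\,H^{(p)}(\bv^i) = \sum_{\bv^1,\ldots,\bv^{q-1}\in\mathcal{B}} \big(\Gammav\cdot(\av\bv^1\cdots\bv^{q-1})\big)^2 \prod_{i=1}^{q-1} f(\bv^i)\,H^{(p-1)}(\bv^i)
\]
holding for every $\av\in\mathcal{B}$. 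This is nontrivial: the iteration map generically changes $H^{(m)}$ at each step, and the assertion is that the change stops precisely at $m=p$.

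The reason this stabilization must occur at $m=p$ is a QAOA light-cone argument. As in the derivation in \cite{basso2021quantum}, the iteration~\eqref{eq:Hm_def} arises from contracting QAOA unitaries over a $d$-regular $q$-uniform hypertree in the $d\to\infty$ limit, with $H^{(m)}(\av)$ accumulating the contribution of a sub-hypertree of depth $m$ rooted at a vertex carrying label $\av$. Since the level-$p$ QAOA applies only $p$ pairs $(e^{-i\gamma_k C},\,e^{-i\beta_k B})$ of unitaries, its past light-cone reaches at most $p$ hops from any vertex. Hence, extending a sub-hypertree beyond depth $p$ cannot introduce new contributions, forcing $H^{(m)}(\av) = H^{(p)}(\av)$ for every $m\ge p$.

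To carry out the proof, the plan is to retrace the tensor-network derivation of~\eqref{eq:Hm_def} from \cite{basso2021quantum} for a hypertree of depth $p+1$ and show that the summation over the outermost (new) layer collapses. The key algebraic step is that the mirror QAOA unitaries $e^{\pm i\beta_k B}$ for $k\le p$ appearing across the extra layer can be contracted using the completeness relation $\sum_{a\in\{\pm1\}}\ket{a}\bra{a}=I$, after which this layer contributes only identity factors; the net effect is to replace $H^{(p)}(\bv^i)$ in the left-hand sum by $H^{(p-1)}(\bv^i)$, producing the right-hand side. An alternative route is a direct combinatorial telescoping argument, exploiting the mirror structure $\Gammav=(\gamma_1,\ldots,\gamma_p,0,-\gamma_p,\ldots,-\gamma_1)$ of Eq.~\eqref{eq:Gamma-def} together with the explicit form of $f(\av)$ in~\eqref{eq:f_definition}.

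The main obstacle is that the recursion~\eqref{eq:Hm_def} uses the same $\Gammav$ at every step $m$, so the light-cone truncation is not immediately visible from the iteration formula in isolation. The careful bookkeeping of tensor indices needed to recognize that the $(p+1)$-st layer in the tensor network only sees an already-paired collection of unitaries is the technical core of the argument. Once this collapse has been verified for a single additional layer, the fixed-point property follows, and the chain $R_\av^{(p)} = R_\av^{(p+1)} = R_\av^{(p+2)} = \cdots$ implied by~\eqref{eq:R-iter} will identify $R_\av^{(p)}$ with the unique solution $W_\av^{\pure}$ of~\eqref{eq:W_pure_self_consistent}, as needed for Theorem~\ref{thm:agree-regular}.
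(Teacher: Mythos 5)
The paper's proof is a short citation argument: invoking the matrix form $H^{(m)}(\av)=\exp\!\bigl[-\tfrac12\sum_{j,k}\Gamma_j\Gamma_k a_j a_k (G^{(m-1)}_{j,k})^{q-1}\bigr]$ from Eqs.~(8.20)--(8.21) of \cite{basso2021quantum}, together with the stabilization fact (from Section~3.2 of that reference) that $G^{(p)}_{j,k}=G^{(p-1)}_{j,k}$ for all $j,k$ except when $j=0$ or $k=0$. Since $\Gamma_0=0$, those exceptional entries are weighted by zero in the exponent, giving $H^{(p+1)}=H^{(p)}$ immediately.

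Your light-cone intuition is correct and is indeed the underlying reason for the $G$-matrix stabilization, so you have identified the right idea. But the proposal has a genuine gap: the key ``layer collapse'' is asserted, not verified. You never actually retrace the tensor-network contraction to show the outermost layer cancels, and you explicitly flag this yourself (``the careful bookkeeping of tensor indices\ldots\ is the technical core of the argument''). Moreover, the mechanism you name --- contracting the extra mixer unitaries $e^{\pm i\beta_k B}$ with the completeness relation $\sum_a\ketbra{a}=\Id$ --- is not quite right in isolation, and notably omits the phase unitaries $e^{\pm i\gamma_k C}$, which are what couple neighboring vertices in the hypertree. The actual cancellation comes from unitarity of the forward/backward QAOA evolutions across the light cone (the depth-$(p+1)$ vertex sees $U^\dagger U=\Id$), which is exactly the content encoded in the $G$-matrix stabilization. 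If you pursued your route to completion, you would essentially be re-deriving \cite[Section~3.2]{basso2021quantum}; the paper avoids that by citing the result directly. As written, the proposal is a plausible roadmap but not a proof.
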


By Lemma~\ref{lem:Hp_fixed_point}, $R^{(p)}_\av$ is a fixed point of the iteration in Eq.~\eqref{eq:R-iter} 
\begin{equation} \label{eq:R-SCE}
    R^{(p)}_{{}\av} 
= Q_{\av} \exp \Big[ -\frac{1}{2} \sum_{\bv^1,\ldots,\bv^{q-1} \in A} \Phi^2_{\av \bv^1 \cdots \bv^{q-1}} \prod_{i=1}^{q-1} R^{(p)}_{\bv^i} \Big].
\end{equation}
Note that this is almost the same self-consistent equation~\eqref{eq:W_pure_self_consistent} for the $W_\av^\pure$, except for a factor of $q$ in the exponential.
This factor can be fixed by a rescaling of $\gammav$.
More formally, let $R_\av^{(p)}(\paramv)$ be the solution to \eqref{eq:R-SCE} with $\Phi_\av(\gammav)$ upstairs.
Note from the definition of $\Phi_\av$ in \eqref{eq:Phi_def}, we have $\Phi_\av(\sqrt{q}\gammav) = \sqrt{q}\Phi_\av(\gammav)$.
Hence, $R_\av^{(p)}(\sqrt{q}\paramv)$ is the solution
\begin{equation}
    R^{(p)}_{{}\av}(\sqrt{q}\paramv) 
= Q_{\av} \exp \Big[ -\frac{q}{2} \sum_{\bv^1,\ldots,\bv^{q-1} \in A} \Phi^2_{\av \bv^1 \cdots \bv^{q-1}}(\gammav) \prod_{i=1}^{q-1} R^{(p)}_{\bv^i}(\sqrt{q}\paramv) \Big].
\end{equation}

Now the above equation agrees exactly with Eq.~\eqref{eq:W_pure_self_consistent} for $\{W_\av^\pure\}_{\av\in A}$.
Since the polynomial upstairs is well-played by Lemma~\ref{lem:P-is-well-played}, this equation has a unique solution due to Lemma~\ref{lem:unique_solution_SCE}, and so
\begin{align}\label{eq:R_is_Wpure}
    R^{(p)}_\av(\sqrt{q}\paramv) = W_\av^\pure(\paramv), \quad \text{for all } \av \in A.
\end{align}

\paragraph{Agreement of $V_p(\G_q,\paramv)$ with $\nu_p^{[q]}(\paramv)$.}---
The remaining task is to relate $\nu_p^{[q]}(\paramv)$ with $V_p(\G_q,\paramv)$.
Returning to Eq.~\eqref{eq:infinite_D_iter_result_general}, we can modify it using Eqs.~\eqref{eq:f-and-Q}, \eqref{eq:Phi-and-Gamma} and \eqref{eq:Hm-lo-star} to obtain
\begin{align}
\nu_p^{[q]}(\paramv)
     &= \frac{i}{\sqrt{2q}} \sum_{\av^1,\cdots,\av^q \in \mathcal{B}} \Phi_{\lo\av \lo\bv^1 \cdots \lo\bv^{q-1}} \prod_{i=1}^q [ \frac{1}{2} Q_{\lo\av} H^{(p)}(\lo\av^{i*}) ].
\end{align}
Since the summand is independent of $a_0^i$, we can sum over $\lo\av^i \in A$ instead of $\av^i \in \mathcal{B}$, killing factors of $1/2$ to get
\begin{equation}
    \nu_p^{[q]}(\paramv) = \frac{i}{\sqrt{2q}} \sum_{\lo\av^1,\cdots,\lo\av^q \in A} \Phi_{\lo\av \lo\bv^1 \cdots \lo\bv^{q-1}} \prod_{i=1}^q R^{(p)}_{\lo\av^i} = \frac{i}{\sqrt{2q}} \sum_{\av^1,\cdots,\av^q \in A} \Phi_{\av \bv^1 \cdots \bv^{q-1}} \prod_{i=1}^q R^{(p)}_{\av^i}
\end{equation}
where we dropped the hats in the second equality.
Thus, keeping track of explicit dependence on $(\paramv)$, we have
\begin{align}
    \nu_p^{[q]}(\sqrt{q}\paramv) &= \frac{i}{\sqrt{2}} \sum_{\av^1,\cdots,\av^q \in A} \Phi_{\av \bv^1 \cdots \bv^{q-1}}(\gammav) \prod_{i=1}^q R^{(p)}_{\av^i}(\sqrt{q}\paramv) \nonumber\\
    &=\frac{i}{\sqrt{2}} \sum_{\av^1,\cdots,\av^q \in A} \Phi_{\av \bv^1 \cdots \bv^{q-1}}(\gammav) \prod_{i=1}^q W_{\av^i}^\pure(\paramv) = \frac{1}{\sqrt{2}} V_p(\G_q,\paramv).
\end{align}
The last equality follows from Eq.~\eqref{eq:Vp_pure}.
This proves Theorem~\ref{thm:agree-regular}.

We now prove the lemma used above as promised.
\begin{proof}[{\bf Proof of Lemma~\ref{lem:Hp_fixed_point}}]
$ $\newline
This proof will use notation solely in the context of~\cite{basso2021quantum}. In that reference, combining Eqs. (8.20) and (8.21), for $0 \leq m \leq p$, $H^{(m)}(\av)$ can be written as
\begin{align}
    H^{(m)}(\av) = \exp \Big[ -\frac{1}{2} \sum_{j,k=-p}^p \Gamma_j \Gamma_k a_j a_k \big( G^{(m-1)}_{j,k} \big)^{q-1}  \Big]
\end{align}
where $\{G^{(k)}\}_{1\leq k \leq p}$ are matrices with entries $\{G^{(k)}\}_{j,k}$ and $j,k \in \{-p, \ldots, -1, 0, 1, \ldots, p\}$.

If the iteration were to continue for one more step, we would have
\begin{align}\label{eq:H_p_plus_one}
    H^{(p+1)}(\av) = \exp \Big[ -\frac{1}{2} \sum_{j,k=-p}^p \Gamma_j \Gamma_k a_j a_k \big( G^{(p)}_{j,k} \big)^{q-1}  \Big]
\end{align}
As discussed in~\cite[Section 3.2]{basso2021quantum}, $G^{(p-1)}_{j,k} = G^{(p)}_{j,k}$ except when either $j=0$ or $k=0$. Recalling that $\Gamma_0 = 0$, we can modify Eq.~\eqref{eq:H_p_plus_one} into
\begin{align}
    H^{(p+1)}(\av) &= \exp \Big[ -\frac{1}{2} \sum_{j,k=-p}^p \Gamma_j \Gamma_k a_j a_k \big( G^{(p-1)}_{j,k} \big)^{q-1}  \Big] \nonumber \\
    &= H^{(p)}(\av).
\end{align}
This proves the lemma.
\end{proof}

\bibliographystyle{halphalab}
\bibliography{refs}

\end{document}